\newcommand{\E}{\mathop{\text{E}\/}}
\newcommand{\modl}{\mathop{\text{ mod$_*$ }\/}}    
\newcommand{\B}[3][]{
  \ifthenelse{\isempty{#2}}{|B_{#1}(#3)|}{B_{#1}(#2,#3)}%
}
\newcommand{\simjoin}{{\;\bowtie_\lambda\;}}
\newcommand{\simil}{\ensuremath{\textnormal{sim}}}
\newcommand{\cp}{\textsc{Chosen Path}\xspace}
\newcommand{\cpsj}{\textsc{CPSJoin}\xspace}
\newcommand{\mh}{\textsc{MinHash}\xspace}
\newcommand{\blsh}{\textsc{BayesLSH}\xspace}
\newcommand{\all}{\textsc{AllPairs}\xspace}
\begin{document}
\frontmatter


\thetitlepage
\clearpage
\begin{otherlanguage}{danish}
  \begin{abstract}
    Beregning baseret p\aa~f\ae llestr\ae k mellem datapunkter fra store m\ae ngder h\o jdimensionel data er en hj\o rnesten i mange dele af moderne datalogi, fra kunsting intelligens til informationss\o gning.
    Den store m\ae ngde og kompleksitet af data g\o r, at vi almindeligvis forventer, at der ikke kan findes pr\ae cise svar p\aa~mange udregninger uden uoverstigelige krav til forbruget af enten tid eller plads.
    I denne afhandling bidrager vi med nye eller forbedrede approksimationsalgoritmer og datastrukturer til en r\ae kke problemer der omhandler f\ae llestr\ae k mellem datapunkter.
    Specifikt:
      \begin{itemize}
      \item Vi pr\ae senterer en algoritme der finder en \emph{tiln\ae rmelsesvis fjerneste nabo} hurtigere end med den tidligere hurtigste metode.
      \item Vi kombinerer denne algoritme med de bedste kendte teknikker til \emph{tiln\ae rmelsesvis n\ae rmeste nabo} for at finde en \emph{nabo i tiln\ae rmet ring}.
      \item Vi introducerer den f\o rste ikke-trivielle algoritme til \emph{tiln\ae rmet afstandsf\o lsomt medlemskab} uden falske negativer.
      \item Vi p\aa viser at \emph{indlejringer der bevarer n\ae rmeste nabo} kan udf\o res hurtigere ved at anvende id\'{e}er fra rammev\ae rket udviklet til \emph{hurtige afstandsbevarende indlejringer}.
      \item Vi pr\ae senterer en hurtig ny randomiseret algoritme til \emph{m\ae ngde sammenf\o jning med sammefaldskrav}, flere gange hurtigere end tidligere algorithmer.
  \end{itemize}
  \end{abstract}
\end{otherlanguage}
\clearpage
\begin{abstract}
  Similarity computations on large amounts of high-dimensional data has become the backbone of many of the tasks today at the frontier of computer science, from machine learning to information retrieval.
  With this volume and complexity of input we commonly accept that finding exact results for a given query will entail prohibitively large storage or time requirements, so we pursue approximate results.
  The main contribution of this dissertation is the introduction of new or improved approximation algorithms and data structures for several similarity search problems.
  We examine the furthest neighbor query, the annulus query, distance sensitive membership, nearest neighbor preserving embeddings and set similarity queries in the large-scale, high-dimensional setting.
  In particular:
  \begin{itemize}
  \item We present an algorithm for \emph{approximate furthest neighbor} improving on the query time of the previous state-of-the-art.
  \item We combine this algorithm with state-of-the-art \emph{approximate nearest neighbor} algorithms to address the \emph{approximate annulus query}.
  \item We introduce the first non-trivial algorithm for \emph{approximate distance sensitive membership} without false negatives.
  \item We show that \emph{nearest neighbor preserving embeddings} can be performed faster by applying ideas from the framework of \emph{Fast Distance Preserving Embeddings}.
  \item We introduce and analyse a new randomized algorithm for \emph{set similarity join}, several times faster than previous algorithms.
  \end{itemize}
\end{abstract}
\clearpage
\begin{acknowledgements}
  First and foremost I would like to thank my supervisor Rasmus Pagh.
  My interest in randomized algorithms was first sparked by the lectures he gave at the end of an algorithms course.
  I feel very fortunate to have been able to continue studying and researching with him since then.
  Asides from an astounding talent as a researcher, Rasmus represents a professionalism, kindness and patience that is very rare.

  Through writing the articles that this thesis is based on, I got to work with some great co-authors besides Rasmus.
  I would like to thank Francesco Silvestri, Mayank Goswami, Matthew Skala and Tobias Christiani for the hours of discussion and for sharing in the frustrations and joys of research with me.
  I also had the fortune of being able to spend the spring of 2016 at Carnegie Mellon University in Pittsburgh.
  This was possible thanks to the kindness of Professor Anupam Gupta.  
  I would like to thank Anupam for being an excellent academic host during those months and for many insightful discussions.
  Further I would like to thank the many other welcoming and incredibly gifted people at CMU who made my stay very memorable.

  During the past three years I have enjoyed the pleasant company of my colleagues in the 4b corridor of the ITU.
  I would like to thank everyone here for the seminars, technical discussions, non-technical discussion and lunch conversations over the years.
  Especially I would like to thank the other PhD students in my group, Thomas Ahle, Tobias Christiani and Matteo Dusefante for the adventures, both in and out of Hamming space.
  
  On a personal level, I would like to thank my family and friends for their unquestioning belief in times of doubt.
  Special thanks are due to my parents Michael and Vibeke for laying the foundation I stand on today.
  Finally, I am most grateful to my wife Agnieszka for encouraging me to pursue a PhD and for supporting me always.
  
\end{acknowledgements}
\clearpage
\setcounter{tocdepth}{3}
\tableofcontents*

\midsloppy
\sloppybottom
\mainmatter

\chapter{Introduction}

\section{Similarity Search}
Computers today are increasingly tasked with analyzing complex construct like music or images in ways that are sensible to humans.
However a computer has no more appreciation for a series of bits representing the Goldberg variations than for some representing the sound of repeatedly slamming a car door.
Barring a revolution in artificial intelligence computers have no inherent interpretation of the data they store.
This poses a barrier to the ways computers can help us.

At the same time the amount of digital data has exploded, both in complexity and volume.
Consider as an example the fact that early digital cameras like the 1990 Dycam Model 1 could capture and store 32 low resolution black and white images\footnote{The Dycam 1 featured a 375 x 240 pixel sensor, capturing 256 shades of grey.} and was too expensive for more than a few professional users.
Today a modern smart phone can captures and store thousands of high quality color images, and the number of smart phone users is counted in billions.
These images are of course not only captured, but shared, compared and searched in all manner of ways.
Similar explosive developments have taking place with almost any kind of digital data imaginable, from video and music, to sensor data and network traffic data.

This development means large amounts of data has become cheap and accessible, providing one way of circumventing the barrier:
Given large amounts of available data, computers can learn by example.
Computers are extremely well suited for quickly comparing large amounts of data and figuring out exactly how similar they are.
Consider the task from before: Classify a recording as either ``Bach'' or ``Car door''.
With no concept of music or sound this is a difficult task for a computer.
But if the computer has access to a database of examples from both categories we might simply ask which example is most similar to the recording and return the category of that example.
This is idea behind the \emph{k-Nearest Neighbors}~(k-NN) classifier, a simple but powerful machine learning algorithm.
At the heart of it sits the \emph{Nearest Neighbor}~(NN) problem:
Given a set of data points and a query point, return the point most similar to the query.
This is part of a larger family of problems that might generally be called similarity search problems: Answer questions about a set of points based on the similarity of the points to a query point.
The NN problem is probably the most fundamental similarity search problem, and we will often return to it as it encompasses many of the challenges in the field.
Similarity search problems are vital components of many machine learning techniques, and they are equally important in many other areas of computer science like information retrieval, compression, data mining and image analysis.
The main contribution of this thesis is a series of improvements in solving various similarity search problems, both in the speed and space necessary to solve them, and in the quality of the answers.
Before we can begin to study the problems, we must first address two obvious questions about the definition above:

How did the images, music, traffic data etc. above turn into \emph{points}, and what does it mean for two points to be similar?
Readers familiar with high-dimensional metric spaces and $\Osymbol$-notation can skip ahead to Section~\ref{sec:problems}.

\subsection{Representing data}

To process our data we first need to represent it digitally.
As an example think of a collection of text-only documents i.e. strings of letters and spaces.
In order to store them in a computer, a normal method is to agree to some standard of translating letters into numbers, then store the numbers representing the document on the computer.
Say the documents all contain only $d=2$ letters each.
If we map letters to their index in the alphabet, $a\rightarrow0,b\rightarrow1$ etc., we can represent the data as points in the set $\mathbb{N}^2$: the set of all pairs of natural numbers\footnote{Table~\ref{tab:sets} lists the standard notation for working with sets that we will be using.}.
\begin{figure}
  \centering
  \includegraphics{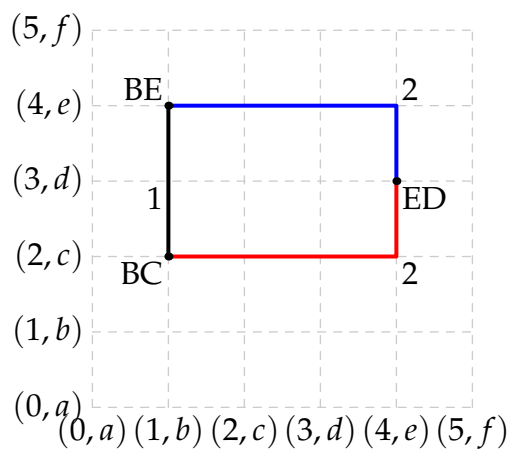}
  \caption{The strings ``BE'', ``BC'' and ``ED'' represented as points in $\mathbb{N}^2$ and the hamming distance between them.}
    \label{fig:representation}
\end{figure}
We call $d$ the dimensionality or features of the data.
\setlength{\intextsep}{0pt}
\begin{wraptable}{O}[2cm]{6cm}
  \begin{tabular}{rr}
    Notation&Description\\
    \hline
    $\mathbb{R}$&Real numbers \\
    $\mathbb{N}$&Natural numbers\\
    $\{0,1\}$& Set of bits\\
    $\emptyset$ &The empty set\\
    $\{\emptyset\}$& Set containing $\emptyset$\\
    $[n]$&Integers from $1$ to $n$\\
    $\B{x}{r}$& Radius $r$ ball around $x$\\
    $X^d$&  $\{(x_1,\ldots,x_d)|x_i\in X\}$\\
    $X \cup Y$& Union of $X$ and $Y$\\
    $X \cap Y$& Intersection of $X$ and $Y$\\
    $\mathcal{P}(X)$ & The power set of $X$\\
    \hline
\end{tabular}
\caption{Set notation}
\label{tab:sets}
\end{wraptable}

Next, we need to define what it means for two documents to be similar?
Often the concept of \emph{similarity} is intuitively understood, but hard to put an exact measure on.
For our purposes we will need exact measures.
In our example, one idea is to consider strings to be similar if they contain the same letters in many positions.
This would suggest using the Hamming distance, $H$, i.e. counting the number of positions where the letters differ, illustrated in Figure~\ref{fig:representation}.
We then have an exact \emph{distance} function that we can use as an inverse measure for similarity.
When data is represented as points in some set $X$ and distances between the points are measured using a distance function $D$ we say that the data is in the space $(X,D)$.
In the example we used $(X=\mathbb{N}^2,D=H)$.
Of course we could have chosen many other distances functions, it depends entirely on the desired notion of similarity.
In this dissertation we will assume that our data is already mapped into a well defined space.
Further, we will assume that the distance function used captures the similarities relevant to the given application.
From now on ``similarity'' will be a formalized, measurable concept, and it will be the inverse of ``distance''.
We will return to this discussion in Section~\ref{sec:distance-functions}.

\begin{table}
  \centering
  \begin{tabular}{rr}
    Notation&Description\\
    \hline
    $S$     &  Input data set\\
    $n$     &  $|S|$ \\
    $q$     &  Query point\\
    $d$     &  Data dimensionality\\
    $D$     &  Distance function\\
    \hline
\end{tabular}
\caption{Frequently used symbols and their meaning.}
\label{tab:notation}
\end{table}

\subsection{Scale and dimension}

Solving problems \emph{at scale} means that we have to be able to keep up with the explosive growth in data.
For most similarity search problems, including the NN problem, we can always answer a query by computing the similarity of the query point, $q$, and every point in the input data set, $S$.
This works well when $S$ is small, but when suddenly the amount of data explodes, so does our query time.
We say that the query time is linear in $n$, where $n$ is the size of $S$.
To handle the explosive growth in data, we must be able to answer the query while only looking at a small part of $S$.
In fact, as $S$ grows, the percentage of $S$ we need to look at must rapidly decrease.
That is, we will be interested in solutions that provide query time sub-linear in $n$.
Imagine that we are given a set of surnames and tasked with building a phone book.
Instead of mapping each letter to a number like in the previous example, we might simply map each name to its alphabetic order.
With this mapping we can represent the strings in a 1-dimensional space, simply points along a line (see Figure~\ref{fig:phonebook}).
\begin{wrapfigure}{l}[2cm]{4cm}
  \centering
  \includegraphics{grid_fig.1}
  \caption{A very short phonebook.}
  \label{fig:phonebook}
\end{wrapfigure}
When we are looking up a name in the phone book we are solving a 1-dimensional search problem.
Using binary search we can solve it in logarithmic time in the number of names.
Logarithmic query time is a very desirable property because it is highly sub-linear.
Roughly speaking, every time the length of the phone book doubles, we will only need to look at one extra name as we search for a number.
This enables us to ``keep up'' with the explosive growth in data (See figure.~\ref{fig:keepup}).
\begin{wrapfigure}{O}[2cm]{6cm}
  \centering
  \includegraphics[width=0.5\textwidth]{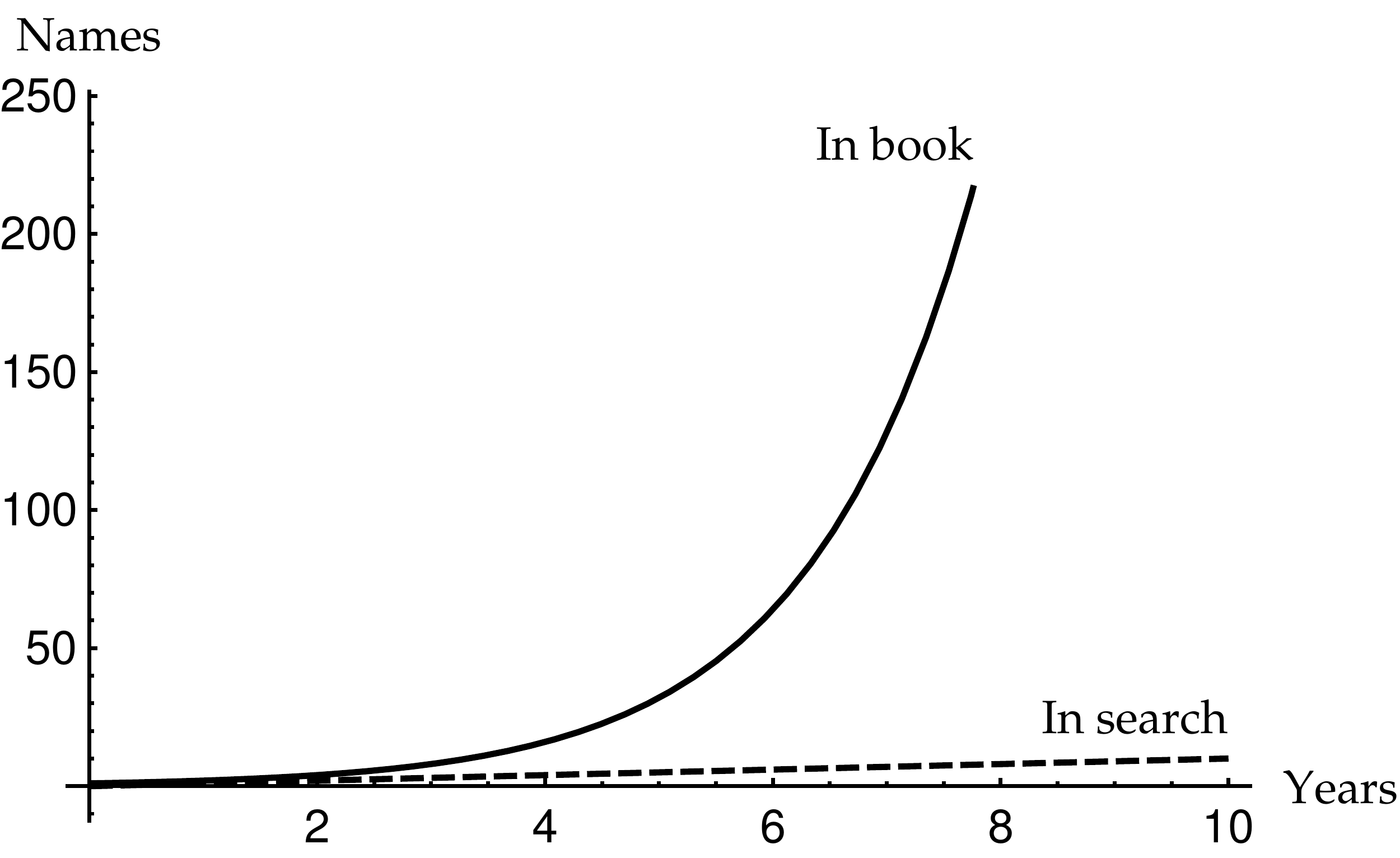}
  \caption{Names in the phone book and how many we will see using a binary search.}
    \label{fig:keepup}
\end{wrapfigure}

If we are trying to capture more complicated relations than a strict ordering, having only one dimension is very limiting.
When understanding if two pieces of music are similar, we might employ a myriad of dimensions, from tempo to scale to the meaning of the lyrics etc.
To capture these complicated relationships we need to work in high-dimensional spaces.
As an example, consider extending the mapping in Figure~\ref{fig:representation} from strings of length $d=2$ to length $d=50$.
While we can no longer easily visualize the space, the mathematical concepts of e.g. $(\mathbb{N}^{50},H)$ are perfectly sound and workable.
However, it is a challenge to develop scaling algorithms when $d$ is large.

For $d=2$ this is already much more difficult.
A classical result in computational geometry is the use of the \emph{Voronoi diagram} (See Figure~\ref{fig:voronoi}) for solving the NN problem in $(\mathbb{R}^2,\ell_2)$.
The diagram partitions $\mathbb{R}^2$ into $n$ cells, one for each data point.
For any location in a cell the nearest point in $S$ is the data point associated with the cell.
Using this diagram we are again able to get logarithmic query time using \emph{point location}: Given a new point $x\in\mathbb{R}^2$, find the associated cell.
Having found the cell, the answer to the NN problem is simply the point associated with that cell.
Both computing the Voronoi diagram and solving the point location problem have long histories and many different approaches have been developed, see~\cite{CGbook08} for an overview. 
\begin{figure}
  \centering
  \includegraphics[width=0.5\textwidth]{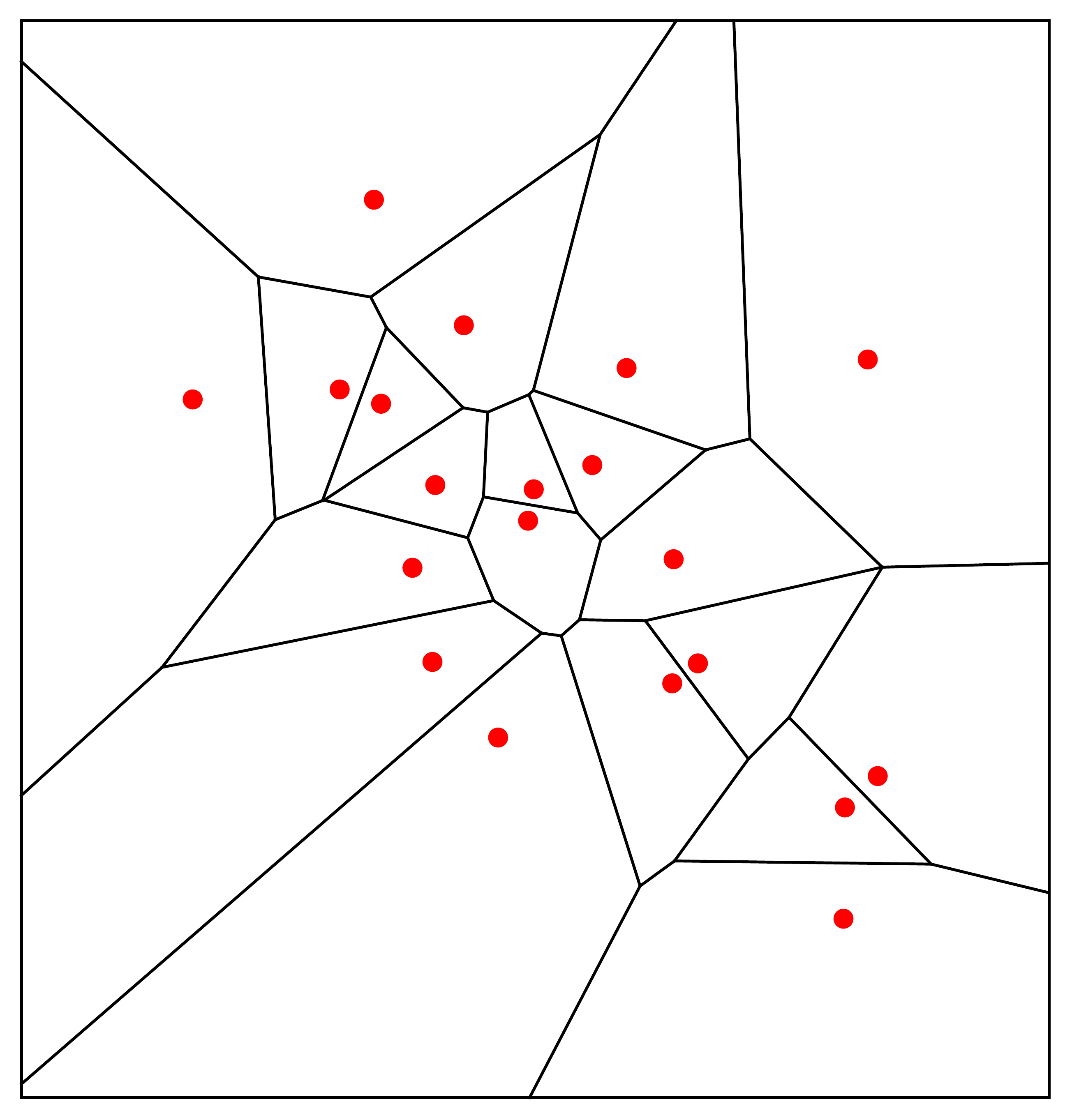}
  \caption{A voronoi diagram for $n=20$ points in $\mathbb{R}^{d=2}$}
    \label{fig:voronoi}
\end{figure}

We can expand the idea of the Voronoi diagram to $d>2$, however the size of the diagram grows exponentially as $n^{d/2}$, so this will only be viable as long as $d$ is small.
In general for low-dimensional metric spaces there are many well-known similarity search algorithms (e.g. \cite{Clarkson1988}, \cite{Bentley1976}) but they all suffer from exponential growth in either storage or query time as $d$ grows.
Although $d$ is not growing as fast as $n$, this is prohibitively expensive even for relatively small $d$.
It is not doing us much good to achieve sub-linear growth in $n$ if all gains are minute to the costs incurred from the high-dimensional setting.
This makes these methods prohibitively expensive for large-scale, high-dimensional data.
There are strong indications that this is not a failing of the solutions, but rather an inherent property of the problem~\cite{Williams05,AbboudR17}.
To avoid this \emph{curse of dimensionality} a field of approximation algorithms has been thriving in recent years. 
Here we concede to losing accuracy in exchange for algorithms that have query time sub-linear in $n$ and linear in $d$.
The space is allowed to grow linearly in $n$ and $d$.
In this thesis we further expand this field with a set of new algorithms for solving approximate similarity search problems in high-dimensional spaces.

\section{Problems and results}
\label{sec:problems}

The articles that make up this dissertation are listed below in the order their content appears here:

\begin{enumerate}
\item Rasmus Pagh, Francesco Silvestri, Johan Sivertsen and Matthew Skala. Approximate Furthest Neighbor in High Dimensions~\cite{Pagh2015a}. SISAP 2015. Chapter~\ref{sec:furthest-neighbor}.
\item Rasmus Pagh, Francesco Silvestri, Johan Sivertsen and Matthew Skala. Approximate furthest neighbor with application to annulus query~\cite{Pagh2015b}. Information Systems 64, 2017. Chapters~\ref{sec:furthest-neighbor} and \ref{sec:annulus-query}.
\item Mayank Goswami, Rasmus Pagh, Francesco Silvestri and Johan Sivertsen. Distance Sensitive Bloom Filters Without False Negatives~\cite{GoswamiP0S17}. SODA 2017. Chapter~\ref{sec:dist-sens-appr}.
\item Johan Sivertsen. Fast Nearest Neighbor Preserving Embeddings. Unpublished. Chapter~\ref{cha:simil-pres-embedd}.
\item Tobias Christiani, Rasmus Pagh and Johan Sivertsen. Scalable and robust set similarity join. Unpublished. Chapter \ref{cha:set-similarity}.
\end{enumerate}

We will state the problems for any space $(X,D)$ but the results are all for particular spaces. (See Table~\ref{tab:sets} and ~\ref{tab:distances}).

Perhaps the most central problem in similarity search is the \emph{nearest neighbor} (NN) problem.
Using the notation in Table~\ref{tab:notation}, we state the problem as:

\paragraph{Nearest Neighbor (NN)}
Given $S\subseteq X$ and $q\in X$. Return $x\in S$, such that $D(q,x)$ is minimized.\\

\noindent
To circumvent the curse of dimensionality we will be relaxing our problems in two ways.
We will use the NN problem to illustrate the relaxations.

First, we will accept an answer $x'$ if it is a $c$-approximate nearest neighbor.
That is, we will require only that $D(q,x')\leq cD(q,x)$, where $x\in S$ is the actual nearest neighbor.
The furthest neighbor and annulus query algorithms presented in this dissertation are generally only applicable when $c>1$.
However, this is often the case: Since we are searching for similar, but not necessarily equal things, the most similar and the \emph{almost} most similar will often be equally useful.

We can also consider cases where the similar thing is much closer (more than a factor $c$) to the query than the rest of the dataset.
In such settings the returned $c$-approximate nearest neighbor is also the actual nearest neighbor.

Secondly we will allow the distance, $r$, to be a parameter to the problem.
We say that $x'$ is $r$-near if $D(q,x')\le r$.
The relaxed approximate near neighbor problem (ANN) is then stated as:
\begin{definition}[$(c,r)$-Approximate Near Neighbor]
  \label{def:ANN}
  For $c>1$, $r>0$. 
  If there exists a point $x\in S$ such that $D(x,q)\leq r$, report some point $x'\in S$ where $D(x',q) \leq cr$, otherwise report nothing.
\end{definition}

\noindent
These relaxations were first introduced by Indyk and Motwani in~\cite{Indyk1998}.
They also show that we can use $(c,r)$-approximate near neighbor to find the $c$-approximate nearest neighbor by searching over settings of $r$.
In many applications achieving a fixed similarity might also suffice on its own, regardless of the existence of closer points.

Next we will introduce the problems addressed in this dissertation.
For each problem we will give a formal definition, as well as a an overview of the main results and ideas used to obtain them.

\subsection {$c$-Approximate Furthest Neighbor (AFN)}

While a lot of work has focused on nearest neighbor, less effort has gone into furthest neighbor.
That is, locating the item from a set that is least similar to a query point.
This problem has many natural applications, consider for example building a greedy set cover by selecting the point furthest from the points currently covered.
Or as we will see in Chapter~\ref{sec:annulus-query} we might use furthest neighbor in combination with near neighbor to find things that are ``just right''.
We formally define the approximate furthest neighbor problem:

\begin{definition}[$c$-Approximate Furthest Neighbor]
  \label{def:c-AFN}
  For $c>1$. Given $S\subseteq X$ and $q\in X$.
  Let $x\in S$ denote the point in $S$ furthest from $q$, report some point $x'\in S$ where $D(x',q) \geq D(x,q)/c$. 
\end{definition}

The furthest neighbor in some sense exhibits more structure than the nearest neighbor.
Consider a point set $S\subseteq\mathbb{R}^2$.
No matter what $q\in X$ is given, the point $x\in S$ furthest from $q$ will be a point on the \emph{convex hull} of $S$ as illustrated in Figure~\ref{fig:hull}.
\begin{wrapfigure}[15]{O}[2cm]{5cm}
  \vspace{0.2cm}
  \centering
  \includegraphics{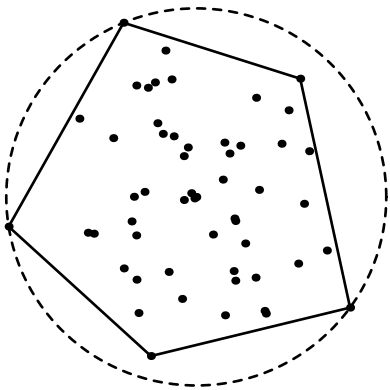}
  \caption{A point set with its convex hull and minimum enclosing ball.}
  \label{fig:hull}
\end{wrapfigure}
If the convex hull is small and easily found an exact result could be efficiently produced by iterating through it.
However, the convex hull can contain $\BOx{n}$ points and in high dimensions they are not easily found.
A way to proceed is to approximate the convex hull, for example by the \emph{minimum enclosing ball}.
This always contains a $\sqrt{2}$-AFN~\cite{Goel2001}, but for $c<\sqrt{2}$ we need a better approximation.

\noindent
In Chapter~\ref{sec:furthest-neighbor} we present an algorithm for $c$-AFN.
We get $\TOx{dn^{1/c^2}}$ query time using $\TOx{dn^{2/c^2}}$ space.
This work is the result of a collaborative effort with Rasmus Pagh, Francesco Silvestri and Matthew Skala.
The work was published as an article~\cite{Pagh2015a}, and later in an extended journal version~\cite{Pagh2015b}.
Here we give a brief high-level introduction to the main result.
Chapter~\ref{sec:furthest-neighbor} also contains analysis for a query independent variation of the data structure, space lower bounds as well as experimental results. 
The main algorithm is similar to one introduced by Indyk \cite{Indyk2003}.
His work introduces a decision algorithm for a fixed radius version of the problem and proceeds through binary search.
We solve the $c$-AFN problem directly using a single data structure.

We use the fact that in $\mathbb{R}^d$, projections to a random vector preserve distances as stated in the following lemma:

\begin{lemma}[See Section 3.2 of Datar et al.~\cite{Datar04}]
For every choice of vectors $x,y \in \mathbb{R}^d$:
\begin{equation*}
\frac{a_i \cdot (x-y)}{\|x-y\|_2}\sim \mathcal{N}(0,1).
\end{equation*}
when the entries in $a_i$ are sampled from the
standard normal distribution $\mathcal{N}(0,1)$. 
\end{lemma}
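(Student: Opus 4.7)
The plan is to reduce the statement to the standard stability property of the Gaussian distribution under linear combinations. Set $v = x - y$, and assume $v \neq 0$ (otherwise the quotient is undefined and the claim is vacuous). Then the inner product unwinds coordinatewise as $a_i \cdot v = \sum_{j=1}^{d} a_{ij} v_j$, where the $a_{ij}$ are i.i.d.\ $\mathcal{N}(0,1)$ and the $v_j$ are fixed scalars. So the proof reduces to showing that a fixed linear combination of independent standard Gaussians is again Gaussian, with the expected mean and variance.

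First I would compute the mean and variance directly: by linearity, $\E[a_i \cdot v] = \sum_j v_j \E[a_{ij}] = 0$, and by independence, $\Var(a_i \cdot v) = \sum_j v_j^2 \Var(a_{ij}) = \|v\|_2^2$. This pins down the only candidate Gaussian distribution; what remains is to verify that the sum really is Gaussian and not merely has the right first two moments.

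For that step, the cleanest route is via the moment generating function (or equivalently the characteristic function). Since the $a_{ij}$ are independent, the MGF factors as $\E[e^{t \sum_j a_{ij} v_j}] = \prod_{j=1}^{d} \E[e^{t v_j a_{ij}}] = \prod_{j=1}^{d} e^{t^2 v_j^2 / 2} = e^{t^2 \|v\|_2^2 / 2}$, which is precisely the MGF of $\mathcal{N}(0, \|v\|_2^2)$. By uniqueness of the MGF, $a_i \cdot v \sim \mathcal{N}(0, \|v\|_2^2)$. Dividing by the (deterministic) scalar $\|v\|_2 > 0$ then yields a random variable whose variance is rescaled to $1$, giving $\mathcal{N}(0,1)$ as claimed.

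There is no real obstacle here; the only mild subtlety is making sure to invoke a uniqueness theorem (MGF or characteristic function) rather than merely matching moments, since in general distinct distributions can share the same mean and variance. If one preferred to avoid generating functions, an equivalent argument is to note that the vector $a_i$ is spherically symmetric (its density depends only on $\|a_i\|_2$), so its distribution is invariant under any orthogonal change of basis; choosing the first basis vector to be $v/\|v\|_2$ makes $a_i \cdot v / \|v\|_2$ equal in distribution to the first coordinate of $a_i$, which is $\mathcal{N}(0,1)$ by construction.
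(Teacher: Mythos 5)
Your proof is correct and complete. The paper does not give its own argument for this lemma: it is stated with a citation to Datar et al., and in the preliminaries the paper appeals to the $2$-stability of the Gaussian distribution (citing Zolotarev) as the underlying reason, namely that $\sum_j v_j X_j \sim \|v\|_2 X$ for i.i.d.\ standard normals $X_j$. What you have written is, in effect, a self-contained proof of precisely that $2$-stability fact: you compute the moment generating function of $a_i \cdot v$ by independence, recognize it as that of $\mathcal{N}(0,\|v\|_2^2)$, and invoke uniqueness of the MGF, then rescale by the deterministic constant $\|v\|_2$. You are right that matching the first two moments alone would not suffice and that the uniqueness step is where the argument actually closes. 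Your alternative via rotational invariance of the standard Gaussian vector is also correct and is perhaps the most geometric way to see the claim, reducing it to reading off a single coordinate after an orthogonal change of basis. Either route is a legitimate replacement for the paper's citation.
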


So we can expect distances between projections to be normally distributed around the actual distance.
Points further from $q$ will generally project to larger values as illustrated in Figure~\ref{fig:Delta}.
This is helpful since it means that we can use well known bounds on the normal distribution to argue about the probability of a point projecting above or below some threshold $\Delta$.
We want to set $\Delta$ so points close to $q$ have a low probability of projecting above it, but points furthest from $q$ still has a reasonably large probability of projecting above $\Delta$.
If we then examine all points projecting above $\Delta$, one of them will likely be a $c-$AFN.
To do this our structure uses a priority queue to pick the points along each random vector with largest projection value, as described in Section~\ref{sec:pq}.

\begin{figure}
\centering
  \includegraphics{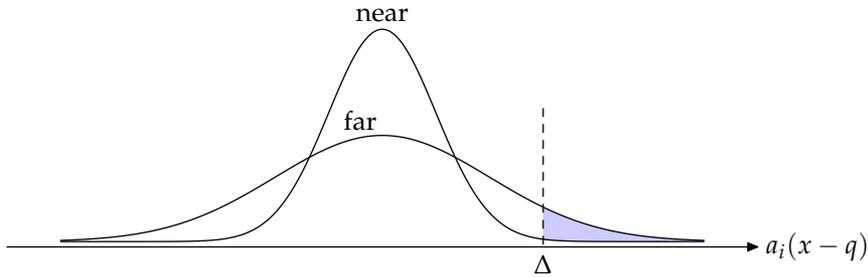}
  \caption{Distribution of $a_i\cdot (x-q)$ for near and far $x$}
\label{fig:Delta}
\end{figure}
\clearpage
\subsection{$(c,r,w)$-Approximate Annulus Query (AAQ)}


Sometimes we want to find the points that are not too near, not too far, but ``just right''.
We could call this the \emph{Goldilocks problem}, but formally we refer it as the \emph{annulus query} problem.
\begin{figure}[h]
  \vspace{0.2cm}
  \centering
\includegraphics[width=0.6\textwidth]{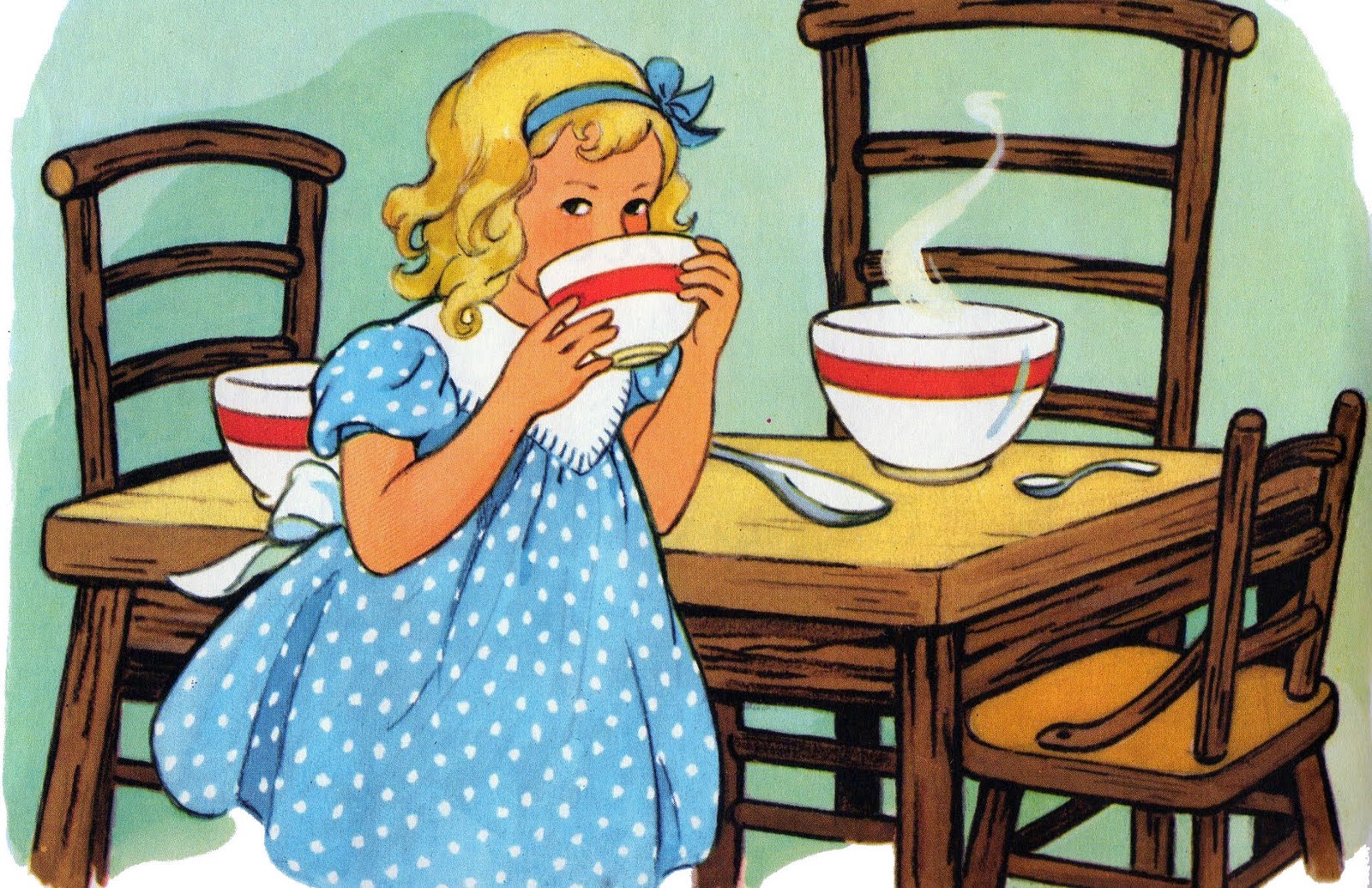}
\caption{Goldilocks finds the porridge that is not too cold, not too hot, but ``just right''.~\scriptsize\copyright Award Publications ltd.}
\label{fig:goldi}
  \vspace{0.2cm}
\end{figure}

Annulus is french and latin for \emph{ring} and the name comes from the shape of the valid area in the plane.
The exact annulus query is illustrated in Figure~\ref{fig:annulus}. Again, we will be working with an approximate version:

\begin{definition}[$(c,r,w)$-Approximate Annulus Query]
  \label{def:aaq}
For $c>1$, $r>0$,$w>1$. 
If there exists a point $x\in S$ such that $r/w\leq D(x,q)\leq rw$ report some point $x'\in S$ where $r/cw\geq D(x',q)\leq crw$, otherwise report nothing.
\end{definition}

A natural way to approach this problem is with a two part solution, one part filtering away points that are too far and the other removing those that are too near.
In Chapter~\ref{sec:annulus-query} we present a solution like this in $(\mathbb{R}^d,\ell_2)$ where we use locality sensitive hashing(LSH, see Section~\ref{sec:lsh}) for the first part and the AFN data structure from Chapter~\ref{sec:furthest-neighbor} for the second.
Using an LSH with gap $\rho$, our combined data structure answers the $(c,r,w)$-Approximate Annulus Query with constant success probability in time $\TOx{dn^{\rho+1/c^2}}$ while using $\TOx{n^{2(\rho+1/c^2)}}$ additional space.
This result was published in the journal version of the AFN paper~\cite{Pagh2015b}.

\label{sec:annulus}
\begin{figure}
\centering
  \includegraphics{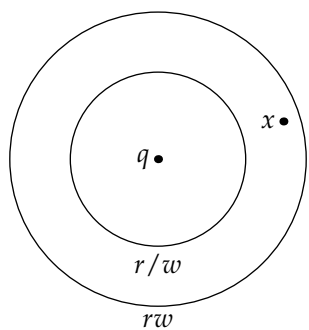}
  \caption{The Annulus query around $q$ returns $x$.}
\label{fig:annulus}
\end{figure}

\subsection{$(r,c,\boldsymbol{\epsilon})$-Distance Sensitive Approximate Membership Query (DAMQ)}

Given a set $S$ and a query point $q$ a \emph{membership query} asks if $q$ is in $S$.
In a famous result from 1970 Burton Bloom showed that the question can be answered using $\BOx{n\log \frac 1 \epsilon }$ space, where $\epsilon$ is the probability of returning a $\emph{false positive}$~\cite{Bloom1970}.
Importantly there are no $\emph{false negatives}$ (See Table~\ref{tab:membershipanswers}).
This one-sided error is of great importance in practice:
If the set $S$ we care about is relatively small in comparison to the universe $X$ and queries are sampled more or less uniformly from $X$, we expect that in most cases $q\notin S$.
Since we never return a false negative, our only errors can occur on the small fraction of queries where we answer ``yes''.

\begin{table}[h]
  \centering
  \begin{tabular}{rll}
    Answer & $x\in S$ & $x\notin S$ \\
    \hline
    ``Yes'' & correct & \emph{false positive}\\
    ``No'' & \emph{false negative} & correct\\
    \hline
\end{tabular}
\caption{Membership query answers and error types.}
\label{tab:membershipanswers}
\end{table}

\vspace{0.3cm}
\noindent
We can then use a secondary data structure to double checks all positive answers.
Since we will use it rarely, we can place the secondary structure somewhere slower to access, but where space is cheaper.
For example on disk, as opposed to in memory. Or on a server somewhere, as opposed to locally.
In this way the one-sided error allows us to use the approximate data structure to speed up most queries, while still giving exact answers.
In a similarity search context we extend membership queries to be distance sensitive.
We want a positive answer when something in $S$ is similar to $q$, although perhaps not an exact match.

\clearpage
\begin{definition}[$(r,c,\boldsymbol{\epsilon})$-Distance Sensitive Approximate Membership Query]
  \label{def:damq}
  For $r>0$, $c\geq1$ and $\epsilon\in[0,1]$. Given $S\subseteq X$ and $q\in X$.
  \begin{itemize}
\item If $\exists x\in S$ such that $D(q,x)\leq r$ report \emph{yes}.
\item If $\forall x\in S$ we have $D(q,x)> cr$ report \emph{no} with probability at least $1-\epsilon$, or \emph{yes} with probability at most $\epsilon$.
  \end{itemize}
\end{definition}

There is some prior work~\cite{Kirsch, Hua2012}, but these solutions yield false positives as well as negatives. 
In Chapter~\ref{sec:dist-sens-appr} we present the first non-trivial solution with one-sided error.
This work was co-authored with Mayank Goswami, Rasmus Pagh and Francesco Silvestri and was published at SODA in 2017~\cite{GoswamiP0S17}.
It turns out that unlike in the regular membership query, it is important to specify what the $\epsilon $ error probability is over.

If $\epsilon $ is over the choice of $q$, the problem seems easier than if it is over the random choices made in the data structure and valid for all $q$.
In the first case we call $\epsilon $ the \emph{average error}, in the latter the \emph{point-wise error}.
For $(\{0,1\}^d,H)$ we present lower bounds (Section~\ref{sec:lower-bounds}) for both cases as well as almost matching upper bounds for most parameter settings (Section~\ref{sec:upper-bounds}).
For a reasonable choice of parameters we get a space lower bound of $\BOMx{n(r/c+\log\frac 1 \epsilon )}$ bits for $\epsilon$ point-wise error.

To construct our upper bounds we represent the points in $S$  with signatures that we construct to have some special properties.
We let $\gamma(x,y)$ denote the \emph{gap} between the signatures of $x$ and $y$.
The value of the gap depends on the distance between the original two points.
Crucially our construction guarantees that when the original distance is less than $r$ the gap is always below a given threshold, but often above it when the original distance is greater $cr$.
We can then answer the query by comparing the query signature to the collections of signatures from $S$.
The space bounds follows from analyzing the necessary length of the signatures.

\subsection{Fast Nearest Neighbor preserving embeddings}

So far we have been trying to circumvent the issues arising from high dimensionality by designing algorithms that give approximate results.
Another was to achieve this is through \emph{dimensionality reduction}.
Broadly speaking the desire here is to find embeddings $\Phi:\mathbb{R}^d\rightarrow \mathbb{R}^k$ with the property that $D(\Phi x,\Phi y)\approx D(x,y)$ and importantly $k\ll d$.

While finding $\Phi$ is not in it self a similarity search problem, it is a way of improving the performance on \emph{all} approximate similarity search problems.
With $\Phi$ we can move a similarity problem from $(\mathcal{R}^d,D)$ into $(\mathcal{R}^k,D)$ and solve it there instead.
In a famous result, Johnson and Lindenstrauss~\cite{Johnson1984} showed a linear embedding from $(\mathbb{R}^d,\ell_2)$ with $k=\BOx{\frac {\log{n}}{\epsilon^2}}$,
while distorting distances by a factor at most $(1+\epsilon )$~(See lemma~\ref{sec:JLlemma}).

The first aspect we might hope improve is getting $k$ even smaller, but it has recently been shown that the original result is optimal~\cite{Larsen17,Larsen16}.
However, Indyk and Naor~\cite{IN07} showed that if we only care about preserving nearest neighbor distances, we can get significantly smaller~$k$.
Specifically, $k$ can be made to depend not on $n$ but on $\lambda_s$, the doubling constant of $S$~(See def.~\ref{def:doubling}).
We call such embeddings \emph{nearest neighbor preserving}~(See def.~\ref{def:NNPE}).

Aside from $k$, an important aspect is of course the time it takes to apply $\Phi$.
We can think of $\Phi$ as an $k\times d$ matrix, so it takes $\BOx{kd}$ time to apply it once.
In 2009 Ailon and Chazelle~\cite{Ailon09} showed that the embedding matrix can be sparse if it is used in combination with some fast distance preserving operations.
If $f$ is the fraction of non-zero entries, this allow us to use fast matrix multiplication to apply the embedding in time $\BOx{kdf}$.
They showed a construction that gets $f=\BOx{\frac{\log n}{d}}$.

In Chapter~\ref{cha:simil-pres-embedd} we show that these two results can be happily married to yield fast nearest neighbor preserving embeddings:

\begin{theorem}[Fast Nearest Neighbor Preserving Embeddings]
  For any $S\subseteq \mathbb{R}^d, \epsilon \in (0,1)$ where $|S|=n$ and
  $\delta\in (0,1/2)$ for some
  \[k=\BO{\frac
    {\log{(2/\epsilon)}}{\epsilon^2}\log{(1/\delta)}\log{\lambda_S}}\]
  there exists a nearest neighbor preserving embedding $\Phi:\mathbb{R}^d\rightarrow\mathbb{R}^k$ with
  parameters $(\epsilon,1-\delta)$ requiring expected \[\BO{d\log(d)+\epsilon^{-2}\log^3{n}}\] operations.
\end{theorem}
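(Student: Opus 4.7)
The plan is to instantiate Ailon and Chazelle's Fast Johnson--Lindenstrauss Transform (FJLT) as the embedding map $\Phi$, and then re-run the Indyk--Naor analysis of nearest neighbor preserving embeddings on top of it. Concretely, I take $\Phi = PHD$, where $D$ is a random diagonal $\pm 1$ matrix, $H$ is the normalized Hadamard transform, and $P$ is a sparse $k \times d$ random Gaussian matrix whose entries are independently $0$ with probability $1-q$ and drawn from $\mathcal{N}(0,1/q)$ otherwise, with $q = \Theta(\log^{2}(n)/d)$ as in Ailon--Chazelle.

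First, I would verify the per-pair distortion guarantee at the level of the target dimension $k$. The key FJLT ingredient is that $HD$ is an isometry which, with high probability over $D$, maps any fixed unit vector to a vector of $\ell_\infty$-norm $O(\sqrt{\log(n)/d})$. Conditioning on this event, $P$ applied to $HDx$ behaves like a dense Gaussian projection with subgaussian tails, so that for any fixed pair $x,y$ one obtains $\Pr\bigl[\,\bigl|\|\Phi x - \Phi y\|_2 - \|x-y\|_2\bigr| > \epsilon\|x-y\|_2\,\bigr] \le 2 e^{-\Omega(\epsilon^2 k / \log(2/\epsilon))}$, matching the concentration used by Indyk--Naor up to constants.

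Second, I would lift this per-pair statement to a nearest neighbor preserving statement via the doubling-dimension shell decomposition. For each $x \in S$ with nearest neighbor $x^\star$ and $r_x = \|x - x^\star\|$, partition $S \setminus \{x\}$ into shells $R_j = \{y : 2^{j} r_x \le \|x-y\| < 2^{j+1} r_x\}$; the doubling property gives $|R_j| \le \lambda_S^{O(j)}$. One needs a two-sided $(1\pm\epsilon)$ bound only on the pair $(x,x^\star)$ and near points, and a one-sided ``do not collapse'' bound on each farther shell. Choosing the per-shell failure probability to decay geometrically in $j$ (which is possible because the per-pair tail is subgaussian in $\epsilon \cdot 2^j$) lets the union bound over shells contribute only a $\log \lambda_S$ factor, and a further union bound over $x \in S$ contributes the $\log(1/\delta)$ factor after choosing $k = \Omega(\epsilon^{-2} \log(2/\epsilon) \log(1/\delta) \log \lambda_S)$.

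Finally, for the running time I would simply cost out the three stages of $\Phi$: $D$ is $O(d)$, the Hadamard step $H$ runs in $O(d \log d)$ via the fast Walsh--Hadamard transform, and the sparse matrix $P$ has an expected $qdk$ nonzeros, giving expected cost $O(qdk) = O(k \log^{2} n)$; absorbing the lower-order $\log(2/\epsilon) \log(1/\delta) \log \lambda_S$ factors into the headline bound yields $O(d \log d + \epsilon^{-2} \log^{3} n)$ operations in expectation. The main obstacle I expect is the second step: showing that the sparse-FJLT tail is sharp enough at \emph{every} scale $2^j r_x$ so that the geometric shell decay really does kick in, since the $HD$ preconditioning only controls the $\ell_\infty$ norm of the \emph{input} vector, and one has to argue (as in Ailon--Chazelle) that this control transfers to difference vectors $HD(x-y)$ uniformly across all shells without a hidden $\log n$ cost that would cancel the doubling-dimension savings.
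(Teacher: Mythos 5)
Your high-level plan --- instantiate FJLT as $\Phi = \textbf{PHD}$, condition on the $\ell_\infty$-smoothness of $\textbf{HD}$ applied to difference vectors, then run an Indyk--Naor-style shell decomposition organized by the doubling constant --- is the same approach the paper takes. The running-time accounting is also correct. However, there are a few genuine gaps in the middle step, where you ``lift'' the per-pair bound to the nearest-neighbor-preserving guarantee.

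First, the claim that ``the doubling property gives $|R_j| \le \lambda_S^{O(j)}$'' is false: the doubling constant bounds \emph{covering numbers}, not cardinalities. A single shell $R_j$ can contain up to $n$ points. What the doubling constant gives you is that $R_j$ can be covered by $\lambda_S^{O(j)}$ balls of a fixed small radius (here $\epsilon/4$). Once you switch to covering, you still face the fact that each covering ball may contain arbitrarily many points of $S$, so a second argument is needed to say that \emph{no} point inside a covering ball drifts far from the ball's center under $\Phi$. This is the chaining / spanning-tree lemma (Indyk--Naor Lemma 4.2, reproduced as Lemma~\ref{lem:embedding-properties} in the paper), which your proposal does not mention. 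Without it, the union bound over shells is a union bound over covering-ball \emph{centers} only, and says nothing about the rest of $S$. Second, you describe the ``do not collapse'' tail as subgaussian in $\epsilon \cdot 2^j$; in fact the event $\|\Phi z\|_2 \le \epsilon\|z\|_2$ has a polynomial tail, $\Pr \le (3\epsilon)^k$ (the shrinkage bound, Lemma~\ref{lm:fractionshrink}), which is what makes the sum over far shells converge for $k = \Omega(\log\lambda_S)$; a subgaussian claim would be too optimistic and is not what is being used. Third, the $\log(1/\delta)$ factor in $k$ does not arise from a union bound over $x \in S$: Definition~\ref{def:NNPE} is a per-point guarantee, and $\delta$ is simply the per-point failure probability that the tail of the shell sum is required to beat. (A union bound over $S$ would incorrectly introduce a $\log n$ factor.) Your final concern --- that the $\ell_\infty$ control from $\textbf{HD}$ must transfer uniformly to all difference vectors --- is real, and the paper handles it by defining $s$-smoothness over all $x-y$ with $x,y \in S\cup\{0\}$ and union-bounding over $O(n^2)$ pairs, which pushes $s$ to $O(\sqrt{\log(n^2 d)/d})$ and is exactly where the $\log n$ shows up in the sparsity $f$ and hence in the $\epsilon^{-2}\log^3 n$ term.
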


The embedding construction is as suggested by~\cite{Ailon09}, but with $k$ bounded as in \cite{IN07}.
Our contribution is in analysing the requirements for nearest neighbor preserving embeddings and showing that they can be fulfilled by this sparse construction.
We also offer some slight improvement to the constants in $f$.

\subsection{$(\lambda,\varphi)$-Set Similarity join}

The join is an important basic operation in databases.
Typically records are joined using one or more shared key values.
The similarity join is a variation where we instead join records if they are sufficiently similar:

\begin{definition}[Similarity Join]
  \label{def:simjoin}
  Given two sets $S$ and $R$ and a threshold $\lambda$, return the set $S\bowtie_\lambda R=\{(x,y)| x\in S, y\in R, D(x,y)\leq\lambda\}$.
\end{definition}

We will look at this problem not for sets of points, but for sets of sets, i.e. \emph{Set Similarity Join}.
To understand this change of setting let us briefly revisit the embedding of strings into $(\mathbb{N}^2,H$) in figure~\ref{fig:representation}.
If the ordering of the letters in each string is irrelevant or meaningless in a given application, the hamming distance seems a poor choice of distance function.
We want $D("ED","DE")$ to be $0$, not $2$.
This is captured by interpreting a string as a set of elements, as illustrated in Figure~\ref{fig:sets}.

\begin{wrapfigure}[8]{O}[2cm]{4cm}
    \vspace{0.2cm}
  \centering
\includegraphics[width=0.25\textwidth]{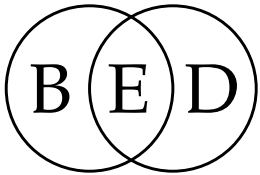}
\caption{The sets $\{B,E\}$ and $\{E,D\}$.}
\label{fig:sets}
\end{wrapfigure}
\noindent
Using the same letter to integer mapping as before, we think for example of ``DE'' as the set $\{3,4\}$ and $S$ as a set of such sets.
In this setting we think of the dimension $d$ as the number of different elements, as opposed to using it for the size of the sets.
In the example $d$ is the size of the alphabet.
We have then moved to the set $\mathcal{P}([d])$ and we switch to using similarity measures (See Section~\ref{sec:distance-functions}).
The Set Similarity Join originates in databases where we might use it to perform entity resolution ~\cite{augsten2013similarity,Chaudhuri_ICDE06, sarawagi2004efficient}.
That is, identify pairs $(x\in S,y\in R)$ where $x$ and $y$ correspond to the same entity.
These can then be used to merge data.
Another popular use of similarity joins in practice is recommender systems.
Here we link two similar, but different, entities in order to use the preferences of one to make recommendations to the other.

In Chapter~\ref{cha:set-similarity} we present a new algorithm, the \cpsj, that solves the set similarity join problem with probabilistic bounds on recall, formalized as:

\begin{definition}[$(\lambda,\varphi)$-Set Similarity Join]
  \label{def:setsimjoin}
  Given two sets of sets $S$ and $R$, a threshold $\lambda\in (0,1)$ and recall probability $\varphi \in (0,1)$.
  Return $L \subseteq S \bowtie_{\lambda} R$ such that for every $(x, y) \in S \bowtie_{\lambda} R$ we have  $\Pr[(x, y) \in L] \geq \varphi$.
\end{definition}

The \cpsj is named after the \textsc{Chosen Path} algorithm~\cite{christiani2017set} for the approximate near neighbor problem.
We can think of the \cpsj as an adaptive version of the \textsc{Chosen Path} algorithm, tailored to the $(\lambda, \varphi)$-Set Similarity Join problem (See Section~\ref{sec:comp-chos-path} for a full comparison).

The core idea is a randomized divide and conquer strategy.
We can view the algorithm as running in a number of steps.
At each step, we may either
\begin{enumerate}
\item solve the problem by brute force, or 
\item divide the problem into smaller problems and handle them in separate steps.
\end{enumerate}

The idea of the \textsc{Chosen Path} algorithm is to perform the division by selecting a random element from $[d]$.
A new subproblem is then formed out of all entities containing that element.
In this way the probability that $x$ and $y$ end up in the same subproblem is proportional to $|x\cap y|$.
This is repeated enough times to get the desired recall.

We can view this process as forming a tree, at each step branching into smaller subproblems, until the leaves are eventually brute forced.
The central question then is at what depth to stop branching.
Building on previous techniques would suggest using either a global worst case depth, $k$, for all points~\cite{christiani2017set,Gionis99,PaghSIMJOIN2015}, or an individual $k_x$ pr. point depth~\cite{ahle2017}.
We develop an adaptive technique that picks out a point when the expected number of comparisons to that point stops decreasing. 
We show that our adaptive strategy has several benefits.
Our main theoretical contribution is showing that the query time is within a constant factor of the individually optimal method.
The \cpsj uses time
	\begin{equation*}
		\tilde{O}\left(\sum_{x \in S}\min_{k_{x}} \left( \sum_{y \in S \setminus \{ x \}} (\simil(x, y) / \lambda)^{k_{x}} + (1/\lambda)^{k_{x}} \right) \right).
	\end{equation*}
        It achieves recall $\varphi = \BOMx{\varepsilon / \log(n)}$ and uses $\BOx{n\log(n)/\epsilon}$ working space with high probability.
        Note that we are trading time against recall \emph{and} space.

We also implemented \cpsj and performed extensive experiments.
Our experiments show speed-ups in the $2-50\times$ range for $90\%$ recall on a collection of standard benchmark data sets.


\section{Preliminaries and techniques}
In this section we introduce the techniques that will be used throughout the thesis.
The section serves both to acknowledge prior work and to highlight new techniques in the context of their priors.
Readers familiar with randomized algorithms and data structures can skip ahead to Section~\ref{sec:furthest-neighbor}.
We will describe the techniques from a high level perspective for the purpose of establishing shared intuition and a common language.

\subsection{Computational Model}

When devising new algorithms we will primarily be interested in their cost in terms of two resources, time and space~(i.e. storage).
To build precise arguments about the cost of a given algorithm we will need a mathematical model for how the algorithm will be carried out.
Here we face a trade-off between the simplicity of the model, the general applicability and the precision.

While these first two demands are somewhat correlated, it is difficult to fulfill all three simultaneously. 
However, our focus is on finding time and space costs that can be used to compare different algorithms and give an insight into their \emph{relative} performance.
Hence precision is of less importance, as long as algorithms are somewhat evenly affected.
Unit cost models are well suited for this task.
We will base our model on the \textsc{real-RAM} model as introduced by Shamos~\cite{Shamos1978}.
We could also use the \textsc{word-RAM} model, but by using full reals we avoid discussing issues of precision that are not at the core of the algorithms.
However we do not have numerically unstable processes and results should hold in both models.
To avoid unrealistic abuses, say by packing the input set into a single real, we do not have a modulo operation or integer rounding.
We model the computers memory as consisting of infinitely many indexed locations $M_i$, each location holding a real number:
\[M=\{(i,M_i\in\mathbb{R})|i\in\mathbb{N}\}.\]
We assume that we can carry out any of the following operations at unit cost:
\begin{itemize}
\item \texttt{Read} or \texttt{Write} any $M_i$.
\item Compare two reals, $\leq,<,=,>,\ge$.
\item Arithmetic operations between two reals $+,-,\cdot,/$.
\item Sample a random variable from a uniform, normal or binomial distribution.
\end{itemize}

The time cost of an algorithm will then be the total number of these operations performed.
When talking about search algorithms we will often split the time cost into \emph{preprocessing-} and \emph{query} time.
All operations that can be carried out without knowledge of the query point(s) are counted as preprocessing.
Query time counts only the remaining operations.
The storage requirement is simply the number of memory locations accessed.
Of course, we will also try to be considerate of other resources, like how complex something is to implement, but we do not include these concerns in the model.

We will be using standard \textsc{\Osymbol}-notation~\cite{Knuth1976} to give bounds in the model.
In short, let $g,f:X\rightarrow\mathbb{R}$:
\begin{itemize}
\item $\BOx{f(x)}$ denotes the set of all $g(x)$ such that there exists positive constants $C$ and $x_0$ with $|g(x)|\leq C f(x)$ for all $x\geq x_0$.
\item $\BOMx{f(x)}$ denotes the set of all $g(x)$ such that there exists positive constants $C$ and $x_0$ with $g(x)\geq C f(x)$ for all $x\geq x_0$.
\end{itemize}
Although $\BOx{f}$ is a set, it is standard to use $g=\BOx{f}$ and ``g is $\BOx{f}$'' to mean $g\in\BOx{f}$.
$\TOx{}$ is used to omit polylog factors.

\subsection{Distance functions and similarity measures}
\label{sec:distance-functions}

We will mostly be formalizing ``similarity'' through the inverse notion of distance.
Given a point in space, similar things will be close, differing things far away.
But we will also sometimes use direct similarity measures.
Table~\ref{tab:distances} contains the distance functions and similarity measures we will be using throughout.
We write $D(\cdot,\cdot)$ for distance functions, and $\simil(\cdot,\cdot)$ for similarity measures.
This is a little confusing, but done for historical reasons.
Both notions are well established in separate branches of mathematics.

The distance functions are central in geometry, dating back to the ancient Greeks.
Most of our work will focus on $\ell_p$ norms, in particular the Euclidean distance $\ell_2$.
For a thorough discussion of the $\ell_p$ norms we refer to~\cite{Rudin1986}.

The similarity measures originated in biology where they where developed to compare subsets of a bounded set, like $[d]$ or the set of all flowers. 
In Chapter \ref{cha:set-similarity} we use Jaccard similarity as well as the Braun-Blanquet variation.
These measures range between $0$ and $1$, with $0$ being no common elements and $1$ being duplicate sets.

The odd space out is the Hamming space.
We could define the Hamming similarity as $(d-H(x,y))/d$, but it is standard in the literature to use Hamming distance.

A practitioner wondering about the correct embedding for a concrete application might use the notion of ``opposite'' as a start.
It is always easy to define equal, but we can only define opposite in a bounded space.
If we are in an unbounded space, say $(\mathbb{R}^d,\ell_2)$, no matter where we would put ``the opposite'' of a point, there is always something a little further away.
If on the other hand it is easy to identify two things as \emph{completely} different, a bounded space is probably the right choice.

\begin{table}
  \centering
  \bgroup
  \def\arraystretch{2}%
  
  \begin{tabular}{lll}
    Name&Input&Distance function\\ 
    \hline
    Hamming distance&$x,y\in X^d$&$H(x,y)=\sum_i^d\begin{cases}1\text{ if } x_i =y_i \\ 0\text{ else } \end{cases}$\\
    Minkowski distance& $x,y\in X^d$&$\ell_p(x,y)=\left(\sum_i^d|x_i-y_i|^p\right)^{1/p}$\\
    Euclidian distance& $x,y\in X^d$&$\ell_2(x,y)=\sqrt{\sum_i^d|x_i-y_i|^2}$\\
    Jaccard similarity& $A,B\subseteq X$ & $J(A,B)=\frac {|A\cap B|}{|A\cup B|}$\\
    Braun-Blanquet similarity & $A,B\subseteq X $ & $BB(A,B)=\frac{ A\cap B}{max(|A|,|B|)}$\\
    \hline
  \end{tabular}
  \egroup
  
\caption{Distance functions and Similarity measures}
\label{tab:distances}

\end{table}

\subsection{Notation}

An overview of the notation used for sets is available in Table~\ref{tab:sets}.
Table~\ref{tab:notation} contains the reserved symbols we use when solving similarity search problems.
In Table~\ref{tab:distances} we list the distance functions and similarity measures used.
For random variables we write $X\sim Y$ when $X$ and $Y$ have the same distribution (See Section~\ref{sec:rand-conc}).

We frequently work with balls, so some special notation for these is helpful.
The $d$ dimensional ball is defined in $(X^d,D)$ as

\[\B[d]{x}{r}=\{p\in X^d | D(p.x)\leq r\}.\]

If we are arguing about the any ball of a given radius we write $B_d(r)$.
We will omit the $d$ subscript when it is clear from the context.

\subsection{Divide and Conquer}

Much of the successful early work in similarity search is based on divide and conquer designs \cite{Bentley1976}. 
The main idea here is to divide $S\subseteq X$ into halves along each dimension and recursively search through the parts until the nearest neighbor is found.
This leads to powerful data structures in low-dimensional spaces, but ultimately also to the amount of work growing exponentially in $d$.
The \emph{kd-tree} is a well known data structure based on this design.
For the NN problem it  promises $\BO{\log{n}}$ query time on random data, $\BO{n \log n}$ preprocessing time and $\BO{n}$ storage \cite{Bentley1975,Friedman1976}, but for high dimensions it converges toward linear query time.

In Chapter~\ref{cha:set-similarity} the paradigm is used to recursively break problems into smaller sub-problems that are then individually handled.
Of course the challenge then is to ensure that all relevant answers to the larger problem emerge as answers in one of the sub problems.
For exact algorithms, like the classical \emph{closest pair in two dimensions}~\cite{Shamos1975} problem, this is handled by checking all possible ways a solution could have been eliminated when generating sub problems.
In Chapter~\ref{cha:set-similarity} we handle it by randomly generating enough sub-problems to give probabilistic guarantees that all close pairs are checked.

\subsection{Randomization and Concentration}
\label{sec:rand-conc}
Algorithms that make random choices, or \emph{Randomized algorithms},  are at the heart of modern similarity search.
Randomization was already important in early work to speed up construction of the $d$ dimensional voronoi diagram~\cite{Clarkson1988}, and it is essential in the later LSH based techniques.
In order to analyse such algorithms we will borrow many ideas and results from the field of probability theory.
We only cover a few of the most used tools here.
See for example~\cite{mitzenmacher2005probability,motwani2010randomized} or~\cite{Feller1968} for an overview.

If our algorithm is to take a random choice it must have access to a source of randomness. 
In reality this will normally be simulated with psudo-random numbers generated by another algorithm, but we will assume that we can access some random process to generate a random event.
Let the sample space, $\Omega$, be the set of all possible outcomes of a random event.

\begin{definition}
  A random variable, $X$, is a real valued function on the sample space $X:\Omega\rightarrow\mathbb{R}$.
  A \emph{discrete} random variable $X$ takes on only a finite or countably infinite number of values.
\end{definition}

We will say that random variable $X$ has cumulative distribution function $F$ if

\[~F(x)=\Pr[X \leq x].\]

When $F(x)$ has the form $\int_{-\infty}^x f(t) dt$, or $\sum_{x'\leq x} f(x')$ for discrete $X$, we say that $X$ has probability density function $f(x)$.
When two random variables $X,Y$ have the same cumultative distribution function, i.e.
\[\forall x\Pr[X \leq x]=\Pr[Y \leq x],\]

it implies that
\begin{equation}
  \label{eq:eqdist}
  \forall x\Pr[X=x]=\Pr[Y=x],
\end{equation}
and we say that $X$ and $Y$ have the same distribution.
We write this as $X\sim Y$.
We avoid using eq.~\ref{eq:eqdist} directly for this, because if $X$ is not discrete $\Pr[X=x]=0$ for all $x$.
For some distributions that we encounter often we use special symbols:
\begin{definition}[The normal distribution]
  We write 
  \[X\sim\mathcal{N}(\mu,\sigma^2).\]
  When $X$ follows the normal distribution with mean $\mu\in\mathbb{R}$ and variance $\sigma^2\in\mathbb{R}$, defined by probability density function:
  \[f(x)=\frac 1 {\sqrt{2\pi\sigma^2}} e^{\frac {(x-\mu)^2} {2\sigma^2}}.\]
\end{definition}
We refer to $\mathcal{N}(0,1)$ as the \emph{standard} normal distribution.
The normal distribution is also called the Gaussian distribution and we sometimes refer to random variables as ``Gaussians'' if they follow the normal distribution.
When building randomized algorithms we often return to the Gaussian distribution.
One reason is that it is \textsc{$2-$stable}~\cite{Zolotarev1986}:

We call a distribution $\mathcal{D}$ over $\mathbb{R}$ \textsc{$p-$stable} where $p\geq0$,
if for any real numbers $v_1,\ldots,v_d$ and for $X,X_1,X_2,\ldots,X_d\sim\mathcal{D}$:
\[ \sum_i^dv_iX_i \sim \left(\sum_i^d|v_i|^p\right)^{1/p}X\]

So for $X,X_1,X_2,\ldots,X_d\sim \mathcal{N}(0,1)$ and some vector $x\in\mathbb{R}^d$ we have $\sum X_ix_i\sim \|x\|_2X \sim Y$ where $Y\sim\mathcal{N}(0,\|x\|^2_2)$.

\begin{definition}[The binomial distribution]
  We write
  \[X\sim\mathcal{B}(n,p).\]
  When $X$ follows the binomial distribution with $n\in \mathbb{N}$ trials and success probability $p\in[0,1]$.
  The probability density function is
 \[f(x)=\binom n x p^x(1-p)^{n-x}.\]
\end{definition}

The binomial distribution can be understood as counting the number of heads in a series of $n$ coin flips, if the coin shows head with probability $p$.
A single flip of the coin is referred to as a Bernoulli trial.
Note that $X$ is then a discrete random variable, the only outcomes are the integers from $0$ to $n$.
This distribution arises often when working in Hamming space due to the binary nature of the space.\\

\noindent
A very powerful tool that we use frequently to analyse random processes is Markov's inequality:

\begin{theorem}[Markov's inequality]
  \label{thm:markov}
  Let $X$ be a non-negative random variable. Then, for all $a>0$,
  \[\Pr[X\geq a]\leq \frac {\E[X]} a\enspace.\]
\end{theorem}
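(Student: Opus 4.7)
The plan is to prove Markov's inequality by comparing $X$ against a simple indicator and then taking expectations, exploiting the non-negativity assumption at exactly one place. First I would fix $a>0$ and introduce the indicator random variable
\[
I_a = \begin{cases} 1 & \text{if } X \geq a,\\ 0 & \text{otherwise.}\end{cases}
\]
By the definition of $I_a$, we have $\E[I_a] = \Pr[X \geq a]$, so the claim reduces to bounding $\E[I_a]$ from above by $\E[X]/a$.

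The key step is the pointwise inequality $a \cdot I_a \leq X$, which I would verify by considering two cases on $\omega \in \Omega$. If $X(\omega) \geq a$, then $a \cdot I_a(\omega) = a \leq X(\omega)$. If $X(\omega) < a$, then $a \cdot I_a(\omega) = 0$, and this is at most $X(\omega)$ precisely because $X$ is non-negative. This is the only place where the hypothesis $X \geq 0$ is used, and it is the content of the inequality — without it the second case fails.

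Taking expectations of both sides of $a \cdot I_a \leq X$ and using linearity and monotonicity of expectation gives $a \cdot \Pr[X \geq a] = a \cdot \E[I_a] \leq \E[X]$. Dividing through by $a > 0$ yields the stated bound. I do not expect any real obstacle here: the argument goes through uniformly for both discrete and continuous $X$ in the real-RAM-style setting of the thesis, since it only uses linearity/monotonicity of expectation and the definition of the indicator. If one preferred a more computational route, one could instead write $\E[X] = \int_{0}^{\infty} x\, dF(x) \geq \int_{a}^{\infty} x\, dF(x) \geq a \int_{a}^{\infty} dF(x) = a\,\Pr[X \geq a]$, but the indicator-based proof is cleaner and avoids splitting into discrete and continuous cases.
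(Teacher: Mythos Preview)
Your proof is correct and is the standard indicator-function argument for Markov's inequality. The paper itself states this theorem without proof in the preliminaries section as a well-known tool, so there is no paper proof to compare against; your argument is exactly what one would expect for this classical result.
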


Using Markov's inequlity directly yields useful, but pretty loose bounds.
If we have a good grasp of the moment generating function of $X$, $M(t)=\E[e^{tX}]$, we can get much stronger bounds out of  Markov's inequality.
The idea is to analyse $e^X$ rather than $X$.
Since $e^{X}\geq0$ even if $X<0$ this also expands the range of variables we can use.
We refer to bounds derived this way as ``Chernoff bounds''.
For example for $X\sim\mathcal{B}(n,1/2)$ and $\epsilon >0$ we can use Markov's inequality directly to get

\[\Pr[X\geq (1+\epsilon)\frac n2 ]\leq \frac 1 {1+\epsilon}\enspace.\]

While a Chernoff bound yields exponentially stronger bounds and captures the increasing concentration in $n$,

\[\Pr[X\geq (1+\epsilon)\frac n2 ]\leq e^{-\epsilon^2n/2}\enspace. \]

Even if we fix $n$, this is a lot better as illustrated in Figure~\ref{fig:chernoffpower}.

\begin{figure}[h]
  \centering
  \includegraphics[width=0.5\textwidth]{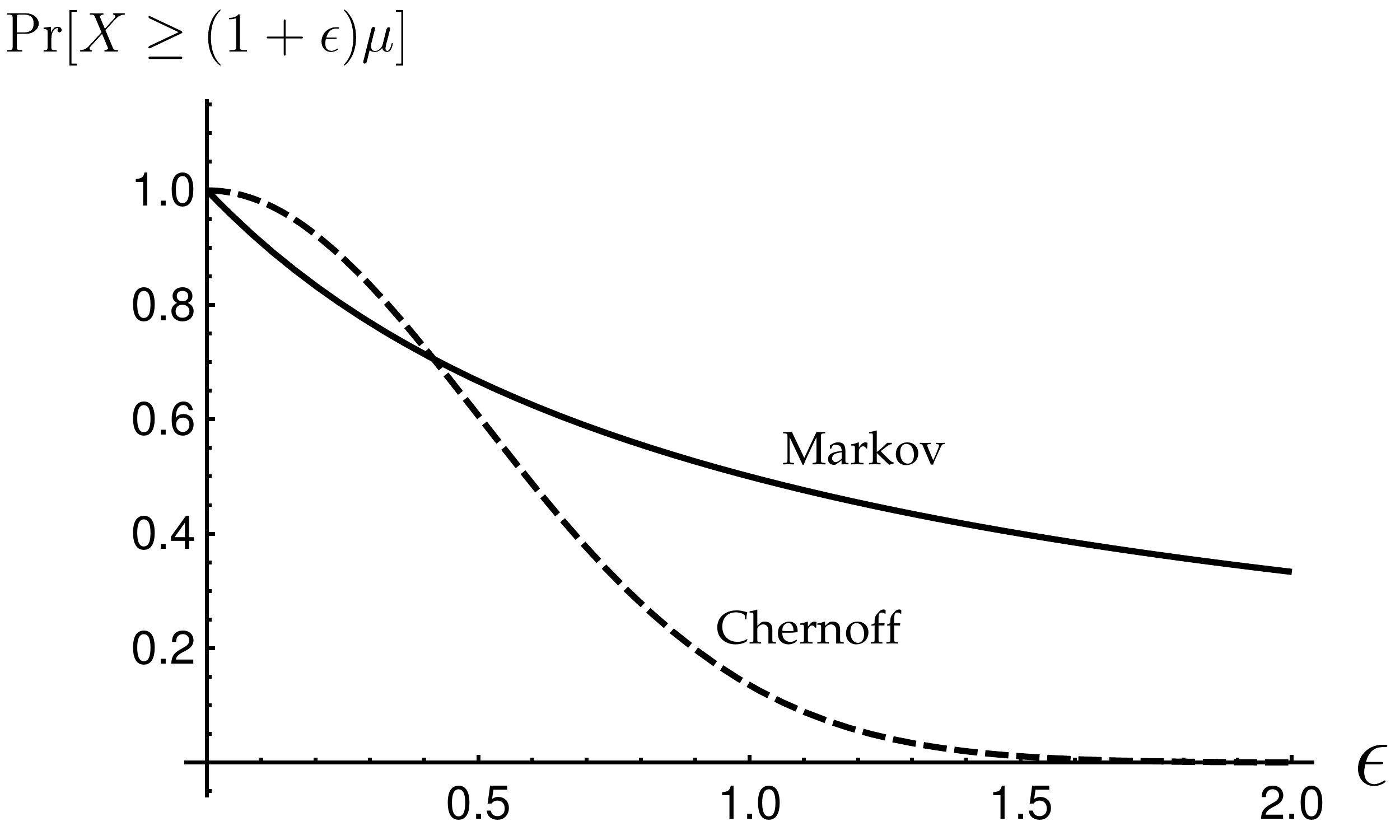}
  \caption{Illustration of Markov and Chernoff type bounds.}
  \label{fig:chernoffpower}
\end{figure}

\subsection{Hashing}
\label{sec:hashing}
We use hashing as an umbrella term for applying functions that map some universe $\mathcal{U}$ into a limited range of $M$ integers.

Often we want functions that spread a large universe evenly over the output range.
This idea was formalized by Carter and Wegman in the notions of \emph{universal hashing}~\cite{Carter1979} and \emph{$k$-independent hashing}~\cite{Carter79II}.
We call a family of hash functions $\mathcal{H}$ \emph{universal} if for a randomly chosen $h\in\mathcal{H}$ and distinct $x_1,x_2\in\mathcal{U}$ and randomly chosen $y_1,y_2\in[M]$:

\[\Pr[h(x_1)=y_1\wedge h(x_2)=y_2]\leq \frac 1 {M^2}\]

And we say that the family is $k$-independent if for any keys $(x_1,\cdots,x_k)\in\mathcal{U}^k$ and any $(y_1,\cdots,y_k)\in[M]^k$:

\[\Pr[h(x_1)=y_1\wedge\cdots\wedge h(x_k)=y_k]=M^{-k}\]

So for $k\geq2$, $k$-independent families are strongly universal.

Another useful property was introduced by Broder et. al.~\cite{Broder1997,Broder2000}:

Let $S_n$ be the set of all permutations of $[n]$.
We say that a family of permutations $\mathcal{F}\subseteq S_n$ is min-wise independent if for any $X\subseteq [n]$ and any $x\in X$, when $\pi$ is chosen at random from $\mathcal{F}$ we have

\[\Pr[\min(\pi(X))=\pi(x)]=\frac 1 {|X|}\enspace.\]

That is, every element of $X$ is equally likely to permute to the smallest value.
We call $\mathcal{H}$ a familiy of \textsc{MinHash} functions if for a random $h\in\mathcal{H}$, $h(X)=min(\pi(X))$ where $\pi$ is a random permutation from a min-wise independent family of permutations.

\textsc{MinHash} functions are very useful in Set Similarity because

\[\Pr[h(x)=h(y)]=\frac {|x \cap y|} {|x \cup y|}=J(x,y)\enspace.\]

Let $X_i=1$ if $h_i(x)=h_i(y)$ and $0$ otherwise.
A Chernoff bound tells us that if $X=\frac 1t\sum_i^tX_i$,
\begin{equation}
  \label{eq:minest}
  \Pr[|X-J(x,y)|\geq \sqrt{\frac{3\ln t}t}J(x,y)]\leq2e^{-J(x,y)\ln{t}}=\frac 2{t^{J(x,y)}}\enspace.
  \end{equation}

  So we can get precise estimates of the Jaccard similarity from a small number of hash functions.
  
  Of course the number of permutations of $[n]$ is $n!$ so in practice we allow $\mathcal{F}\subseteq S_n$ to be \emph{$\epsilon-$min-wise independent}:

\[\Pr[\min(\pi(X))=\pi(x)]\in\frac {1\pm\epsilon}{|X|}\]

In practice we also want hash functions that are fast to evaluate and easy to implement.
Zobrist hashing, or \emph{simple tabulation hashing}, fits this description.
It is $\epsilon-$min-wise independent with $\epsilon$ shrinking polynomially in $|X|$~\cite{Patrascu2012}, $3-$independent and fast in practice~\cite{Thorup2015}.
Tabulation hashing works by splitting keys $x=(x_0,\cdots,x_{c-1})$ into $c$ parts.
Each part is treated individually by mapping it to $[M]$, say with a table of random keys$t_o,\cdots,t_{v-1}:\mathcal{U}\rightarrow [M]$. Finally $h:\mathcal{U}^c\rightarrow [M]$ is computed by:
\[ h(x)=\oplus_{i\in[c]} t_i(x_i)\]
Where $\oplus$ denotes the bit-wise \textsc{XOR} operation.

\subsection{Locality Sensitive Hashing}
\label{sec:lsh}
 
Locality Sensitive Hashing(LSH) is the current state of the art for solving the ANN problem(Definition~\ref{def:ANN}).
The technique was first introduced by Indyk, Gionis and Motwani \cite{Indyk1998,Gionis99} with an implementation that is still the best know for Hamming space.
Since then it has been a subject of intense research.
See~\cite{Andoni2009} for an overview.
The basic idea is to partition the input data using a hash function, $H$, that is sensitive to the metric space location of the input.
This means that the collision probability is larger for inputs close to each other than for inputs that are far apart.
This requirement is normally formalized as:

      \begin{equation}
          \Pr\left[H(u)=H(v)\right]\begin{cases}
            \geq P_1 \mbox{ when } D(u,v) \leq r\\
            \leq P_2 \mbox{ when } D(u,v) \geq cr\\
          \end{cases}
        \end{equation}
        where $P_1>P_2$.
        So the of points in $S$ colliding with $q$ under $H$ are likely near neighbors.
        The key to success for this method is in achieving a large gap between $P_1$ and $P_2$, quantified as $\rho=\frac{-\ln P_1}{-\ln P_2}$(See Figure~\ref{fig:lsh}).
        Ideally, $P_2$ would be $0$, for example by the hash function returning the cell of the voronoi diagram associated with a point.
        But that would trap the function in the curse of dimensionality, either taking up too much space or time.
        So instead we use several functions that each return imperfect partitioning, as illustrated in Figure~\ref{fig:lsh-partitioning}, but are fast to evaluate.

        \begin{wrapfigure}[20]{O}[2cm]{6cm}
          \vspace{0.5cm}
  \centering
  \includegraphics[width=0.4\textwidth]{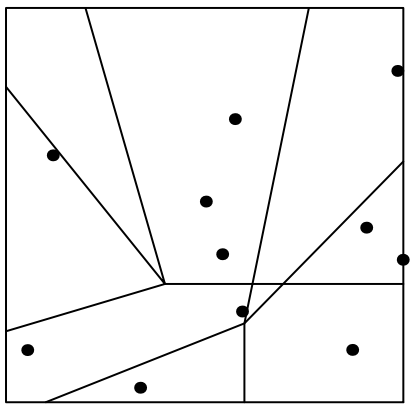}
  \caption{A non-perfect partitioning of points in $\mathbb{R}^2$}
    \label{fig:lsh-partitioning}
\end{wrapfigure}
\noindent
Using a hash function with these properties the $(c,r)$-ANN problem can be solved using $n^{1+\rho+o(1)}$ extra space with  $dn^{\rho+o(1)}$ query time \cite{Har-Peled2012}. 
Recently lower bounds have been published on $\rho$ for the $\ell_1$ \cite{Har-Peled2012} and $\ell_2$ \cite{O'Donnell2014} norm, and a result for $\ell_2$ with $\rho=1/c^2$ has been know for a some years \cite{Andoni2006a}.
In Chapter~\ref{sec:annulus-query} we explore the idea of storing the contents of the LSH buckets in a particular order.
In our case we use projection values onto a random line as approximation of nearness to the convex hull.
However the technique could be expanded to other ways of prioritizing points in scenarios where some subset of the nearest neighbors are of more interest than others.

\begin{figure}
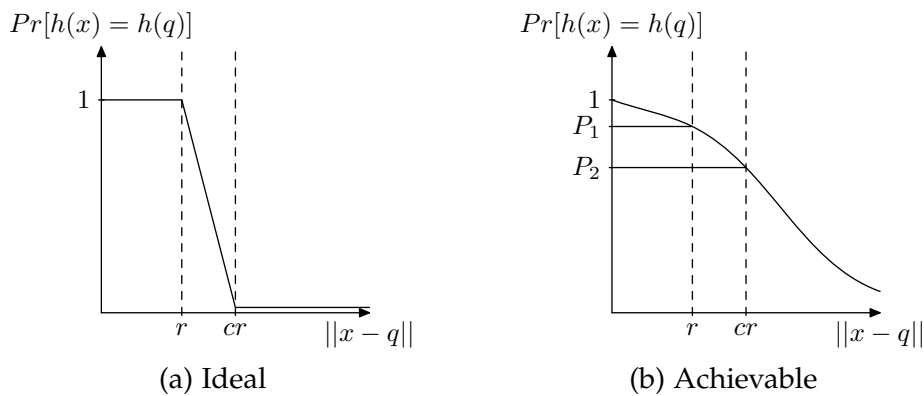

          \centering
  \begin{subfigure}{0.49\textwidth}
    \centering
    \includegraphics{lsh.1}
    \caption{Ideal}
  \label{fig:lsh_left}
\end{subfigure}
  \begin{subfigure}{0.49\textwidth}
    \centering
    \includegraphics{lsh.2}
    \caption{Achievable}
  \label{fig:lsh_right}
\end{subfigure}
  \caption{Ideal vs. achievable LSH function.}
  \label{fig:lsh}
\end{figure}


\chapter{Furthest Neighbor}
\label{sec:furthest-neighbor}

Much recent work has been devoted to approximate nearest neighbor queries. 
Motivated by applications in recommender systems, we consider
\emph{approximate furthest neighbor} (AFN) queries and present a simple,
fast, and highly practical data structure for answering AFN queries in
high-dimensional Euclidean space.
The method builds on the technique of Indyk (SODA 2003), storing random projections to provide sublinear query time for
AFN\@. However, we introduce a different query algorithm, improving on Indyk's approximation 
factor and reducing the running time by a logarithmic factor.  We also present 
a variation based on a query-independent ordering of the database points; while 
this does not have the provable approximation factor of the query-dependent data 
structure, it offers significant improvement in time and space complexity.  
We give a theoretical analysis, and experimental results.

\section{Introduction}
The furthest neighbor query is an important primitive in computational geometry.
For example it can been used for computing the minimum spanning tree or the diameter of a set of points~\cite{AgarwalMS92,Eppstein95}.
It is also used in recommender systems to  create more diverse recommendations~\cite{said2013user,said2012increasing}. 
In this Chapter we show theoretical and experimental results for the \emph{$c$-approximate furthest neighbor} problem ($c$-AFN, Definition~\ref{def:c-AFN}) in $(\mathbb{R}^d,\ell_2)$.
We present a randomized solution with a bounded probability of not returning a $c$-AFN.
The success probability can be made arbitrarily close to $1$ by repetition.

\begin{wrapfigure}{O}[2cm]{7cm}
  \includegraphics[width=0.5\textwidth]{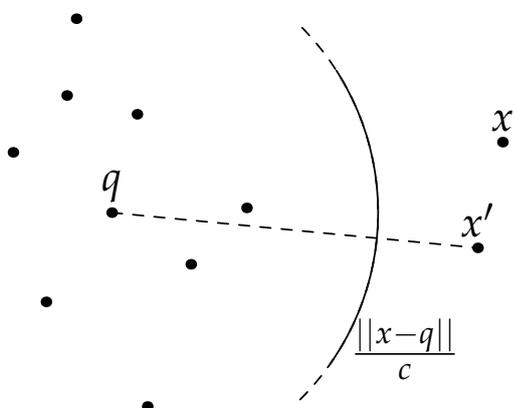}
  \caption{Returning a $(c)$-AFN.}
  \label{fig:fn_query}
\end{wrapfigure}

We describe and analyze our data structures in Section~\ref{sec:alg}.
We propose two approaches, both based on random projections but differing in 
what candidate points are considered at query time.
In the main query-dependent version the candidates will vary depending on the given query,
while in the query-independent version the candidates will be a fixed set.

The query-dependent data structure is presented in Section~\ref{sec:pq}.
It returns the $c$-approximate furthest neighbor, for any $c>1$, with probability at least
$0.72$.  When the number of dimensions is $\BOx{\log n}$, our result
requires $\TOx{n^{1/c^2}}$ time per query and $\TOx{n^{2/c^2}}$ total space,
where $n$ denotes the input size.
Theorem~\ref{thm:space} gives bounds in the general case.
This data structure is closely
similar to one proposed by Indyk~\cite{Indyk2003}, but we use a different approach for the  query algorithm. 

The query-independent data structure is presented in
Section~\ref{sub:query-ind}.  When the approximation factor is a constant
strictly between $1$ and $\sqrt{2}$, this approach requires $2^{\BO{d}}$
query time and space.  This approach is significantly faster than the
query dependent approach when the dimensionality is small.

The space requirements of our data structures are quite high: the query-independent data structure requires space exponential in the dimension, while 
the query-dependent one requires more than linear
space when $c<\sqrt{2}$. 
However, we claim that this bound cannot be significantly improved.
In Section~\ref{sec:lb} we show that any data structure that solves the $c$-AFN by storing a suitable subset of the input points
 must store at least $\min\{n, 2^{\BOM{d}}\}-1$
data points when $c<\sqrt{2}$.

Section~\ref{sec:exp} describes experiments on our data
structure, and some modified versions, on real and
randomly-generated data sets.  In practice, we can achieve approximation
factors significantly below the $\sqrt{2}$ theoretical result,
even with the query-independent version of the algorithm.
We can also
achieve good approximation in practice with significantly fewer projections
and points examined than the worst-case bounds suggested by the theory.  Our
techniques are much simpler to implement than existing methods for
$\sqrt{2}$-AFN, which generally require convex
programming~\cite{clarkson1995vegas,matouvsek1996subexponential}.  Our
techniques can also be extended to general metric spaces.

\subsection{Related work}

\paragraph{Exact furthest neighbor}
In two dimensions the furthest neighbor problem can be solved in linear
space and logarithmic query time using point location in a furthest point
Voronoi diagram (see, for example,~de Berg et al.~\cite{CGbook08}).  However, the space usage of
Voronoi diagrams grows exponentially with the number of dimensions, making
this approach impractical in high dimensions.  More generally, an efficient data
structure for the \emph{exact} furthest neighbor problem in high dimension
would lead to surprising algorithms for satisfiability~\cite{Williams04}, so
barring a breakthrough in satisfiability algorithms we must assume that such
data structures are not feasible.
Further evidence of the difficulty of exact furthest neighbor is the following
reduction: Given a set
$S\subseteq \{-1,1\}^d$ and a query vector $q\in \{-1,1\}^d$, a
furthest neighbor (in Euclidean space) from $-q$ is a vector in $S$ of minimum
Hamming distance to $q$.  That is, exact furthest neighbor is at least as hard
as exact nearest neighbor in $d$-dimensional Hamming space, which seems to be very hard to do in $n^{1-\BOMx{1}}$ without using exponential space~\cite{Williams04,AhlePRS16}.

\paragraph{Approximate furthest neighbor}
Agarwal et al.~\cite{AgarwalMS92} proposes an algorithm for computing the $c$-AFN 
for \emph{all} points in a set $S$ in time $\BO{n/(c-1)^{(d-1)/2}}$ where $n=|S|$ and $1<c<2$.  
Bespamyatnikh~\cite{Bespam:Dynamic} gives a dynamic data structure for $c$-AFN.
This data structure relies on fair split trees and requires $\BO{1/(c-1)^{d-1}}$ time per query and $\BO{dn}$ space, with $1<c<2$.
The query times of both results exhibit an exponential dependency on the
dimension.  Indyk~\cite{Indyk2003} proposes the first approach avoiding this
exponential dependency, by means of multiple random projections of the data and query points to one
dimension.
More precisely, Indyk shows how to solve a
\emph{fixed radius} version of the problem where given a parameter $r$ the
task is to return a point at distance at least $r/c$ given that there exist
one or more points at distance at least $r$.  
Then, he gives a solution to the furthest
neighbor problem with approximation factor $c+\delta$, where $\delta > 0$ is a sufficiently small constant,
by reducing it to queries on many copies of that data structure.  The
overall result is
space $\tilde O(d n^{1+1/c^2})$ and query time $\tilde O(d n^{1/c^2})$, which improved the previous lower bound when $d=\BOM{\log n}$. 
The data structure presented in this chapter shows that the same basic method, multiple random projections to one
dimension, can be used for solving $c$-AFN directly, avoiding the intermediate data structures for the fixed radius version.
Our result is then a simpler data structure that works for all radii and, being
interested in static queries, we are able to reduce the space to
$\TOx{dn^{2/c^2}}$.

\paragraph{Methods based on an enclosing ball}
Goel et al.~\cite{Goel2001} show that a $\sqrt{2}$-approximate furthest
neighbor can always be found on the surface of the minimum enclosing ball of
$S$.  More specifically, there is a set $S^*$ of at most $d+1$ points from
$S$ whose minimum enclosing ball contains all of $S$, and returning the
furthest point in $S^*$ always gives a $\sqrt{2}$-approximation to the
furthest neighbor in $S$.
(See also Appendix~\ref{cha:2afn}).
This method is \emph{query independent} in the
sense that it examines the same set of points for every query.  Conversely,
Goel et al.~\cite{Goel2001} show that for a random data set consisting of $n$
(almost) orthonormal vectors, finding a $c$-approximate furthest neighbor
for a constant $c < \sqrt{2}$ gives the ability to find an
$O(1)$-approximate near neighbor.  Since it is not known how to do that in
time $n^{o(1)}$ it is reasonable to aim for query times of the form
$n^{f(c)}$ for approximation $c < \sqrt{2}$.
We also give a lower bound supporting this view in Section~\ref{sec:lb}.

\paragraph{Applications in recommender systems}
Several papers on recommender systems have investigated the use
of furthest neighbor search~\cite{said2013user,said2012increasing}. 
The aim there was to use furthest neighbor search to create more diverse recommendations.
However, these papers do not address performance issues related to   
furthest neighbor search, which are the main focus of our efforts. 
The data structures presented in this chapter are intended to improve performance in recommender systems relying on furthest neighbor queries.
Other related works on recommender systems include those of
Abbar et al.~\cite{abbar2013real} and Indyk et
al.~\cite{indyk2014composable}, which use core-set techniques to return a
small set of recommendations no two of which are too close.  In turn,
core-set techniques also underpin works on approximating the minimum
enclosing ball~\cite{badoiu2008optimal,KMY03}.

\subsection{Notation}
\label{sec:ball-notation}
In this chapter we will use  $\arg\max_S^m f(x)$ for the set of $m$ elements from $S$ that have the largest values of $f(x)$, breaking ties arbitrarily.



\section{Algorithms and analysis}\label{sec:alg}

\subsection{Furthest neighbor with query-dependent candidates}\label{sec:pq}

Our data structure works by choosing a random line and storing the order of
the data points along it.  Two points far apart on the line are likely
far apart in the original space.  So given a query we can the points
furthest from the query on the projection line, and take those as candidates
for furthest point in the original space.  We build several such data
structures and query them in parallel, merging the results.

Given a set $S\subseteq \mathbb{R}^d$ of size $n$ (the input data),
let $\ell=2n^{1/c^2}$ (the number of random lines) and
$m =1+e^2 \ell \log^{c^2/2-1/3} n $ (the number of candidates to be examined
at query time), where $c>1$ is the desired
approximation factor.  We pick $\ell$ random vectors $a_1,\dots,a_\ell \in
\mathbb{R}^d$ with each entry of $a_i$ coming from the
standard normal distribution $N(0,1)$.  

For any $1\leq i \leq \ell$, we let $S_i = \arg\max^m_{x\in S} a_i\cdot x$
and store the elements of $S_i$ in sorted order according to the value
$a_i\cdot x$.  Our data structure for $c$-AFN consists of $\ell$ subsets
$S_1,\dots,S_\ell \subseteq S$, each of size $m$.  Since these subsets come
from independent random projections, they will not necessarily be disjoint
in general; but in high dimensions, they are unlikely to overlap very much. 
At query time, the algorithm searches for the furthest point from the query
$q$ among the $m$ points in $S_1,\dots,S_\ell$ that maximize $a_i
x- a_i q$, where $x$ is a point of $S_i$ and $a_i$ the random vector used
for constructing $S_i$.  The pseudocode is given in
Algorithm~\ref{alg:basic-query}.  We observe that although the data
structure is essentially that of Indyk~\cite{Indyk2003}, our technique
differs in the query procedure.

\begin{algorithm}
  \KwIn{$\ell$ orderings of the input set $S$. Each $S_{1\leq i\leq\ell}$ referencing $S$ in decreasing order of $a_i\cdot x$. A query point $q$.}
  $\mathit{P} \leftarrow$ An empty priority queue of (point, integer) pairs\;
  $Q \leftarrow $An empty array of reals\;
  $I \leftarrow$An empty array of iterators\;
    \For{$i=1$ to $\ell$}{
      $Q_i\leftarrow a_i\cdot q$\;
      $I_i\leftarrow $An iterator into $S_i$\;
      Retrieve $x$ from $I_i$ and advance $I_i$\;
      Insert $(x,i)$ into $P$ with priority $a_i\cdot x- Q_i$\;
    }
    $\mathit{rval} \leftarrow \bot$\;
    \For{$j=1$ to $m$}{
      $(x,i)\leftarrow $ Highest priority element from $P$\;
      \If{$\mathit{rval} = \bot$ or $x$ is further than $\mathit{rval}$ from $q$}{
        $\mathit{rval} \leftarrow x$
      }
      Retrieve $x$ from $I_i$ and advance $I_i$\;
      Insert $(x,i)$ into $P$ with priority $a_i\cdot x- Q_i$\;
    }
    \Return{$\mathit{rval}$}
  
  \caption{Query-dependent approximate furthest neighbor}\label{alg:basic-query}
\end{algorithm}

Note that early termination is possible if $r$ is known at query time.

\paragraph{Correctness and analysis}
The algorithm examines distances to a set of $m$ points with maximal projection values, we will call the set $S_q$:
$$S_q=\arg\max^m_{x\in\cup S_i}a_i\cdot(x-q), ~ |S_q|= m.$$
We choose the name $S_q$ to emphasize that the set changes based on $q$.
Our algorithm succeeds if and only if $S_q$ contains a $c$-approximate
furthest neighbor. We now prove that this happens with constant
probability. 

We make use of the following standard lemmas that can be
found, for example, in the work of Datar et al.~\cite{Datar04} and Karger,
Motwani, and Sudan~\cite{KMS98}.

\begin{lemma}[See Section 3.2 of Datar et al.~\cite{Datar04}]\label{lem:n0}
For every choice of vectors $x,y \in \mathbb{R}^d$:
\begin{equation*}
\frac{a_i \cdot (x-y)}{\|x-y\|_2}\sim N(0,1).
\end{equation*}
\end{lemma}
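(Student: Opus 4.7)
The plan is to reduce the claim to a standard fact about Gaussians stated earlier in the preliminaries: namely, that the standard normal distribution is $2$-stable. Let $z = x - y \in \mathbb{R}^d$ so that the quantity of interest is $a_i \cdot z / \|z\|_2 = \sum_{j=1}^d a_{i,j} z_j / \|z\|_2$, where $a_{i,1}, \dots, a_{i,d} \sim \mathcal{N}(0,1)$ are independent.

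First, I would invoke the $2$-stability definition from the preliminaries (the property that for $X_1, \dots, X_d, X \sim \mathcal{N}(0,1)$ independent and any reals $v_1, \dots, v_d$, one has $\sum_j v_j X_j \sim (\sum_j |v_j|^2)^{1/2} X$) with $v_j = z_j$. This immediately gives $\sum_j a_{i,j} z_j \sim \|z\|_2 \cdot X$ where $X \sim \mathcal{N}(0,1)$. Dividing both sides by $\|z\|_2$ (which we may assume is nonzero; otherwise the ratio is trivially degenerate and the statement is interpreted as $0$) yields the claim.

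For completeness I would sketch why this $2$-stability property is true without citing it as a black box, since the proof is short. The mean of $a_i \cdot z$ is $\sum_j z_j \cdot \mathbb{E}[a_{i,j}] = 0$ by linearity of expectation, and by independence of the coordinates of $a_i$ the variance is $\sum_j z_j^2 \cdot \mathrm{Var}(a_{i,j}) = \|z\|_2^2$. The fact that $a_i \cdot z$ is itself Gaussian (rather than merely having these moments) follows from the closure of the Gaussian family under independent linear combinations, which one can verify by computing the characteristic function $\mathbb{E}[\exp(it \sum_j z_j a_{i,j})] = \prod_j \exp(-t^2 z_j^2 / 2) = \exp(-t^2 \|z\|_2^2 / 2)$, the characteristic function of $\mathcal{N}(0, \|z\|_2^2)$. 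Normalizing by $\|z\|_2$ then gives the standard normal.

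There is no real obstacle here; the lemma is essentially a restatement of $2$-stability applied to the displacement vector $x - y$, and the only care needed is to note that the distribution of $a_i \cdot (x-y) / \|x-y\|_2$ depends on $x$ and $y$ only through the direction of $x - y$, which is what makes random projections a distance-preserving tool in expectation and a useful building block for the AFN data structure that follows.
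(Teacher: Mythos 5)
Your proof is correct, and it is exactly the argument the paper implicitly intends: the paper does not prove this lemma itself but cites Datar et al., and the $2$-stability of the standard normal --- which you invoke --- is precisely the property the paper sets up in its preliminaries (with the identical conclusion $\sum_i X_i x_i \sim \|x\|_2 X$) as the explanation for why Gaussian projections preserve $\ell_2$ distances. The additional characteristic-function sketch of $2$-stability is a correct and welcome bit of self-containment, and your remark that the distribution depends on $x,y$ only through the direction of $x-y$ is the right intuition for why this lemma is useful downstream.
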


\begin{lemma}[See Lemma 7.4 in Karger, Motwani, and Sudan~\cite{KMS98}]\label{lem:normalbound}
For every $t>0$, if $X\sim N(0,1)$ then
\begin{equation*}
\frac{1}{\sqrt{2\pi}}\cdot\left(\frac{1}{t}-\frac{1}{t^3}\right)\cdot e^{-t^2/2}\leq \Pr[X\geq t]\leq \frac{1}{\sqrt{2\pi}}\cdot\frac{1}{t}\cdot e^{-t^2/2}
\end{equation*}
\end{lemma}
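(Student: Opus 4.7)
The plan is to work directly with the integral representation $\Pr[X \geq t] = \frac{1}{\sqrt{2\pi}} \int_t^\infty e^{-x^2/2}\,dx$ and derive the two inequalities by elementary calculus, since both have standard analytic proofs.

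For the upper bound, the first step is the comparison trick: on the interval $[t,\infty)$ we have $x/t \geq 1$, so
\[
\int_t^\infty e^{-x^2/2}\,dx \;\leq\; \int_t^\infty \frac{x}{t}\, e^{-x^2/2}\,dx.
\]
The right-hand integral admits a closed form via the substitution $u = x^2/2$, evaluating to $\frac{1}{t} e^{-t^2/2}$. Dividing by $\sqrt{2\pi}$ yields the claimed upper bound.

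For the lower bound, I would integrate by parts twice. Writing $e^{-x^2/2} = \frac{1}{x} \cdot \bigl(x\, e^{-x^2/2}\bigr)$ and taking $u = 1/x$, $dv = x e^{-x^2/2}\,dx$, the boundary contribution at infinity vanishes because $e^{-x^2/2}$ decays faster than any polynomial, giving
\[
\int_t^\infty e^{-x^2/2}\,dx \;=\; \frac{1}{t}\, e^{-t^2/2} \;-\; \int_t^\infty \frac{1}{x^2}\, e^{-x^2/2}\,dx.
\]
Applying the same maneuver to the remainder term, with $u = 1/x^3$ and $dv = x e^{-x^2/2}\,dx$, shows that $\int_t^\infty x^{-2} e^{-x^2/2}\,dx = \frac{1}{t^3} e^{-t^2/2} - 3\int_t^\infty x^{-4} e^{-x^2/2}\,dx \leq \frac{1}{t^3} e^{-t^2/2}$. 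Substituting back and dividing by $\sqrt{2\pi}$ gives the stated lower bound.

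The only bookkeeping concerns are tracking signs during integration by parts and verifying that the boundary terms at infinity actually vanish — routine, but easy to slip on. I do not foresee any genuine obstacle: this is a textbook Mills-ratio estimate whose content is purely analytic, and the two-term expansion here is exactly the first truncation of the asymptotic series for $\Pr[X \geq t]$.
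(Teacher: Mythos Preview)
Your argument is correct: the comparison $x/t \geq 1$ gives the upper bound directly, and two rounds of integration by parts (with the nonnegative remainder $3\int_t^\infty x^{-4}e^{-x^2/2}\,dx$ discarded) yield the lower bound. The signs and boundary terms are handled properly.

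Note, however, that the paper does not supply its own proof of this lemma at all --- it simply cites it as Lemma~7.4 of Karger, Motwani, and Sudan and uses the bounds as a black box. So there is no ``paper's proof'' to compare against; your write-up is the standard Mills-ratio derivation and would serve perfectly well as a self-contained justification.
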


The next lemma follows, as suggested by Indyk~\cite[Claims 2-3]{Indyk2003}.
\begin{lemma}
  \label{lem:prob}
Let $p$ be a furthest neighbor from the query $q$ with $r=\|p-q\|_2$, and
let $p'$ be a point such that $\|p'-q\|_2<r/c$.
Let $\Delta = rt/c$ with $t$ satisfying the equation
$e^{t^2/2}t^{c^2}=n/(2\pi)^{c^2/2}$
(that is, $t=\BO{\sqrt{\log n}}$).
Then, for a sufficiently large $n$, we have
\begin{gather*}
\Pr_a\left[a\cdot (p'-q)\geq \Delta\right]\leq \frac{\log^{c^2/2-1/3} n}{n}
\\
\Pr_a\left[a\cdot (p-q)\geq \Delta\right]\geq (1-o(1)) \frac{1}{n^{1/c^2}}\, .
\end{gather*}

\begin{proof}
Let $X\sim \mathcal{N}(0,1)$. By Lemma~\ref{lem:n0} and the right part of
Lemma~\ref{lem:normalbound}, we have for a point $p'$ that
\begin{align*}
\Pr_a\left[a\cdot (p'-q)\geq \Delta\right]
&=\Pr_a\left[X\geq \Delta/\| p'-q\|_2\right]
  \leq \Pr_a\left[X\geq \Delta c/r\right] = 
  \Pr_a\left[X\geq t\right] 
  \\
&\leq \frac{1}{\sqrt{2\pi}} \frac{e^{-t^2/2}}{t}
  \leq \left(t \sqrt{2\pi}\right)^{c^2-1} \frac{1}{n} 
\leq \frac{\log^{c^2/2-1/3} n}{n}.
\end{align*}
The last step follows because $e^{t^2/2}t^{c^2}=n/(2\pi)^{c^2/2}$ implies that $t=\BO{\sqrt{\log n}}$, and holds for a sufficiently large $n$.
Similarly, by Lemma~\ref{lem:n0} and the left part of Lemma~\ref{lem:normalbound},
 we have for a furthest neighbor $p$ that
\begin{align*}
\Pr_a\left[a\cdot (p-q)\geq \Delta\right]
&=\Pr_a\left[X\geq \Delta/\| p-q\|_2\right]
  =\Pr_a\left[X\geq \Delta/r\right]
  =\Pr_a\left[X\geq t/c\right]\\
&\geq \frac{1}{\sqrt{2\pi}} \left(\frac{c}{t}-\left(\frac{c}{t}\right)^3\right){e^{-t^2/(2c^2)}}
  \geq (1-o(1))\frac{1}{n^{1/c^2}}.
\end{align*}
\end{proof}
\end{lemma}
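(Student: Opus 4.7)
The plan is to reduce both probability statements to tail bounds on a standard Gaussian via Lemma~\ref{lem:n0}, and then apply the two-sided bound of Lemma~\ref{lem:normalbound} together with the defining equation $e^{t^2/2}t^{c^2}=n/(2\pi)^{c^2/2}$ to simplify the resulting expressions. The fact that $t=\BO{\sqrt{\log n}}$ follows directly from the defining equation by taking logarithms, and this estimate is what we will need to convert polynomial-in-$t$ factors into polylog-in-$n$ factors.

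For the upper bound, I would first apply Lemma~\ref{lem:n0} to write $\Pr_a[a\cdot(p'-q)\geq \Delta]=\Pr[X\geq \Delta/\|p'-q\|_2]$ where $X\sim\mathcal{N}(0,1)$. Since $\|p'-q\|_2<r/c$, the threshold $\Delta/\|p'-q\|_2$ is strictly larger than $\Delta c/r = t$, so by monotonicity the probability is at most $\Pr[X\geq t]$. Applying the right inequality of Lemma~\ref{lem:normalbound} bounds this by $\frac{1}{t\sqrt{2\pi}}e^{-t^2/2}$. Substituting $e^{-t^2/2}=t^{c^2}(2\pi)^{c^2/2}/n$ from the defining equation collapses this to $(t\sqrt{2\pi})^{c^2-1}/n$. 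Since $t=\BO{\sqrt{\log n}}$, this is $\BO{(\log n)^{(c^2-1)/2}}/n$, and the exponent $(c^2-1)/2 = c^2/2 - 1/2$ is strictly less than $c^2/2 - 1/3$, so the stated bound $\log^{c^2/2-1/3}n/n$ holds for sufficiently large $n$.

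For the lower bound, again by Lemma~\ref{lem:n0} we get $\Pr_a[a\cdot(p-q)\geq\Delta]=\Pr[X\geq t/c]$ (using $\|p-q\|_2=r$ exactly). Applying the left inequality of Lemma~\ref{lem:normalbound} at threshold $t/c$ gives
\[
\Pr[X\geq t/c]\geq \tfrac{1}{\sqrt{2\pi}}\bigl(c/t-(c/t)^3\bigr)e^{-t^2/(2c^2)}.
\]
Taking the $1/c^2$-power of the defining equation yields $e^{t^2/(2c^2)}\cdot t = n^{1/c^2}/\sqrt{2\pi}$, hence $e^{-t^2/(2c^2)} = t\sqrt{2\pi}/n^{1/c^2}$. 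Substituting collapses the lower bound to $c(1-c^2/t^2)\cdot n^{-1/c^2}$, which since $c>1$ and $t\to\infty$ is at least $(1-o(1))n^{-1/c^2}$, as claimed.

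I do not anticipate any real obstacle here; the proof is essentially a careful bookkeeping exercise. The only mildly delicate point is the upper bound, where one must exploit a small amount of slack in the exponent ($c^2/2-1/2$ versus $c^2/2-1/3$) to absorb the $(\log n)^{(c^2-1)/2}$ factor coming from $t^{c^2-1}$; this is why the lemma is stated only for sufficiently large $n$. In the lower bound the constant-factor gap of $c>1$ absorbs the lower-order correction $(1-c^2/t^2)$ more than comfortably.
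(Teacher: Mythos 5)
Your proof is correct and takes essentially the same route as the paper's: both apply Lemma~\ref{lem:n0} to reduce to Gaussian tail probabilities, use the two sides of Lemma~\ref{lem:normalbound}, and substitute the defining equation for $t$ to collapse the bounds. Your version is a bit more explicit about the algebraic substitutions (e.g.\ taking the $1/c^2$-power of the defining equation, and the observation that $(c^2-1)/2<c^2/2-1/3$ absorbs the constant into the $\log n$ slack), but the argument is identical.
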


\begin{theorem}
The data structure when queried by Algorithm~\ref{alg:basic-query}
returns a $c$-AFN of a given query with probability
$1-2/e^2>0.72$ in 
\begin{equation*}
\BO{n^{1/c^2}\log^{c^2/2-1/3}{n}(d + \log{n})}
\end{equation*}
time per query.  The data structure requires $\BOx{n^{1+1/c^2}(d + \log
n)}$ preprocessing time and total space
\begin{equation*}
  \BO{\min\left\{dn^{2/c^2}\log^{c^2/2-1/3}n, \,
    dn+n^{2/c^2}\log^{c^2/2-1/3}n\right\}} \, .
\end{equation*}

\begin{proof}
The space required by the data structure is the space required for storing
the $\ell$ sets $S_i$.  If for each set $S_i$ we store the $m\le n$ points and
the projection values, then $\BO{\ell m d}$ memory locations are required.  On
the other hand, if pointers to the input points are stored, then the total
required space is $\BO{\ell m + nd}$.  The representations are equivalent,
and the best one depends on the value of $n$ and $d$.  The claim on space
requirement follows.  The preproceesing time is dominated by the
computation of the $n\ell$ projection values and by the sorting for
computing the sets $S_i$.  Finally, the query time is dominated by the at
most $2m$ insertion or deletion operations on the priority queue  
and the $md$ cost of searching  for the furthest neighbor, $\BO{m(\log{\ell}+d)}$.

We now upper bound the success probability.
Again let $p$ denote a furthest neighbor from $q$ and $r=\|p-q\|_2$.
Let $p'$ be a point such that $\|p'-q\|_2<r/c$, and $\Delta = rt/c$ with $t$ such that $e^{t^2/2}t^{c^2}=n/(2\pi)^{c^2/2}$.
The query succeeds if
\begin{enumerate}
\item $a_i(p-q)\geq \Delta$ for at least one projection vector $a_i$, and 
\item the (multi)set $\hat{S}=\{p' | \exists i : a_i(p'-q)\geq \Delta,\|p'-q\|_2<r/c\}$ contains at most $m -1$ points.
\end{enumerate}
If both (1) and (2) hold, then the size $m$ set of candidates $S_q$ examined by the algorithm must contain the furthest neighbor $p$.
Note that we do not consider points at distance larger than $r/c$ but smaller
than $r$: they are $c$-approximate furthest neighbors of $q$ and can only
increase the success probability of our data structure.

By Lemma~\ref{lem:prob}, (1) holds with probability $1/n^{1/c^2}$.
Since there are $\ell=2n^{1/c^2}$ independent projections, this event fails
to happen with probability at most $(1-1/n^{1/c^2})^{2n^{1/c^2}}\!\leq
1/e^2$.  For a point $p'$ at distance at most $r/c$ from $q$, the
probability that $a_i(p'-q)\geq \Delta$ is less than $(\log^{c^2/2-1/3}
n)/n$ by Lemma~\ref{lem:prob}.  Since there are $\ell$ projections of $n$
points, the expected number of such points is $\ell \log^{c^2/2-1/3} n$. 
Then, we have that $|\hat{S}|$ is greater than $m -1$ with probability at most
$1/e^2$ by the Markov inequality.  Note that a Chernoff bound cannot be used
since there exists a dependency among the projections onto the same random
vector $a_i$.  By a union bound, we can therefore conclude that the
algorithm succeeds with probability at least $1-2/e^2\geq 0.72$.
\end{proof}
\end{theorem}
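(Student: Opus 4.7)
The plan is to bound the failure probability via two bad events and then union-bound them, after which the time and space bounds are essentially bookkeeping. Let $p\in S$ achieve $r=\|p-q\|_2 = \max_{x\in S}\|x-q\|_2$, call a point $p'\in S$ \emph{bad} if $\|p'-q\|_2 < r/c$, and let $\Delta = rt/c$ where $t$ is chosen (as in Lemma~\ref{lem:prob}) so that $e^{t^2/2}t^{c^2}=n/(2\pi)^{c^2/2}$, giving $t=\Theta(\sqrt{\log n})$. I claim the algorithm succeeds provided (A) at least one projection direction $a_i$ satisfies $a_i\cdot(p-q)\ge \Delta$, and (B) the multiset $\hat S = \{(p',i) : p'$ is bad and $a_i\cdot(p'-q)\ge \Delta\}$ has size at most $m-1$. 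Indeed, under (A) and (B), among the top-$m$ candidates selected across all lines the point $p$ cannot be crowded out by bad points, so the answer returned has distance at least $r/c$, i.e.\ is a $c$-AFN. Points at distance in $[r/c,r)$ are already $c$-AFNs and only help.

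Next I would bound the probabilities. By the second half of Lemma~\ref{lem:prob}, for a single direction $a_i$ we have $\Pr[a_i\cdot(p-q)\ge\Delta]\ge (1-o(1))/n^{1/c^2}$; since the $\ell=2n^{1/c^2}$ directions are independent, event (A) fails with probability at most $\bigl(1-(1-o(1))/n^{1/c^2}\bigr)^{2n^{1/c^2}}\le 1/e^2$ for $n$ large. For (B), the first half of Lemma~\ref{lem:prob} gives $\Pr[a_i\cdot(p'-q)\ge\Delta]\le (\log^{c^2/2-1/3}n)/n$ for any bad $p'$, so by linearity $\E[|\hat S|]\le \ell\log^{c^2/2-1/3}n$. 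Since $m-1 = e^2\ell\log^{c^2/2-1/3}n$, Markov's inequality (Theorem~\ref{thm:markov}) gives $\Pr[|\hat S|\ge m]\le 1/e^2$. A union bound yields total failure probability at most $2/e^2$, hence success probability at least $1-2/e^2>0.72$. The one subtlety here, and the main obstacle, is that within a single projection $a_i$ the events ``$p'$ exceeds $\Delta$'' across different bad $p'$ are \emph{not} independent (they all share the randomness of $a_i$), so one cannot sharpen the Markov step with a Chernoff bound; fortunately Markov already suffices to hit $1/e^2$.

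It remains to account for the preprocessing, space, and query-time costs. Preprocessing computes $n\ell$ dot products at cost $d$ each and then sorts $\ell$ lists of size $n$, giving $\BO{n\ell(d+\log n)} = \BO{n^{1+1/c^2}(d+\log n)}$. For storage, each $S_i$ holds $m$ entries; storing them by value costs $\BO{\ell m d}$, whereas storing pointers to $S$ costs $\BO{\ell m + nd}$, so the space is $\BO{\min\{\ell m d,\ \ell m + nd\}}$, which after substituting $\ell m = \BO{n^{2/c^2}\log^{c^2/2-1/3}n}$ matches the claimed bound. For the query, Algorithm~\ref{alg:basic-query} performs at most $2m$ priority-queue operations (each $\BO{\log\ell}=\BO{\log n}$) together with $\BO{m}$ distance evaluations of cost $\BO{d}$, for a total of $\BO{m(d+\log n)} = \BO{n^{1/c^2}\log^{c^2/2-1/3}n\,(d+\log n)}$, as stated.
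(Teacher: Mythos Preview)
Your proof is correct and follows essentially the same approach as the paper: identify the two bad events (furthest neighbor never exceeds $\Delta$ on any projection; too many near points exceed $\Delta$), bound the first via independence across the $\ell$ directions and the second via Markov's inequality on the expected count, and union-bound. You even flag the same subtlety the paper does, namely that dependence among bad-point indicators sharing a projection vector rules out a Chernoff improvement. Your bookkeeping for space, preprocessing, and query time matches the paper's as well; defining $\hat S$ as a multiset of $(p',i)$ pairs is arguably cleaner than the paper's phrasing but amounts to the same count.
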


\subsection{Furthest neighbor with query-independent
candidates}\label{sub:query-ind}

Suppose instead of determining the candidates depending on the query point
by means of a priority queue, we choose a fixed candidate set to be
used for every query.  The $\sqrt{2}$-approximation
the minimum enclosing sphere is one example of such
a \emph{query-independent} algorithm.  In this section we consider a
query-independent variation of our projection-based algorithm.

During preprocessing, we choose $\ell$ unit vectors $y_1,y_2,\ldots,y_\ell$
independently and uniformly at random over the sphere
of unit vectors in $d$ dimensions.  We project the $n$ data points in $S$
onto each of these unit vectors and choose the extreme data point in each
projection; that is,
\begin{equation*}
  \left\{ \left. \arg \max_{x \in S} x\cdot y_i \right| i \in [\ell] \right\} \, .
\end{equation*}

The data structure stores the set of all data points so chosen;
there are at most $\ell$ of them, independent of $n$.
At query time, we check the query point $q$ against all the points we
stored, and return the furthest one.

To prove a bound on the approximation, we
will use the following result of B\"{o}r\"{o}czky and
Wintsche~\cite[Corollary~1.2]{Boroczky:Covering}.
Note that their notation differs from
ours in that they use $d$ for the dimensionality of the surface of the
sphere, hence one less than the dimensionality of the vectors, and $c$ for
the constant, conflicting with our $c$ for approximation factor.  We state the
result here in terms of our own variable names.

\begin{lemma}[See Corollary~1.2 in B\"{o}r\"{o}czky and
Wintsche~\cite{Boroczky:Covering}]\label{lem:cdc}
For any angle $\varphi$ with $0<\varphi<\arccos 1/\sqrt{d}$, in
$d$-dimensional Euclidean space, there exists a
set $V$ of at most $C_d(\varphi)$ unit vectors such that
for every unit vector $u$, there exists some $v \in V$ with the angle between
$u$ and $v$ at most $\varphi$, and 
\begin{equation}
  |V| \le
  C_d(\varphi) = \gamma \cos \varphi \cdot \frac{1}{\sin^{d+1} \varphi}
    \cdot {(d+1)}^{\frac{3}{2}} \ln (1+{(d+1)}\cos^2 \varphi) \, ,
  \label{eqn:cdc}
\end{equation}
where $\gamma$ is a universal constant.
\end{lemma}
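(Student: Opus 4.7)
The plan is to prove this via a probabilistic construction combined with a union bound over a fine net, which is the standard approach for sphere-covering bounds of this flavor.

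First I would compute (or rather lower-bound) the normalized surface measure of a spherical cap on $S^{d-1}$. For a unit vector $v$, let $\mu_\varphi = \mu\{u \in S^{d-1} : \angle(u,v) \leq \varphi\}$, which by rotational invariance and the usual parametrization equals
\[
\mu_\varphi = \frac{\int_0^\varphi \sin^{d-2}\theta \, d\theta}{\int_0^\pi \sin^{d-2}\theta \, d\theta}.
\]
Using standard estimates for the numerator (e.g.\ $\sin\theta \geq \theta \cos\varphi$ on $[0,\varphi]$ when $\varphi < \pi/2$) together with the Beta/Gamma asymptotics for the denominator, one obtains a lower bound of the form $\mu_\varphi \geq c \sin^{d-1}\varphi \cos\varphi /\sqrt{d+1}$. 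This is where the $\cos\varphi$ and $\sqrt{d+1}$ factors of $C_d(\varphi)$ will ultimately come from.

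Next I would perform the probabilistic construction. Draw $N$ unit vectors $v_1,\ldots,v_N$ independently and uniformly from $S^{d-1}$. For a fixed $u \in S^{d-1}$ the probability that no $v_i$ lies within angle $\varphi/2$ of $u$ is $(1-\mu_{\varphi/2})^N \leq e^{-N\mu_{\varphi/2}}$. To promote this pointwise guarantee into a uniform one, fix a maximal $(\varphi/2)$-packing $T$ of $S^{d-1}$; by a standard volume argument $|T| \leq 1/\mu_{\varphi/4}$. If every point of $T$ is within angle $\varphi/2$ of some $v_i$, then by the triangle inequality on the sphere every point of $S^{d-1}$ is within $\varphi$ of some $v_i$. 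A union bound requires $|T|\,e^{-N\mu_{\varphi/2}} < 1$, i.e.\
\[
N \geq \frac{\ln |T|}{\mu_{\varphi/2}} \;\lesssim\; \frac{(d-1)\ln(1/\sin\varphi) + \text{const}}{\mu_{\varphi/2}}.
\]
Substituting the lower bound on $\mu_{\varphi/2}$ yields a bound of the asymptotic form claimed by $C_d(\varphi)$, with the logarithm of $(d+1)\cos^2\varphi$ arising from the $\ln|T|$ factor after the cap-volume bound is plugged in.

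The hard part will be matching the precise shape of the constant $C_d(\varphi)$ rather than just the asymptotics. In particular, reproducing the exact $(d+1)^{3/2}\ln(1+(d+1)\cos^2\varphi)$ factor (as opposed to a cruder $d \log(1/\sin\varphi)$) appears to require sharper control on the cap-volume integrals (probably via the incomplete Beta function $I_x(\tfrac{d-1}{2},\tfrac12)$) and a carefully chosen net scale that is not simply $\varphi/2$. A purely probabilistic argument as above gives the same qualitative bound but with a weaker constant; obtaining the tight form as in B\"or\"oczky--Wintsche likely needs their inductive construction that treats the equatorial $S^{d-2}$ separately and then lifts coverings by rotation around the axis, which is why I would treat this lemma as a black box and only sketch the elementary version unless the exact constants mattered downstream.
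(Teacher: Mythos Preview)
The paper does not prove this lemma at all: it is quoted verbatim as Corollary~1.2 of B\"or\"oczky and Wintsche and used as a black box in the proof of the subsequent theorem on query-independent furthest neighbor. So there is no ``paper's own proof'' to compare against.

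Your probabilistic sketch is a reasonable way to obtain a covering bound of the right asymptotic shape, and you correctly identify that the precise form of $C_d(\varphi)$, in particular the $(d+1)^{3/2}\ln(1+(d+1)\cos^2\varphi)$ factor, will not fall out of a na\"{\i}ve random-cover-plus-union-bound argument. Your instinct to treat the lemma as a black box is exactly what the thesis does; the exact constants are also not critical downstream, since the theorem that applies the lemma only claims $\ell = O(f(c)^d)$ for some unspecified function $f$.
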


Let $\varphi_c=\frac{1}{2}\arccos \frac{1}{c}$; that is half
the angle between two unit vectors whose dot product is $1/c$, as shown in
Figure~\ref{fig:varphi}.
Then by choosing $\ell={O(C_d(\varphi_c) \cdot \log C_d(\varphi_c))}$
unit vectors uniformly at
random, we will argue that with high probability we choose
a set of unit vectors such that
every unit vector has dot product at least $1/c$ with at least one of
them.  Then the data structure achieves $c$-approximation on all queries. 

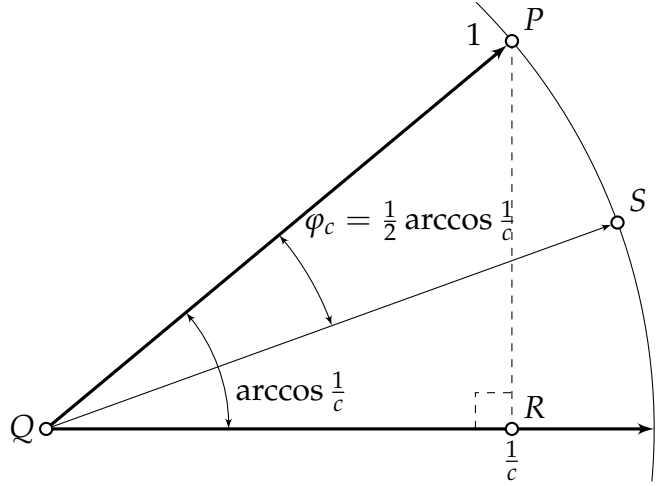
\begin{figure}
\centering
\begin{tikzpicture}[>=latex',scale=1.6]
  \draw[very thick,->] (0,0) -- (5,0);
  \draw[->] (0,0) -- (20:4.95);
  \draw[very thick,->] (0,0) -- (40:4.95);
  \draw[dashed] (40:5) -- (3.83,0);
  \draw[dashed] (3.53,0) -- (3.53,0.3) -- (3.83,0.3);
  \draw[<->] (1.5,0) arc[radius=1.5,start angle=0,end angle=40];
  \draw[<->] (20:2.5) arc[radius=2.5,start angle=20,end angle=40];
  \draw (-5:5) arc[radius=5,start angle=-5,end angle=45];
  \node[anchor=north] at (3.83,0) {$\frac{1}{c}$};
  \node[anchor=south east] at (40:4.8) {1};
  \node[anchor=west] at (12:1.5) {$\arccos \frac{1}{c}$};
  \node[anchor=south west] at (36:2.5)
    {$\varphi_c=\frac{1}{2}\arccos \frac{1}{c}$};
  \draw[black,thick,fill=white] (0,0) circle[radius=0.05];
  \draw[black,thick,fill=white] (20:5) circle[radius=0.05];
  \draw[black,thick,fill=white] (40:5) circle[radius=0.05];
  \draw[black,thick,fill=white] (3.83,0) circle[radius=0.05];
  \node[anchor=east] at (0,0) {$Q$};
  \node[anchor=south west] at (40:5) {$P$};
  \node[anchor=south west] at (20:5) {$S$};
  \node[anchor=south west] at (3.83,0) {$R$};
\end{tikzpicture}
\caption{Choosing $\varphi_c$.}\label{fig:varphi}
\end{figure}

\begin{theorem}
With $\ell=O(f(c)^d)$ for some function $f$ of $c$ 
and any $c$ such that $1<c<2$,
with high probability over the choice of the projection
vectors, the data structure returns a $d$-dimensional $c$-approximate
furthest neighbor on every query.
\end{theorem}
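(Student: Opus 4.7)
The plan is to show that with the chosen number $\ell$ of random unit vectors, the set $\{y_1,\dots,y_\ell\}$ forms a sufficiently fine angular net of the unit sphere, and that any such net automatically certifies $c$-approximation for every query.

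First I would identify the correct per-direction success criterion. Fix a query $q$ and a true furthest neighbor $p\in S$ with $r=\|p-q\|_2$, and set $u=(p-q)/r$. Suppose some $y_i$ satisfies $u\cdot y_i\ge 1/c$, i.e.\ the angle between $u$ and $y_i$ is at most $\arccos(1/c)=2\varphi_c$. Let $x_i=\arg\max_{x\in S} x\cdot y_i$ be the stored candidate for direction $y_i$. Since $p\in S$, maximality gives
\[
(x_i-q)\cdot y_i \;\ge\; (p-q)\cdot y_i \;=\; r\,(u\cdot y_i)\;\ge\; r/c,
\]
and Cauchy--Schwarz with $\|y_i\|=1$ yields $\|x_i-q\|_2\ge r/c$, so $x_i$ is a $c$-AFN. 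Hence it suffices to guarantee that $\{y_i\}$ is a $(2\varphi_c)$-net of the sphere, uniformly over all possible query directions $u$.

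Next I would reduce this uniform covering to a finite union bound using Lemma~\ref{lem:cdc}. Apply the lemma with angle $\varphi_c$ to obtain a deterministic set $V$ of at most $C_d(\varphi_c)$ unit vectors such that every unit vector lies within angle $\varphi_c$ of some $v\in V$. If, in addition, every $v\in V$ lies within angle $\varphi_c$ of some $y_i$, then the spherical triangle inequality turns these two $\varphi_c$-hops into a $2\varphi_c$-hop, so the $y_i$ form a $(2\varphi_c)$-net and the algorithm succeeds on every query simultaneously. The condition $\varphi_c<\arccos(1/\sqrt d)$ required by Lemma~\ref{lem:cdc} is where the hypothesis $c<2$ is used.

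Finally I would carry out a coupon-collector style argument on $V$. Let $\mu$ denote the normalized measure of a spherical cap of angular radius $\varphi_c$. Because the $|V|$ caps of this radius centered at the points of $V$ cover the sphere, $|V|\,\mu\ge 1$, so $\mu\ge 1/C_d(\varphi_c)$. A random $y_i$ lies within angle $\varphi_c$ of a fixed $v$ with probability $\mu$, and by independence plus a union bound over $V$,
\[
\Pr[\exists v\in V:\ \text{no }y_i\text{ is within }\varphi_c\text{ of }v]
\;\le\; |V|(1-\mu)^\ell \;\le\; C_d(\varphi_c)\,\exp\!\bigl(-\ell/C_d(\varphi_c)\bigr).
\]
Choosing $\ell=\Theta(C_d(\varphi_c)\log C_d(\varphi_c))$ makes this probability polynomially small; substituting the bound~\eqref{eqn:cdc} gives $C_d(\varphi_c)=O(f(c)^d)$ with $f(c)=1/\sin\varphi_c$ after absorbing the polynomial-in-$d$ factors into the base, yielding the stated $\ell=O(f(c)^d)$. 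The step I expect to be most delicate is the cap-measure lower bound $\mu\ge 1/C_d(\varphi_c)$: it is short but requires reading Lemma~\ref{lem:cdc} as a genuine covering statement, together with care that $\varphi_c<\arccos(1/\sqrt d)$ so the lemma applies for all $1<c<2$.
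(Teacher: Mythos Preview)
Your proposal is correct and follows essentially the same route as the paper: use Lemma~\ref{lem:cdc} at angle $\varphi_c$ to get a finite net $V$, do a coupon-collector/union-bound argument with $\ell=\Theta(C_d(\varphi_c)\log C_d(\varphi_c))$ random directions so every $v\in V$ is hit, and then chain the two $\varphi_c$-hops. Your Cauchy--Schwarz justification of why angle $\le 2\varphi_c=\arccos(1/c)$ suffices is exactly what the paper's Figure~\ref{fig:varphi} is depicting. One small remark: the hypothesis $c<2$ is not actually what forces $\varphi_c<\arccos(1/\sqrt d)$ (that inequality holds for all $c>1$ once $d\ge 2$); the paper invokes $c<2$ only to keep $1/c\in(1/2,1)$ for the half-angle computation, so your attribution of that condition is slightly off, but it does not affect the argument.
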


\begin{proof}
Let $\varphi_c=\frac{1}{2} \arccos \frac{1}{c}$.  Then, since $\frac{1}{c}$ is
between $\frac{1}{2}$ and $1$, we can apply the usual half-angle
formulas as follows:
\begin{gather*}
  \sin \varphi_c =
    \sin \frac{1}{2} \arccos \frac{1}{c}
  = \frac{\sqrt{1-\cos \arccos 1/c}}{\sqrt{2}}
  = \frac{\sqrt{1-1/c}}{\sqrt{2}} \\
  \cos \varphi_c =
    \cos \frac{1}{2} \arccos \frac{1}{c}
  = \frac{\sqrt{1+\cos \arccos 1/c}}{\sqrt{2}}
  = \frac{\sqrt{1+1/c}}{\sqrt{2}} \, .
\end{gather*}

Substituting into \eqref{eqn:cdc} from Lemma~\ref{lem:cdc}
gives
\begin{align*}
  C_d(\varphi_c)
  &= \gamma \frac{2^{d/2}\sqrt{1+1/c}}{(1-1/c)^{(d+1)/2}}
    {(d+1)}^{3/2} \ln \left(1+{(d+1)}\frac{1+1/c}{2}\right) \\
  &= O\left( \left( \frac{2}{1-1/c}\right) ^{(d+1)/2}
    d^{3/2} \log d \right) \, .
\end{align*}

Let $V$ be the set of $C_d(\varphi_c)$ unit vectors from
Lemma~\ref{lem:cdc}; every unit vector on the sphere is within angle at most
$\varphi_c$ from one of them.  The vectors in $V$ are the centres of a set
of spherical caps that cover the sphere.

Since the caps are all of equal size and they cover the sphere, there is
probability at least $1/C_d(\varphi_c)$ that a unit vector chosen uniformly
at random will be inside each cap.  Let $\ell= 2 C_d(\varphi_c) \ln
C_d(\varphi_c)$.  This $\ell = O(f(c)^d)$.  Then for each of the caps, the
probability none of the projection vectors $y_i$ is within that cap is
$(1-1/C_d(\varphi_c))^\ell$, which approaches $\exp (-2\ln C_d(\varphi_c)) =
(C_d(\varphi_c))^{-2}$.  By a union bound, the probability that every cap is
hit is at least $1-1/C_d(\varphi_c)$.  Suppose this occurs.

Then for any query, the vector between the query and the true furthest
neighbor will have angle at most $\varphi_c$ with some vector in $V$, and
that vector will have angle at most $\varphi_c$ with some projection vector
used in building the data structure.  Figure~\ref{fig:varphi} illustrates
these steps: if $Q$ is the query and $P$ is the true furthest neighbor, a
projection onto the unit vector in the direction from $Q$ to $P$ would give a
perfect approximation.  The sphere covering guarantees the existence of a
unit vector $S$ within an angle $\varphi_c$ of this perfect projection; and
then we have high probability of at least one of the random projections also
being within an angle $\varphi_c$ of $S$.  If that random projection returns
some candidate other than the true furthest neighbor, the worst case is if
it returns the point labelled $R$, which is still a $c$-approximation.
We have such approximations for all queries simultaneously with high
probability over the choice of the $\ell$ projection vectors. 
\end{proof}

Note that we could also achieve $c$-approximation deterministically, with
somewhat fewer projection vectors, by applying Lemma~\ref{lem:cdc} directly
with $\varphi_c=\arccos 1/c$ and using the centres of the covering caps
as the projection vectors instead of choosing them randomly.  That would
require implementing an explicit construction of the covering, however.
B\"{o}r\"{o}czky and
Wintsche~\cite{Boroczky:Covering} argue that their result is optimal to
within a factor $O(\log d)$, so not much asymptotic improvement is possible.

\subsection{A lower bound on the approximation factor}\label{sec:lb}

In this section, we show that a data structure aiming at an approximation
factor less than $\sqrt{2}$ must use space $\min\{n, 2^{\BOM{d}}\}-1$ on
worst-case data.
The lower bound holds for those data structures that compute the approximate
furthest neighbor by storing a suitable subset of the input points.

\begin{theorem}\label{thm:space}
Consider any data structure $\mathcal D$ that computes the $c$-AFN of an $n$-point input set $S\subseteq \mathbb{R}^d$ by storing a subset of the data set.
If $c=\sqrt{2}(1-\epsilon)$ with $\epsilon\in(0,1)$, then the algorithm must store at least  $\min\{n, 2^{\BOM{\epsilon^2 d}}\}-1$ points.
\end{theorem}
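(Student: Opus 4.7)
The plan is to exhibit an adversarial point set that forces any subset-based data structure to keep essentially every input point. The instance is a collection of nearly orthogonal unit vectors; each query will be the negation of an input point, and the approximation factor $c < \sqrt{2}$ will be sharp enough that only the original point qualifies as a $c$-AFN.

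First, I would construct a set $P = \{p_1, \dots, p_N\} \subseteq \mathbb{R}^d$ of unit vectors with $N = 2^{\Omega(\epsilon^{2} d)}$ and $|p_i \cdot p_j| \le \alpha$ for all $i \ne j$, where $\alpha = \Theta(\epsilon)$ is a threshold chosen below. A random construction suffices: sample each $p_i$ uniformly from $\{\pm 1/\sqrt{d}\}^{d}$, so $p_i \cdot p_j$ is a $\pm 1/d$-valued random walk of length $d$, and Hoeffding gives $\Pr[|p_i \cdot p_j| > \alpha] \le 2 e^{-\alpha^{2} d / 2}$. A union bound over $\binom{N}{2}$ pairs succeeds as long as $N^{2} = o(e^{\alpha^{2} d / 2})$, i.e.\ for $N = 2^{\Omega(\alpha^{2} d)} = 2^{\Omega(\epsilon^{2} d)}$. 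Let the adversarial input be $S$ consisting of the first $\min\{n, N\}$ such vectors.

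Next, for each $p_i \in S$ I would consider the query $q_i = -p_i$. Direct computation gives $\|p_i - q_i\|_2 = 2$, while for every $j \ne i$,
\begin{equation*}
\|p_j - q_i\|_2^{2} = \|p_j\|_2^{2} + 2\, p_j\!\cdot\! p_i + \|p_i\|_2^{2} = 2 + 2\, p_j \!\cdot\! p_i \le 2 + 2\alpha.
\end{equation*}
A valid $c$-AFN answer must lie at distance at least $2/c = \sqrt{2}/(1-\epsilon)$ from $q_i$, i.e., have squared distance at least $2/(1-\epsilon)^{2}$. Picking $\alpha$ just below $1/(1-\epsilon)^{2} - 1 = (2\epsilon - \epsilon^{2})/(1-\epsilon)^{2} = \Theta(\epsilon)$ ensures $2 + 2\alpha < 2/(1-\epsilon)^{2}$, so no $p_j$ with $j \ne i$ can serve as a $c$-AFN for $q_i$. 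The unique valid answer is therefore $p_i$ itself.

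Finally, since the data structure is restricted to returning a point from the subset $T \subseteq S$ it stores, correctness on the query $q_i$ forces $p_i \in T$. Running over all $i \in [|S|]$ yields $T = S$, so $|T| \ge \min\{n, 2^{\Omega(\epsilon^{2} d)}\}$, which implies the claimed lower bound (with room to spare for the ``$-1$'' in the statement, which accommodates the possibility that one query is left unanswered). The main obstacle is the joint calibration of $\alpha$: it must be small enough (of order $\epsilon$) to make $p_j$ inadmissible as a $c$-AFN for $q_i$, yet large enough that the probabilistic construction still delivers $2^{\Omega(\epsilon^{2} d)}$ mutually near-orthogonal vectors. Both requirements are simultaneously met precisely at $\alpha = \Theta(\epsilon)$, producing the exponent $\epsilon^{2} d$ in the bound.
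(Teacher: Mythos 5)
Your core strategy matches the paper's: place a maximal family of nearly orthogonal points in $\mathbb{R}^d$, query the negation of an unstored point, and observe that at approximation factor $c = \sqrt{2}(1-\epsilon)$ no other near-orthogonal point is an acceptable answer. Your instantiation of the near-orthogonal family is different and arguably tidier: the paper obtains it by applying Johnson--Lindenstrauss to an orthonormal basis of $\mathbb{R}^r$ (which produces vectors of only approximately unit norm, forcing some extra bookkeeping with an auxiliary $\epsilon'$), whereas you draw random $\{\pm 1/\sqrt{d}\}^d$ vectors and apply Hoeffding, which gives exactly unit vectors so that $\|p_j - q_i\|_2^2 = 2 + 2\,p_i\!\cdot\!p_j$ with no slack. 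Your calibration of $\alpha$ against $(2\epsilon - \epsilon^2)/(1-\epsilon)^2$ is correct and yields the same $2^{\Omega(\epsilon^2 d)}$ exponent.

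There is one genuine gap: you never produce an $n$-point adversarial input when $n > N$. You take $S$ to be the first $\min\{n, N\}$ near-orthogonal vectors, so when $n > N$ your $S$ has only $N$ points, but the theorem is a statement about data structures operating on $n$-point inputs and therefore requires an $n$-point hard instance. The paper handles this by planting $n/r$ scaled copies $ix$ of each near-orthogonal direction $x$ (for $i \in [n/r]$) and querying $-hx$ at the outermost scale $h = n/r$; a stored set of fewer than $r$ points then misses every copy of some direction $x$, and the scaling ensures all other stored points remain too close to $-hx$. Your construction can be rescued more cheaply --- pad $S$ with $n - N$ distinct points inside a small ball around the origin; such points sit at distance about $1 < 2/c$ from every query $q_i = -p_i$, so they are never valid $c$-AFN answers and the argument forcing each $p_i$ into the stored set still goes through --- but as written the case $n > N$ is unaddressed.
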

\begin{proof}
Suppose there exists a set $S'$ of size $r=2^{\BOM{\epsilon'^2 d}}$ such that for any $x\in S'$ we have $(1-\epsilon') \leq \|x\|_2^2\leq (1+\epsilon')$ and 
$x\cdot y\leq 2\epsilon'$, with $\epsilon'\in(0,1)$.
We will later prove that such a set exists.
We now prove  by contradiction that any data structure requiring less than $\min\{n, r\}-1$ input points cannot return a $\sqrt{2}(1-\epsilon)$-approximation.

Assume $n\leq r$. Consider the input set $S$ consisting of $n$ arbitrary points of $S'$.
Let the data structure, $\mathcal D\subset S$, be any $n-1$ of these points.
Set the query $q$ to $-x$, where $x\in S\setminus \mathcal D$.
The furthest neighbor is $x$ and it is at distance $\| x - (-x) \|_2\geq 2\sqrt{1-\epsilon'}$. 
On the other hand, for any point $y$ in $\mathcal D$, we get
\begin{equation*}
  \| y - (-x) \|_2 = \sqrt{\|x\|^2_2 + \|y\|^2_2 + 2 x\cdot y}
    \leq \sqrt{2(1+\epsilon') + 4\epsilon'}.
\end{equation*}
Therefore, the point returned by the data structure cannot be better than a $c'$ approximation with 
\begin{equation}\label{eq:approx}
c'= \frac{\| x - (-x) \|_2}{ \| y - (-x) \|_2} \geq 
\sqrt{2} \sqrt{\frac{1-\epsilon'}{1+3\epsilon'}}.
\end{equation}
The claim follows by setting $\epsilon'={(2\epsilon-\epsilon^2)/(1+3(1-\epsilon)^2)}$.

Assume now that $n> r$.  Without loss of generality, let $n$ be a multiple of $r$.
Consider as an input the $n/r$ copies of each vector in $S'$, each copy expanded by a factor $i$ for any $i\in[n/r]$;
specifically, let $S=\{i x | x\in S', i\in[n/r] \}$.
Let $\mathcal D$ be any $r-1$ points from $S$.
Then there exists a point $x\in S'$ such that for every $i\in[1, n/r]$, $i x$ is not in the data structure.
Consider the query $q=-h x$ where $h=n/r$.
The furthest neighbor of $q$ in $S$ is $-q$ and it has distance $\| q - (-q) \|_2\geq 2h\sqrt{1-\epsilon'}$. 
On the other hand, for every point $y$ in the data structure, we get
\begin{equation*}
  \| y - (-hx) \|_2 = \sqrt{h^2\|x\|^2_2 + \|y\|^2_2 + 2 h x\cdot y}
  \leq \sqrt{2h^2(1+\epsilon') + 4h^2\epsilon'}.
\end{equation*}
We then get the same approximation factor $c'$ given in equation~\ref{eq:approx}, and the claim follows.
  
The existence of the set $S'$ of size $r$ follows from the
Johnson-Lindenstrauss lemma~\cite{Matousek:JL}. Specifically, consider an orthornormal base $x_1, \ldots x_r$ 
 of $\mathbb{R}^r$. 
Since $d=\BOM{\log r / \epsilon'^2}$, by the Johnson-Lindenstrauss
lemma there exists a linear map $f(\cdot )$ such that
$(1-\epsilon')\|x_i-x_j\|^2_2\leq \|f(x_i)-f(x_j)\|^2_2\leq
(1+\epsilon)\|x_i-x_j\|^2_2$ and $(1-\epsilon') \leq \|f(x_i)\|_2^2\leq
(1+\epsilon')$ for any $i,j$.  We also have that $f(x_i) \cdot
f(x_j)=(\|f(x_i)\|^2_2 + \|f(x_j)\|^2_2 - \|f(x_i)-f(x_j)\|^2_2)/2$, and
hence $-2\epsilon \leq f(x_i) \cdot f(x_j) \leq 2\epsilon$.  It then
suffices to set $S'$ to $\{f(x_1),\ldots, f(x_r)\}$.

\end{proof}

The lower bound translates into the number of points that must be read by each query. 
However, this does not apply for query dependent data structures.


\section{Experiments}\label{sec:exp}

We implemented several variations of furthest neighbor query in both the C
and F\# programming languages.  This code is available
online\footnote{https://github.com/johanvts/FN-Implementations}.  Our C
implementation is structured as an alternate index type for the SISAP C
library~\cite{SISAP:Library}, returning the furthest neighbor instead of the
nearest.

We selected five databases for experimentation:  the ``nasa''
and ``colors'' vector databases from the SISAP library; two randomly
generated databases of $10^5$ 10-dimensional vectors each, one using a
multidimensional normal distribution and one uniform on the unit cube; and
the MovieLens 20M dataset~\cite{Harper:MovieLens}.  The
10-dimensional random distributions were intended to represent
realistic data, but their intrinsic dimensionality as measured by the $\rho$
statistic of Ch\'{a}vez and Navarro~\cite{Chavez:Intrinsic} is significantly
higher than what we would expect to see in real-life applications.

For each database and each choice of $\ell$ from 1 to 30 and
$m$ from $1$ to $4\ell$, we made 1000 approximate furthest neighbor
queries.  To provide a representative sample over the randomization of both
the projection vectors and the queries, we used 100 different
seeds for generation of the projection vectors, and did 10 queries (each
uniformly selected from the database points) with each seed.  We computed
the approximation achieved, compared to the true furthest neighbor found by
brute force, for every query. The resulting distributions are summarized in
Figures~\ref{fig:uniform}--\ref{fig:movies}.

\begin{figure}
\includegraphics{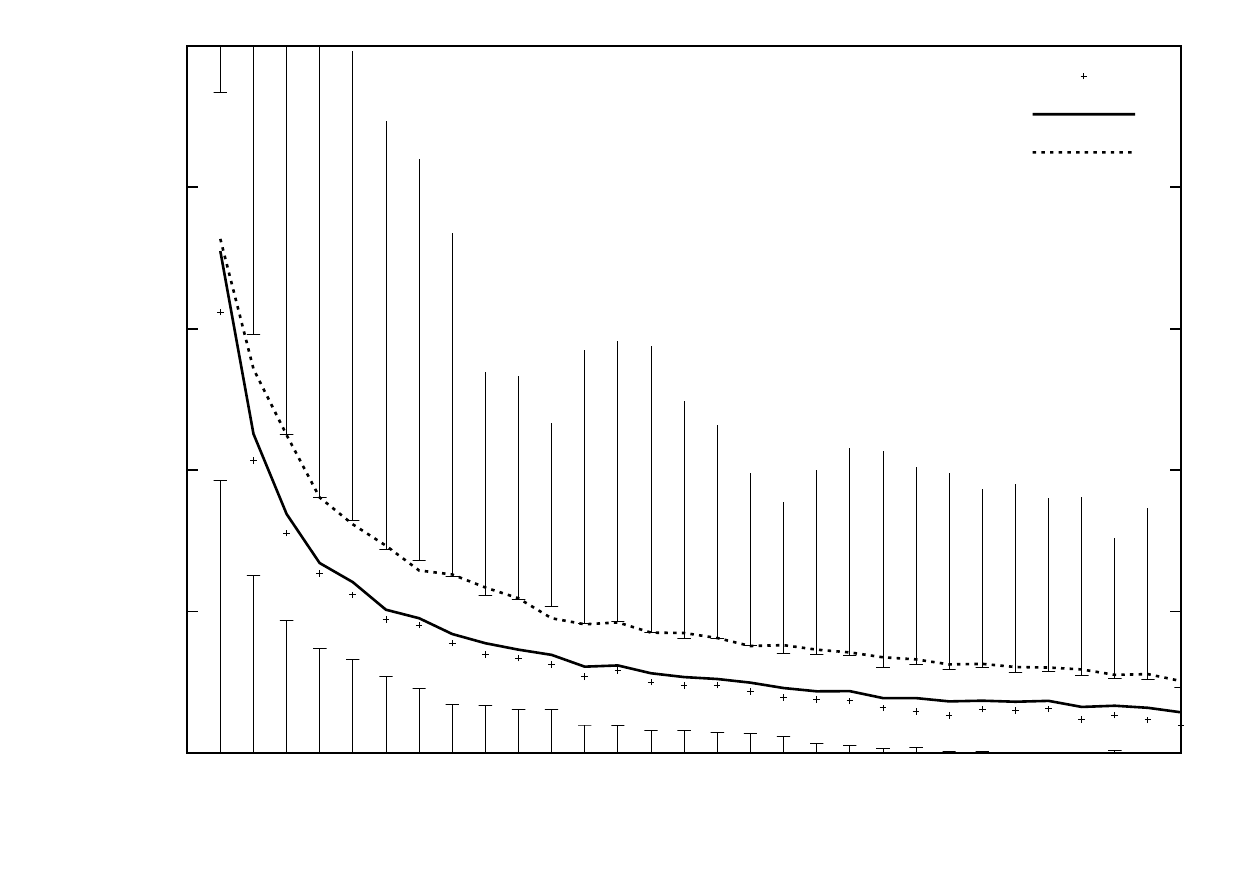}
\caption{Experimental results for 10-dimensional uniform distribution}
\label{fig:uniform}
\end{figure}

We also ran some experiments on higher-dimensional random vector
databases (with 30 and 100 dimensions, in particular) and saw approximation
factors very close to those achieved for 10 dimensions.

\paragraph{$\ell$ vs.\ $m$ tradeoff}

The two parameters $\ell$ and $m$ both improve the approximation as they
increase, and they each have a cost in the time and space bounds. 
The best tradeoff is not clear from the analysis. 
We chose $\ell=m$ as a typical value, but we
also collected data on many other parameter choices. 

\begin{figure}
\includegraphics{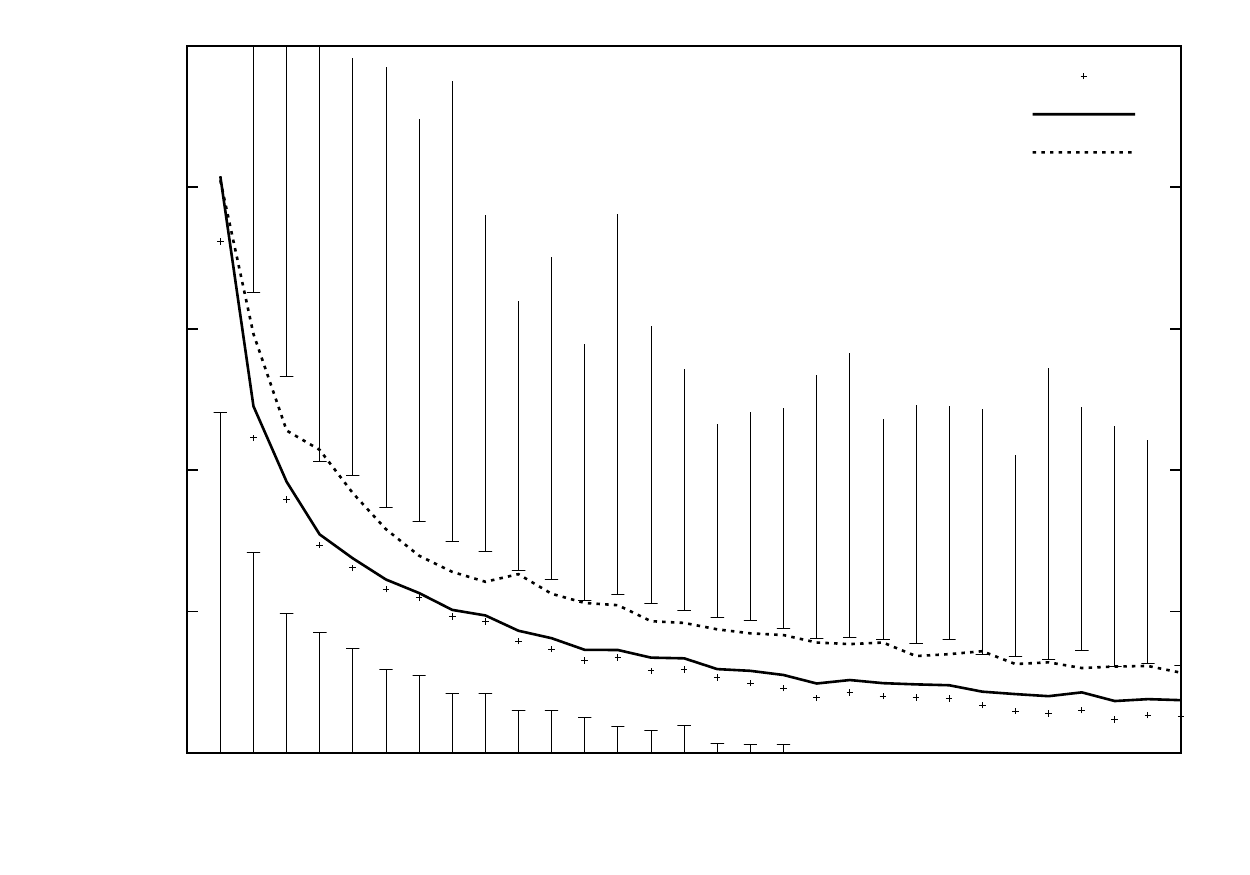}
\caption{Experimental results for 10-dimensional normal distribution}
\label{fig:normal}
\end{figure}

\begin{figure}
\includegraphics{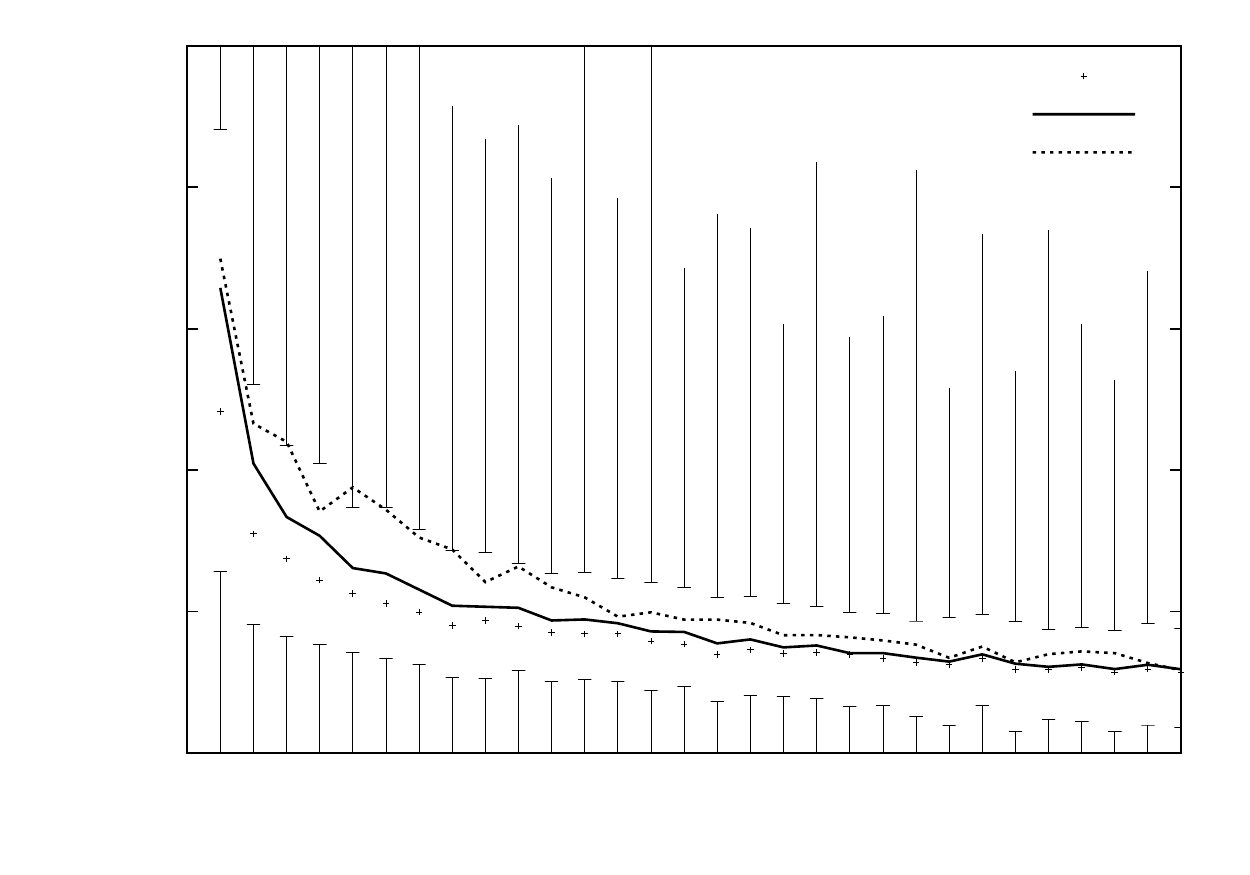}
\caption{Experimental results for SISAP nasa database}
\label{fig:nasa}
\end{figure}

\begin{figure}
\includegraphics{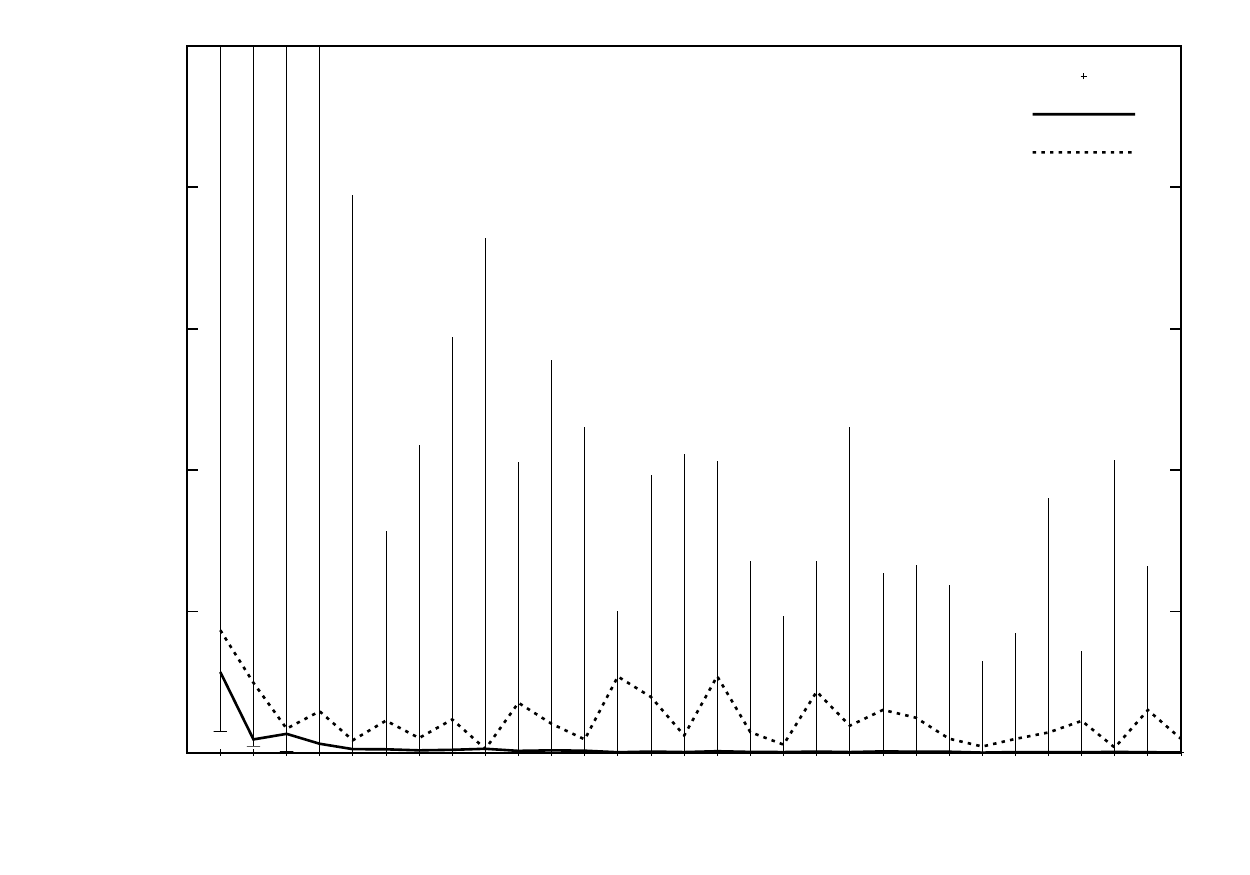}
\caption{Experimental results for SISAP colors database}
\label{fig:colors}
\end{figure}

\begin{figure}
\includegraphics{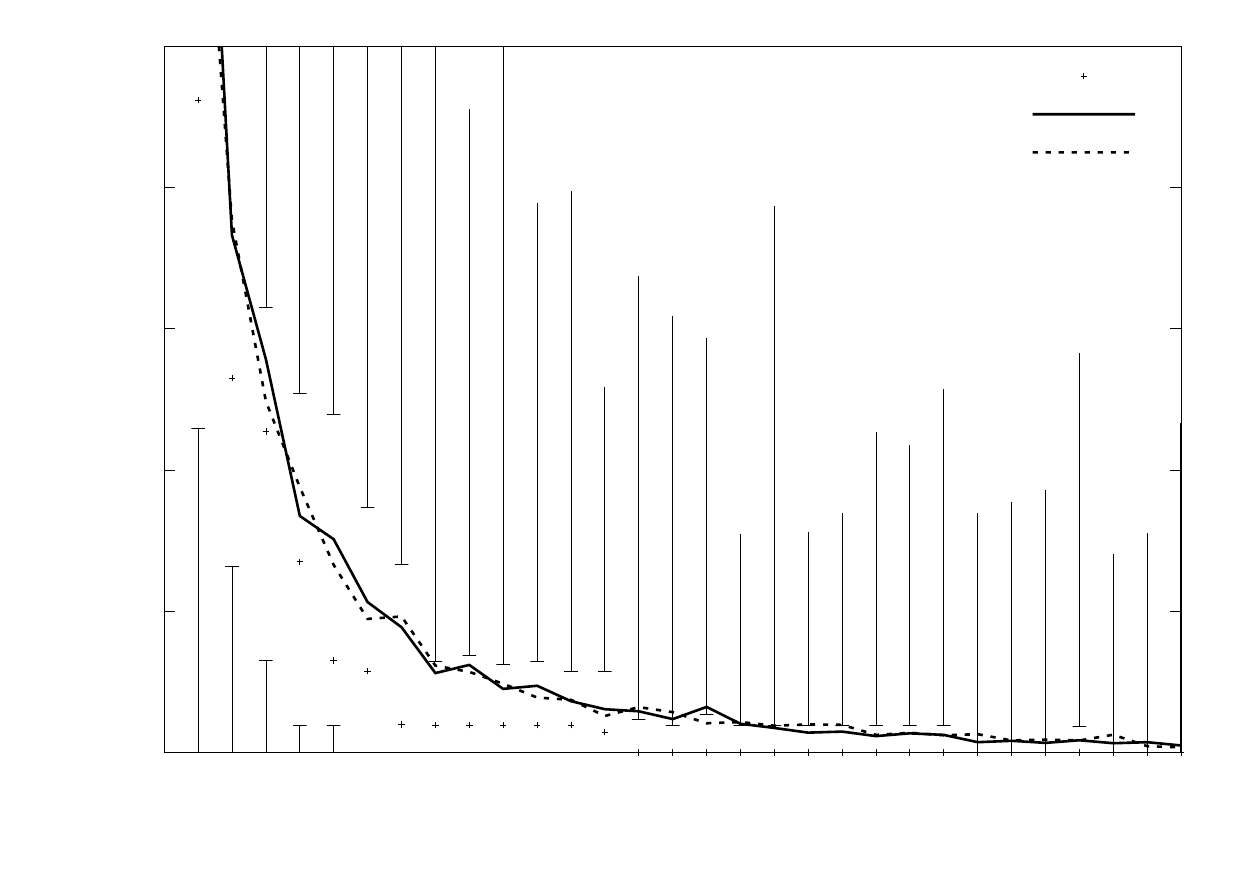}
\caption{Experimental results for MovieLens 20M database}
\label{fig:movies}
\end{figure}

\begin{figure}
\includegraphics{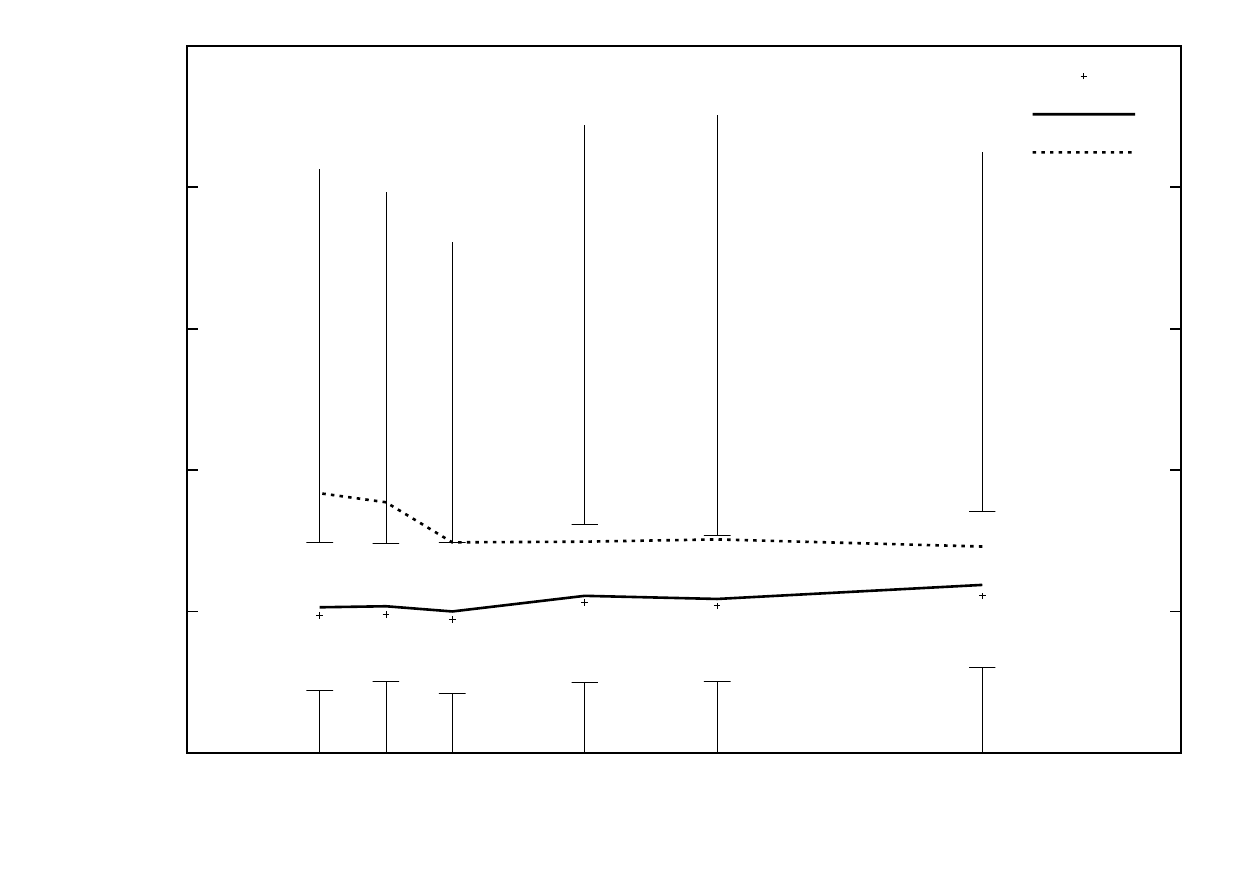}
\caption{The tradeoff between $\ell$ and $m$ on 10-dimensional normal
vectors}
\label{fig:tradeoff}
\end{figure}

Figure~\ref{fig:tradeoff} offers some insight into the tradeoff:
since the cost of doing a query is roughly proportional to both $\ell$ and
$m$, we chose a fixed value for their product, $\ell \cdot m=48$, and
plotted the approximation results in relation to $m$ given that, for the
database of normally distributed vectors in 10 dimensions.
As the figure shows, the approximation factor does not change
much with the tradeoff between $\ell$ and $m$.

\paragraph{Query-independent ordering}

The furthest-neighbor algorithm described in Section~\ref{sec:pq} examines
candidates for the furthest neighbor in a \emph{query dependent} order.
In order to compute the order for arbitrary queries, we must store
$m$ point IDs for each of the $\ell$ projections, and use a priority queue
data structure during query, incurring some costs in both time and space. 
It seems intuitively reasonable that the search will usually examine points
in a very similar order regardless of the query: first those that are
outliers, on or near the convex hull of the database, and then working its
way inward.  

We implemented a modified version of the algorithm in which the index stores
a single ordering of the points.  Given a set $S\subseteq \mathbb{R}^d$ of
size $n$, for each point $x \in S$ let
$\mathit{key}(x)=\max_{i \in 1\ldots\ell} a_i\cdot x$.  The key for each
point is its greatest projection value on any of the $\ell$
randomly-selected projections.  The data structure stores points (all of
them, or enough to accomodate the largest $m$ we plan to use) in order of
decreasing key value:  $x_1$, $x_2$, $\ldots$ where $\mathit{key}(x_1) \ge
\mathit{key}(x_2) \ge \cdots$.  Note that this is not the
same query-independent data structure discussed in
Section~\ref{sub:query-ind}; it differs both in the set of points
stored and the order of sorting them.

The query 
examines the first $m$ points in
the \emph{query independent} ordering and returns the one furthest from the
query point.  Sample mean approximation factor for this algorithm in our
experiments is shown by the dotted lines in
Figures~\ref{fig:uniform}--\ref{fig:tradeoff}.

\begin{algorithm}[t]
\KwIn{The input set $S$ sorted by $\max_{i \in 1\ldots\ell} a_i\cdot x$. A query point $q$.}
$\mathit{rval} \leftarrow \bot$\;
\For{$x\in$ The top $m$ elements of $S$}{
  \If{$\mathit{rval} = \bot$ or $x$ is further than $\mathit{rval}$ from $q$}{
    $\mathit{rval} \leftarrow x$
  }
}
return $\mathit{rval}$
\caption{Query-independent approximate furthest
neighbor}\label{alg:independent}

\end{algorithm}

\paragraph{Variations on the algorithm}

We have experimented with a number of practical improvements to the
algorithm.  The most significant is to use the rank-based \emph{depth} of
projections rather than the projection value.  In this variation we sort the
points by their projection value for each $a_i$.  The first and last point
then have depth 0, the second and second-to-last have depth 1, and so on up
to the middle at depth $n/2$.  We find the minimum depth of each point over
all projections and store the points in a query independent order using the
minimum depth as the key.  This approach seems to give better results in
practice.  A further improvement is to break ties in the minimum depth by
count of how many times that depth is achieved, giving more priority to
investigating points that repeatedly project to extreme values.  Although
such algorithms may be difficult to analyse in general, we give some results
in Section~\ref{sub:query-ind} for the case where the data structure
stores exactly the one most extreme point from each projection.

The number of points examined $m$ can be chosen per query and even
during a query, allowing for interactive search.  After returning the best
result for some $m$, the algorithm can continue to a larger $m$ for a
possibly better approximation factor on the same query.  The smooth tradeoff
we observed between $\ell$ and $m$ suggests that choosing an $\ell$ during
preprocessing will not much constrain the eventual choice of $m$.

\paragraph{Discussion}

The main experimental result is that the algorithm works very well for the
tested datasets in terms of returning good approximations of the furthest
neighbor.  Even for small $\ell$ and $m$ the algorithm returns good
approximations.  Another result is that the query independent variation of
the algorithm returns points only slighly worse than the query dependent. 
The query independent algorithm is simpler to implement, it can be queried
in time $\BO{m}$ as opposed to $\BO{m \log{\ell+m}}$ and uses only $\BO{m}$
storage.  In many cases these advances more than make up for the slightly
worse approximation observed in these experiments.  However, by
Theorem~\ref{thm:space}, to guarantee $\sqrt{2}-\epsilon$ approximation the
query-independent ordering version would need to store and read $m=n-1$
points.

In data sets of high intrinsic dimensionality, the furthest point from a
query may not be much further than any randomly selected point, and we can
ask whether our results are any better than a trivial random selection from
the database.  The intrinsic dimensionality statistic $\rho$ of Ch{\'a}vez
and Navarro~\cite{Chavez:Intrinsic} provides some insight into this
question.  Note that instrinsic dimensionality as measured by $\rho$ is not
the same thing as the number of coordinates in a vector. For real data sets it is often much smaller than that.  Intrinsic dimensionality also applies to
data sets that are not vectors and do not have coordinates.  Skala proves
a formula for the value of $\rho$ on a multidimensional normal
distribution~\cite[Theorem~2.10]{Skala:Dissertation}; it is
$9.768\ldots$ for the 10-dimensional distribution used in
Figure~\ref{fig:normal}.  With the definition $\mu^2/2\sigma^2$, this means
the standard deviation of a randomly selected distance will be about 32\% of
the mean distance.  Our experimental results come much closer than that to
the true furthest distance, and so are non-trivial.

The concentration of distances in data sets of high intrinsic dimensionality
reduces the usefulness of approximate furthest neighbor.  Thus, although we
observed similar values of $c$ in higher dimensions to our 10-dimensional
random vector results, random vectors of higher dimension may represent a
case where $c$-approximate furthest neighbor is not a particularly
interesting problem.  However, vectors in a space with many dimensions
but low intrinsic dimensionality, such as the colors database, are
representative of many real applications, and our algorithms performed well
on such data sets.

The experimental results on the MovieLens 20M data
set~\cite{Harper:MovieLens}, which were not included in the conference
version of the present work, show some interesting effects resulting from
the very high nominal (number of coordinates) dimensionality of this data
set.  The data set consists of 20000263 ``ratings,'' representing the
opinions of 138493 users on 27278 movies.  We treated this as a database of
27278 points (one for each movie) in a 138493-dimensional Euclidean space,
filling in zeroes for the large majority of coordinates where a given user
did not rate a given movie.  Because of their sparsity, vectors in this data
set usually tend to be orthogonal, with the distance between two simply
determined by their lengths.  Since the vectors' lengths vary over a wide
range (length proportional to number of users rating a movie, which varies
widely), the pairwise distances also have a large variance, implying a
low intrinsic dimensionality.  We measured it as $\rho=0.263$.

The curves plotted in Figure~\ref{fig:movies} show similar behaviour to that of 
the random distributions in Figures~\ref{fig:uniform}
and~\ref{fig:normal}.  Approximation factor improves rapidly with more
projections and points examined, in the same pattern, but to a greater
degree, as in the 10-coordinate vector databases, which have higher
intrinsic dimensionality.  However, here there is no noticeable penalty for
using the query-independent algorithm.  The data set appears to be dominated
(insofar as furthest neighbours are concerned) by a few extreme outliers:
movies rated very differently from any others.  For almost any query, it is
likely that one of these will be at least a good approximation of the true
furthest neighbour; so the algorithm that identifies a set of outliers in
advance and then chooses among them gives essentially the same results as
the more expensive query-dependant algorithm.


\section{Conclusion}

We have proposed a data structure for solving the $(c)$-AFN problem.
The data structure retrieves candidate points based on their rankings along random projections.
To do so efficiently it employs a priority queue that is populated at query time.

We give theoretical guarantees on the space and time requirements, as well as experimental confirmation of these.
Further we give a space lower bound on any data structure that works to return the $(c)$-AFN by iterating a fixed list.
This bound supports the suspicions raised by Goel et. al\cite{Goel2001} that query time polynomial in $d$ cannot be
achieved for $c<\sqrt{2}$.
We also suggest a simplified algorithm that can be viewed as an approximation of the convex hull.
While harder to analyse, it is faster and gives very satisfactory experimental results.

Our data structure extends naturally to general metric spaces.  Instead of
computing projections with dot products, which requires a vector space, we
could choose some random pivots and order the points by distance to each
pivot. The query operation would be essentially unchanged.  Analysis and
testing of this extension is a subject for future work.


\chapter{Annulus Query}
\label{sec:annulus-query}

The annulus query problem from Section~\ref{sec:annulus} can be viewed as a problem of finding nearest and furthest neighbors simultaneously.
An obvious path to follow is to combine techniques for these problems into a single data structure.

\section{Introduction}
Similarity search is concerned with locating elements from a set $S$ that are close to a given query $q$.
The query can be thought of as describing criteria we would like returned items to satisfy.
For example, if a customer has expressed interest in a product $q$, we may want to recommend similar products.
However, we might not want to recommend products that are \emph{too} similar.
Thinking of e.g. a book recommendation, we do not want to recommend e.g. just an older translation of the same work.
We claim that a solution to the $(c,r,w)$- approximate annulus query problem (Definition~\ref{def:aaq}) can be found by suitably combining Locality Sensitive Hashing techniques(LSH, See Section~\ref{sec:lsh}), with the approximation technique for furthest neighbor presented in Chapter~\ref{sec:furthest-neighbor}.
In this short chapter we show such a solution in $(\mathbb{R}^d,\ell_2)$ with constant failure probability.

\subsection{Notation}
Consider an LSH function family $\mathcal{H}=\{\mathbb{R}^d\rightarrow U\}$. We say that $\mathcal{H}$ is $(r_1,r_2,p_1,p_2)$-sensitive for $(\mathbb{R}^d,\ell _2)$ if:
\begin{enumerate}
\item$\Pr_{\mathcal{H}}[h(q)=h(p)]\geq p_1 \text{ when } \|p-q\|_2\leq r_1$
\item$\Pr_{\mathcal{H}}[h(q)=h(p)]\leq p_2 \text{ when } \|p-q\|_2> r_2$
\end{enumerate}
We will be using $A(q,r,w)$ for the annulus between two balls, that is $A(q,r,w) = \B{q}{rw}\setminus\B{q}{r/w}$. 

\section{Upper bound}

\begin{theorem}
  \label{thm:aaq-structure}
  Consider a $(wr,wcr,p_1,p_2)$-sensitive hash family $\mathcal{H}$ for $(\mathbb{R}^d,\ell_2)$ and let $\rho = \frac{\log 1/p_1}{\log 1/p_2}$.
  For any set $S\subseteq\mathbb{R}^d$ of at most $n$ points there exists a data structure for $(c,r,w)$-AAQ such that:
  \begin{itemize}
    \item Queries can be answered in time $\BO{dn^{\rho+1/c^2}\log^{(1-1/c^2)/2}{n}}$.
    \item The data structure takes space $\BO{n^{2(\rho+1/c^2)}\log^{1-1/c^2}{n}}$ in addition to storing $S$.
  \end{itemize}

The failure probability is constant and can be reduced to any $\delta>0$ by increasing the space and time cost by a constant factor.
\end{theorem}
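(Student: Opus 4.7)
The plan is to combine the standard LSH scheme with the AFN data structure of Chapter~\ref{sec:furthest-neighbor}: the LSH half filters out points that are too far (distance $>wcr$), while the AFN half filters out points that are too close (distance $<r/(cw)$), leaving a point in the desired annulus.

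First I would set up the LSH in the usual way, with $k$ concatenated functions and $L$ independent tables. Since $\mathcal{H}$ is $(wr,wcr,p_1,p_2)$-sensitive, choosing $k = \log_{1/p_2} n$ and $L = \Theta(n^{\rho})$ gives the familiar guarantees: (i) the expected number of points at distance $>wcr$ that collide with $q$ in a given table is $O(1)$, and (ii) for any fixed point $p$ at distance $\le wr$, the probability that $p$ lies in the bucket $g(q)$ of at least one of the $L$ tables is a constant bounded away from $0$. On top of each non-empty bucket I would install an instance of the query-dependent AFN data structure from Section~\ref{sec:pq}, tuned to approximation factor $c$. The query procedure is then: for each of the $L$ tables, hash $q$, locate its bucket $B$, run the AFN query on $B$ to obtain a candidate $x'$, and report $x'$ if $r/(cw) \le D(x',q) \le crw$; otherwise continue. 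If no table yields a valid candidate, report nothing.

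For correctness, condition on the existence of $p \in S$ with $r/w \le D(p,q) \le rw$. With constant probability, $p$ collides with $q$ in some table; in that bucket the AFN data structure returns a point $x'$ with $D(x',q) \ge D(p,q)/c \ge r/(cw)$, so the lower annulus bound is met. For the upper bound $D(x',q) \le crw$ I would argue that, by the $p_2$-bound and Markov's inequality, with constant probability that bucket contains no point at distance $>wcr$; then automatically $D(x',q) \le wcr \le crw$. A union bound over the three failure events---the LSH miss on $p$, the AFN failure inside the chosen bucket, and the event of too many far points polluting the bucket---gives constant overall success probability; boosting to an arbitrary $1-\delta$ is the standard trick of $O(\log(1/\delta))$ independent copies.

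For the resource bounds, a query performs $L$ LSH evaluations and $L$ AFN queries, each of the latter costing $\TOx{dn^{1/c^2}}$ by the analysis of Chapter~\ref{sec:furthest-neighbor}; this totals $\TOx{dn^{\rho+1/c^2}}$. For space, each of the $L=\TOx{n^{\rho}}$ tables hosts AFN indices whose aggregate size is, in the worst case, $\TOx{n^{\rho+2/c^2}}$ per table (each point belongs to one bucket per table, and the AFN space bound of $\TOx{n^{2/c^2}}$ is charged per bucket), giving the claimed $\TOx{n^{2(\rho+1/c^2)}}$ in total. The main obstacle I foresee is the per-bucket accounting: the AFN guarantees in Chapter~\ref{sec:furthest-neighbor} are stated relative to the size of the data set on which the AFN index is built, so one must verify either that the same parameter choices transfer to buckets of varying sizes without degrading the success probability, or that tuning AFN per bucket composes correctly with the LSH parameters to yield exactly the polylogarithmic factors absorbed in the $\TOx{\cdot}$ notation.
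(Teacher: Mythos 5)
Your high-level plan (LSH to cull far points, AFN machinery to cull near points) is the right idea, but as written there is a concrete gap in the correctness argument that the paper's proof avoids by organizing the data structure differently. You run a vanilla AFN query inside each matching bucket, take the single returned candidate $x'$, and accept only if $x'$ lands in the annulus. The problem is that the AFN structure is designed to return the \emph{furthest} point among its examined candidates, and in the standard LSH regime a bucket of $q$ will, with constant probability, contain points at distance $>wcr$: with $k=\log_{1/p_2} n$ the expected number of such points per bucket is $\Theta(1)$, not $o(1)$, so the ``bucket is clean'' event you invoke does not have a useful constant lower bound (the Har-Peled--Indyk--Motwani theorem the paper uses bounds the \emph{total} far-point count across all $L$ buckets by $3L$, i.e.\ an average of three per bucket). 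When the bucket does contain a too-far point, that point tends to project large under the random vectors, so it sits at the top of the AFN priority queue and is what your single-candidate query returns; you then reject it and move on, never discovering the good annulus point that was sitting in the same bucket. The fix is exactly what the paper does: do not take just the furthest candidate, but keep pulling from the priority queue, discarding too-far hits, until an annulus point appears, and budget the candidate pool size as $m+3L$ to absorb the polluting far points. This also requires explicitly noting that the AFN query procedure is modified (early termination on finding an annulus point), which you don't state.

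There is also a structural difference worth flagging. The paper does not instantiate a separate AFN index per bucket. Instead it fixes $\ell$ global projection vectors $a_1,\dots,a_\ell$ shared across all buckets, stores each bucket's points sorted by each $a_i\cdot x$, and at query time feeds the heads of all $\ell L$ lists into \emph{one} priority queue keyed by $a_i\cdot(x-q)$. Tuning $\ell$ and $m$ to the global $n$ (not the bucket size) and pooling candidates across the $L$ buckets is precisely how the paper sidesteps the ``per-bucket accounting'' obstacle you correctly identify at the end --- the success-probability argument is run once over the pooled candidate set $S_q$ rather than per bucket. Your per-table $\TOx{n^{\rho+2/c^2}}$ space charge happens to multiply out to the right exponent, but it is not justified by the per-bucket construction you describe, since bucket sizes are data-dependent and the AFN space bound scales with the size of the set it is built on.
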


We will now give a description of such a data structure and then prove that it has the properties stated in Theorem
\ref{thm:aaq-structure}.

\label{AAQ:datastructure}
Let $k,\ell$ and $L$ be integer parameters to be chosen later.
We construct a function family ${\mathcal{G}: \mathbb{R}^d \rightarrow U^k}$ by concatenating $k$ members of $\mathcal{H}$. Choose $L$ functions $g_1,\ldots,g_L$ from $\mathcal{G}$ and pick $\ell$ random vectors $a_1,\ldots,a_\ell\in\mathbb{R}^d$ with entries sampled independently from $\mathcal{N}(0,1)$.

\paragraph{Preprocessing}
During  preprocessing, all points $x\in S$ are hashed with each of the functions $g_1,\ldots,g_L$.
We say that a point $x$ is in a bucket $B_{j,i}$ if $g_j(x)=i$.
For every point $x\in S$ the $\ell$ dot product values $a_i\cdot x$ are calculated.
These values are stored in the bucket along with a reference to $x$.
Each bucket consists of $\ell$ linked lists, list $i$ containing the entries sorted on $a_i\cdot x$, decreasing from the head of the list.
See Figure~\ref{fig:bucket-contents} for an illustration where $p_{i,j}$ is the tuple $(a_i\cdot x_j,\text{ref}(x_j))$.
A bucket provides constant time access to the head of each list. Only non-empty buckets are stored.

\begin{figure}
\caption{Illustration of a bucket for $\{x_1,x_2,x_3,x_5\}\subset S$. $\ell=3$.}
\label{fig:bucket-contents}
\centering
\vspace{2 mm}
\includegraphics{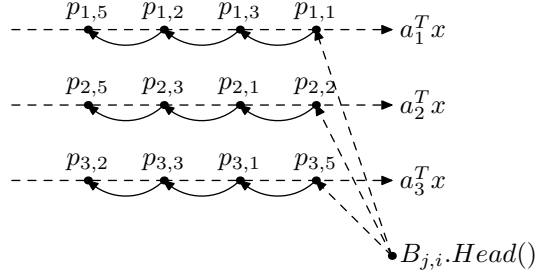}
\end{figure}

\paragraph{Querying}
For a given query point $q$ the query procedure can be viewed as building the set $S_q$ of points from $S$ within $B(q,rcw)$ with the largest $a_{i\in[\ell]}\cdot (p-q)$ values and computing the distances between $q$ and the points in $S_q$.
At query time $q$ is hashed using $g_1,..,g_L$ in $\BOx{dL}$.
From each bucket $B_{j,g_j(q)}$ the top pointer is selected from each list.
The selected points are then added to a priority queue with priority $a_i\cdot (p-q)$.
This is done in $\mathcal{O}(L\ell)$ time.
Now we begin a cycle of adding and removing elements from the priority queue. 
The largest priority element is dequeued and the predecessor link is followed and the returned pointer added to the queue.
If the pointer just visited was the last in its list, nothing is added to the queue. 
If the priority queue becomes empty the algorithm fails. 
Since $r$ is known at query time in the $(c,r,w)$-AAQ it is possible to terminate the query procedure as soon as some point within the annulus is found. 
Note that this differs from the general furthest neighbor problem. 
For the analysis however we will consider the worst case where only the last element in $S_q$ lies in the annulus and bound $|S_q|$ to achieve constant success probability.
\\
We now return to theorem \ref{thm:aaq-structure}
\begin{proof}
\label{prf.1}
  Fix a query point $q$.
  By the problem definition, we may assume $|S \cap A(q,r,w)|\geq1$.
  Define $S_q\subseteq S$ to be the set of candidate points for which the data structure described in section \ref{AAQ:datastructure} calculates the  distance to $q$ when queried. The correctness of the algorithm follows if $|S_q \cap A(q,r,cw)|\geq1$. 

  To simplify the notation let $P_{\text{near}} = S\cap B(q,r/(cw))$ and $P_{\text{far}}=S-B(q,r/w)$.
  The points in these two sets have useful properties. 
  Let $t$ be the solution to the equality:
  \begin{equation*}
    \frac{1}{\sqrt{2\pi}}\frac{e^{\frac{-t^2}{2}}}{t}=\frac{1}{n}
  \end{equation*}

  If we set $\Delta=\frac{rt}{cw}$, we can use the ideas from Lemma \ref{lem:prob} to conclude that:

  \begin{equation}
    \label{pb:near}
    \Pr[a_i(p-q)\geq\Delta]\leq\frac{1}{n}\text{, for }p\in P_{\text{near}}
  \end{equation}
  Also, for $p\in P_{\text{far}}$ the lower bound gives:
  \begin{equation*}
    \Pr[a_i(p-q)\geq\Delta]\geq\frac{1}{(2\pi)^{(1-1/c^2)/2}}n^{-1/c^2}t^{(1-1/c^2)}\left(1-\frac{c^2}{t^2}\right)
  \end{equation*}
  By definition, $t\in\BOx{\sqrt{\log{n}}}$, so for some function $\phi\in\BOx{n^{1/c^2}\log^{(1-1/c^2)/2}{n}}$ we get:
  \begin{equation}
    \label{pb:far}
    \Pr[a_i(p-q)\geq\Delta]\geq\frac{1}{\phi}\text{, for }p\in P_{\text{far}}.
  \end{equation}
  Now for large $n$,
  let $P$ be the set of points that projected above $\Delta$ on at least one projection vector and hashed to the same bucket as $q$ for at least one hash function.
  \begin{equation*}
    P=\{x\in S|\exists j,i:g_j(x)=g_j(q) \text{ and } a_i\cdot(x-q)\geq\Delta\}
  \end{equation*}  
  Let $\ell = 2\phi,m=1+e^2\ell$ and $L=\lceil n^\rho/p_1\rceil$.
  Using the probability bound (\ref{pb:near}) we see that $\text{E}[|P\cap P_{\text{near}}|]\leq\frac{1}{n}n\ell=\ell$.
  So $\Pr[|P\cap P_{\text{near}}|\geq m] < 1/e^2$ by Markov's inequality. 
  By a result of Har-Peled, Indyk, and Motwani~\cite[Theorem
  3.4]{Har-Peled2012}, the total number of points from $S\setminus B(q,rcw)$ across all $g_i(q)$ buckets is at most $3L$ with probability at least $2/3$. So $\Pr[|P\setminus \B{q}{rcw}>3L] < 1/3$.
  This bounds the number of too far and too near points expected in $P$.
  \[\Pr[|P\setminus A(q,r,cw)|\geq m +3L]\leq 1/3+e^{-2}\]
  By applying~\cite[Theorem
  3.4]{Har-Peled2012} again, we get that
  for each $x \in A(q,r,w)$ there exists $i\in[L]$ such that $g_i(x) = g_i(q)$ with probability at least $1-1/e$.
  Conditioning on the existence of this hash function, the probability of a point projecting above $\Delta$ is at least $ 1-(1-1/\phi)^{2\phi}\geq 1-\frac{1}{e^2}$.
  Then it follows that $\Pr[|P\cap A(q,r,w)|< 1]< 1/e+1/e^2$.
  The points in $P$ will necessarily be added to $S_q$ before all other points in the buckets; 
  then, if we allow for $|S_q|=m+3L$, we get
  \[\Pr[|S_q\cap A(q,r,cw)|\geq 1]\geq1-(1/3+1/e+2/e^2)>0.02.\]

  The data structure requires us to store the top $\BO{mL}$ points per projection vector,
  per bucket,
  for a total space cost of  $\BOx{m\ell L^2}$, in addition to storing the dataset, $\BOx{nd}$.
  The query time is $\BOx{(dL+\ell L)+m(d+\log\ell L)}$. 
  The first term is for initializing the priority queue, and the second for constructing $S_q$ and calculating distances.
  Let $\lambda=(1-1/c^2)/2$.
  Since $L=\mathcal{O}(n^{\rho})$ and $\ell,m=\mathcal{O}(n^{1/c^2}\log^\lambda n)$ we get query time:
\begin{equation}
  \BO{dn^\rho+n^{\rho+1/c^2}\log^{\lambda}{n}+n^{1/c^2}
    \log^{\lambda}{n}\left(d+\log{(n^{\rho+1/c^2}
      \log^{\lambda}{n})}\right)}
\end{equation}
Depending on the parameters different terms might dominate the cost.
But they can all be bounded by $\BOx{dn^{\rho+1/c^2}\log^{(1-1/c^2)/2}{n}}$ as stated in the theorem.
The hash buckets take space:

\begin{equation}
\BO{n^{2(\rho+1/c^2)}\log^{1-1/c^2}{n}}.
\end{equation}

Depending on $c$, we might want to bound the space by $\BOx{n\ell L}$ instead, which  yields a bound of $\BOx{n^{1+\rho+1/c^2}\log^{(1-1/c^2)/2}{n}}$.
\end{proof}

\section{Conclusion}

In this short chapter we showed a data structure for the $(c,r,w)$-approximate annulus query problem.
We showed that the query time is sublinear in the size of $S$ and linear in $d$.
This makes the data structure well suited for the high-dimensional, high-volume paradigm, although the storage requirements can be quite large when $c$ is close to $1$.
Later results have shown that similar bounds can be achieved through the combination of LSH with ``anti''-LSH functions~\cite{AumullerCP017}.
It is easy to employ the query-independent variation of the furthest neighbor data structure instead of the query dependent variation.
This would significantly reduce the space usage from $\BOx{m\ell L^2}$ to just $\BOx{m}$.
It would also reduce the query time, although that is not dominated by the priority queue insertions that would be saved.
Given our experimental results in Chapter~\ref{sec:furthest-neighbor} this alternative it seems to offer an attractive, practical solution to the approximate annulus query problem, although more difficult to analyse theoretically.

\chapter{Distance Sensitive Approximate Membership}
\label{sec:dist-sens-appr}

  The Bloom filter~\cite{Bloom1970} is a well-known data structure for answering \emph{approximate membership queries} on a set $S$, i.e., queries of the form ``Is $x$ in $S$?''.
  By allowing some false positive answers (saying `yes' when the answer is in fact `no') Bloom filters use space significantly below what is required for storing $S$.
  In the \emph{distance sensitive} setting we work with a set $S$ of (Hamming) vectors and seek a data structure that offers a similar trade-off, but answers queries of the form ``Is $x$ \emph{close} to an element of $S$?'' (in Hamming distance).
  Previous work on distance sensitive Bloom filters have accepted false positive \emph{and} false negative answers.
  Absence of false negatives is of critical importance in many applications of Bloom filters, so it is natural to ask if this can be also achieved in the distance sensitive setting.
  Our main contributions are upper and lower bounds (that are tight in several cases) for space usage in the distance sensitive setting where false negatives are not allowed.

\section{Introduction}

In this Chapter we present upper and lower bounds on the space complexity of filters for \emph{distance sensitive approximate membership queries}($(r,c,\epsilon)$-DAMQ, Definition~\ref{def:damq}) in $(\{0,1\}^d,H)$.
These filters answer queries of the form ``Is $x$ similar to some element of $S$?'' 
Where ``similar'' means within a given Hamming distance $r$.
We study distance sensitive filters under an approximation factor $c\geq1$: a small false positive rate $\epsilon $ is allowed when $S$ has points at distance more than $cr$ from the query point. However, false negatives are never allowed.
This is in contrast to previous work on this problem~\cite{Kirsch}.
To our best knowledge, ours is the first solution with no false negatives.

\subsection{Motivation and practicality}

Bloom filters are widely used in practice.
One reason is because they require less space than a dictionary data structure for storing $S$.
We argue that the lack of \emph{false negatives} is also of critical importance to their frequent use in practice.

Generally the set $S$ is a subset from some much larger domain.
If queries are roughly uniformly selected from the domain, answers to a membership query should most often be negative.
For this majority of queries the Bloom filter always gives the correct, negative, answer.
Since the filter then rarely gives a positive, possibly wrong, answer, these queries could all be double-checked using an exact, but less space-efficient, less accessible method (perhaps on a different machine).
This allows us to use Bloom filters as a first component in an exact two-level data structure.
Here it acts as an initial filter, reducing the use of a second, slower to access but exact data structure.   
Having false negatives means this two-level structure would fail to be exact.
We would have to choose one of the levels:
Either accept some possibility of getting a wrong answer or perform an expensive exact query every time.
We are motivated by providing a data structure for distance sensitive membership query that \emph{can} be used in this way, i.e. that does not have false negatives.

There are many potential applications for this kind of data structure.
As a concrete example, consider a journal comprising a large collection of academic papers.
When accepting a new paper the journal might want to check if the new paper is very similar to any prior work already published.
By using a distance-sensitive  filter this can be done in a space-efficient manner.
Because we do not allow false negatives, any new paper passing this test (with a `no' result) is guaranteed to be significantly different from all prior work.
In the rare case that a paper fails the test, the submission process could be halted pending a consultation of the full archive.
Furthermore, since the filter provides very little information about the content of the papers it would not need to be subject to the same access control as a full database of all the journals papers might be under.
More interesting examples of applications for distance-sensitive filters can be found in~\cite{Kirsch} and for Bloom filters in general in~\cite{broder2004network}.

\subsection{Our results}

We study the space required for answering distance-sensitive approximate membership queries with no false negatives.
It turns out that, in contrast to approximate membership, we get different bounds depending on how the false positive rate is defined:
\begin{itemize}
\item If we desire a \emph{point-wise} error bound (Definition~\ref{def:DSAM}) for each query at distance $\geq cr$ from $S$, the space usage must be $\BOM{n \left( \frac{r^2}{d} + \log \frac{1}{\varepsilon}\right)}$ for almost all parameters, and $\BOM{n \left(\frac{r}{c}+\frac{c}{c-1} \log \frac{1}{\varepsilon}\right)}$ bits if $n$ is not too large
(see Theorem~\ref{thm:wclb1}).
\item If it suffices to have an $\varepsilon$ \emph{average} false positive rate (Definition~\ref{def:EDSAM}) over all queries at distance $\geq cr$ from $S$, where $Cl<d/2$, the space usage must be $\BOM{n \left( \frac{r^2}{d} + \log \frac{1}{\varepsilon}\right)}$ bits.
(see Theorem~\ref{avg_error_thm}).
\end{itemize}
We match these lower bounds with almost tight upper bounds on space usage in Section.~\ref{sec:upper-bounds}.
We introduce the notion of vector \emph{signature}, which can be seen as a succinct version of a {\textsc CountSketch}~\cite{CharikarCF04}, and then show how to use them to design distance sensitive filters with point-wise and average errors.

Our focus is on space usage rather than query-time, and indeed it would be surprising if poly-logarithmic query time in $n$ is possible since our (point-wise) filter could be used, say with $\varepsilon = 1/n$, to solve the $c$-approximate nearest neighbor problem. The best currently know data structures for this problem use $n^{\Omega(1/c)}$ time~\cite{andoni2015optimal}.

\subsection{Related work}

There is little prior work specifically on distance sensitive approximate membership.
The problem corresponds to querying a standard Bloom filter in a ball around the query point, but this solution is slow, time $\Omega(\binom{d}{r})$, and also not particularly space efficient since we would need to use a Bloom filter with a very small false positive rate to bound the probability that none of the queries yield a false positive.
More precisely, the required space usage for this approach would be $\Omega(n r \log\frac{d}{r})$ bits~\cite{Carter1978}.

Mitzenmacher and Kirsch~\cite{Kirsch} considered data structures that look like Bloom filters but replace standard hash functions with locality sensitive hash (LSH) functions~\cite{Indyk1998} to achieve distance sensitivity.
However, this approach introduces false negatives because LSH is not guaranteed to produce collisions.
In order to reduce the number of false negatives the conjunction used when querying Bloom filters is replaced by a threshold function: there should just be ``many'' hash collisions.
Unfortunately, the achieved approximation factor is large, i.e. $c=\BO{\log n}$.
Hua et al.~\cite{Hua2012} extended the data structure of~\cite{Kirsch} with practical improvements and provided extensive experiments, confirming that false negatives also appear in practice.

There has been some recent progress on developing LSH families that can answer near neighbor queries without false negatives~\cite{Pagh2016}, but it seems inherent to such families that the storage cost grows exponentially with~$r$.
Thus this approach is not promising, perhaps except for very small values of $r$.

Finally, it is known that allowing a constant fraction of false negatives does not asymptotically improve the space usage that can be achieved by approximate membership data structures~\cite{pagh2001lossy}.
It is not apriori clear that space usage will be worse when false negatives are not allowed.


\section{Problem definition and notation}
The Hamming distance $H(p,q)$ between two points $p,q\in\{0,1\}^d$ is the number of positions where $p$ and $q$ differ. 
Given a set $S\subseteq \{0,1\}^d$ of $n$ points and a point $q\in \{0,1\}^d$, 
we extend the meaning of $H(\cdot)$ by defining $H(q,S)$ to be the minimum distance between $q$ and any point in $S$, i.e. $H(q,S)=\min_{p\in S}H(q,p)$.
We use $\binom{A}{n}$ to denote $\{S\subseteq A:|S|=n\}$ when $A$ is a set.
We will be using the $\B[d]{x}{r}$ notation as defined in section \ref{sec:ball-notation}.

We formally define \emph{distance sensitive approximate membership filters}  as follows:

\begin{definition}
  \label{def:DSAMmother}
  {\textsc (Distance sensitive  approximate membership filter)}
Let $r\geq 0$, $c\geq 1$, and $\varepsilon\in[0,1]$. 
Given a set $S \subset \{0,1\}^d$ define the two sets:
\begin{align*}
  Q_\text{near}&=\{x\in\{0,1\}^d:H(x,S)\leq r\},\\
  Q_\text{far}&=\{x\in\{0,1\}^d:H(x,S)> cr\}.
\end{align*}

A $(r,c,\varepsilon)$-distance sensitive approximate membership filter for $S$
is a data-structure that on a query $q\in\{0,1\}^d$ reports: 
\begin{itemize}
\item `\emph{Yes}' if $q\in Q_\text{near}$
\item `\emph{No}' if $q\in Q_\text{far}$, but with some probability of error (i.e. false positives).
\end{itemize}
If $q\notin Q_\text{near}\cup Q_\text{far}$ the data structure can return any answer.
\end{definition}

\begin{figure}[ht]
  \centering
\includegraphics{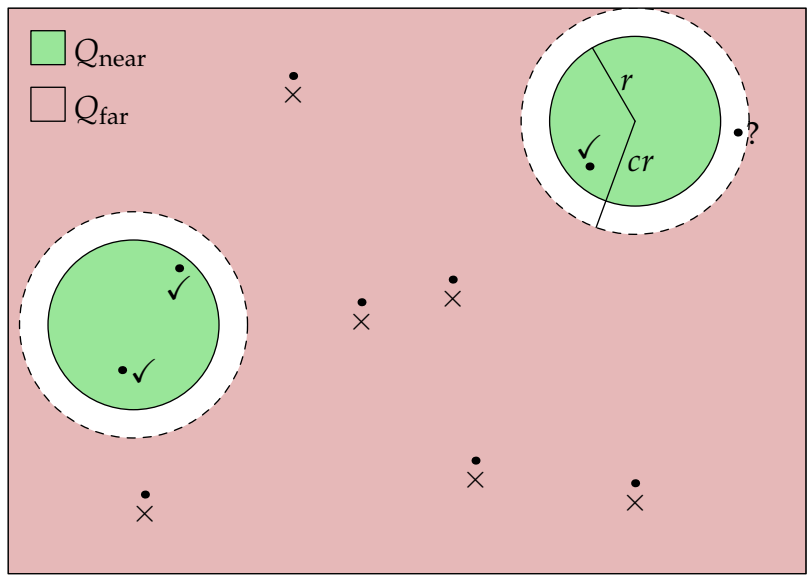}
  \caption[DAMQ data-structure]{Illustration for $n=2$ showing some queries with their desired output: $\checkmark \rightarrow$ `\emph{Yes}', $\times\rightarrow$`\emph{No}', ? $\rightarrow$ Undefined.}
  \label{fig:damq}
\end{figure}

In the rest of the chapter, we study space bounds under two error measures, named point-wise and average errors.

\begin{definition}[Point-wise error]
  \label{def:DSAM}
A $(r,c,\varepsilon)$-distance sensitive approximate membership filter for $S$ has point-wise error $\varepsilon $ if, on a query $q\in\{0,1\}^d$, it reports: 
\begin{itemize}
\item `\emph{Yes}' if $q\in Q_\text{near}$;
\item `\emph{No}' with probability at least $1-\varepsilon$ if $q\in Q_\text{far}$ (the probability is over the random choices of the filter).
\end{itemize}
\end{definition}

This is a strong guarantee since each  point in $Q_\text{far}$ has probability $\varepsilon$ to fail.
If hard queries are not expected, it might be acceptable that some points give false positives in every instance of the data structure, as long as only an $\varepsilon$ total fraction of points in $Q_\text{far}$ give false positives.
We refer to this weaker filter as the \emph{average error} version:

\begin{definition}[Average error]
\label{def:EDSAM}
A $(r,c,\varepsilon)$-distance sensitive approximate membership filter for $S$ has average error $\varepsilon $  if, on a query $q\in\{0,1\}^d$, it reports: 
\begin{itemize}
\item `\emph{Yes}' if $q\in Q_\text{near}$;
\item `\emph{No}' with probability at least $1-\varepsilon$, if $q$ is randomly and uniformly selected from $Q_\text{far}$ (the probability is over the random selection in $Q_\text{far}$ and over the random choices of the filter).
\end{itemize}
\end{definition}

The average-error guarantee implies that the filter 
provides the correct answer to at least a $(1-\varepsilon)$ fraction, in expectation,  of the points in $Q_\text{far}$.
Clearly, a filter with point-wise error is also a filter with average error.
Though the difference between these two error measures may seem small, their properties and analysis differ substantially.


\section{Lower bounds}
\label{sec:lower-bounds}
As a warm-up, we first investigate what can be done when no errors are allowed, that is when $\varepsilon=0$ (in this case the average and point-wise error guarantees are equivalent).
The next theorem shows that, up to constant factors, the optimal filter is no better than one that stores $S$ explicitly.
When $\varepsilon = 0$ there is no distinction between point-wise and average error. 
Throughout this chapter we let $\log x$  denote the logarithm of $x$ in base 2.

\begin{theorem}\label{eps0}
Any distance sensitive approximate membership filter with error $\varepsilon = 0$ must use at least 
$$n \log \left(\frac{2^d}{e n \B[d]{}{cr}}\right)$$
 bits in the worst case. If $d = \omega(\log n)$ and $cr = \SO{d/\log d}$ then it must use $\BOM{nd}$ bits. 
\end{theorem}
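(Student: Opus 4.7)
The plan is an information-theoretic counting argument: lower-bound the number of distinct memory states the filter must produce as the input $S$ varies over all $n$-element subsets of $\{0,1\}^d$. Since $\varepsilon = 0$, we may assume the filter is deterministic (fix any random seed; the resulting deterministic filter must still answer correctly on every query). Hence the memory state $M$ determines a fixed answer for each query $q$.

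The key observation is that two different inputs $S_1, S_2$ can produce the same state $M$ only if $S_1 \subseteq \bigcup_{p \in S_2} \B[d]{p}{cr}$ (and vice versa). Indeed, for any $q \in S_1$ we have $H(q,S_1)=0 \leq r$, so state $M$ must answer \emph{Yes} on $q$; if $S_2$ also yields state $M$, this answer must be consistent with the \emph{no-false-positive} requirement on input $S_2$, forcing $H(q, S_2) \leq cr$, i.e.\ $q$ lies in the $cr$-neighborhood of $S_2$. Thus for any fixed ``representative'' $S_0$, every $S$ sharing its state is an $n$-subset of $\bigcup_{p \in S_0} \B[d]{p}{cr}$, a set of size at most $n \, \B[d]{}{cr}$. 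So the number of inputs collapsing to a single state is at most
\[
\binom{n \, \B[d]{}{cr}}{n} \leq \left( e\, \B[d]{}{cr}\right)^n,
\]
using the standard inequality $\binom{a}{b} \leq (ea/b)^b$.

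Since every $n$-subset of $\{0,1\}^d$ must map to \emph{some} memory state, the number of distinct states needed is at least
\[
\frac{\binom{2^d}{n}}{(e \, \B[d]{}{cr})^n} \;\geq\; \frac{(2^d/n)^n}{(e\, \B[d]{}{cr})^n} \;=\; \left( \frac{2^d}{en\, \B[d]{}{cr}}\right)^n,
\]
so any filter needs at least $\log$ of this many bits, giving the first bound $n \log\!\bigl(2^d/(en\,\B[d]{}{cr})\bigr)$.

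For the asymptotic consequence, bound $\log \B[d]{}{cr} \leq cr \log(ed/cr)$ (e.g.\ via $\B[d]{}{k} \leq (ed/k)^k$). The hypothesis $cr = \SO{d/\log d}$ makes this $\SO{d}$, while $d=\omega(\log n)$ makes $\log(en)=\SO{d}$, so the logarithm above is $d - \SO{d} = \BOM{d}$ and the total space is $\BOM{nd}$ bits. The only delicate step is the ``collapsing'' observation (the containment of $S_1$ in the $cr$-neighborhood of $S_2$), which is where the no-false-positive assumption enters; once that is in hand the rest is straightforward binomial arithmetic.
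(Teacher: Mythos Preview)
Your proof is correct and takes essentially the same approach as the paper. The paper frames the argument as an Alice--Bob encoding scheme (Alice sends the filter plus an index of $S$ within the Yes-set $P$), while you frame it as a direct state-counting pigeonhole argument; these are two standard presentations of the same information-theoretic bound, and the binomial estimates used are identical.
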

\begin{proof}
  The proof is an encoding argument. A set $S \subseteq \{0,1\}^d$ of size $n$ is encoded by Alice and sent to Bob who will recover it.
  Assume the optimal filter  uses $s$ bits in the worst case.
  Alice inserts the given set $S$ into the optimal filter, and runs the query algorithm on each point in the universe.
  Since there are no false positives, the filter says `yes' to a set $P$ of at most $n \B[d]{}{cr}$ points.
  Alice encodes $S$ as a subset of $P$ using $\log  \binom{n  \B[d]{}{cr}}{n}+\BO{1}$ bits.
  Alice then sends the at most $s$ bits of the optimal filter to  Bob along with the strings encoding $S$ as a subset of $P$.

  The decoding procedure is straightforward.
  Bob queries the optimal filter with all points in $\{0,1\}^d$, recovering $P$.
  Then, using $P$ and the second string of bits received from Alice, Bob can recover the initial set $S$.
  
  Since every set $S$ of size $n$ can be encoded, we get that:
\begin{eqnarray*}
 s +\log  \binom{n \B[d]{}{cr}}{n} &\geq& \log  \binom{2^{d}}{n}  
\end{eqnarray*}  
from which follows that
\begin{eqnarray*}
  s & \geq & \log  \left( \left(\frac{2^{d}}{n}\right)^{n} / \left( \frac{en \B[d]{}{cr}}{n}\right)^{n} \right)  \\
   & \geq & nd - n \log  (en) - n \log  \B[d]{}{cr}
\end{eqnarray*}

If $d = \omega(\log n)$, we get $s = \BOM{nd - n \log \B[d]{}{cr}}$.
Further, using that $\B[d]{}{cr} = \sum_{i=0}^{cr} \binom{d}{i} < d^{cr}$ for $cr <d/2$, we get that $s = \BOM{nd - ncr \log d}$, which is $\BOM{nd}$ when $cr = \SO{d/\log d}$.
\end{proof}

\subsection{Average error}

Next we investigate the distance sensitive membership problem with average error $\varepsilon>0$.

\begin{theorem}\label{avg_error_thm}
  Assume that $ n\B[d]{}{cr}/2^d < \varepsilon < 1/4 $. Then any distance sensitive membership filter with average error $\varepsilon$ must use 
  $$\BOM{ n \left(\frac{r^2}{d} + \log \left(\frac{1}{\varepsilon}\right)\right)}$$
   bits in the worst case.
\end{theorem}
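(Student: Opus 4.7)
The plan is to derive the two summands in the lower bound separately and then combine them. In both halves I would invoke Yao's minimax principle, so that it suffices to exhibit a distribution $\mathcal{D}$ on inputs $S$ and argue against deterministic filters whose expected false-positive rate over a uniform query from $Q_{\text{far}}(S)$ is at most $\varepsilon$.

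For the $\Omega(n\log(1/\varepsilon))$ term I would mimic the encoding argument used for Theorem~\ref{eps0}, extended to allow a small number of false positives. Let $S$ be drawn uniformly from $\binom{\{0,1\}^d}{n}$, let $f(S)$ denote the $s$-bit filter, and let $A(S)\subseteq\{0,1\}^d$ be the set of queries on which $f(S)$ answers ``yes''. Since there are no false negatives, $S\subseteq A(S)$. The average-error hypothesis gives $\E_S[\,|A(S)\cap Q_{\text{far}}(S)|\,]\le \varepsilon\,|Q_{\text{far}}(S)|\le \varepsilon\,2^d$, and $|A(S)\cap Q_{\text{near}}(S)|\le n\,\B[d]{}{cr}$. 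By Markov, for at least half of all $S$ we have
\[
  |A(S)|\le n\,\B[d]{}{cr} + 2\varepsilon\, 2^d = \BO{\varepsilon\, 2^d},
\]
using the hypothesis $n\,\B[d]{}{cr}/2^d<\varepsilon$. On this ``good'' half, Alice can transmit $S$ to Bob by first sending the $s$-bit string $f(S)$ (from which Bob reconstructs $A(S)$) and then an index into $\binom{A(S)}{n}$. Summing over at least half of the inputs yields
\[
  s + n\log(|A|e/n) \ge \log\binom{2^d}{n} - \BO{1},
\]
which after simplification gives $s=\BOM{n\log(1/\varepsilon)}$.

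For the $\Omega(nr^2/d)$ term the encoding must be refined so that the filter reveals information beyond the identity of $A(S)$. The plan is to take as hard distribution a family of $n$ well-separated ``prototypes'' (pairwise Hamming distance $\gg cr$, so each prototype induces its own independent copy of the single-point problem) and to let each element of $S$ be an independent perturbation of its prototype within a ball of radius $\Theta(r)$. Because the prototypes are far apart, for any query near prototype $i$ only the perturbation of the $i$-th point can influence the filter's answer. I would then argue, in the single-prototype case, that the answers of $f(S)$ on the $\binom{d}{r}$ points at distance exactly $r$ from the prototype carry $\Omega(r^2/d)$ bits of information about the chosen perturbation, essentially by a Fano-type argument: a random perturbation and a random antipodal perturbation are statistically distinguishable only to the extent allowed by the entropy in $\log_2$ of the size of an equivalence class of perturbations that the filter treats identically, and a $c$-approximation buffer of width $(c-1)r$ together with the hypothesis $cr<d/2$ limits that class's size. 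Summing the $\Omega(r^2/d)$ bits across the $n$ independent prototypes gives $\Omega(nr^2/d)$.

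The main obstacle is the second half, in particular justifying the $\Omega(r^2/d)$ per-prototype bits; the first half is essentially the standard Bloom-filter encoding argument suitably refined for the average-error setting. I would combine the two bounds by running the two arguments against the same space budget (taking the better of the two lower bounds) to conclude $s=\BOM{n(r^2/d+\log(1/\varepsilon))}$.
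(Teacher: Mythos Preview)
Your first half (the $\Omega(n\log(1/\varepsilon))$ term) is essentially the paper's argument: reduce to a deterministic filter, observe that the set $P$ of ``yes'' answers has size $O(\varepsilon 2^d)$ (true positives are at most $n\B[d]{}{cr}<\varepsilon 2^d$ by hypothesis, false positives at most $\varepsilon 2^d$), and encode $S$ as a subset of $P$. That part is fine.

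The second half is where your proposal diverges from the paper, and where it remains a sketch rather than a proof. Your plan---planting $n$ far-apart prototypes, perturbing each within radius $\Theta(r)$, and extracting $\Omega(r^2/d)$ bits per prototype via a Fano-type argument---is plausible in spirit but you have not identified the mechanism that forces the filter to reveal those bits. In particular, the filter is free to answer ``yes'' on the entire annulus between radii $r$ and $cr$ around each prototype, so it need not distinguish nearby perturbations at all; any information-theoretic argument must come to grips with how much extra ``yes''-volume the filter can afford before violating the $\varepsilon$ false-positive budget, and you have not done that.

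The paper's route is much more direct and avoids perturbations entirely. Once you have a deterministic filter, let $P$ be its ``yes'' set; the absence of false negatives forces $\B{x}{r}\subseteq P$ for every $x\in S$, i.e.\ $S\subseteq P^{-r}$ where $P^{-r}=\{x:\B{x}{r}\subseteq P\}$. The hypothesis gives $|P|\leq 2^{d-1}$, and the key lemma is a consequence of Harper's vertex-isoperimetric inequality: any set $P$ of that size satisfies $|P^{-r}|\leq 2^d e^{-2r^2/d}$ (the extremal case being a Hamming ball of radius $d/2$, whose $r$-interior is a ball of radius $d/2-r$, bounded by a Chernoff tail). A single filter state can therefore represent at most $\binom{|P^{-r}|}{n}$ input sets, so the number of filter states must be at least $\binom{2^d}{n}/\binom{2^d e^{-2r^2/d}}{n}\geq e^{\Omega(nr^2/d)}$, giving $s=\Omega(nr^2/d)$ immediately. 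This counting argument is what you are missing; it replaces your unproved Fano step with a clean isoperimetric bound.
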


Before proving the theorem, we highlight some remarks:
\begin{enumerate}
\item The above theorem holds as long as $n\B[d]{}{cr}< 2^{d-2}$, i.e. the ``membership set'' covers less than a quarter of the full Hamming space.
  This is the most interesting range of parameters. Similarly to Bloom filters, our approach is not optimal when non-members are rare. 
  As we will see later, the $\Omega(nr^{2}/d)$ lower bound holds as long as $n\B[d]{}{cr}<2^{d-1}$, and it starts to deteriorate when $n\B[d]{}{cr}$ approaches $2^d$.
    It is clear that some upper bound on $n\B[d]{}{cr}$ is necessary; if it approaches size $2^{d}-O(n/d)$, then storing the complement exactly in $O(n)$ bits suffices.
    Also note that at the lower limit of $\varepsilon= n\B[d]{}{cr}/2^d$, this lower bound matches the lower bound of the $\varepsilon=0$ case in Theorem~\ref{eps0}. Thus Theorem~\ref{eps0} follows from Theorem~\ref{avg_error_thm}.

\item  The term $\B[d]{}{cr}$ has no simple closed expression for all $c$ and $r$, and so the dependence of the hypothesis of the theorem on $c$, $r$ and $d$ is not straightforward.

\end{enumerate}
The rest of this section is devoted to the proof of Theorem~\ref{avg_error_thm}.

\begin{proof}
The proof is derived for a deterministic version of the distance sensitive membership filter: in this setting, the filter answers `no' to at least a fraction of points in $Q_\text{far}$ (i.e., points at distance at least $cr$ from all points in the input point set $S$), and hence there can be at most $\varepsilon |Q_{\text{far}}|$ false positives. 
We claim that such a lower bound applies also to a randomized filter. 
Suppose that a randomized filter  requires  $s$ bits, with $s$ smaller than the lower bound.
Since the expected number of correct `no' answers is at least $(1-\varepsilon)|Q_\text{far}|$, there must exist random values for which the filter provides the correct solution for at least $(1-\varepsilon) | Q_{\text{far}}|$ points: by using these values, we obtain a deterministic average error filter with space complexity $s$ lower than the lower bound, which is a contradiction.

We first prove a $\BOM{n \log(1/\varepsilon)}$ lower bound. The proof is an encoding argument that extends the scheme presented in the proof of Theorem~\ref{eps0} and in \cite{Carter1978}.
Alice receives a set $S$ of size $n$ from the universe to encode. Assume the optimal distance sensitive filter with $\varepsilon$ average error uses $s$ bits in the worst case.
Alice inserts~$S$ into the filter, and runs the query algorithm on all points in the universe recovering $P$, the set of points the filter answers `Yes' to.
We first claim that $|P|\leq2^{d+1}\varepsilon$.
First, the number of positives not considered false is at most $n\B[d]{}{cr}$ (this bound is achieved when all the balls are disjoint), which is less than $2^{d}\varepsilon$. Also the number of false positives is always at most $2^d \varepsilon$.
Adding these, we find that the total number of positives is at most $2^{d+1}\varepsilon$.
Alice then encodes the set $S$ as a subset of $P$, using at most $\log \binom{2^{d+1}\varepsilon}{n}$ bits. Alice sends these bits to Bob along with the at most $s$ bits representing the optimal filter for $S$.

Bob queries the filter with all $q\in\{0,1\}^d$ and recovers $P$. Bob then uses the extra bits sent by Alice to find the subset of $P$ identical to $S$.
We have that:
\begin{eqnarray*}
s + \log \binom{2^{d+1}\varepsilon}{n} &\geq& \log \binom{2^d}{n}  \\  
\Rightarrow s &\geq& \log \frac{2^d \cdots (2^d-n+1)}{(\varepsilon 2^{d+1})\cdots(\varepsilon 2^{d+1}-n+1)}  \\
\Rightarrow s &\geq&   \log \left(\frac{2^d }{ \varepsilon 2^{d+1}}\right)^n  \\
\Rightarrow s &\geq&  n\log\left(\frac{1}{2\varepsilon}\right) \in \BOM{n \log\left(\frac{1}{\varepsilon}\right)}. 
\end{eqnarray*}

To prove the $n r^{2}/d$ lower bound, we first introduce some notation.
Consider the hypercube graph on the $d$-dimensional Hamming cube where two points $p$ and~$q$ have an edge between them if they have Hamming distance $1$. Given a set $A \subset \{0,1\}^d$, let $A^c$ denote its complement, and define $\partial A$ to be the set of points in $A$ that have an edge to a point in $A^c$ (when either $A^c$ or $A$ is empty, $\partial A$ is the empty set). Also, given an integer $r> 0$, define $A^{-r} = A \setminus \bigcup_{x \in \partial A} \B[d]{x}{r-1}$. $A^{-r}$  contains exactly those points $x \in A$ such that the ball $\B[d]{x}{r}$ is contained inside $A$.

A deterministic filter that uses $s$ bits can be viewed as a function $\mathcal{F}: \binom{\{0,1\}^d}{n} \rightarrow \{0,1\}^s$; given a set $S \subseteq \{0,1\}^d$ of size $n$, $\mathcal{F}(S)$ is the memory representation of $S$ that uses at most $s$ bits. Let $V(S) = |\cup_{x \in S} \B[d]{x}{r}| + \varepsilon( 2^d - |\cup_{x \in S} \B[d]{x}{r}|)$: we note that $V(S)$ is an upper bound to the number of `yes' answers returned by the filter (i.e., both true and false positives), and  $V(S) \leq 2^{d-1}$ by the hypothesis of the theorem.

Running the query algorithm on all points in the Hamming cube for the representation $\mathcal{F}(S)$ returns a set~$P_{S}$ of positives (${P_{S}}^c$ of negatives) such that $|P_{S}| \leq V(S)$. Let us denote by $D$ the function that takes in a set $S$, and outputs the set $P_{S}$ of positives returned by the query algorithm on the representation $\mathcal{F}(S)$. 

Varying over all $S \in  \binom{\{0,1\}^d}{n}$, we get a family $\mathcal{T}$ of sets such that:
\begin{enumerate}
\item $\forall S$, $\exists P \in \mathcal{T}$ such that $\B[d]{x}{r} \subset P$ for all $x \in S$. 
\item For any $P \in \mathcal{T}$ and $\forall S$ such that $D(S) = P$, $|P| \leq V(S)$.
\end{enumerate}

Thus $D$ is a function from $\{0,1\}^s$ to $\mathcal{T}$, the image of which is all of $\mathcal{T}$. This implies that $s\geq \log |\mathcal{T}|$.
So in order to get a lower bound on $s$ it suffices to get a lower bound on the size of the smallest family $\mathcal{T}$ with the above properties.

Fix $P \in \mathcal{T}$. Define $D^{-1}(P)=\{S: D(S) = P\}$. Any ball of radius $r$ around a point $p \in S$ such that $S \in D^{-1}(P)$ must be completely contained inside $P$. The maximum number of such points $p$ is $|P^{-r}|$. Thus we get that $|\cup_{S \in D^{-1}(P)} S| \leq |P^{-r}|$. This implies that $|D^{-1}(P)| \leq \binom{|P^{-r}|}{n}$.

Since all possible sets (from $\binom{\{0,1\}^d}{n}$) need to be covered, we get that $|\mathcal{T}| \geq  \binom{2^d}{n} / \binom{|P^{-r}|}{n}$. We now need an upper bound on the size of $|P^{-r}|$.
Lemma~\ref{lem:ball} states that $|P^{-r}| \leq 2^d e^{-2r^2/d}$.


The proof of the lower bound in Theorem~\ref{avg_error_thm} then follows by applying Lemma~\ref{lem:ball}:
\begin{align*}
 |\mathcal{T}|  &\geq  \binom{2^d}{n} / \binom{|P^{-r}|}{n} \\
 & \geq  \left( \frac{e 2^d}{|P^{-r}|} \right)^{n} \\ 
 & \geq e^{n\left(2r^2/d+1\right)}  , 
\end{align*}
  which implies that
$ s \geq \log \mathcal{T} = \Omega \left(nr^{2}/d  \right)$.
Combining our bounds, we get that when $n,r$ and $c$ satisfy the condition that $nB(cr,d) \leq 2^{d-2}$, any filter must use $\Omega ( n (r^2/d + \log (1/\varepsilon)))$ bits in the worst case. 
\end{proof}

\begin{lemma}\label{lem:ball}
Let $S$, $P$ and $r$ be as above. Then $|P^{-r}| < 2^d e^{-2r^2/d}$. 
\end{lemma}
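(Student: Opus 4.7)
The statement is a measure-concentration claim on the Hamming cube, so the plan is to reformulate $P^{-r}$ in terms of the complement $P^c$ and then apply the standard isoperimetric/concentration inequality for $\{0,1\}^d$.

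First I would rewrite the set we must bound. By the definition of $P^{-r}$, a point $y$ lies in $P^{-r}$ iff $\B[d]{y}{r} \subseteq P$, which is equivalent to saying that no point within Hamming distance $r$ of $y$ belongs to $P^c$; that is, $H(y, P^c) > r$. Writing $(P^c)_r = \{y \in \{0,1\}^d : H(y, P^c) \leq r\}$ for the $r$-neighborhood of $P^c$, this gives the clean identity
\[
P^{-r} \;=\; \{0,1\}^d \setminus (P^c)_r,
\qquad\text{so}\qquad
|P^{-r}| \;=\; 2^d - |(P^c)_r|.
\]

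Next I would exploit the fact, carried over from the outer proof (via the hypothesis $n\B[d]{}{cr} \leq 2^{d-2}$ and $\varepsilon<1/4$), that $|P|\leq V(S)\leq 2^{d-1}$, hence $|P^c| \geq 2^{d-1}$. This lets me apply the standard Harper/Bernstein-type concentration inequality on the Hamming cube: for every $A\subseteq\{0,1\}^d$ with $|A|\geq 2^{d-1}$ and every integer $r\geq 0$,
\[
|A_r| \;\geq\; 2^d\bigl(1 - e^{-2r^2/d}\bigr).
\]
Instantiating with $A = P^c$ gives $|(P^c)_r| \geq 2^d(1-e^{-2r^2/d})$, and substituting into the identity above yields
\[
|P^{-r}| \;\leq\; 2^d \cdot e^{-2r^2/d},
\]
which is exactly the claim.

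The main obstacle is whether we want the proof to be self-contained. Invoking the Hamming-cube concentration inequality as a black box is the cleanest route, but if the thesis prefers a standalone derivation, I would give a short proof of it from scratch via a bounded-differences (Azuma/McDiarmid) argument: pick $X$ uniformly in $\{0,1\}^d$ and let $f(X) = H(X, P^c)$; then $f$ is $1$-Lipschitz with respect to the Hamming metric, the median of $f$ is $0$ (since $|P^c|\geq 2^{d-1}$), and McDiarmid's inequality gives $\Pr[f(X) > r] \leq e^{-2r^2/d}$, which is precisely $|P^{-r}|/2^d \leq e^{-2r^2/d}$. Either route is routine; the only subtlety is making sure the hypothesis $|P|\leq 2^{d-1}$ is correctly inherited from the surrounding argument so that the concentration inequality is applied on the ``large half'' of the cube.
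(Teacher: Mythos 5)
Your proof is correct and rests on the same two ingredients as the paper's: Harper's vertex-isoperimetric inequality and a Hoeffding-type tail bound for the binomial. The difference is in packaging. You pass to the complement, observe the clean identity $P^{-r}=\{0,1\}^d\setminus(P^c)_r$, and then quote the blow-up form of Harper's inequality (``a half-cube expands to all but an $e^{-2r^2/d}$ fraction in $r$ steps'') as a black box. The paper instead works with $P$ directly, proves by induction on $r$ that among all $A$ of size $2^{d-1}$ the quantity $|A^{-r}|$ is maximized by the Hamming ball (using Harper once per step), and then bounds $\B[d]{}{d/2-r}$ explicitly via the binomial Chernoff-Hoeffding bound. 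Your route is shorter and arguably more transparent because the ``$r$-neighborhood of the complement'' formulation makes the isoperimetric content visible immediately, at the cost of leaning on the concentration inequality as an external citation rather than deriving it in place; the paper's route is self-contained and makes the extremal set (the Hamming ball) explicit. One small caveat on your fallback argument: McDiarmid's inequality in its usual form gives concentration around the \emph{mean}, not the median, and the median version you want (with the sharp constant $2$ in the exponent) is itself a consequence of Harper's theorem — so the ``standalone'' McDiarmid route is not actually an independent derivation, it is the same Harper argument in disguise and you should cite it as such rather than as a generic bounded-differences bound.
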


\begin{proof}
Note that $P$ is the set of positives (after running the query algorithm on all points in the Hamming space) on the filter $\mathcal{F}(S)$. Thus we have that $|P| \leq V(S) \leq 2^{d-1}$. The size of $P^{-r}$ increases as $P$ increases, so we have that $|P^{-r}|$ is at most $\max |A^{-r}|$, where the maximum is taken over all sets $A$ such that $|A| = 2^{d-1}$.

We will first prove that if $|A| = 2^{d-1}$, then $\max |A^{-r}|$ is at most $B(d/2-r,d)$ (the size of the Hamming ball of radius $d/2 - r$). The proof is by induction (the statement is actually true for any $r < d/2$, not just the input parameter $r$, and so we will treat it as a variable). 

For $r=1$, the statement is that $|A^{-1}|$ is maximized when $A$ is the Hamming ball of radius $d/2$. This is the statement of Harper's theorem, also called the vertex-isoperimetric inequality \cite{Bollobas:1986:CSS:7228}, that states that Hamming balls have the smallest vertex boundary among all sets of a given size.

Assume now that the statement is true for $r=k$, i.e., of all sets $A$ such that $|A| = 2^{d-1}$, the one that maximizes $|A^{-k}|$ is the Hamming ball of radius $d/2$. In this case, note that $A^{-k}$ is the Hamming ball of radius $d/2-k$.

Assume that the statement for $r=k+1$ is false, i.e., there is a set $W$ (of size $2^{d-1}$) such that $|\B[d]{0}{d/2}^{-(k+1)}| < |W^{-(k+1)}|$. Note that by the inductive hypothesis, we know that $|\B[d]{0}{d/2}^{-k}| \geq |W^{-k}|$.

However, the vertex-isoperimetric inequality can also be stated as: if a set $W$ (that is not a ball) has size greater then or equal to that of the Hamming ball of radius $R$, then $|W \cup \Gamma(W)|$ is larger than the size of Hamming ball of radius $R+1$, where $\Gamma(W)$ is the set of neighbors of $W$. Thus  $|\B[d]{0}{d/2}^{-(k+1)}| < |W^{-(k+1)}|$ actually implies $|\B[d]{0}{d/2}^{-k}| < |W^{-k}|$, which contradicts the inductive hypothesis.

Finally, we bound $\B[d]{}{d/2-r}$ using  the following Chernoff-Hoeffding bound~\cite{mitzenmacher2005probability} for binomial random variables:

If $X_{i}$ denotes the outcome of the $i$th coin toss with an unbiased coin, and $X = \sum_{i=1}^{d} X_{i}$, then $\Pr[X \leq \mu - a] \leq e^{-2a^2/d}$, for all $0 < a < \mu$, where $\mu = \mathbb{E}[X] = d/2$. 
Let $X \sim \mathcal{B}(d,0.5)$. Now we have that
\begin{align*}
  |P^{-r}| \leq &\B[d]{}{d/2-r} \\
  =& 2^d P[X \leq d/2-r]\\
\leq & 2^d e^{-2r^{2}/d}.
\end{align*}
\end{proof}


\subsection{Point-wise error}
The lower bound for the average case in Theorem~\ref{avg_error_thm} also applies to a filter with point-wise error guarantees.
A $(r,c,\varepsilon)$-filter with point-wise error $\varepsilon $ is also a $(r,c,\varepsilon)$-filter with average error $\varepsilon$:
if each point fails with probability $\varepsilon$, then a random point fails with probability $\varepsilon$.
However, a stronger lower bound holds for point-wise error if the number of points $n$ is not too large. 

\begin{theorem}\label{thm:wclb1}
  Consider an $(r,c,\varepsilon)$-distance sensitive approximate membership filter with point-wise error  on a set $S$ of $n$ points in $\{0,1\}^d$.
  Then, in the worst case, the filter must use:
\begin{itemize}
\item  $\BOM{n \left(\tfrac{r^2}{d} + \log \tfrac{1}{\varepsilon}\right)}$ bits if $n \B[d]{}{cr}/2^d<\varepsilon<1/4$.
\item  $\BOM{n \left(\tfrac{r}{c} + \log\tfrac{1}{\varepsilon} \right)}$ bits  if $n \B[\delta cr]{}{cr}/2^{\delta cr} < \varepsilon<1/4$ for some constant $\delta$.
\end{itemize}
\end{theorem}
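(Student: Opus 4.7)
The first bullet is immediate: the paper already notes that a $(r,c,\varepsilon)$-filter with point-wise error $\varepsilon$ is in particular a filter with average error $\varepsilon$, so under the same hypothesis $n\B[d]{}{cr}/2^d<\varepsilon<1/4$ the bound of Theorem~\ref{avg_error_thm} transfers verbatim. All the substantive work lies in the second bullet, and my plan is a subspace-reduction that shrinks the ambient dimension to $d' = \delta c r$ and then reinvokes Theorem~\ref{avg_error_thm} on the smaller universe to produce the stronger $r/c$ term.

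Fix $d' = \delta c r$ and identify $\{0,1\}^{d'}$ with the subspace $H = \{x \in \{0,1\}^d : x_{d'+1} = \cdots = x_d = 0\}$. Given a point-wise $(r,c,\varepsilon)$-filter $\mathcal{F}$ that uses $s$ bits, I restrict attention to input sets $S\subseteq H$ of size $n$ and queries $q\in H$. Since Hamming distances inside $H$ coincide with their ambient values, three properties carry over at no cost: if $q$ is within distance $r$ in $H$ of some $p\in S$, then ``yes'' is returned deterministically by the original filter; if $q$ is at distance greater than $cr$ in $H$ from every $p\in S$, then by the point-wise guarantee the original filter returns ``no'' with probability at least $1-\varepsilon$; and the storage is still at most $s$. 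Hence $\mathcal{F}$, used in this restricted fashion, is a valid point-wise $(r,c,\varepsilon)$-filter for $n$-point sets in $\{0,1\}^{d'}$, and in particular an average-error one.

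Now I invoke Theorem~\ref{avg_error_thm} with ambient dimension $d'$: the hypothesis $n\B[d']{}{cr}/2^{d'}<\varepsilon<1/4$ is exactly that of the second bullet, and its conclusion gives
\begin{equation*}
s = \BOM{n\left(\frac{r^2}{d'} + \log \frac{1}{\varepsilon}\right)} = \BOM{n\left(\frac{r^2}{\delta c r} + \log \frac{1}{\varepsilon}\right)} = \BOM{n\left(\frac{r}{c} + \log \frac{1}{\varepsilon}\right)}
\end{equation*}
for any constant $\delta$. The main obstacle is ensuring that the point-wise guarantee genuinely survives this subspace restriction; if we only had an average-error bound, an adversary could have concentrated the allowed $\varepsilon$-fraction of failures precisely on queries lying in $H$, and the restricted filter would no longer meet the sub-problem's error budget. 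The point-wise condition is insensitive to the choice of sub-universe because it is a per-query statement, which is precisely the leverage that distinguishes the second bullet from the first. A minor sanity check is that the worst-case instance from Theorem~\ref{avg_error_thm} in dimension $d'$ lifts via $H$ to a legitimate worst-case instance in the original dimension $d$, so the lower bound on $s$ derived in the sub-problem is indeed a lower bound on the storage of the original filter.
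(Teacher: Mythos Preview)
Your proof is correct and follows essentially the same approach as the paper: the first bullet is an immediate reduction to Theorem~\ref{avg_error_thm}, and the second bullet restricts the filter to a $d'=\delta cr$-dimensional coordinate subspace and reapplies Theorem~\ref{avg_error_thm} there, using that the point-wise guarantee (unlike the average one) survives restriction to any subspace. Your explicit remark on why the average-error guarantee would \emph{not} transfer under this restriction mirrors the paper's own observation following the theorem.
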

\begin{proof}
As already said, the first bound follows by applying Theorem~\ref{avg_error_thm} since a $(r,c,\varepsilon)$-filter with point-wise error  is also a $(r,c,\varepsilon)$-filter with average error.

We now prove the second claim.
Observe that a filter for $d$-dimensional points with point-wise error $\varepsilon $ is also a filter for $d'$-dimensional points with the same guarantees when $d>d'$.
Then, the lower bound obtained by  Theorem~\ref{avg_error_thm} for dimension $d'=\delta cr$, for some small constant $\delta$, applies to dimension $d$, and it is also stronger since the lower bound in  Theorem~\ref{avg_error_thm}  is {decreasing} in $d$.
However, the new bound needs to meet the condition of Theorem~\ref{avg_error_thm}:
given a filter for dimension  $d'=\delta cr$, then the condition states that $n\B[\delta cr]{}{cr}/2^{\delta cr} < \varepsilon < 1/4$. The theorem follows.
\end{proof}

We observe that the proof used to derive the stronger lower bound does not work for the average error measure: indeed, the average error rate relatively to a subspace (e.g., $\{0,1\}^{d'}$) can be much larger than the one in the complete space (i.e., $\{0,1\}^d$).

As we will see in the next section, there exists a filter that almost match the asymptotic lower bound 
if $c\geq 2$. 
However,  if $1<c<2$ and $\varepsilon$ is sufficiently small, the upper bound has a $\BO{1/(c-1)^2}$ overhead:
although the upper bound is not optimal, the next theorem shows that a $1/(c-1)$ overhead is unavoidable when $1<c<2$.
To help in assessing the hypothesis in the theorem, 
we notice that, when $c=1+\frac{1}{\sqrt{r}}$, the theorem holds for $n\leq 2^{\BT{r}}$, $\varepsilon\leq 2^{-\BT{r}}$, $d=2^{\BOM{\sqrt{r}}}$
and it gives a $\BOM{nr^{3/2}}$ bound, whereas the previous theorem only gave $\BOM{nr}$.
We note that the next theorem can be integrated with the previous Theorem~\ref{thm:wclb1} to get an additive $nr/c$ or $nr^2/d$ more (according to the parameters).

\begin{theorem}\label{thm:wclb2}
Let $c\leq 2$, $\varepsilon\leq (c-1)/n$ be such that $ d(c-1)\geq  ((c-1)/\varepsilon)^{6/(r(c-1))} + (r(c-1))^3$.
Consider an $(r,c,\varepsilon)$-distance sensitive approximate membership filter with point-wise error $\varepsilon $ on a $S$ set of $n$ points in $\{0,1\}^d$.
Then, the filter requires  
$$\BOM{\frac{n}{c-1}\log\left(\frac{1}{\varepsilon}\right)}$$ 
bits in the worst case.
\end{theorem}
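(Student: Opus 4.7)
The plan is to strengthen the encoding argument of Theorem~\ref{thm:wclb1} by exploiting that the point-wise error, unlike the average error, allows us to treat each query to the filter as an independent noisy channel with error $\varepsilon$. The $1/(c-1)$ factor will emerge because, in the regime $c<2$, each point $s\in S$ can be chosen from a cluster of $m\approx(1/\varepsilon)^{1/(c-1)}$ candidates that are pairwise \emph{just barely} distinguishable (pairwise distance $>r(c-1)$), rather than only $1/\varepsilon$ candidates.

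First, I would set up the hard instance. Using the hypothesis $d(c-1)\geq((c-1)/\varepsilon)^{6/(r(c-1))}+(r(c-1))^3$, I would carve out $n$ pairwise-disjoint sub-cubes, each of dimension roughly $d_0=\Theta(\log m + r(c-1))$, embedded in $\{0,1\}^d$ and placed so their pairwise Hamming distance is $\gg cr$ (so queries targeted at one cluster are unaffected by the choice of $s_j$ in another cluster). Inside each sub-cube I place a cluster $C_i$ of $m=\lfloor(1/\varepsilon)^{1/(c-1)}\rfloor$ codewords of a binary linear code with minimum distance strictly greater than $r(c-1)$; Singleton / Gilbert–Varshamov-type bounds, together with the hypothesis on $d$, guarantee such a code exists. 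The family of inputs $\mathcal{S}$ consists of all sets $\{s_1,\dots,s_n\}$ with $s_i\in C_i$, so $|\mathcal{S}|=m^n$.

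Next, I would build the decoding queries. For any two codewords $t,t'\in C_i$ with $H(t,t')>r(c-1)$, a short case analysis shows that there is a witness $q_{t,t'}$ with $H(q_{t,t'},t)\leq r$ and $H(q_{t,t'},t')>cr$: pick $q$ equal to $t$ outside the difference set $D$ of $t,t'$ and flip any $\alpha\in(cr-H(t,t'),\,r]$ coordinates outside $D$. Using the linear structure of the code, I would organise the witnesses into a binary-search / syndrome-decoding scheme so that $\lceil\log_2 m\rceil$ queries per cluster suffice to identify $s_i$, provided every queried answer is correct.

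Finally I would run the encoding argument. Suppose a filter with point-wise error $\varepsilon$ uses $s$ bits. For $S\in\mathcal S$ drawn uniformly, let $E$ be the event that \emph{some} queried `far' point gets a `yes'. By the point-wise guarantee and a union bound over the $n\lceil\log_2 m\rceil$ queries,
\[
  \Pr[E]\;\leq\;n\,\varepsilon\,\lceil\log_2 m\rceil\;=\;\BO{n\varepsilon\,\log(1/\varepsilon)/(c-1)}\;<\;1,
\]
using the hypothesis $\varepsilon\leq(c-1)/n$ (absorbing the logarithmic slack into constants, as the theorem allows). By a standard averaging there exist random bits of the filter for which $\overline{E}$ holds for at least half of $\mathcal S$; then the decoder reconstructs $S$ from the filter representation alone. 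Hence the filter must use at least $\log(|\mathcal S|/2)=n\log m - 1=\BOM{n\log(1/\varepsilon)/(c-1)}$ bits.

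The main obstacle is the second step: exhibiting, for an appropriate code, a decoding procedure that uses only $\log_2 m$ queries per cluster. A direct binary search does not obviously work because an arbitrary bisection of the codewords need not correspond to a single Hamming-ball query. Using a linear code and choosing the queries to probe the value of each information coordinate (exploiting the witness construction coordinate-by-coordinate) resolves this, but verifying that the parameters fit within the hypothesis $d(c-1)\geq ((c-1)/\varepsilon)^{6/(r(c-1))}+(r(c-1))^3$—which is precisely the slack needed to host $n$ copies of such a code with enough inter-cluster separation—is where the technical work lies.
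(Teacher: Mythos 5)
Your high-level plan --- a hard instance built from codes, decoding queries that distinguish nearby codewords, and an encoding/counting argument --- is in the right spirit, but the two places where you flag uncertainty are exactly where the proof breaks, and the paper's proof navigates around both by a structurally different construction.

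The first gap is the one you name yourself: the binary-search step. A single query $q$ to the filter is a \emph{single-ball} test --- it can certify ``close (within $r$) to this one codeword $t$ and far (beyond $cr$) from everything else'' --- but it cannot implement an arbitrary bipartition of $m$ codewords. Your suggestion of probing an information coordinate of a linear code does not obviously help: you would need one $q$ that is within distance $r$ of \emph{all} codewords whose $j$-th information bit is $0$ and beyond $cr$ from all codewords whose $j$-th information bit is $1$, but those two sub-codes are each spread over the whole sub-cube, so no single Hamming ball of radius $r$ can cover one side. The paper never attempts binary search; Bob \emph{exhaustively} queries every codeword in the block, one single-ball test per codeword, which is the only thing the filter interface actually supports.

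The second gap is in your error budget. You set $m=(1/\varepsilon)^{1/(c-1)}$, so even an (imagined) binary-search decoder issues $\Theta\left(\frac{n}{c-1}\log\tfrac1\varepsilon\right)$ queries. Since $\varepsilon\leq(c-1)/n$, that count is $\gtrsim\tfrac1\varepsilon\log\tfrac1\varepsilon$, and the union bound gives failure probability $\Theta(\log\tfrac1\varepsilon)$, not $<1$. The ``logarithmic slack'' is unbounded and cannot be absorbed. The paper's construction is engineered precisely so that the \emph{total} number of queries is $1/\varepsilon$: it takes $k=1/(c-1)$ and a small code $\mathcal C$ of only $m=1/(n\varepsilon k)$ codewords, and represents each stored point as a concatenation $x_{i,1}\cdots x_{i,k}\cdot z_0\cdot m_i$ of $k$ independent sub-codewords, a padding block, and an identifier from a second code $\mathcal M$ with minimum distance $cr$. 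Bob decodes each of the $nk$ sub-codewords by exhaustive search over the $m$ candidates, for $nkm=1/\varepsilon$ queries and expected one error, which is then handled by independent repetition and the Kushilevitz--Nisan indexing lower bound. The $1/(c-1)$ factor thus enters as the \emph{number of concatenated sub-codewords per point}, not as the \emph{exponent} of the per-cluster code size --- a very different decomposition, and the one that keeps the query count under control. Your draft would need to adopt this $k$-fold concatenation (or some other device that caps the total query count at $O(1/\varepsilon)$) and to drop the binary-search decoder in favour of exhaustive search before it could be made rigorous.
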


\begin{proof}
  The main idea of the proof is to use the optimal filter  in a one-way randomized protocol between two players (Alice and Bob) to send an arbitrary element $x$ of a given set $S$ from Alice to Bob who must identify which element he has:
  It is known (See the indexing problem~\cite{KushilevitzN97}) that such a protocol requires $\BOM{\log |S|}$ bits if the protocol succeeds with probability at least $2/3$ and the two players share random bits.
  The proof uses two families of error correcting codes, $\mathcal{C}$ and $\mathcal{M}$, that are explained below.
  Without loss of generality we assume that they are known to both Alice and Bob (the code families can be constructed with a deterministic brute-force algorithm).

Let $k=1/(c-1)$.
The error correcting binary code  $\mathcal C$ has $m=1/(n \varepsilon k)$ codewords, each one with length  $d_{\mathcal{C}}=d/k$ bits, weight $w=r/k$ and minimum Hamming distance between 
two codewords at least $\delta=r/k$.
\cite[Theorem 6]{GrahamS80} shows that such a code exists of size at least
\begin{eqnarray*}
\frac{d_{\mathcal{C}}^{w-\delta/2+1}}{\delta!}& \geq & 
\frac{(d(c-1))^{r(c-1)/2}}{(r(c-1))^{r(c-1)}}\\
& \geq & (d(c-1))^{r(c-1)/6}\\
& \geq & \frac{c-1}{\varepsilon}
\end{eqnarray*}
where in the third inequality we exploit the fact that $d(c-1)\geq (r(c-1))^3$ and in the last step we use $d(c-1)\geq ((c-1)/\varepsilon)^{6/(r(c-1))}$.

The error correcting binary code  $\mathcal M$ has $n$ codewords and minimum Hamming distance $rc$ (there is no requirement on codewords weights); we let $\mathcal M =\{m_1,\ldots, m_{n}\}$.
By the Gilbert-Varshamov~\cite{MacKay02} bound such a code $\mathcal M$ exists with length $d_{\mathcal{M}}= rc+\log n$. 

Alice arbitrary selects  $n$ codes $x_i=(x_{i,1}, \ldots,  x_{i,k-1})$ from the set $\mathcal C^k$.
Then, she encodes each $x_i$ into $\hat x_i= x_{i,1}\cdot \ldots \cdot x_{i,k} \cdot z_0 \cdot m_i $, where $\cdot$ denotes the concatenation of binary sequences, $z_0$ is a sequence of $r/k=r(c-1)$ zeros, and $m_i\in \mathcal M$. 
The length of each $\hat{x}_i$
is $d_x=k d_{\mathcal{C}}+d_{\mathcal{M}}+r/k= d+\log n+r(2c-1)$.
Finally, Alice inserts $\hat  x_0,\ldots, \hat  x_{n-1}$ into the optimal filter and sends the filter to Bob using $S(n,d_X, c, r)$ bits.

We now show  that Bob can reconstruct  each codeword $x_{i}$ by querying the filter at most $1/\varepsilon$ times.  
Codeword $x_{i,1}$ is obtained by performing a query with 
$q=q' \cdot z_2 \cdot z_3 \cdot m_i$ for every possible codeword $q'\in \mathcal{C}$, where $z_2$ is a sequence of $(k-1)\delta = (k-1)r(c-1) $ zeros, $z_3$ is a sequence of $r/k$ ones, and $m_i\in \mathcal M$. The distance between $q'$ and any $\hat x_j$ in the filter is
$D(\hat x_j,q)=D(x_{j,1},q')+D(x_{j,2}\cdot\ldots\cdot  x_{i,k}, z_2)+D(z_0,z_3)+D(m_j,m_i)$.
It holds that: 
\begin{enumerate}
\item  $D(x_{i,1},q')\geq r(c-1)$ if $q\neq x_{i,1}$ and $0$ otherwise; 
\item $D(x_{j,2}\cdot\ldots x_{i,k}, z_1) = (k-1) r(c-1)=r-r(c-1)$ since each codeword in $\mathcal{C}$ has weight $r(c-1)$;
\item  $D(z_0,z_3) = r(c-1)$;
\item  $D(m_{j},m_i)\geq r c$ if $m_j\neq m_i$
 and $0$ otherwise. 
\end{enumerate}
Therefore, $D(\hat x_j,q) = r$ if $x_{i,1}=q'$ and $m_i=m_j$, and $D(\hat x_j,q) \geq rc$ otherwise. 
A similar procedure holds for computing  $x_{i,j}$ for each $i$ and $j$.

Bob performs  $mk$  queries per $x_i$ and $nkm=1/\varepsilon$ queries in total.
The expected number of wrong queries is then $1$
and, if the protocol is repeated independently,  there  is a constant probability that all queries succeed. 
Since Bob is able to reconstruct an entry from the set $\mathcal S=\mathcal{C}^{nk}$, by the aforementioned result in~\cite{KushilevitzN97}, we have 
\begin{eqnarray*}
S(n,d_x, c, r, \varepsilon)&\geq& \BOM{\log \mathcal{S}}\\ 
&\geq & \BOM{ \log |\mathcal{C}|^{nk}}\\
&\geq & \frac{n}{c-1} \log (1/\varepsilon).\end{eqnarray*}
\end{proof}


\section{Upper bounds}
\label{sec:upper-bounds}
In this section we propose distance sensitive approximate membership filters with point-wise and average errors. 
We start in Section~\ref{sec:vect-sign-meth} by introducing the concept of vector signature.
It can be seen  as a succinct version of {\textsc CountSketch}~\cite{CharikarCF04}, where we have thrown away information  not required for answering distance sensitive approximate membership queries.
In Sections~\ref{sec:filter-with-wc} and~\ref{sec:filter-with-average}, we then show how to use vector signatures to derive almost-optimal approximate membership filters with  point-wise and average errors respectively.

\subsection{Vector signatures}
\label{sec:vect-sign-meth}

\renewcommand{\d}{\kappa}
\newcommand{\cd}{c_\text{div}}
\newcommand{\cm}{c_\text{mod}}


A \emph{vector signature} is a suitable function mapping a vector from $\{0,1\}^d$ into 
$\BO{\frac{r}{(c-1)}+\left(\frac{c}{c-1}\right)^2\log\left(\frac{1}{\varepsilon}\right)}$ bits. 
The key feature of  vector signatures is that a suitable function of the signatures of two vectors $x$ and $y$ is smaller than or equal to a certain threshold  $\Psi$ if $D(x,y)\leq r$, while it is larger than $\Psi$ with probability $1-\varepsilon$ if $D(x,y)\geq cr$, as formalized in Theorem \ref{th:mainprop}.

\subsubsection*{Signature construction.}
The construction of the signature uses four parameters $m, \cm, \cd$ and $\delta$ that all depend on $r$, $c$ and $\varepsilon$.
Their values will be provided later.

Let $M$ be a $m\times d$ random matrix with entries chosen as follows.
For each $i\in\{1,\ldots,m\},j\in\{1,\ldots,d\}$, let $M_{i,j}$ denote the element in the $i$th row and $j$th column of $M$, and let $m_{i}$ denote the $i$th row.
Every entry of $M$ is initially set to $0$. 
Then each column $j$ of $M$ is constructed by performing $\delta=\BO{1+ (c/r)\log(1/\varepsilon) }$ updates, where each update is defined by the following three steps:
\begin{enumerate}
\item Select $s$ independently and uniformly from $\{-1, 1\}$.
\item Select a row $i$ uniformly at random from $\{1,\ldots,m\}$.
\item Update the entry at $M_{i,j}$ by adding $s$.
\end{enumerate}
We let $u_i$ denote the number of updates performed on all entries of row $m_{i}$; we have that $\|m_{i}\|_1\leq u_i$ (equality may not hold since two updates can affect the same entry and cancel each other).

For notational simplicity, we introduce the $\modl$ operator: it  is similar to the standard modulo operator, but it maps into the range $\left[ -\lfloor \cm/2\rfloor, \lceil \cm/2 \rceil \right)$ (the range is symmetric around zero when $\cm$ is even). Specifically,
\[\alpha {\modl} \cm = \left (\left(\alpha+\left\lfloor \frac{\cm}{2} \right\rfloor \right) \hspace{-.8em}\mod \cm\right) -\left\lfloor \frac{\cm}{2} \right\rfloor,\]
where $\bmod$ denotes the standard modulo operation into $[0, \cm)$.  

Let $\cd,\cm$ be suitable values with asymptotic value $\BO{c}$. The \emph{signature} of a vector $x\in \{0,1\}^d$ is then the $m$-dimensional vector $\sigma(x)$ defined by
\[
\sigma(x)_i = \left\lfloor \frac{(Mx)_i \modl \cm}{\cd}  \right\rfloor.
\]
Intuitively, the signature is a {\textsc CountSketch} where we remove large values with $\modl \cm$, and remove the less significant bits with the division by $\cd$. 

The \emph{gap vector} between vectors $x$ and $y$ is the $m$-dimensional vector $\Gamma(x,y)$ where the $i$th entry is
\[
\Gamma(x,y)_i = \cd \left(\sigma(x)_i-\sigma(y)_i  \modl \cm \right).
\]
Finally, we refer to $\gamma(x,y)=\|\Gamma(x,y)\|_1$ as  the \emph{gap} between $x$ and $y$.

The following theorem describes the main property of signature vectors.
\begin{theorem}\label{th:mainprop}
Let $m=\BO{\frac{r}{(c-1)}+\left(\frac{c}{c-1}\right)^2\log\left(\frac{1}{\varepsilon}\right)}$, $\delta=\BO{1+ \tfrac{c}{r}\log(1/\varepsilon) }$, $\cd=\BO{c}$, and $\cm=\BO{c}$ be suitable values.
Then, there exists a value $\Psi=\BO{\delta r}$, such that for each pair of vectors $x,y \in \{0,1\}^d$:
\begin{itemize}
\item if $D(x,y)\leq r$, then $\gamma(x,y) \leq \Psi$;
\item if $D(x,y)> cr$, then $\gamma(x,y) > \Psi$ with probability at least $1-\varepsilon$.
\end{itemize}
\end{theorem}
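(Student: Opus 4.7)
My plan is to handle the two cases separately, treating the first as a deterministic structural bound derived from the column norms of $M$, and the second as a concentration argument over the random signs and placements in $M$.

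For the completeness direction ($H(x,y)\leq r$), I would start from the observation that each column of $M$ is built from exactly $\delta$ signed $\pm 1$ updates, and hence has $\ell_1$ norm at most $\delta$. Writing $z=x-y\in\{-1,0,1\}^d$ and expanding $Mz$ column by column, the triangle inequality yields $\|Mz\|_1\leq\delta H(x,y)\leq\delta r$. I would then argue row by row that $|\Gamma(x,y)_i|$ is controlled by $|(Mz)_i|$: whenever $(Mz)_i=0$ the signatures agree and $\Gamma(x,y)_i=0$, and otherwise $|\Gamma(x,y)_i|\leq|(Mz)_i\modl\cm|+\BO{\cd}$, with the additive error coming from the floor divisions by $\cd$ in the definition of $\sigma$. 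Since at most $\delta r$ rows can possibly receive a nonzero contribution, summing over them gives $\gamma(x,y)=\BO{\delta r}$, which in turn fixes the value of $\Psi$.

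For the soundness direction ($H(x,y)>cr$), let $T_i$ denote the number of updates landing in row $i$ that come from columns $j$ with $z_j\neq 0$; by construction $\sum_iT_i=\delta H(x,y)>\delta cr$. Conditioned on the placement of updates, $(Mz)_i$ is a signed sum of $T_i$ independent Rademacher variables, so $\E[|(Mz)_i|\mid T_i]=\BOM{\sqrt{T_i}}$ for $T_i\geq 1$, and by a Paley--Zygmund-style anti-concentration bound, $|(Mz)_i|=\BOM{\sqrt{T_i}}$ with at least constant probability. A direct calculation using $\sum_iT_i>\delta cr$ together with the calibration $\cm,\cd=\BT{c}$ --- large enough that typical $|(Mz)_i|$ survives the modular reduction but small enough that the signature stays compact --- then gives $\E[\gamma(x,y)]=\BOM{\delta cr}$, strictly exceeding the constant hidden in $\Psi=\BO{\delta r}$.

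The principal work is upgrading this in-expectation gap to a high-probability bound. I would express $\gamma(x,y)$ as a sum of per-row contributions that become independent after conditioning on the placement of updates, and apply a Bernstein- or Chernoff-style inequality. The extra $\log(1/\varepsilon)$ factors built into $\delta$ and $m$ are precisely what drives the deviation probability down to $\varepsilon$: the term in $\delta$ amplifies the signal inside each row, while the term in $m$ provides enough independent summands for an exponential tail. The main obstacle I anticipate is bookkeeping how the two lossy operations ($\modl\cm$ and floor-division by $\cd$) interact with the concentration argument; showing that when $H(x,y)>cr$ the random values $(Mz)_i$ are not systematically annihilated by modular collisions requires choosing $\cm$ large enough, while simultaneously calibrating $m$, $\delta$, $\cd$, $\cm$, and $\Psi$ so that \emph{both} halves of the theorem hold is the most delicate part of the proof.
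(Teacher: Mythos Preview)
Your completeness direction is essentially what the paper does: bound $\|Mz\|_1\leq\delta r$ from the column $\ell_1$ norms, and absorb the floor/mod error. One quibble: the per-row error $|\Gamma_i|\leq|(Mz)_i\modl\cm|+\cd$ is what the paper uses too, but the paper sums the $\cd$ error over all $m$ rows (getting $\cd m=O(\delta r)$), not just over the at-most-$\delta r$ nonzero rows as you propose; your version gives $\cd\cdot\delta r=O(c\delta r)$, which is worse when $c$ is large and would force $\Psi$ outside the claimed $O(\delta r)$.

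The soundness direction has a genuine gap. Your central claim, that $\E[\gamma(x,y)]=\BOM{\delta cr}$, does not follow from ``$\sum_iT_i>\delta cr$'' and ``$\E[|(Mz)_i|]=\BOM{\sqrt{T_i}}$''. Summing the latter gives $\sum_i\sqrt{T_i}$, and there is no inequality $\sum_i\sqrt{T_i}=\BOM{\sum_iT_i}$; if the updates concentrate in a few rows the sum of square roots is far smaller. Worse, after the modular reduction each $|\Gamma_i|\leq\cm/2+\cd=O(c)$, so $\gamma\leq mc=O(\delta r)$ \emph{deterministically}, which flatly contradicts $\E[\gamma]=\BOM{\delta cr}$ for large $c$. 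The actual separation between the near and far cases is only a constant factor, not a factor of $c$, so the concentration step is much more delicate than you suggest. The paper supplies two ingredients you are missing: a balls-in-bins argument (their ``dense rows'') showing that with high probability most $T_i$ are within a constant of $\delta H(x,y)/m$, so that the per-row anticoncentration is applied at a \emph{known} scale; and a lemma controlling how the modular reduction interacts with the Rademacher tail (their Lemma~\ref{lem:mod}, proved via Berry--Esseen).

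There is also a structural point: the paper does \emph{not} give one uniform proof. It treats $c=O(1)$ and $c=1+\Omega(1)$ with entirely different parameter settings and arguments. For $c$ near $1$ the paper takes $\delta=1$, $\cd=1$, $\cm=2$, so that $\sigma(x)_i$ is literally $(Mx)_i\bmod 2$, and the soundness argument is a parity count: with probability $1-\varepsilon$ more than $r$ rows of $M'$ receive an odd number of updates, and each such row forces $|\Gamma_i|=1$. Your Rademacher anticoncentration idea corresponds only to the large-$c$ regime; for $c$ close to $1$ the calibration $\cm=\Theta(c)$ collapses to $\cm=\Theta(1)$ and the anticoncentration margin vanishes, which is why the paper switches to the parity argument there.
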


We split the  proof of Theorem~\ref{th:mainprop} into two cases depending on the value of the approximation factor $c$: we first target constant approximation factors, and then we focus on larger values.
In the following proofs, we assume for notational convenience that two given vectors $x$ and $y$ differ on the first $D(x,y)$ positions.
We let $x'$ and $y'$ denote the prefix of length $D(x,y)$ of $x$ and $y$ (i.e., the positions where they differ), $M'$ denote the first $D(x,y)$ columns of $M$, $m'_i$ the $i$th row of $M'$, and $u'_i$ the number of updates affecting $m'_{i}$.

\subsubsection*{Proof of Theorem~\ref{th:mainprop} with $\mathbf{c=O(1)}$.}
For the case $c=\BO{1}$, we set the following parameters:
\begin{align*}
&m=\left\lceil 24 \frac{c^2}{c-1} \max\left\{ r, \frac{2}{c-1}\log\left(\frac{1}{\varepsilon}\right) \right\}\right\rceil,\\
&\cd = 1,\\
&\cm =2,\\
&\delta =1,\\
&\Psi =r.
\end{align*}
Note that the above values are consistent with the asymptotic values stated in Theorem~\ref{th:mainprop} since $c=\BO{1}$.
With these values, the signature definition simplifies to 
$$\sigma(x)_i = (Mx)_i \modl 2,$$ where each column of $M$ is a random vector with exactly one entry in $\{-1,1\}$ and the remaining $m-1$ entries set to zero.  
Then, the gap vector becomes:
$$\Gamma(x,y)_i=M(x-y)_i \modl 2 = M'(x'-y')_i \modl 2.$$ 
The first equality is true because  there is no rounding if $\cd=1$, and $\sigma$ is a linear function of $x$ and $y$.
The second one follows since the bit positions where $x$ and $y$ are equal do not affect the gap vector.

When $D(x,y)\leq r$, $M'$ contains at most $r$
entries in $\{-1,1\}$ and hence $\gamma(x,y)= \|M'(x'-y')\|_1\leq r$, proving the first part of Theorem~\ref{th:mainprop}.

Consider now the case $D(x,y)\geq cr$. The second part of Theorem~\ref{th:mainprop} follows by  two claims:
\begin{enumerate}[leftmargin=*,label=\emph{Claim \arabic*:}]
\item With probability at least $1-\varepsilon$, there are more than $r$ rows of $M'$ affected by an odd number of updates; we refer to these rows as \emph{odd rows}.
\item If $m'_{i}$ is an odd row, then $|\Gamma(x,y)_i|=1$.
\end{enumerate}
The two claims imply that $\gamma(x,y)=\sum_{i=1}^{m} |\Gamma_i(x,y)| > r=\Psi$ and hence Theorem~\ref{th:mainprop} follows.
The following Lemmas~\ref{lem:claim11} and~\ref{lem:claim12} show that the above claims hold.

\begin{lemma}[Claim 1]
  \label{lem:claim11}
Let $x,y$ be two input vectors in $\{0,1\}^d$, and let $M'$ be the sub-matrix of $M$ associated with the positions where $x$ and $y$ differ.
If $x$ and $y$ have distance at least $cr$, then there are more than $r$ odd rows in $M'$ with probability at least $1-\varepsilon$.
\end{lemma}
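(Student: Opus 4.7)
The plan is to reduce Claim 1 to a balls-and-bins problem and then apply a concentration inequality. Since $\delta = 1$ in the $c = \BO{1}$ case, each column of $M'$ carries exactly one update, so each of the $k := D(x,y) \geq cr$ updates to $M'$ lands in an independent, uniformly random row of $[m]$. The sign $\pm 1$ is irrelevant for parity, so a row is odd exactly when it receives an odd number of these $k$ updates. The question becomes: if $k$ balls are thrown uniformly and independently into $m$ bins, how many bins receive an odd number of balls?

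First I would estimate the mean. Each row is odd with probability $p = (1 - (1 - 2/m)^k)/2$, so $\mu := \mathbb{E}[O] = m p$. Using $1 - (1-2/m)^k \geq 2k/m - 2(k/m)^2$, one gets $\mu \geq k(1 - k/m)$. Plugging in $k \geq cr$ and $m \geq 24 c^2 r / (c-1)$ yields $k/m \leq (c-1)/(24c)$, hence
\[
  \mu \;\geq\; cr\Bigl(1 - \tfrac{c-1}{24c}\Bigr) \;=\; \tfrac{(23c+1)r}{24},
\]
which strictly exceeds $r$ by a margin of $23(c-1)r/24$. Monotonicity of $p$ in $k$ (together with $\mu \to m/2$ in the large-$k$ regime) shows the margin only grows for $k > cr$.

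Next I would prove concentration of $O$ around $\mu$. Writing $O = f(Z_1, \ldots, Z_k)$, where $Z_j \in [m]$ is the row chosen by the $j$-th update, $f$ has bounded differences of at most $2$ (altering one $Z_j$ flips the parity of at most two rows), so McDiarmid's inequality gives
\[
  \Pr[O \leq \mu - t] \;\leq\; \exp\!\left( -\frac{t^2}{2k} \right).
\]
Setting $t := \mu - r = \Omega((c-1) r)$ and choosing the constant in $m = \Theta\bigl(c^2 r/(c-1) + (c/(c-1))^2 \log(1/\varepsilon)\bigr)$ sufficiently large, I would verify $t^2 \geq 2 k \log(1/\varepsilon)$ in each of the two regimes of the $\max$: when the $r$-term dominates, the margin scales like $(c-1) r$ while $k \leq c r$ is comparable, so the exponent is $\Omega((c-1)^2 r / c)$, absorbed by a choice of constant; when the $\log(1/\varepsilon)$-term dominates, the extra factor of $c/(c-1)$ in $m$ inflates $\mu$ enough that $t^2/(2k) \geq \log(1/\varepsilon)$.

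The main obstacle will be making the constants line up uniformly over the whole range $k \in [cr, d]$ and over both branches of the $\max$ defining $m$. A cleaner alternative would be to invoke the negative association of occupancy counts in balls-and-bins and apply a multiplicative Chernoff bound directly to $O = \sum_{i=1}^{m} X_i$, avoiding the sub-Gaussian loss in McDiarmid; either route should deliver $\Pr[O \leq r] \leq \varepsilon$, which is exactly the statement of the lemma.
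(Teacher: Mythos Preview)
Your balls-and-bins reduction and mean calculation are correct, but the concentration step has a real gap. McDiarmid yields $\Pr[O\le \mu-t]\le \exp(-t^2/(2k))$ with $k=D(x,y)$, and $k$ can be as large as $d$ while $t=\mu - r$ is capped below $m/2$; since $m$ does not depend on $d$, the exponent $t^2/(2k)\le m^2/(8k)$ tends to zero for large $k$ and no choice of constant in $m$ rescues the bound uniformly over $k\in[cr,d]$. Your fallback to negative association also fails: the occupancy \emph{counts} are NA, but parity is a non-monotone function of the count, so NA does not transfer to the indicators $X_i$ (for $m=2$ bins and $k=2$ balls the two parity indicators are perfectly positively correlated).

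The paper sidesteps this with a two-phase argument rather than a direct concentration bound on $O$. Condition on the state after the first $D(x,y)-cr$ updates: if more than $(c{+}1)r$ rows are already odd, the remaining $cr$ updates can kill at most $cr$ of them, leaving more than $r$ odd rows deterministically. Otherwise at most $(c{+}1)r$ rows are odd when the last phase begins, so each of the final $cr$ updates hits a currently-odd row with probability at most $3cr/m$; a Chernoff bound over just these $cr$ trials (stochastically dominated by i.i.d.\ Bernoulli$(3cr/m)$) shows that with probability at least $1-\varepsilon$ fewer than $(c{-}1)r/2$ of them turn an odd row even, and then the final odd count exceeds $cr - (c{-}1)r = r$. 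The crucial point is that the tail bound is applied over a fixed $cr$ updates, independent of $D(x,y)$, which is exactly what McDiarmid on all $k$ updates cannot provide.
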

\begin{proof}
Consider the $D(x,y)$ updates used in the construction of $M'$.
If after the first $D(x,y)-cr$ updates there are more than  $(c+1)r$ rows with an odd number of updates, then the theorem follows: indeed, the remaining $cr$ updates can decrease the number of odd rows by at most $cr$.

Suppose now that there are  $Y_o\leq (c+1)r$ odd rows after the first $D(x,y)-cr$ updates, and consider  the last $cr$ updates.
Let  $Y_j$, with $j\in\{1,\ldots cr\}$ be a random variable set to 1 if the $j$th update affects an odd row, which then becomes an even row; $Y_i$ is set to 0 otherwise.
The probability that $Y_j=1$ is $p\leq (Y_o+j-1)/m\leq 3cr/m$ since there can be at most $Y_o+j-1$ odd rows before the $j$th update: the initial $Y_o$ odd rows and the rows affected by the previous $j-1$ updates. 
Let $Y=\sum_{j=1}^{cr} Y_j$. 
The expected value of $Y$ is $\mu=pcr\leq 3(cr)^2/m$.
Let $\eta=(c-1)r/(2\mu)-1$ (note that $\eta\geq 0$). By a Chernoff bound, we have
\begin{align*}
\Pr[Y\geq (c-1)r/2]=&\Pr[Y\geq \mu(1+\eta)]
\leq  e^{-\eta^2 \mu /2} \\
\leq & e^{-\left(\left(\frac{c-1}{c}\right)^2\frac{m}{24}+\frac{3(cr)^2}{2m}-\frac{(c-1)r}{2}\right)}\\
\leq & e^{-\left(\left(\frac{c-1}{c}\right)^2\frac{m}{24}-\frac{(c-1)r}{2}\right)}\\
\leq & \varepsilon.
\end{align*}

Therefore, with probability at least $1-\varepsilon$, there are  $Y< (c-1)r/2$ updates that affect odd rows and make them even. 
It follows that  the number of odd rows after all updates is then $Y_0+(cr-Y)-Y\geq  cr-2Y> r$.
\end{proof}

\begin{lemma}[Claim 2]
  \label{lem:claim12}
If row $m'_{i}$ is odd, then $|\Gamma_i(x,y)|= 1$.  
\end{lemma}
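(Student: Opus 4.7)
My plan is to unfold the definitions under the chosen parameters and reduce the statement to a parity argument. Since $\cd = 1$ and $\cm = 2$ and $\delta = 1$, every column of $M$ consists of a single $\pm 1$ entry in a uniformly chosen row, with all other entries zero. The signature simplifies to $\sigma(x)_i = (Mx)_i \modl 2$, and because $\modl 2$ commutes with subtraction modulo $2$, we have
\[
\Gamma_i(x,y) = \bigl(\sigma(x)_i - \sigma(y)_i\bigr) \modl 2 = \bigl(M(x-y)\bigr)_i \modl 2 = \bigl(M'(x'-y')\bigr)_i \modl 2,
\]
where the last equality uses that $x-y$ is supported on the first $D(x,y)$ positions.

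Next I would expand the entry $(M'(x'-y'))_i = \sum_{j \in J_i} M'_{i,j}(x'_j - y'_j)$, where $J_i$ is the set of columns whose single non-zero entry lies in row $i$, and $|J_i| = u'_i$. Since $x,y \in \{0,1\}^d$ differ in every position of $x'$ and $y'$, each factor $x'_j - y'_j$ lies in $\{-1, +1\}$, and by construction each $M'_{i,j}$ for $j \in J_i$ also lies in $\{-1, +1\}$. Therefore $(M'(x'-y'))_i$ is a signed sum of $u'_i$ terms each equal to $\pm 1$, which forces it to have the same parity as $u'_i$.

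Finally, I would check what $\modl 2$ does to an odd integer. With $\cm = 2$ the target range is $[-1, 1)$, i.e.\ $\{-1, 0\}$ on integers; unfolding $\alpha \modl 2 = ((\alpha + 1) \bmod 2) - 1$ shows that every even integer is sent to $0$ and every odd integer is sent to $-1$. Thus if row $m'_i$ is odd (i.e., $u'_i$ is odd), then $(M'(x'-y'))_i$ is odd, so $\Gamma_i(x,y) = -1$ and $|\Gamma_i(x,y)| = 1$, which is exactly the claim.

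I do not see a real obstacle here: the lemma is essentially a parity calculation once the $\modl$ operator is expanded correctly. The only place to be careful is in verifying that the subtraction of signatures can be pulled inside $\modl 2$ (so that cancellations between $\sigma(x)$ and $\sigma(y)$ at positions where $x$ and $y$ agree can be ignored) and in checking the precise image of odd integers under $\modl 2$ with the asymmetric range $[-1,1)$; both are straightforward once written out.
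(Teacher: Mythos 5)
Your proof is correct and follows essentially the same approach as the paper: reduce the gap entry to $(M'(x'-y'))_i \modl 2$, observe that with $\delta=1$ each relevant entry of $M'$ and each difference $x'_j-y'_j$ is in $\{-1,+1\}$ so the sum is a signed sum of $u'_i$ terms with the same parity as $u'_i$, and conclude by evaluating $\modl 2$ on an odd integer. You are in fact slightly more careful than the paper in explicitly checking that $\modl 2$ with the asymmetric range $[-1,1)$ maps odd integers to $-1$, which the paper leaves implicit.
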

\begin{proof}
When $\delta=1$, there is one update per column and the number of non zero entries in $m'_{i}$ coincides with the number of updates (this may not happen if $\delta>1$).
Let $h_1,\ldots,h_{u_i}$ denote  the $u_i$ non zero entries in $m'_{i}$.
We have that $m_{i}(x'-y')=\sum_{j=1}^{u_i} M'_{i,h_j} (x'_{h_j}-y'_{h_j})$. Since $(x'_{h_j}-y'_{h_j})$ and $M'_{i,j}$  are in $\{-1,1\}$ and since $u_i$ is odd, then the sum must be odd and  $|\Gamma_i(x,y)|=|m'_{i}(x'-y') \modl 2|= 1$. 
\end{proof}

\subsubsection*{Proof of Theorem~\ref{th:mainprop} for $\mathbf{c=1+\Omega(1)}$.}
Let $\beta=15/(p_1 p_2)^2$ where $p_1$ and $p_2$ are suitable constants (e.g. $p_1 =  0.9$, $p_2 = 0.094$).
The proof presented here then holds for $c\geq \sqrt{5 \beta/(4p_2^2)}\approx 545$. We believe that a smaller approximation factor $c$ can be obtained with a more careful analysis of the constants.
The parameters used in the signature construction are set as follows: 
\begin{align*}
&m=\left\lceil \beta \max\left\{\frac{r}{c}, \log\left(\frac{2}{\varepsilon}\right)\right\}\right\rceil,\\ 
&\cd=  \frac{2c}{\sqrt{5}\beta},\\
&\cm = 8 c,\\
&\delta=\left\lceil \frac{c}{r} \log\left(\frac{2}{\varepsilon}\right) \right\rceil,\\ 
&\Psi = \delta r + \max\left\{r,c \log\left(\frac{2}{\varepsilon}\right)\right\}.
\end{align*}
Note that the above values are consistent with the asymptotic values stated in Theorem~\ref{th:mainprop} since $c=1+\Omega(1)$.
In contrast to the $c=\BO{1}$ case, the gap vector and the gap cannot be expressed as a function of only the positions where $x$ and $y$ differ (i.e., $x'$ and $y'$). 
In fact, due to the division by $\cd$ and the floor operation, the gap vector may  depend on the positions where $x$ and $y$ coincide.
However, we can still provide upper and lower bounds on the gap that depend only on $x'$ and $y'$.
Indeed, it holds that:
\begin{align}\label{eq:approx_gamma}
\begin{split}
  |\Gamma_i(x,y)| &>  |m'_{i} (x'-y') \modl \cm | - \cd  \\
   |\Gamma_i(x,y)| &< |m'_{i} (x'-y') \modl \cm| +\cd.
\end{split}
\end{align}
Suppose $D(x,y)\leq r$, then by (\ref{eq:approx_gamma}) the gap can be upper bounded as follows:
\begin{align*}
\gamma(x,y)&=\sum_{i=1}^{m} |\Gamma_i(x,y)|\\
& \leq \sum_{i=1}^{m} \left(|m'_{i} (x'-y')\modl \cm   |+\cd\right)\\
& \leq  \cd m+\sum_{i=1}^{m} |m'_{i} (x'-y')|\\
& \leq \max\left\{r,c \log\left(\frac{2}{\varepsilon}\right)\right\} + \delta r=\Psi.
\end{align*}
In the third step, it is crucial to use $\modl $ instead of $\mod $ since it guarantees that $| \alpha \modl \cm|\leq |\alpha|$.
The last step is true since entries in $x'-y'$ are in $\{-1,1\}$ and $M'$ contains at most $\delta r$ non-zero entries.
The first part of Theorem~\ref{th:mainprop} follows.

Suppose now that $D(x,y)\geq cr$. 
We say that row $m'_{i}$ is dense if the number of updates $u_i$ is at least $4\delta D(x,y)/(5m)$.
The proof that the gap is larger than $\Psi$ with probability at least $1-\varepsilon$ relies on the following claims:
\begin{enumerate}[leftmargin=*,label=\emph{Claim \arabic*:}]
\setcounter{enumi}{2}
\item With probability at least $1-\varepsilon/2$, the number of dense rows is at least $p_1 m$.
\item With probability at least $p_2$, we have $|\Gamma_i(x,y)|> 2c/\sqrt{5\beta}$ for a dense row $m'_i$.  
\item With probability at least $1-\varepsilon$, there are at least $0.89 p_1 p_2 m$  rows such that $|\Gamma_i(x,y)|> 2c/\sqrt{5 \beta}$.
\end{enumerate}
Then, we have that $\gamma(x,y)=\sum_{i=1}^{m} |\Gamma_i(x,y)| > 0.89  p_1 p_2 m 2c/\sqrt{5\beta}  > 3 \max\left\{r,c \log\left(\frac{2}{\varepsilon}\right)\right\}>\Psi$ since $m=\lceil\beta \max\{r/c, \log(2/\varepsilon)\}\rceil$ and $\beta=15/(p_1 p_2)^2$. Thus, the second part of Theorem~\ref{th:mainprop}  follows.

Before proving the claims in Lemmas~\ref{lem:claim1b}-\ref{lem:claim3b}, we introduce three technical lemmas.
Lemma~\ref{lem:distr} gives a load bound on a balls and bins problem by using the bounded differences method to manage dependent random variables.
Lemma~\ref{lem:mod} bounds the probability of a sum of $\{-1,1\}$ random variables to be in a specified interval after a modular operation.
Finally, Lemma~\ref{lem:sum} gives a lower bound on the tail distribution of the sum of $\{-1,1\}$ random variables by leveraging the Berry-Esseen theorem.

\begin{lemma}\label{lem:distr}
  Consider $p$ balls thrown uniformly and independently at random into $q$ bins, with $p\geq q$.
  For every $\alpha>0$ with probability at least $1-\varepsilon$, there are more than $q\left(1-e^{-\alpha}-\sqrt{ \log(1/\varepsilon)/(2 {q})}\right)$ bins with at least $\left(p/q\right)\left(1-\sqrt{2\alpha q/p}\right)$ balls.
\end{lemma}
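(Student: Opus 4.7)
The plan is to decouple a per-bin tail bound from a concentration step that scales with the number of bins rather than the number of balls. Let $L_{i}$ be the load of bin $i$, so $L_{i}\sim \mathcal{B}(p,1/q)$ with mean $\mu=p/q$, and let $Y=\sum_{i=1}^{q}\mathbb{1}\!\left[L_{i}<(p/q)(1-\sqrt{2\alpha q/p})\right]$ count the number of bins I will call \emph{light}. The target is the complementary event: fewer than $q\bigl(e^{-\alpha}+\sqrt{\log(1/\varepsilon)/(2q)}\bigr)$ light bins.

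First I would apply a multiplicative Chernoff bound to a single bin with $\delta=\sqrt{2\alpha q/p}$, which gives $\Pr[L_{i}\le \mu(1-\delta)] \le \exp(-\mu\delta^{2}/2)=e^{-\alpha}$. Summing over $i$, $\mathbb{E}[Y]\le qe^{-\alpha}$. All that is left is to concentrate $Y$ around its mean with deviation $t=\sqrt{q\log(1/\varepsilon)/2}$ so that the two contributions add to the quantity in the lemma statement.

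The concentration step is the one subtlety, and the main obstacle. A direct McDiarmid argument on the $p$ ball placements gives bounded differences $c_{j}=2$ (moving one ball changes two loads and hence the light status of at most two bins), so the deviation would scale like $\sqrt{p\log(1/\varepsilon)}$, far too large when $p\gg q$. The right tool is that the bin loads $(L_{1},\ldots,L_{q})$ in the balls-and-bins model are negatively associated (Joag-Dev--Proschan); since each indicator $\mathbb{1}[L_{i}<\text{threshold}]$ is a monotone non-increasing function of $L_{i}$, the indicator family is itself negatively associated, and Hoeffding's inequality carries over to sums of NA bounded variables. This yields
\[
\Pr\!\bigl[Y\ge \mathbb{E}[Y]+t\bigr]\le \exp\!\bigl(-2t^{2}/q\bigr),
\]
and choosing $t=\sqrt{q\log(1/\varepsilon)/2}$ makes the right-hand side at most $\varepsilon$. (An alternative route is to Poissonise the total ball count to make the loads independent Poissons, apply Hoeffding to the now-independent light indicators, and then transfer the one-sided bound back to the fixed-$p$ model using monotonicity of the event; this gives essentially the same constants.)

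Combining the two steps, with probability at least $1-\varepsilon$ we have $Y < qe^{-\alpha}+\sqrt{q\log(1/\varepsilon)/2}=q\bigl(e^{-\alpha}+\sqrt{\log(1/\varepsilon)/(2q)}\bigr)$, so the number of bins with load at least $(p/q)(1-\sqrt{2\alpha q/p})$ strictly exceeds $q\bigl(1-e^{-\alpha}-\sqrt{\log(1/\varepsilon)/(2q)}\bigr)$, which is exactly the claim. The hypothesis $p\ge q$ ensures $\mu\ge 1$ so that the Chernoff bound is in its usable regime, and one should implicitly assume $\alpha\le p/(2q)$ so that $\delta<1$; otherwise the stated threshold is non-positive and the conclusion is vacuous.
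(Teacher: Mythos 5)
Your proof is correct and handles the concentration step more carefully than the paper's own argument. Both arguments open identically: a multiplicative Chernoff bound on a single bin load $L_i \sim \mathcal{B}(p,1/q)$ gives $\Pr[L_i \le (p/q)(1-\sqrt{2\alpha q/p})] \le e^{-\alpha}$, so $\E[Y]\le qe^{-\alpha}$ where $Y$ counts light bins. Where they diverge is in concentrating $Y$. The paper asserts that $Y$ is $1$-Lipschitz in the indicators $Y_1,\dots,Y_q$ and invokes the method of bounded differences from~\cite{DubhashiP09} to conclude $\Pr[Y\ge \E[Y]+t]\le e^{-2t^2/q}$; but the bounded-differences (McDiarmid) argument in that form requires the underlying coordinates to be independent, and the $Y_j$ are not --- the paper itself notes the dependence one sentence earlier, then applies the Lipschitz bound with denominator $q$ anyway. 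This is precisely the pitfall you flag: McDiarmid on the $p$ independent ball placements gives bounded differences $2$ and a deviation of order $\sqrt{p}$, useless when $p\gg q$, while McDiarmid on the $q$ dependent indicators is not justified as written. Your route via negative association (multinomial bin loads are NA, monotone functions of disjoint NA coordinates are NA, and Hoeffding carries over to sums of bounded NA variables) is the standard rigorous fix, recovers exactly the constant $e^{-2t^2/q}$ the paper uses, and should arguably replace the cited justification; the Poissonization alternative you sketch is equally valid. One minor correction: $p\ge q$ is not needed to make the multiplicative Chernoff bound usable --- it holds for all $\mu>0$ --- and, as you note yourself, if $2\alpha q/p\ge 1$ the load threshold is nonpositive and the conclusion is vacuous, so no implicit restriction on $\alpha$ is required.
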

\begin{proof}
For every $i\in\{1,\ldots, p\}$ and $j\in\{1,\ldots, q\}$, define the following random variable:
  \[X_{i,j}=\begin{cases}
    1 \text{ if ball $i$ landed in bin $j$}\\
    0 \text{ otherwise}
  \end{cases}
  \]
Let also $X_j=\sum_{i\in[p]} X_{i,j}$ be the number of balls in the $j$th bin; the expected value of $X_j$ is $\mu = p/q$ for each $j$.
  Since the balls are thrown independently a Chernoff bound gives:
  \[\Pr\left[X_j\leq \mu \left(1- \sqrt{2\alpha/\mu}\right)\right]  \leq e^{-\alpha}\]
Consider now the random variable $Y_j$:
\[Y_j=\begin{cases}
    1 \text{ if }X_j > \mu \left(1- \sqrt{2\alpha/\mu}\right)\\
    0 \text{ otherwise}
  \end{cases}
\]
Let $Y=\sum_{j=1}^{q} Y_j$; we use $Y_{Y_1,..,Y_{q}}$ to denote the actual value of $Y$ with the specified values.
Since there is dependency among the $Y_j$, we use the method of bounded differences~\cite{DubhashiP09} to bound the tail distribution, instead of a Chernoff bound.
The random variable $Y$ satisfies the Lipschitz property with constant $1$, that is:
\[|Y_{Y_1, \ldots, Y_i, \ldots,  Y_q} - Y_{Y_1,  \ldots, Y'_i, \ldots, Y_q}|  = |Y_i-Y'_i| \leq 1\] whenever $Y_i\neq Y'_i$  for every $i\in\{1,\ldots, q\}$.
By the method of bounded differences ~\cite[Corollary 5.2]{DubhashiP09}, 
we  get $\Pr\left[Y\leq \E[Y]- t\right] \leq e^{-2t^2/q}$, and
then  $\Pr\left[Y> \E[Y]- t\right] \geq 1-\varepsilon$ if $t= \sqrt{(q/2) \log(1/\varepsilon)}$.
Since 
$\E[Y]\geq q \left(1-\Pr\left[X_j\leq \mu \left(1- \sqrt{2\alpha/\mu}\right)\right]\right)\geq q\left(1-e^{-\alpha}\right),$ 
 the claim follows.
 \end{proof}
 
\begin{lemma}\label{lem:mod}
Consider a sequence $s_1,\ldots, s_k$ of independent and evenly distributed random variables  in $\{1,-1\}$, and an 
arbitrary  value $q\in \mathbb{N}$.
 Let $S=\sum_{i=1}^{k} s_i$ and $S_q = S \modl q$.
 Then for all values $a, b$ such that $0 \leq a < b \leq \lceil q/2\rceil $ and $b-a\geq q/3$, we have:
\begin{equation}
\label{eq:boundmod}
\frac{\Pr[|S|\geq a]}{2} < \Pr[a\leq |S_q| <b ]  < \Pr[|S|\geq a] .
\end{equation}
\end{lemma}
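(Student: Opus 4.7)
The plan is to exploit two standard properties of the distribution of $S = \sum_{i=1}^{k} s_i$: symmetry, $\Pr[S = j] = \Pr[S = -j]$, and unimodality, $\Pr[S = j]$ non-increasing in $|j|$ among integers of the same parity as $k$ (and zero otherwise). I partition $\mathbb{Z}$ into buckets $B_i := iq + \{-\lfloor q/2 \rfloor, \ldots, \lceil q/2 \rceil - 1\}$, so that $j \modl q = j - iq$ on $B_i$. Both sides of the target inequality then unfold as sums of $\Pr[S = j]$ over explicit subsets of $\mathbb{Z}$: $\Pr[a \leq |S_q| < b]$ is the sum over $j$ whose bucket-residue $v$ satisfies $a \leq |v| < b$, while $\Pr[|S| \geq a]$ is the sum over $j$ with $|j| \geq a$.

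For the upper bound, I verify that every $j$ contributing to $\Pr[a \leq |S_q| < b]$ has $|j| \geq a$. If $j \in B_0$ then $|j| = |v| \geq a$ by definition; for $j \in B_i$ with $|i| \geq 1$, the hypothesis $b \leq \lceil q/2 \rceil$ yields $|j| \geq |i|q - |v| \geq q - \lceil q/2 \rceil + 1 = \lfloor q/2 \rfloor + 1 \geq a$. Distinct $j$'s contribute disjoint probability mass, giving $\Pr[a \leq |S_q| < b] \leq \Pr[|S| \geq a]$, with strict inequality whenever some $j$ with $|j| \geq a$ is excluded from the left-hand side (for instance $j = \pm q$, when $q$ lies in the support of $|S|$).

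For the lower bound, set $G := \Pr[a \leq |S_q| < b]$ and $B := \Pr[|S| \geq a] - G$, so it suffices to establish $B < G$, since then $G > (G + B)/2 = \Pr[|S| \geq a]/2$. The bad integers split into Type~1 (residue $|v| < a$, bucket index $\neq 0$) and Type~2 (residue $|v| \in [b, \lceil q/2 \rceil)$). The plan is to construct a parity-preserving injection $\phi$ from bad integers in the support of $S$ to good integers satisfying $|\phi(j)| \leq |j|$; by unimodality this gives $\Pr[S = \phi(j)] \geq \Pr[S = j]$, hence $B \leq G$, with the strict inequality supplied by any good-side term not in the image of $\phi$. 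For Type~2 points in bucket $B_i$ I use a same-bucket shift of size $b - a$ toward $iq$, which lands in the good sub-interval and strictly decreases $|j|$. For Type~1 points the natural in-bucket good candidates sit farther from $iq$ than the bad point itself, so I instead push into the adjacent bucket $B_{i - \mathrm{sgn}(i)}$: the hypothesis $b - a \geq q/3$ ensures that the good sub-interval of the adjacent bucket has enough slots of the correct parity to absorb the Type~1 mass while strictly decreasing $|j|$.

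The main obstacle is the combination of parity with the ``wrong-direction'' geometry of Type~1 pairings: inside a single bucket the good residues $|v| \in [a, b)$ lie farther from $iq$ than the bad residues $|v| < a$, so to preserve unimodality one must cross to the bucket nearer $0$, and simultaneously verify both parity matching (handled by taking the bucket shift $q$ together with an even residue adjustment, whose existence follows from the $q/3$-width hypothesis) and global injectivity across buckets. Bucket $B_0$ is easier, since Type~1 is vacuous there and the Type~2 region automatically sits farther from $0$ than the good region, so unimodality yields the required domination directly. Assembling the bucket-wise injections produces a global $\phi$ with $|\phi(j)| \leq |j|$ and matching parity, from which $B \leq G$ and then the claimed strict inequality follow.
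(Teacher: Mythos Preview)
Your overall strategy---inject the bad mass into good mass nearer the origin and invoke unimodality---is exactly the paper's idea. The paper executes it by first folding onto the nonnegative axis via $|S|$ and then partitioning $\{|S|\geq a\}$ into four periodic pieces $H_1,H_2,H_3,H_4$ (so $G=H_1+H_3$, $B=H_2+H_4$); it proves $H_1\geq H_2+H_4$ by splitting the good interval $[\ell q+a,\ell q+b)$ into two sub-intervals and shifting them rightward by $b-a$ and $2(b-a)$, which covers the two bad intervals $[\ell q+b,(\ell+1)q-b]$ and $((\ell+1)q-a,(\ell+1)q+a)$ respectively.

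Your Type~2 step, however, has a genuine gap. Take $i>0$ and a Type~2 point $j=iq+v$ with $v\in[-\lfloor q/2\rfloor,-b]$. Shifting by $b-a$ toward $iq$ gives $\phi(j)=j+(b-a)$, which does land in the good residue range $(-b,-a]$, but since $j>0$ this \emph{increases} $|j|$, not decreases it; the symmetric failure occurs for $i<0$, $v>0$. In any non-central bucket where $i$ and $v$ have opposite signs, \emph{every} good point of that bucket lies on the far side of $j$ from the origin, so no same-bucket target can work. These opposite-sign Type~2 points must be sent to the adjacent bucket $B_{i-\mathrm{sgn}(i)}$, after which you owe a careful injectivity check among (i) opposite-sign Type~2 from $B_i$, (ii) Type~1 from $B_i$, and (iii) same-sign Type~2 from $B_{i-1}$, all competing for the good slots of $B_{i-1}$. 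The paper sidesteps this case analysis entirely: after folding, same-sign Type~2 from index $\ell$ and opposite-sign Type~2 from index $\ell+1$ merge into the single interval $[\ell q+b,(\ell+1)q-b]$, lying wholly to the right of the good interval $[\ell q+a,\ell q+b)$, and one rightward shift handles them at once.
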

\begin{proof}
Let $k'= k/q$ and assume for the sake of simplicity that $k'$ is an integer, and that $q$, $b$ and $a$ are even (the proof extends to the general case with minor adjustments). We define the following four quantities:
\begin{eqnarray*}
H_1 \hspace{-1em}&= \sum_{\ell=0}^{k'-1} \hfill  &\Pr\left[\ell q + a \leq |S| < \ell q + b\right]; \\
 H_2 \hspace{-1em} & = \sum_{\ell=0}^{ k'-1} &   \Pr\left[\ell q + b \leq |S| \leq (\ell+1)q -b\right] ; \\
 H_3 \hspace{-1em} &= \sum_{\ell=0}^{ k'-1} & \Pr\left[( \ell+1) q -b < |S| \leq (\ell+1) q-a\right]; \\
H_4 \hspace{-1em} &= \sum_{\ell=0}^{ k'-1}  &\Pr\left[(\ell+1)q - a < |S| < (\ell+1)q +a\right].
\end{eqnarray*}
Standard computations show that:
$\Pr[a \leq |S_q| < b]= H_1+H_3$ and that $\Pr[|S| \geq a]=H_1+H_2+H_3+H_4$.
We then have that  $\Pr[a\leq |S_q| <b ] < \Pr[|S|\geq a]$, and the right side of the inequality in~(\ref{eq:boundmod}) follows.

We now focus on the other side of the inequality.
We  prove that $H_1\geq H_2+H_4$.
The random variable $S$ has value $i$, with $i\in [-k,k]$ if there are $(k+i)/2$ terms set to $+1$ and $(k-i)/2$ terms set to $-1$.
If $k+i$ is odd, this cannot happen and hence $\Pr[S=i]=0$.
On the other hand, if $k+i$ is even, we  have $\Pr[S=i] = \binom{k}{(k+i)/2} \frac{1}{2^k}$  since
the $s_i$ terms are independent and evenly distributed.
Note that $\Pr[S=i]$ is decreasing for even values of~$i$.

Let us define ${\alpha \brack \beta/2}$ to  $\binom{\alpha}{\beta/2}$ if $\beta$ is even and to $0$ if $\beta$ is odd: we thus have $\Pr[S=i]={k \brack (k+i)/2}$ for any even/odd $i$.
Let $\beta\geq \alpha$ and $\gamma\geq 1$, we have the following property:
\begin{align*}
{\alpha \brack \beta/2} + 
{\alpha \brack (\beta+1)/2} >
{\alpha \brack (\beta´+\gamma)/2}+
{\alpha \brack (\beta´+\gamma+1)/2}.
\end{align*}
The correctness of the property follows from the fact that there is exactly one non zero term on each side of the inequality by definition of  ${\alpha \brack \beta/2}$, and the non zero one on the right is decreasing in $\gamma$.

We  then have, for any integer $\ell\geq 0$, that :
\begin{align*}
\Pr[a + \ell q &\leq |S| < b +  \ell q]  = 2\sum_{j=a + \ell q }^{ b +  \ell q-1}  {k \brack \frac{(k+j)}{2}} \frac{1}{2^k}\\
\geq 2& \sum_{j=a + \ell q }^{a+(\ell+1) q -2b}  {k \brack \frac{(k+j)}{2}} \frac{1}{2^k}\\ 
&+ 2 
\sum_{j=a+(\ell+1) q -2b+1}^{a+(\ell+1) q -2(b-a)-1}  {k \brack \frac{(k+j)}{2}} \frac{1}{2^k},
\end{align*}
where the  step follows by the initial assumption  $(b-a)\geq q/3$.
By using the  above property of ${\alpha \brack \beta/2}$, we shift the indexes of the above summations (we add $b-a$ to the first sum and $2(b-a)$ to the second one):
\begin{align*}
\hspace{-.3em}\Pr&[a + \ell q \leq |S| < b +  \ell q]  \\
\hspace{-.3em} > &  2\sum_{j=\ell q+b}^{(\ell +1) q-b} {k \brack \frac{(k+j)}{2}} \frac{1}{2^k} +  2\sum_{j=(\ell+1) q-a+1}^{(\ell+1) q+a-1} {k \brack \frac{(k+j)}{2}} \frac{1}{2^k} \\
\hspace{-.3em} \geq & \Pr[\ell q +b {\leq } |S| {\leq} (\ell+1)q-b] \\ & \hspace{2em}+\Pr[(\ell+1)q-a {<} S {<} (\ell+1)q +a]\\
\hspace{-.3em}  \geq & H_2+H_4. 
\end{align*}
(Note that the derivation requires some adjustments when $q$, $b$ or $a$ are not even).
Therefore,  $\Pr[|S|\geq a] = H_1+H_2+H_3+H_4 <  2(H_1+H_3) \leq 2 \Pr[a\leq |S_q| <b ] $. 
The left side of the inequality in~(\ref{eq:boundmod}) follows.
 \end{proof}

\begin{lemma}\label{lem:sum}
Let $S=\sum_{i=1}^{k} s_i$, where the $s_i$ terms are independent and unbiased random variables in $\{-1,+1\}$, and let $\alpha>0$ be any arbitrary value. 
Then, 
$$\Pr[|S|\geq  \alpha \sqrt{k}]\geq \frac{2\alpha}{\sqrt{2\pi}(\alpha^2+1)e^{\alpha^2/2}}-\frac{1}{2\sqrt{k}}.$$
\end{lemma}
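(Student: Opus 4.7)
The plan is to reduce the question to a standard normal tail bound by appealing to the Berry--Esseen theorem, and then to invoke a Mills-ratio style lower bound of the exact form present in the inequality.

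First I would observe that each $s_i$ is mean zero with variance $1$ and $\E[|s_i|^3]=1$, so $S/\sqrt{k}$ is a normalized sum of i.i.d.\ variables of variance $1$. The Berry--Esseen theorem therefore gives a uniform bound of the form
\[
\sup_{t\in\mathbb{R}} \bigl| \Pr[S/\sqrt{k} \le t] - \Phi(t) \bigr| \le \frac{C}{\sqrt{k}},
\]
where $\Phi$ is the standard normal CDF and $C$ is a universal constant. For symmetric (Rademacher) summands one can take $C=1/4$, which is exactly the constant needed to land on the $1/(2\sqrt{k})$ error term in the statement. Using symmetry of $S$ around $0$ we have $\Pr[|S|\ge \alpha\sqrt{k}] = 2\Pr[S \ge \alpha\sqrt{k}]$, and Berry--Esseen then yields
\[
\Pr[|S|\ge \alpha\sqrt{k}] \;\ge\; 2\,\bigl(1-\Phi(\alpha)\bigr) - \frac{1}{2\sqrt{k}}.
\]

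Next I would plug in a sharp Mills-ratio lower bound: for $X\sim \mathcal{N}(0,1)$ and every $\alpha>0$,
\[
1-\Phi(\alpha) \;=\; \Pr[X\ge \alpha] \;\ge\; \frac{1}{\sqrt{2\pi}}\cdot\frac{\alpha}{\alpha^{2}+1}\cdot e^{-\alpha^{2}/2}.
\]
This is the classical Komatsu/Gordon inequality, which is valid for all $\alpha>0$ (unlike the weaker form stated in Lemma~\ref{lem:normalbound}, which only becomes useful for $\alpha>1$). Substituting this into the previous display immediately gives
\[
\Pr[|S|\ge \alpha\sqrt{k}] \;\ge\; \frac{2\alpha}{\sqrt{2\pi}\,(\alpha^{2}+1)\,e^{\alpha^{2}/2}} \;-\; \frac{1}{2\sqrt{k}},
\]
which is exactly the claim.

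The only slightly delicate point is pinning down the Berry--Esseen constant so that the error matches $1/(2\sqrt{k})$ rather than some larger multiple. I would handle this by citing a version of Berry--Esseen specialized to symmetric Bernoulli summands (where the constant is known to be considerably smaller than the general-case constant), so that applying it to the two tails $\{S\ge \alpha\sqrt{k}\}$ and $\{S\le -\alpha\sqrt{k}\}$ contributes a total additive error of at most $1/(2\sqrt{k})$. With that citation in hand, the rest of the proof is just the two substitutions above.
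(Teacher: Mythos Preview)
Your approach is essentially identical to the paper's: apply Berry--Esseen to $S/\sqrt{k}$, use symmetry to write $\Pr[|S|\ge\alpha\sqrt{k}]=2\Pr[S\ge\alpha\sqrt{k}]$, and finish with the Mills-ratio lower bound $\Psi^c(\alpha)\ge \alpha/(\sqrt{2\pi}(\alpha^2+1)e^{\alpha^2/2})$. The paper cites Tyurin for $C<1/2$ rather than a Rademacher-specific constant, but otherwise the two arguments line up step for step; your more careful handling of the constant (needing effectively $C\le 1/4$ per tail so that doubling yields the $1/(2\sqrt{k})$ error) is, if anything, slightly more scrupulous than the paper's own derivation.
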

\begin{proof}
We observe that $\E[s_i]=0$, $\sigma^2 = \E[s^2_i]=1$ and $\rho = \E[|s_i|^3]=1$.
By the Berry-Esseen theorem~\cite{Berry1941}, we have that the random variable $Q = S/(\sqrt{k}\sigma)=S/\sqrt{k}$ can be approximate by a standard normal distribution $\mathcal N(0,1)$ with error  
$$
| \Pr[Q\leq x] -\Psi(x) | \leq \frac{C \rho}{\sigma^3 \sqrt{k}},
$$ 
where $\Psi(x)$ is the cumulative distribution function of the standard normal distribution $\mathcal N(0,1)$ and $C$ is a suitable constant smaller than $1/2$~\cite{Tyurin10}.
The above inequality can be rewritten as
\[
| \Pr[Q> x] -\Psi^c(x) | \leq \frac{1}{2\sqrt{k}},
\] 
with $\Psi^c(t)=1-\Psi(x)$. 
We then get
\begin{align*}
\Pr[|S|\geq \alpha \sqrt{k} ] &= 2\Pr[S\geq \alpha \sqrt{k} ] \\ 
&= 2 \Pr[Q\geq \alpha]\\
&\geq 2\Psi^c(\alpha) - \frac{1}{2\sqrt{k}}.
\end{align*}
Since  $\Psi^c(x)\geq x/(\sqrt{2\pi} (x^2+1) e^{x^2/2})$~(See e.g. ~\cite{Cook09,Abramowitz74}), the lemma follows by inserting the bound for $\Psi^c(x)$.
\end{proof}

We are now ready to prove the three claims used in the proof of Theorem~\ref{th:mainprop} for $c=\BOM{1}$.

\begin{lemma}[Claim 3]\label{lem:claim1b}
With probability at least $1-\varepsilon/2$, the number of dense rows in $M'$ is at least $p_1 m$, with $p_1 = 0.9$.
\end{lemma}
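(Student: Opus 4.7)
The plan is to recognize this as a direct application of Lemma~\ref{lem:distr}. The construction of $M'$ throws balls into bins in exactly the required way: each of the $D(x,y)$ columns receives $\delta$ updates, and each update lands in a row chosen independently and uniformly at random from $\{1,\dots,m\}$. Hence the total number of updates $u'_i$ hitting row $m'_i$ is distributed as the load of bin $i$ when $p = \delta D(x,y)$ balls are thrown uniformly and independently into $q = m$ bins.

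First I would calibrate $\alpha$ in Lemma~\ref{lem:distr} so that the per-bin threshold matches the density definition. We want
\[(p/q)\bigl(1-\sqrt{2\alpha q/p}\bigr) = \tfrac{4}{5}(p/q),\]
so set $\alpha = p/(50q) = \delta D(x,y)/(50m)$. The lemma then guarantees that, with probability at least $1-\varepsilon/2$, more than
\[m\Bigl(1 - e^{-\alpha} - \sqrt{\tfrac{\log(2/\varepsilon)}{2m}}\Bigr)\]
rows are dense. It therefore suffices to show that the bracket is at least $p_1 = 0.9$, i.e. that $e^{-\alpha} + \sqrt{\log(2/\varepsilon)/(2m)} \leq 0.1$.

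Next I would verify the two pieces separately using the chosen parameter values. Since $D(x,y) \geq cr$ and $\delta \geq (c/r)\log(2/\varepsilon)$, we get $\alpha \geq c^2 \log(2/\varepsilon)/(50m)$; combining with $m = \lceil \beta \max\{r/c,\log(2/\varepsilon)\}\rceil$ and $c = \Omega(1)$ large enough, $\alpha$ can be made at least a suitable constant (say $\ln 20$), forcing $e^{-\alpha} \leq 0.05$. For the second term, $m \geq \beta \log(2/\varepsilon)$ with $\beta = 15/(p_1 p_2)^2$ easily yields $\sqrt{\log(2/\varepsilon)/(2m)} \leq 0.05$. Adding the two gives the desired $0.1$ slack.

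The main obstacle is purely the bookkeeping on constants: one must check that the particular choices $\beta = 15/(p_1 p_2)^2$, $\delta = \lceil (c/r)\log(2/\varepsilon)\rceil$, and the lower bound on $c$ (used also for the later claims) are consistent with making both $e^{-\alpha}$ and $\sqrt{\log(2/\varepsilon)/(2m)}$ at most $0.05$. There is no conceptual difficulty beyond Lemma~\ref{lem:distr}; the statement follows by plugging in and observing that the two cases $r/c \geq \log(2/\varepsilon)$ and $r/c < \log(2/\varepsilon)$ (which govern which term dominates $m$) both give enough slack to absorb the constants.
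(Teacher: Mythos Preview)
Your proposal is correct and essentially the same as the paper's: both reduce to Lemma~\ref{lem:distr} applied to $p=\delta D(x,y)$ balls in $q=m$ bins. The only cosmetic difference is that the paper fixes $\alpha=3$ and then uses the lower bound on $c$ (specifically $c\geq 5\sqrt{6\beta+1}$) to force $1-\sqrt{6m/(\delta D(x,y))}\geq 4/5$, whereas you calibrate $\alpha$ to hit the $4/5$ threshold exactly and then argue $\alpha\gtrsim c^2/(50\beta)$ is large; in the case $r/c\geq \log(2/\varepsilon)$ be sure to use $\delta\geq 1$ (not just $\delta\geq (c/r)\log(2/\varepsilon)$) to get that lower bound.
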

\begin{proof}
Matrix $M'$ is obtained by performing $\delta$ random updates per column independently and uniformly distributed.
The number of updates $u_i$ affecting row $m'_{i}$ is distributed as the number of balls in a bin after randomly throwing $\delta D(x,y)$ balls into $m$ bins.
By applying Lemma~\ref{lem:distr} with $\alpha=3$, it follows that, with probability at least $1-\varepsilon/2$, there are more than
\[m'\geq (1-1/e^3-\sqrt{\log(2/\varepsilon)/(2m)}) m\geq p_1 m\]  rows where
\begin{align*}
u_i& \geq \frac{\delta D(x,y)}{m} \left(1-\sqrt{\frac{6 m}{\delta  D(x,y)}}\right)\\
   & \geq \frac{4\delta D(x,y)}{5m}
\end{align*} 
as soon as $c\geq 5 \sqrt{6\beta+1}$ (which is true under the initial hypothesis $c\geq \sqrt{5\beta/(4p_2^2)}$). These $m'$ rows are then dense.
\end{proof}

\begin{lemma}[Claim 4]\label{lem:claim2b}
If $m'_{i}$ is dense, then $|\Gamma_i(x,y)|> 2c/\sqrt{5\beta}$ with  probability at least $p_2=0.094$.
\end{lemma}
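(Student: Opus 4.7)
The plan is to analyze the random variable $S_i = m'_i(x'-y')$ as a sum of independent $\pm 1$ variables, push it through the $\modl \cm$ operation using Lemma~\ref{lem:mod}, and then relate the modular value to $|\Gamma_i(x,y)|$ via the bound in~(\ref{eq:approx_gamma}).

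First, I would observe that although the entries of $m'_i$ are themselves correlated (each column receives $\delta$ updates that may land on the same row), the quantity $S_i = m'_i(x'-y')$ admits a much cleaner description: each update to some column $j$ that lands on row $i$ contributes an independent uniform sign $s \in \{-1,+1\}$ multiplied by $(x'_j - y'_j) \in \{-1,+1\}$, so $S_i$ is exactly a sum of $u'_i$ i.i.d. uniform $\pm 1$ random variables. This is crucial because it puts Lemmas~\ref{lem:sum} and~\ref{lem:mod} within reach even though $M'$ itself is not an i.i.d. sign matrix.

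Next, I would set $a = \cd + 2c/\sqrt{5\beta}$ and $b = \lceil \cm/2 \rceil = 4c$, and check the hypotheses of Lemma~\ref{lem:mod}. With $\cd = 2c/(\sqrt{5}\beta)$ and $\cm = 8c$, the gap $b - a$ is comfortably above $\cm/3$ for $\beta$ a large constant, and $a < \cm/2$. From~(\ref{eq:approx_gamma}) we have $|\Gamma_i(x,y)| > |S_i \modl \cm| - \cd$, so once $|S_i \modl \cm| \ge a$, we immediately get $|\Gamma_i(x,y)| > a - \cd = 2c/\sqrt{5\beta}$, which is the desired bound. Thus the probability to lower bound is $\Pr[a \le |S_i \modl \cm| < b]$, and Lemma~\ref{lem:mod} tells us this is at least $\tfrac{1}{2}\Pr[|S_i| \ge a]$.

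For the anti-concentration $\Pr[|S_i| \ge a]$, I would use denseness to conclude $u'_i \ge 4\delta D(x,y)/(5m) \ge 4c^2/(5\beta)$ (a quick case analysis on which term dominates in $m = \lceil \beta \max\{r/c, \log(2/\varepsilon)\}\rceil$ shows the same lower bound in both regimes). Hence $a/\sqrt{u'_i} \le 1 + 1/\sqrt{\beta}$, so by Lemma~\ref{lem:sum} applied with $\alpha = 1+1/\sqrt{\beta}$,
\begin{equation*}
\Pr[|S_i| \ge a] \;\ge\; \frac{2\alpha}{\sqrt{2\pi}(\alpha^2+1)e^{\alpha^2/2}} - \frac{1}{2\sqrt{u'_i}}.
\end{equation*}
For $\beta$ sufficiently large (which is guaranteed by the hypothesis $c \ge \sqrt{5\beta/(4p_2^2)}$, since $u'_i \ge 4c^2/(5\beta)$ grows with $c$), the correction term $1/(2\sqrt{u'_i})$ is negligible and the main term exceeds $2 p_2$. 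Dividing by the factor of $2$ from Lemma~\ref{lem:mod} yields $\Pr[|\Gamma_i(x,y)| > 2c/\sqrt{5\beta}] \ge p_2 = 0.094$, as required.

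The main obstacle I expect is purely arithmetic: verifying that the constants $p_1 = 0.9$, $p_2 = 0.094$, $\beta = 15/(p_1 p_2)^2$ and the choice $c \ge \sqrt{5\beta/(4 p_2^2)}$ are mutually consistent so that (i) $\alpha$ is close enough to $1$ to give the claimed $p_2$, (ii) the Berry-Esseen correction $1/(2\sqrt{u'_i})$ is truly swallowed, and (iii) the condition $b - a \ge \cm/3$ in Lemma~\ref{lem:mod} holds. There is nothing conceptually delicate beyond these bookkeeping checks.
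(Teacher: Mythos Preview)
Your proposal is correct and follows essentially the same line as the paper's proof: the paper sets $K = (2c/\sqrt{5\beta})(1+1/\sqrt{\beta})$, which coincides with your $a = \cd + 2c/\sqrt{5\beta}$, rewrites $m'_i(x'-y')$ as a sum of $u_i$ i.i.d.\ $\pm 1$ variables, applies Lemma~\ref{lem:mod} with $a=K$, $b=\cm/2$, and then Lemma~\ref{lem:sum} with $\alpha = 1+1/\sqrt{\beta}$, using the density bound $\sqrt{u_i}\ge 2c/\sqrt{5\beta}$ to control both the ratio $K/\sqrt{u_i}$ and the Berry--Esseen correction. The only difference is cosmetic: the paper states the $u_i$ lower bound directly, whereas you spell out the two cases in the definition of $m$.
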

\begin{proof}  
Let $K=2c/\sqrt{5\beta}(1+1/\sqrt{\beta})$ and assume that the inequality  $|m'_{i} (x'-y') \modl \cm|\geq  K$ holds. 
Then, the lemma follows by applying~(\ref{eq:approx_gamma}):
\begin{align*}
|\Gamma_i(x,y)| & > |m'_{i} (x'-y') \modl \cm| - \cd \\
& \geq K - \cd\\
& = c/(\sqrt{5}\beta) + 2c/\sqrt{5\beta}- \cd\\
& = 2c/\sqrt{5\beta}.
\end{align*}

We now show that the above inequality holds (i.e., $|m'_{i} (x'-y') \modl \cm|\geq  K$).
The inner product $m'_{i} (x'-y')$ can be rewritten as $\sum_{j=1}^{u_i} \sigma_j (x'-y')_{f(j)}$, where $f(j)$ is the position in $m'_{i}$ affected by the $j$th update.
Since $(x'-y')$ has entries in $\{-1,1\}$ and the $\sigma_j$ are independent,  $m'_{i} (x'-y')$
has the same density function as $S=\sum_{j=1}^{u_i} \sigma_j$.
Then,
\begin{align*}
\Pr[|M'_i & (x'-y') \modl \cm|\geq  K]\\
  &=
\Pr[|S \modl \cm| \geq K] \\
& > \frac{\Pr[|S|\geq K]}{2},
\end{align*}
where the last step  follows by applying Lemma~\ref{lem:mod} with $a=K$, $b=\cm/2$ and $q=\cm$ (note that $b-a\geq \cm/3$).
To lower bound $\Pr[|S|\geq K]$, we apply Lemma~\ref{lem:sum} with $\alpha=1+1/\sqrt{\beta}$ since $K\leq \sqrt{u_i}(1+1/\sqrt{\beta})$. Hence,
\begin{align*}
\frac{\Pr[|S|\geq K]}{2}&\geq
\frac{\Pr[|S|\geq (1+1/\sqrt{\beta})\sqrt{u_i}]}{2}\\
&\hspace{-2em}\geq \frac{1+1/\sqrt{\beta}}{\sqrt{2\pi} ((1+1/\sqrt{\beta})^2+1) e^{(1+1/\sqrt{\beta})^2/2}}-\frac{1}{4\sqrt{u_i}}\\
&\hspace{-2em}\geq p_2,
\end{align*}
where the last step follows by observing that $\sqrt{u_i}\geq 2c/\sqrt{5\beta}  \geq 1/p_2$, and then by numerically evaluate the resulting bound.
\end{proof}

\begin{lemma}[Claim 5] \label{lem:claim3b}
With probability at least $1-\varepsilon$, there are at least $0.89 p_1 p_2 m$ rows such that $|\Gamma_i(x,y)|> 2c/\sqrt{5\beta}$.
\end{lemma}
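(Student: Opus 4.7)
The plan is to combine Claims 3 and 4 via a Chernoff-style concentration argument, using the structural independence across rows that comes from the update-based construction of $M'$.

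First I would condition on the event from Lemma~\ref{lem:claim1b} (Claim 3), which guarantees that at least $p_1 m$ rows of $M'$ are dense; this event holds with probability at least $1-\varepsilon/2$. Fix any set $I$ of at least $p_1 m$ dense rows (i.e., condition on the number of updates $u_i$ landing in each row and on which rows are dense). The key observation is that once the row-assignment of each of the $\delta D(x,y)$ updates is fixed, each update still carries an \emph{independent} sign $s \in \{-1,+1\}$, and each update contributes to exactly one row of $M'$. Therefore, after conditioning on the partition of updates into rows, the random vectors $(m'_i)_{i \in I}$ are \emph{mutually independent}. Since $\Gamma_i(x,y)$ depends only on $m'_i$ (and on $x,y$), the indicator variables
\[
Z_i = \mathbf{1}\!\left[\,|\Gamma_i(x,y)| > 2c/\sqrt{5\beta}\,\right], \qquad i \in I,
\]
are independent, and by Lemma~\ref{lem:claim2b} (Claim 4) each satisfies $\E[Z_i] \geq p_2$.

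Next, let $Z = \sum_{i \in I} Z_i$, so $\E[Z] \geq p_1 p_2 m$. By a standard multiplicative Chernoff bound applied with deviation $\eta = 0.11$,
\[
\Pr\!\left[Z < 0.89\, p_1 p_2 m\right] \leq \exp\!\left(-\tfrac{\eta^2 p_1 p_2 m}{2}\right).
\]
Using $m \geq \beta \log(2/\varepsilon)$ and $\beta = 15/(p_1 p_2)^2$, a short calculation shows $\eta^2 p_1 p_2 m /2 \geq \log(2/\varepsilon)$, so the Chernoff failure is at most $\varepsilon/2$. A union bound with the $\varepsilon/2$ failure from Claim 3 yields total failure probability at most $\varepsilon$, establishing the lemma.

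The main obstacle I anticipate is the independence claim: the matrix $M$ is built by updates that jointly randomize rows, columns, and signs, so one has to be careful to condition on the right $\sigma$-algebra. The clean way out is precisely the two-step conditioning above — first fix which updates go to which rows (this is exactly the information used by Claim 3 to identify dense rows), then use only the remaining sign randomness to get independent $Z_i$. A secondary nuisance is verifying numerically that $p_1 = 0.9$, $p_2 = 0.094$, $\beta = 15/(p_1 p_2)^2$, and $\eta = 0.11$ make the Chernoff exponent at least $\log(2/\varepsilon)$; since $\eta^2/(2 p_1 p_2) = 0.0121/0.1692 \approx 0.072$ while $1/\beta = (p_1 p_2)^2/15 \approx 0.00048$, the inequality holds with substantial slack.
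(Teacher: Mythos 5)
Your proof follows the paper's argument exactly: condition on Claim~3 giving at least $p_1 m$ dense rows, apply Claim~4 per dense row, run a Chernoff bound, and union-bound the two $\varepsilon/2$ failures. The one place you overreach is the claim that ``$\Gamma_i(x,y)$ depends only on $m'_i$'': for $c = 1 + \Omega(1)$ the division by $\cd$ and the floor make $\sigma(x)_i$ and hence $\Gamma_i(x,y)$ depend on the \emph{entire} row $m_i$, not just on its restriction $m'_i$ to columns where $x$ and $y$ differ (the paper explicitly flags this). This is easily repaired in either of two ways: replace $Z_i$ with the indicator of $|m'_i(x'-y') \modl \cm| \geq K$, which really is a function of $m'_i$ alone and (via inequality~(\ref{eq:approx_gamma}) and Claim~4) implies $|\Gamma_i(x,y)| > 2c/\sqrt{5\beta}$; or condition on the row-assignment of \emph{all} $\delta d$ updates rather than just the $\delta D(x,y)$ updates to $M'$, after which the full rows $m_i$ (and hence the $\Gamma_i$) are independent because each update's sign lands in exactly one row. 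Either patch gives the independence the paper itself asserts without justification, so your version is if anything slightly more careful; the numerical slack at the end is small ($\eta^2 p_1 p_2 \beta/2 \approx 1.07$, not the large gap your comparison suggests) but positive, so the conclusion stands.
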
 
\begin{proof}
By Lemma~\ref{lem:claim1b}, there are $m'\geq p_1 m$ dense rows with probability $1-\varepsilon/2$. 
For each dense row, let $Y_i$ be a random variable sets to 1 if $|\Gamma_i(x,y)|> c/\sqrt{\beta}$, and 0 otherwise.
By the previous Lemma~\ref{lem:claim2b}, we have that $\Pr[Y_i=1]\geq p_2$.
Let $Y=\sum_{i=1}^{m'} Y_i$.
Since the $Y_i$ are independent and $\E[Y]=p_2 m'$, a Chernoff bound gives:
\[
  \Pr\left[Y< p_2 m'\left(1- \sqrt{2\log(2/\varepsilon)/(p_2 m')}\right)\right]\leq \varepsilon/2.
\]
By plugging in the actual values of variables, 
we have $ \Pr[Y< 0.89 p_1 p_2 m]\leq \varepsilon/2$.

Therefore, by an union bound there are at least $ p_1 m$ dense rows and 
at least $0.89 p_1 p_2 m$ of them satisfy
 $|\Gamma_i(x,y)|> c/\sqrt{\beta}$.
\end{proof}

\subsection{A filter with point-wise error}\label{sec:filter-with-wc}
A distance sensitive approximate membership filter with point-wise error is obtained by just storing the $n$ signatures of the points in $S$. 
We have the following theorem:

\begin{theorem}
There exists a $(r,c,\varepsilon)$-distance sensitive approximate membership filter with point-wise error which requires 
$$
\BO{n\left(\frac{r}{(c-1)}+\left(\frac{c}{c-1}\right)^2\log\left(\frac{n}{\varepsilon}\right)\right)}
$$
bits for any $c>1$ on  a set $S$ of $n$ points.
When $c\geq 2$, the filter uses 
$\BO{n\left(\frac{r}{c} + \log\left(\frac{n}{\varepsilon}\right)}\right)$ bits, and it is optimal if  $r/c\geq \log(n/\varepsilon)$ or $\varepsilon\leq 1/n^{1+o(1)}$.
\end{theorem}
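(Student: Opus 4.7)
The plan is to construct the filter by simply storing, for each $p \in S$, the signature $\sigma(p)$ produced by the vector signature construction of Section~\ref{sec:vect-sign-meth}, using a common random matrix $M$. On query $q$, compute $\sigma(q)$ and evaluate the gap $\gamma(q,p)$ for each of the $n$ stored signatures; answer ``yes'' if some gap is at most $\Psi$, and ``no'' otherwise. The space cost is $n$ times the signature length, so everything reduces to choosing the internal error parameter of the signature properly and bounding the resulting length.

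For correctness, fix a query $q$. If $q \in Q_\text{near}$, there is some $p \in S$ with $H(p,q) \leq r$, and Theorem~\ref{th:mainprop} guarantees $\gamma(q,p) \leq \Psi$ \emph{deterministically}, so a ``yes'' is always returned on near queries; there are no false negatives. If $q \in Q_\text{far}$, then for every $p \in S$ we have $H(p,q) > cr$, and Theorem~\ref{th:mainprop} gives $\Pr[\gamma(q,p) \leq \Psi] \leq \varepsilon'$ for each fixed $p$. Taking a union bound over the $n$ stored signatures yields false positive probability at most $n\varepsilon'$. Thus, to achieve point-wise error $\varepsilon$, I instantiate the signature construction with error parameter $\varepsilon' = \varepsilon/n$.

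Plugging $\varepsilon' = \varepsilon/n$ into the signature length bound of Theorem~\ref{th:mainprop} gives a per-signature cost of $\BO{\tfrac{r}{c-1} + \bigl(\tfrac{c}{c-1}\bigr)^{2}\log(n/\varepsilon)}$ bits, and storing $n$ of them yields the first claimed bound. For $c \geq 2$ we have $c/(c-1) = \BO{1}$, collapsing the expression to $\BO{n(r/c + \log(n/\varepsilon))}$, which is the second claimed bound.

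For the optimality statements when $c \geq 2$, I compare against the point-wise lower bound $\BOM{n(r/c + \log(1/\varepsilon))}$ from Theorem~\ref{thm:wclb1}. The only gap between upper and lower bound is the $\log n$ factor hidden inside $\log(n/\varepsilon) = \log n + \log(1/\varepsilon)$. In the first regime $r/c \geq \log(n/\varepsilon)$, the $r/c$ term dominates both bounds, so both are $\Theta(nr/c)$. In the second regime $\varepsilon \leq 1/n^{1+o(1)}$, we have $\log n = o(\log(1/\varepsilon))$, so $\log(n/\varepsilon) = (1+o(1))\log(1/\varepsilon)$, and the upper bound matches $\BOM{n(r/c + \log(1/\varepsilon))}$ up to constants. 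I do not expect any real obstacle here: the only subtle step is applying the union bound with $\varepsilon' = \varepsilon/n$, which is exactly where the extra $\log n$ in $\log(n/\varepsilon)$ arises and the reason the match to the lower bound requires one of the two stated parameter regimes.
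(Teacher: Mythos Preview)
Your proposal is correct and follows essentially the same approach as the paper: store the $n$ signatures built with error parameter $\varepsilon' = \varepsilon/n$, use the deterministic guarantee for near points and a union bound for far points, and read off the space bound from Theorem~\ref{th:mainprop}. Your treatment of the optimality regimes (showing how the extra $\log n$ is absorbed when $r/c \geq \log(n/\varepsilon)$ or when $\varepsilon \leq 1/n^{1+o(1)}$) is in fact more explicit than the paper's, which simply refers to Theorem~\ref{thm:wclb1}.
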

\begin{proof}
We assume a shared source of randomness that can be used to recover the random matrix $M$ without storing it.
Consider the $n$ signatures of points in $S$ constructed with error $\varepsilon'=\varepsilon/n$. 
By an union bound, the $n$ signatures give a false positive with probability $\varepsilon$. 
Since each signature requires 
$
\BO{\frac{r}{(c-1)}+\left(\frac{c}{c-1}\right)^2\log\left(\frac{n}{\varepsilon}\right)}
$
bits by Theorem~\ref{th:mainprop}, the first part of the claim follows.
The optimality with $c\geq 2$ of the filter follows from Theorem~\ref{thm:wclb1}.
\end{proof}

\subsection{A filter with average error}
\label{sec:filter-with-average}
The point-wise error filters are of course  valid average error filters, but in this setting we can also construct space efficient filters with a $c=1$ approximation factor. 
Define $Q_{\text{$r$-far}}=\{x\in\{0,1\}^d\; | \; D(x,S)\geq r\}$ and similarly $Q_{\text{$(r;cr)$-far}}=\{x\in\{0,1\}^d\; | \; r\leq D(x,S)\leq cr\}$.

By setting $c=r$ in the point-wise filter, we obtain an average error filter with $c=1$ which matches the $\BOM{n\log(1/\varepsilon)}$ lower bound of Theorem~\ref{avg_error_thm} for small $r$. 
Interestingly, this space bound shows that it is possible to support distance sensitive membership queries in the average error setting with the asymptotic space bound of a Bloom filter.
\begin{theorem}
Let $r \leq \sqrt{d}$, $n\leq 2^{d/3}$ and $\varepsilon\geq 1/2^{d-2}$. Then, there exists an optimal $(r,1,\varepsilon)$-distance sensitive approximate membership filter with average error which requires  $\BO{n \log(1/\varepsilon)}$ bits on  a set $S$ of $n$ points.
\end{theorem}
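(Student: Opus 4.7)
The strategy is to instantiate the point-wise filter of Section~\ref{sec:filter-with-wc} with approximation factor equal to $r$ and point-wise error $\varepsilon':=\varepsilon/4$, and then argue that under the stated hypotheses the same structure already satisfies the weaker average-error specification with $c=1$. First I would dispose of the boundary case $r=1$, where the problem degenerates to ordinary approximate membership and a Bloom filter alone attains $\BO{n\log(1/\varepsilon)}$ bits; from here on assume $r\ge 2$ so that $r/(r-1)=\BO{1}$.

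For the space bound, Section~\ref{sec:filter-with-wc} guarantees that the $(r,r,\varepsilon')$-point-wise filter uses
$\BO{n\bigl(\tfrac{r}{r-1}+\bigl(\tfrac{r}{r-1}\bigr)^{2}\log(n/\varepsilon')\bigr)}$
bits, which collapses to $\BO{n\log(n/\varepsilon')}=\BO{n\log(1/\varepsilon)}$ because $n\le 2^{d/3}$ and $\varepsilon\ge 2^{-(d-2)}$ force $\log n=\BO{\log(1/\varepsilon)}$ and therefore $\log(n/\varepsilon')=\BO{\log(1/\varepsilon)}$. Together with the $\BOM{n\log(1/\varepsilon)}$ lower bound of Theorem~\ref{avg_error_thm}, this already delivers the claimed optimality once correctness is established.

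For correctness, the point-wise filter with $c=r$ answers `yes' with probability $1$ on every point at distance $\le r$ from $S$ and `yes' with probability at most $\varepsilon'$ on every fixed point at distance $>r^{2}$ from $S$, while giving no guarantee on the ``grey zone'' $G=\{q:r<D(q,S)\le r^{2}\}$. For a uniformly random $q\in Q_\text{far}=\{q:D(q,S)>r\}$ the expected false-positive probability is therefore at most $|G|/|Q_\text{far}|+\varepsilon'$, so with $\varepsilon'=\varepsilon/4$ it suffices to prove $|G|\le(\varepsilon/2)\,|Q_\text{far}|$.

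This last inequality is the crux and the main obstacle. I would bound $|G|\le n\,|B_d(r^{2})|\le n\,2^{dH(r^{2}/d)}$ using the standard entropy estimate on Hamming balls, together with $|Q_\text{far}|\ge 2^{d-1}$ (an easy consequence of $n\,|B_d(r)|\le 2^{d-1}$, which follows from $r\le\sqrt d$ and $n\le 2^{d/3}$), reducing the goal to $n\,2^{dH(r^{2}/d)}\le \varepsilon\,2^{d-2}$. The hypotheses are precisely tailored so that this holds: for $r$ at most a sufficiently small constant multiple of $\sqrt d$ the entropy bound is comfortably below $\varepsilon\,2^{d}/n$, whereas the tightest regime near $r=\sqrt d$ is where the constants in $r\le\sqrt d$, $n\le 2^{d/3}$ and $\varepsilon\ge 2^{-(d-2)}$ must be juggled carefully, quite possibly by replacing the crude entropy bound with a sharper combinatorial estimate on $|B_d(r^{2})\setminus B_d(r)|$ and, if necessary, splitting into two subcases according to whether $\varepsilon$ is so small that simply storing $S$ explicitly in $\BO{nd}=\BO{n\log(1/\varepsilon)}$ bits is already optimal.
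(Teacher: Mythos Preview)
Your proposal is correct and follows essentially the same route as the paper: instantiate the point-wise filter with $c=r$ and error $\varepsilon/4$, then show that the grey zone $G=\{q:r<D(q,S)\le r^{2}\}$ is small enough relative to $|Q_{\text{far}}|$ that the average error stays below $\varepsilon$.

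The one place where you hesitate is exactly where the paper is more direct. Instead of the entropy bound $|B_d(r^2)|\le 2^{dH(r^2/d)}$, which indeed becomes useless as $r^2$ approaches $d/2$, the paper bounds the grey zone by $|G|\le n\,r^{2}\binom{d}{r^{2}}$ (at most $n$ centres times at most $r^2$ shells, each of size at most $\binom{d}{r^2}$). This is precisely the ``sharper combinatorial estimate'' you anticipated needing; no case split or fallback to storing $S$ explicitly is required. The paper also bounds $|Q_{\text{far}}|\ge 2^d-nr\binom{d}{r}\ge 2^{d-1}$ the same way. Your handling of the space bound --- arguing that $\log n=\BO{\log(1/\varepsilon)}$ from $n\le 2^{d/3}$ and $\varepsilon\ge 2^{-(d-2)}$ --- is in fact more explicit than the paper, which simply cites the point-wise filter cost and the lower bound.
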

\begin{proof}
Let us consider a $(r,r,\varepsilon/4)$-filter $\mathcal F$ with point-wise guarantees.
The amount of false positives accepted by $\mathcal F$ is $P \leq (\varepsilon/4) |Q_\text{$r^2$-far}| + |Q_\text{$(r;r^2)$-far}|$.
We have $|Q_\text{$(r;r^2)$-far}|\leq n r^2 \binom{d}{r^2} \leq (\varepsilon/4) 2^d$ since
$d \geq r^2$, $n\leq 2^{d/3}$ and $\varepsilon \geq 4/2^{d/2}$.
Trivially, we also have that $|Q_\text{$r^2$-far}|\leq 2^d$.
We see that $P\leq  \varepsilon 2^{d-1}$.  

Now note that $|Q_\text{$r$-far}|\geq 2^d-n r \binom{d}{r}\geq 2^{d-1}$ by $d \geq r^2$ and $n\leq 2^{d/3}$.

We combine the two bounds to see $P\leq \varepsilon 2^{d-1}\leq \varepsilon|Q_\text{$r$-far}|$. 
The optimality of $\mathcal{F}$ follows from Theorem~\ref{avg_error_thm} since $r^2/d<1$ and $n \log(1/\varepsilon)$ is a lower bound.
\end{proof}



\section{Conclusion}
To the best of our knowledge, this is the first time upper and lower space bounds are given for the problem of distance sensitive filters without false negatives.
We have introduced distance sensitive signatures for Hamming vectors and used them to derive filters with point-wise and average errors. The proposed filters are optimal under certain assumptions, but it is an open question to close the gap without these assumptions, specifically when $\varepsilon$ is large.

Another interesting research direction is to investigate trade-offs between space and query time: our filter requires reading all signatures at query time and it is not clear to which extent the query time can be improved.

\chapter{Fast Nearest Neighbor Preserving Embeddings}
\label{cha:simil-pres-embedd}

  In this Chapter we show an analogue to the Fast Johnson-Lindenstrauss Transform for Nearest Neighbor Preserving Embeddings in $\ell_2$.
  These are randomized embeddings that preserve the (approximate) nearest neighbors for a set of points.
  The dimensionality of the embedding space is bounded not by the size of the embedded set $n$, but by its doubling dimension $\lambda$. For most large real-world datasets this will mean a considerably lower-dimensional embedding space than possible when preserving all distances.
  However the embedding is slow since it requires multiplication with a dense matrix.
  To reduce the embedding time we propose a sparse mapping.
  The resulting embeddings can be used with existing approximate nearest neighbor data structures to yield speed improvements.

\section{Introduction}

Many algorithmic problems become overwhelmingly difficult in high-dimensional settings.
One way of trying to combat this problem is to discover mappings that preserve the metric relevant to solving a given problem, while embedding it into a lower dimensional setting.
Most famously Johnson and Lindenstrauss~\cite{Johnson1984} showed the lemma:
\begin{lemma}[JL-Lemma~\cite{Johnson1984}]
  \label{sec:JLlemma}
  For any integer $d>0$, and any $\epsilon  >0$, $\delta\in (0,1/2)$, for $k=\Theta(\epsilon^{-2}\log(1/\delta))$ there exists a distribution $\Pi$ such that for $k\times d$ matrices $M\sim \Pi$, for any $x\in\mathbb{R}^d$,
\[Pr\left[(1-\epsilon )\|x\|_2\leq \|Mx\|_2\leq(1+\epsilon)\|x\|_2\right]> 1-\delta\]
\end{lemma}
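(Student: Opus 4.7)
The plan is to construct the distribution $\Pi$ explicitly using Gaussian random matrices and then establish the concentration bound through standard Chernoff-type tail bounds for chi-squared variables. Concretely, I would take $M$ to be the $k \times d$ matrix whose entries $M_{ij} \sim \mathcal{N}(0, 1/k)$ are drawn independently. This choice is motivated by the $2$-stability of the Gaussian distribution mentioned in Section~\ref{sec:rand-conc}: for each row $M_i$ of $M$, the inner product $M_i \cdot x$ is distributed as $\mathcal{N}(0, \|x\|_2^2 / k)$, and these inner products are independent across rows.

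First I would compute $\mathbb{E}[\|Mx\|_2^2] = \|x\|_2^2$ by writing $\|Mx\|_2^2 = \sum_{i=1}^k (M_i \cdot x)^2$ and invoking linearity of expectation. More importantly, $\|Mx\|_2^2 / \|x\|_2^2$ has the same distribution as $(1/k) \sum_{i=1}^k Y_i^2$ where $Y_i \sim \mathcal{N}(0,1)$ are i.i.d., i.e., a normalized chi-squared random variable with $k$ degrees of freedom. This reduces the lemma to a concentration statement about a sum of i.i.d.\ squared standard Gaussians, entirely independent of $x$ and $d$.

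Next I would apply Markov's inequality to the moment generating function of $Y_i^2$, which equals $(1-2t)^{-1/2}$ for $t < 1/2$, and optimize $t$ separately for the two tails. This yields
\[\Pr\!\left[\textstyle\sum_{i=1}^k Y_i^2 \geq (1+\epsilon)k\right] \leq e^{-c_1 \epsilon^2 k}, \qquad \Pr\!\left[\textstyle\sum_{i=1}^k Y_i^2 \leq (1-\epsilon)k\right] \leq e^{-c_2 \epsilon^2 k},\]
valid for $\epsilon \in (0,1/2)$ and suitable positive constants $c_1, c_2$. A union bound combines the two tails, and taking square roots (using $\sqrt{1\pm\epsilon} = 1 \pm \Theta(\epsilon)$) translates the statement from $\|Mx\|_2^2$ to $\|Mx\|_2$, with constant factors absorbed into $\epsilon$. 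Choosing $k = C \epsilon^{-2}\log(1/\delta)$ for sufficiently large $C$ then drives the total failure probability below $\delta$.

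The main obstacle is tracking the constants carefully in the chi-squared tail bound so that $k$ depends only on $\epsilon^{-2}\log(1/\delta)$ with no extraneous polylogarithmic factor in $1/\epsilon$. The standard trick is to set $t = \epsilon / (2(1+\epsilon))$ in the upper-tail Chernoff computation, after which the exponent simplifies via the Taylor expansion $\log(1+\epsilon) = \epsilon - \epsilon^2/2 + O(\epsilon^3)$ to $-\Theta(\epsilon^2 k)$; the lower tail is symmetric. Once this is done, the lemma follows immediately.
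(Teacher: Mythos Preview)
The paper does not actually prove this lemma; it is stated as a classical result with citation to~\cite{Johnson1984}, and the surrounding text simply points to existing proofs in the literature (including Gaussian matrices~\cite{Johnson1984,Dasgupta2003} and $\{0,\pm1\}$ matrices~\cite{Achlioptas2003671}). So there is no ``paper's own proof'' to compare against here.

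That said, your proposal is a correct and standard argument, essentially the Dasgupta--Gupta proof via Gaussian matrices and chi-squared tail bounds that the paper itself cites. The reduction to a normalized $\chi^2_k$ variable via $2$-stability is exactly right, and the Chernoff computation with the moment generating function $(1-2t)^{-1/2}$ and the optimized choice of $t$ gives the claimed $e^{-\Theta(\epsilon^2 k)}$ tails. Nothing is missing; this is precisely one of the proofs the paper defers to the references.
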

The JL-Lemma shows the existence of an embedding of any set $S\subseteq \mathbb{R}^d$ into $k=\BO{\log|S|\epsilon^{-2}}$ dimensions while preserving $\ell_2$ distances up to a multiplicative $(1\pm\epsilon)$ distortion. Proofs can be found for many different $M$~\cite{FRANKL1988355,Johnson1984, Dasgupta2003,Achlioptas2003671}, including Gaussian matrices~\cite{Johnson1984, Dasgupta2003} and $\{0,\pm1\}$ matrices~\cite{Achlioptas2003671}.
In fact we might use any sub-gaussian distribution to fill the matrix~\cite{IN07}. 
These low-dimensional embeddings can be used to speed up many fundamental high-dimensional problems like closest pair, nearest neighbor or minimum spanning tree.
They can also be used to decrease the storage requirements of a dataset when we only need to preserve norms. Further discussion and examples can be found for instance in \cite{vempala2004,Indykapplication}.
It is known that if we want to preserve the norm for all $x\in S$, the embedding dimension $k=\BOx{\log|S|\epsilon ^{-2}}$ is optimal, see \cite{Larsen17,Larsen16}.

However it might not be necessary to preserve norms for all \emph{all} points in $S$. If for example we are interested in nearest neighbor queries we require only that neighbors remain close to each other, while far away points do not get too close.
This idea was introduced and formalized as Nearest Neighbor Preserving Embeddings by  Indyk and Naor~\cite{IN07}, who also presented an embedding.
Using a full Gaussian matrix they showed that nearest neighbor distance can be preserved while embedding into fewer dimensions than in the distance preserving setting.
Specifically, $k$ is $\BOx{\epsilon^{-2}\log \lambda_S \log(2/\epsilon) }$ where $\lambda_S$ is the doubling constant of $S$.
By removing the requirement that all distances be preserved we can get $k$ smaller than in the bounds discussed above~\cite{Larsen17,Larsen16,2016arXiv161000239A}.

Another line of research has focused on improving the speed of the embeddings by using sparse matrices while keeping the distortion low~\cite{DBLP:journals/jacm/KaneN14,Dasgupta:2010:SJL:1806689.1806737,Achlioptas2003671,Ailon09}.
Call $f<1$ the sparsity parameter\footnote{Normally $q$ is used for this, but in this dissertation we reserve $q$ for query points}.
If each entry in the used matrix is $0$ with probability $1-f$ we can improve the embedding time from $\BOx{dk}$ to expected time $\BOx{dkf}$ by sparse matrix multiplication.
A classic sparse matrix construction is the Fast Johnson Lindenstrauss Transform (FJLT) $\Phi:\mathbb{R}^d\rightarrow\mathbb{R}^k$~\cite{Ailon09}.
In this chapter we show that the FJLT is in fact a Nearest Neighbor Preserving embedding with $k=\BOx{\epsilon^{-2}\log \lambda_S \log(2/\epsilon) }$ and sparsity parameter $f=\BOx{\log^2{n}/d}$ for $\BO{d\log d + \epsilon^{-2}\log^3n}$ evaluation time.

\section{Preliminaries}

\begin{definition}[Nearest Neighbor Preserving Embeddings~\cite{IN07}]
    \label{def:NNPE}
  Let $\epsilon, \delta \in (0,1)$, and let $S$ be a set of points in
  $\mathbb{R}^d$. For any point $x\in S$ let $x'$ denote the point closest
  to $x$ in $S\setminus\{x\}$ under the $\ell_2$ norm.  We say that an
  embedding $\Phi:\mathbb{R}^d\rightarrow \mathbb{R}^k$ is nearest
    neighbor preserving with parameters $(\epsilon, \delta)$ if for every 
  $x \in S$, the following properties hold with probability at least
  $\delta$:
  \begin{enumerate}
  \item $\min\limits_{z\in S\setminus\{x\}}\|\Phi x-\Phi
    z\|_2\leq(1+\epsilon)\|x-x'\|_2$, and 
  \item $\forall y\in S$:\\ If $\|x-y\|_2> (1+2\epsilon)\|x-x'\|_2$ then $\|\Phi x-\Phi y\|>(1+\epsilon)\|x-x'\|_2$.
  \end{enumerate}
\end{definition} 

\begin{definition}[Fast Johnson-Lindenstrauss Transform~\cite{Ailon09}]
  \label{def:FJLT}
Let an embedding $\Phi$ be defined by a $k \times d$ matrix
  $\Phi:=\textbf{PHD}$ as follows: $\textbf{D}$ is a random $\pm1$ diagonal $d \times d$
  matrix, $\textbf{H}$ is the $d$-dimensional Walsh-Hadamard transform, and $\textbf{P}$
  is a $k \times d$ matrix with entries 
  \[p_{ij}=\begin{cases}X\sim\mathcal{N}(0,f^{-1}) \qquad & \text{ w.p. } f\\
    0 & \text{ w.p. } 1-f \end{cases}~~~~.\] Here $f$ is the expected fraction of non-zero entries,
called the \emph{sparsity parameter} of the FJLT\footnote{We typeset the three matrices with bold to avoid confusion with the definitions of $D$ and $H$ already in use}.
\end{definition}

\begin{definition}[Doubling constant $\lambda_S$]
  \label{def:doubling}
  The \emph{doubling constant} $\lambda_S$ of a point set $S \subseteq \mathbb{R}^d$
  is defined to be the smallest integer $\lambda$ such that for every $x
  \in S$, and every $r > 0$, the point set $\B{x}{r}\cap S$ can be covered
  by at most $\lambda$ balls $\B{z}{r/2}$ where $z\in S$.  We refer to
  $\log_2 \lambda_S$ as the \emph{doubling dimension} of $S$.
\end{definition}

\section{Fast Nearest Neighbor Preserving Embeddings}

Given the definitions above let us state the claim:

\begin{theorem}[Fast Nearest Neighbor Preserving Embeddings]
  \label{thm:fast-near-neighbor}
  For any $S\subseteq \mathbb{R}^d, \epsilon \in (0,1)$ where $|S|=n$ and
  $\delta\in (0,1/2)$ for some
  \[k=\BO{\frac
    {\log{(2/\epsilon)}}{\epsilon^2}\log{(1/\delta)}\log{\lambda_S}}\]
  there exists a nearest neighbor preserving embedding $\Phi:\mathbb{R}^d\rightarrow\mathbb{R}^k$ with
  parameters $(\epsilon,1-\delta)$ requiring expected \[\BO{d\log(d)+\epsilon^{-2}\log^3{n}}\] operations.
\end{theorem}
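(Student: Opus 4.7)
The plan is to instantiate $\Phi$ as the FJLT from Definition~\ref{def:FJLT} with target dimension $k$ as stated and sparsity parameter $f = \Theta(\log^2 n / d)$, and then show that this construction inherits the shell-decomposition analysis of Indyk and Naor~\cite{IN07} while retaining the fast multiplication time of Ailon and Chazelle~\cite{Ailon09}. The running time bound is essentially structural: the product $\mathbf{HD}x$ costs $\BO{d \log d}$ via the Walsh--Hadamard transform, and the sparse projection $\mathbf{P}(\mathbf{HD}x)$ costs expected $\BO{kdf} = \BO{\epsilon^{-2} \log^3 n}$, giving the stated bound. The substance is to verify the two conditions of Definition~\ref{def:NNPE}.

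First, I would fix an arbitrary $x \in S$, write $r_x = \|x-x'\|_2$ for the nearest-neighbor distance, and organize the remaining points in $S$ into geometric shells $T_i = \{y \in S : \|x-y\|_2 \in [2^i r_x, 2^{i+1} r_x)\}$ for $i \ge 0$. The doubling-constant property implies, by a standard iterated covering argument, that $|T_i| \le \lambda_S^{\,\BO{i}}$. Condition (1) is the easy half: I apply the FJLT to the single vector $x - x'$, for which Ailon--Chazelle's analysis gives $\|\Phi(x-x')\|_2 \le (1+\epsilon)\|x-x'\|_2$ with probability at least $1 - \delta/2$, provided $k = \BOM{\epsilon^{-2}\log(1/\delta)}$ and $f = \BOM{\log^2 n / d}$ (the $\log^2 n$ absorbs the $\log(1/\delta) \log(n/\delta)$ term controlling $\|\mathbf{HD}(x-x')\|_\infty$).

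For condition (2) I follow the Indyk--Naor shell argument. For each shell $i$, I invoke the FJLT distributional guarantee with a \emph{shell-dependent} distortion $\epsilon_i$ chosen so that $(1-\epsilon_i)\,2^i \ge 1 + \epsilon$; a convenient choice is $\epsilon_i = 1/2$ for $i \ge \lceil \log_2(2/\epsilon) \rceil + 2$ and $\epsilon_i = \Theta(\epsilon)$ for smaller $i$. The JL-style tail for a single pair then must hold with failure probability $\le \delta \cdot \lambda_S^{-\BOM{i}} / i^2$, which by the Ailon--Chazelle concentration requires $k = \BOM{\epsilon_i^{-2}(i \log \lambda_S + \log(1/\delta))}$ per shell. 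Taking the maximum over $i$ — dominated by the small-$i$ regime where $\epsilon_i = \Theta(\epsilon)$ and $i = \BO{\log(2/\epsilon)}$ — yields the stated $k = \BO{\epsilon^{-2}\log(2/\epsilon)\log(1/\delta)\log \lambda_S}$. Taking a union bound over the $|T_i| \le \lambda_S^{\,\BO{i}}$ points in each shell and over $i$ contributes a further $\sum_i i^{-2}$ factor; together with the failure of condition (1) this keeps the total failure probability below $\delta$.

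The main obstacle I expect is the interaction between sparsity and the shell union bound: standard FJLT analyses state a per-pair guarantee conditioned on $\mathbf{HD}x$ being flat, and the flatness event must hold simultaneously for the $\BO{n}$ vectors $x-y$ used across all shells for all $x \in S$. I would handle this by proving flatness once, at failure probability $n^{-\BOM{1}}$ absorbed into the constant in $f = \BO{\log^2 n / d}$, and then treating the sparse Gaussian projection $\mathbf{P}$ as a standard (dense-equivalent) subgaussian JL matrix on the flattened vectors — at which point the Indyk--Naor analysis applies verbatim, only with a sub-gaussian rather than Gaussian ensemble, which does not affect the asymptotic bound on $k$. A minor secondary issue is tightening the constants in $f$ claimed in the chapter's introduction; I would verify these by examining the $\ell_\infty$ tail of $\mathbf{HD}(x-y)$ directly rather than invoking the generic Ailon--Chazelle bound.
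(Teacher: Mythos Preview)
Your overall strategy---establish flatness of $\mathbf{HD}$ once, then run the Indyk--Naor shell analysis on the sparse part $\mathbf{P}$---matches the paper, and your treatment of the flatness issue is sound. But the shell analysis has two real gaps. First, the bound $|T_i| \le \lambda_S^{\BO{i}}$ is false: the doubling constant controls \emph{covering numbers}, not cardinality, and a set $S \subset \mathbb{R}$ with $\lambda_S = O(1)$ can already put $n-1$ points in a single shell around $x$. The paper never union-bounds over points. It covers each annulus by $\lambda_S^{\BO{\log(r_i/\epsilon)}}$ balls of radius $\epsilon/4$, union-bounds over the ball \emph{centers}, and then invokes a separate spanning-tree argument (Lemma~\ref{lem:embedding-properties}) to control the deviation of \emph{all} points inside each ball from the image of its center, regardless of how many points the ball contains. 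This layer is not optional.

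Second, you are missing the shrinkage bound. With $\epsilon_i = 1/2$ for far shells the standard JL tail gives only $e^{-\Omega(k)}$ per center, constant in $i$; multiplied by a covering count that grows with $i$ the sum over shells diverges. The paper establishes a stronger large-contraction tail specifically for this purpose (property (P2) of Definition~\ref{def:IN-prop}, proved for the FJLT in Lemma~\ref{lm:fractionshrink}): $\Pr_{\mathbf P}[\|\Phi x\|_2 \le \alpha \|x\|_2] \le (3\alpha)^k$. Combined with the paper's \emph{arithmetic} annuli $r_i = 1 + (i+2)\epsilon$ (not dyadic), this yields $i^{-\Omega(k)}$ for $i > 1/\epsilon^2$, which beats the $\lambda_S^{\BO{\log i}}$ covering growth as soon as $k \gtrsim \log \lambda_S$. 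Without this bound you cannot keep $k$ free of $n$.
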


By picking $\delta$ we can fix the probability of successfully sampling an embedding that is nearest neighbor preserving and close to the expected number of operations.
Indyk and Naor presents a proof for embeddings that are constructed using
full $k\times d$ Gaussian matrices $G$ (see~\cite[Theorem 4.1]{IN07}).
Requiring $\BOx{kd}$ operations to embed each point.
Our contribution will be to show how their techniques can be applied to sparse embeddings.
We first identify the properties of a map that are sufficient for the Indyk-Naor proof to hold, and then construct sparse embeddings exhibiting the properties with a bounded probability of error.

\begin{definition}
  \label{def:IN-prop}
  Let $\epsilon \in (0,1)$. We say that a distribution over maps
  $\Phi=\textbf{PHD}: \mathbb{R}^d \to \mathbb{R}^k$ satisfies the \emph{Indyk-Naor property}
  for a set $S \subseteq \mathbb{R}^d$ with error $\eta \geq 0$ if with probability
  $1-\eta$ over the choice of $\textbf{D}$, the map satisfies that for all $x\in
  S$, $y\in S\cup \{0\}$
  \begin{enumerate}
  \item[(P1)] $\Pr_{\textbf{P}}[\|\Phi (x-y)\|_2 \not\in (1 \pm \epsilon ) \|x-y\|_2 ]\leq
    e^{-\BOMx{k\epsilon ^2}}$, and
  \item[(P2)] $\Pr_{\textbf{P}}[\|\Phi x\|_2\leq \epsilon \|x\|_2 ]\leq
    (3 \epsilon)^k$.
  \end{enumerate}
  Note that the above probabilities are taken only over the choices of
  $\textbf{P}$.
\end{definition}
By bounding $\eta$ with a constant $<1$ we will then be able to extend the proof presented by Indyk and Naor to show the correctness of Theorem~\ref{thm:fast-near-neighbor}.
We will then need to increase $k$ by a corresponding constant to make up for the $\eta$ loss, but the order of $k$ remains unchanged.

We will show that the FJLT\cite{Ailon09} satisfies the Indyk-Naor properties. 
The first property to satisfy is the normal Johnson-Lindenstrauss property, but it is required to hold also for all difference vectors possible from $S$.
The second property is stronger, when $\epsilon\ll1/3$.
We will be referring to $\emph{P1}$ and $\emph{P2}$ as the Distortion and Shrinkage bound respectively.

\subsection{Smoothness}
\label{sec:smooth-setting}

Before we show the two properties from Definition~\ref{def:IN-prop} we will bound the probability of the diagonal matrix $\textbf{D}$ being in a ``smooth'' setting.
Our later proofs of the Distortion and Shrinkage bounds will be conditioned on this.
We call a vector $x \in \mathbb{R}^d$ $s$-smooth if $\| x \|_\infty \leq s \, \|
x\|_2$. Note that since $\textbf{H}$ and $\textbf{D}$ are isometries
$\|\textbf{HD}x\|_2=\|x\|_2$.

\begin{definition}
  For any $s>0$ we say that a given diagonal matrix $\textbf{D}$ is in an $s$-smooth setting if
  \[\forall {x,y\in S\cup\{0\}}, \|\textbf{HD}(x-y)\|_\infty\leq s\|x-y\|_2.\]
\end{definition}

In this section we will bound the probability of $\textbf{D}$ \emph{not} being in an $s$-smooth setting for $s=\BO{\sqrt{\frac{\log n}d}}$, and then in Section~\ref{sec:distortion-bound} and \ref{sec:shrinkage-bound} we show how the Distortion and Shrinkage bounds follow from smoothness.

Let us first consider a single vector $z=(x-y)$ where $x,y\in S\cup \{0\}$.
Assume $\|\textbf{HD}z\|_\infty\geq s\|z\|_2$ then there is some entry $1\leq i\leq d$ such that $|(\textbf{HD}z)_i|\geq s\|z\|_2$.
Let $b=\tfrac 1 {\|z\|_2}$, then $|(\textbf{HD}z)_{i}b|\geq s$ and 
\[\Pr[\|\textbf{HD}z\|_\infty\geq s\|z\|_2]= \Pr[\|\textbf{HD}zb\|_\infty\geq s]\]
where $zb$ is a unit vector. So without loss of generality we can focus on unit vectors:

\begin{lemma}
  \label{lm:inftyx}
  Given a unit vector $x$ in $\mathbb{R}^d$, for any $s>0$ 
  \[ \Pr[\| \textbf{HD}x \|_\infty \geq s
  ] \leq 2de^{-s^2d/2}.\]
\begin{proof}
  See \cite{Ailon09} or \cite{mitzenmacher2005probability}(p.69).
  In short let $u=\textbf{HD}x=(u_1,..,u_d)^T$, so $u_1=\sum_i^dh_ix_i$ where the $h_i$ are i.i.d. uniformly from $\{d^{-1/2},-d^{-1/2}\}$.
  We use:
\begin{align*}
  \E[e^{sdu_1}]&=\prod_i^d\E[e^{sd h_i x_i}]=\prod_i^d\frac{1}{2}(e^{s\sqrt{d} x_i}+e^{-s\sqrt{d} x_i})\\
               &\leq\exp(s^2d\sum_{i=1}^dx_i^2/2)\\
               &=e^{s^2d\|x\|_2^2/2}
\end{align*}
In a standard Chernoff bound~(See Section~\ref{thm:markov}).
\end{proof}
\end{lemma}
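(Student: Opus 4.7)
My plan is to prove the bound by first reducing to a single coordinate via a union bound, then controlling each coordinate with a standard subgaussian/Chernoff argument that exploits the Rademacher randomness hidden in $\textbf{D}$.

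First I would fix an arbitrary row index $i \in \{1,\dots,d\}$ and write the $i$-th coordinate of $\textbf{HD}x$ as
\[
u_i \;=\; \sum_{j=1}^{d} \textbf{H}_{ij}\, \textbf{D}_{jj}\, x_j \;=\; \sum_{j=1}^{d} \left(\textbf{H}_{ij}\, x_j\right) \varepsilon_j,
\]
where $\varepsilon_j := \textbf{D}_{jj} \in \{-1,+1\}$ are i.i.d.\ unbiased Rademacher signs, and the coefficients $a_j := \textbf{H}_{ij} x_j$ satisfy $|a_j| = |x_j|/\sqrt{d}$ because every entry of the normalized Walsh--Hadamard matrix has magnitude $1/\sqrt{d}$. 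In particular, $\sum_j a_j^2 = \|x\|_2^2/d = 1/d$.

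Next I would apply the standard Chernoff/Hoeffding bound for a Rademacher sum to $u_i$. The moment generating function argument sketched in the excerpt gives, for any $t>0$,
\[
\mathbb{E}\bigl[e^{t u_i}\bigr] \;=\; \prod_{j=1}^{d} \cosh(t a_j) \;\leq\; \prod_{j=1}^{d} e^{t^2 a_j^2 / 2} \;=\; e^{t^2 /(2d)},
\]
using $\cosh(x) \leq e^{x^2/2}$. Applying Markov's inequality (Theorem~\ref{thm:markov}) to $e^{t u_i}$ and then optimizing in $t$ (the optimal choice is $t = sd$) yields $\Pr[u_i \geq s] \leq e^{-s^2 d/2}$. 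Symmetry of the Rademacher distribution gives the same bound for $-u_i$, so $\Pr[|u_i| \geq s] \leq 2 e^{-s^2 d/2}$.

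Finally I would conclude by a union bound over the $d$ coordinates:
\[
\Pr\bigl[\|\textbf{HD}x\|_\infty \geq s\bigr] \;=\; \Pr\Bigl[\max_{i\in[d]} |u_i| \geq s\Bigr] \;\leq\; \sum_{i=1}^{d} \Pr[|u_i| \geq s] \;\leq\; 2d\, e^{-s^2 d/2}.
\]
There is no real obstacle here; the only subtlety worth flagging is that it is precisely the interaction between the $\pm 1/\sqrt{d}$ entries of $\textbf{H}$ and the independent Rademacher diagonal $\textbf{D}$ that makes each $u_i$ a zero-mean subgaussian sum with the right variance proxy $1/d$, which is what yields the factor $d$ in the exponent and hence the useful regime $s = \Theta(\sqrt{(\log n)/d})$ needed in Section~\ref{sec:smooth-setting}.
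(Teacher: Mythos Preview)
Your argument is correct and follows essentially the same route as the paper: bound the moment generating function of a single coordinate via $\cosh(x)\le e^{x^2/2}$, apply Markov's inequality with the optimizing parameter $t=sd$, and finish with symmetry and a union bound over the $d$ coordinates. The paper's proof is terser---it writes $h_i\in\{-d^{-1/2},d^{-1/2}\}$ rather than separating the sign from the Hadamard entry, and leaves the symmetry and union-bound steps implicit---but the content is identical.
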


As a small contribution we now show a slightly better bound for our setting based on approximating the Kinchine inequality constants.
We use the fact that $s$ will be bounded away from $0$ like $\BOMx{d^{-1/2}}$.

\begin{lemma}
  \label{lm:kinchine}
  Given $c_s>2$ and a unit vector $x$ in $\mathbb{R}^d$, for $s\geq\sqrt{{c_s}/{d}}$ 
  \[ \Pr[\| \textbf{HD} x \|_\infty \geq s ] \leq de^{-s^2d\ln(\frac{c_se}{c_s+1})/2}.\]

\begin{proof}
  Let $u=\textbf{HD}x=(u_1,..,u_d)^T$, so $u_1=\sum_i^dh_ix_i$ where the $h_i$ are i.i.d. uniformly from $\{d^{-1/2},-d^{-1/2}\}$.
  Let $\pm x_i$ denote a uniformly random variable from $\{x_i,-x_i\}$.
  For all $p\geq1$ by Markov's inequality:
  \begin{equation}
    \label{eq:markovfastnn}
    \Pr[|u_1| \geq s]
    = \Pr\left[\left|\sum_{i=1}^d\pm x_i\right|^p \geq (\sqrt{d}s)^p\right]
    \leq \frac{\E\left[\left|\sum_{i=1}^d\pm x_i\right|^p\right]}{(\sqrt{d}s)^p}
  \end{equation}
  
  By the Kinchine inequality there is some constant $B_p$ such that:
  \[\E\left[\left|\sum_i^d\pm x_i\right|^p\right]\leq B_p\|x\|_2^p\]

  For $p>2$ Haagerup~\cite{Haagerup1981} showed that $B_p=2^{(p-2)/2}\frac{\Gamma(\frac{p+1}{2})}{\Gamma(3/2)}$ (See also~\cite{Nazarov2000}).
  Since $\Gamma(3/2)=\frac {\sqrt{\pi}} 2$ we can simplify this to
  \[B_p=2^{\frac p2}\frac{\Gamma(\frac{p+1}{2})}{\sqrt{\pi}}.\]
  
  Now we use that for $x>1$, $\Gamma(x)\leq \frac {x^{x-1/2}} {e^{x-1}}$~\cite{li2007inequalities}:
  \begin{align*}
    B_p&\leq\frac{\sqrt{2}^p}{\sqrt{\pi}}\frac{(\frac{p+1}{2})^{(\frac{p+1}2)-1/2}}{e^{(\frac{p+1}2)-1}}\\
       &=\frac{\sqrt{p+1}^p}{\sqrt{\pi}\sqrt{e}^{p-1}}\\
       &=\sqrt{\frac{e}{\pi}}\sqrt{\frac{p+1}e}^p\leq\sqrt{\frac{p+1}e}^p
  \end{align*}
  
  Plugging back into~\ref{eq:markovfastnn} we have:
\[    \Pr[|u_1| \geq s]\leq\sqrt{\frac{p+1}{es^2d}}^p\]
  We now set $p=s^2d$ to get:
  \begin{equation}
    \Pr[|u_1| \geq s]\leq\left(\frac 1 e+\frac 1{es^2d}\right)^{s^2d/2}
  \end{equation}
  Which gives the result when we use the constraint on $s$.
\end{proof}

\end{lemma}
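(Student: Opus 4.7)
The plan is to refine the Chernoff-type bound of Lemma~\ref{lm:inftyx} by estimating moments directly rather than working through the moment generating function. First I would reduce to a single coordinate: each entry $u_i = (\textbf{HD}x)_i$ is a Rademacher sum $\sum_j \pm x_j / \sqrt{d}$ with independent random signs, so by a union bound over the $d$ coordinates it suffices to show $\Pr[|u_1| \geq s] \leq \exp\bigl(-\tfrac{1}{2} s^2 d \ln(c_s e/(c_s+1))\bigr)$.

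Next, for an exponent $p \geq 2$ to be chosen later, I would apply Markov's inequality (Theorem~\ref{thm:markov}) to $|u_1|^p$, giving
\[ \Pr[|u_1| \geq s] \;=\; \Pr\!\left[\bigl|\textstyle\sum_j \pm x_j\bigr|^p \geq (\sqrt{d}\,s)^p\right] \;\leq\; \frac{\E\bigl[\bigl|\sum_j \pm x_j\bigr|^p\bigr]}{(\sqrt{d}\,s)^p}. \]
The numerator is the $p$-th absolute moment of a Rademacher sum, which is controlled by Khinchine's inequality with Haagerup's sharp constant $B_p = 2^{p/2}\Gamma(\tfrac{p+1}{2})/\sqrt{\pi}$; since $x$ is a unit vector this gives $\E[|\sum_j \pm x_j|^p] \leq B_p$. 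To turn this into something usable I would apply the standard Gamma estimate $\Gamma(x) \leq x^{x-1/2}/e^{x-1}$ for $x>1$, simplify, and absorb the $\sqrt{e/\pi}$ prefactor into the inequality to obtain the clean bound $B_p \leq ((p+1)/e)^{p/2}$.

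Finally, I would optimize by setting $p = s^2 d$; the hypothesis $s \geq \sqrt{c_s/d}$ with $c_s > 2$ guarantees $p > 2$, so Haagerup's constant is valid. Substitution yields
\[ \Pr[|u_1| \geq s] \;\leq\; \left(\frac{p+1}{e\,s^2 d}\right)^{\!p/2} \;=\; \left(\frac{1}{e} + \frac{1}{e\,s^2 d}\right)^{\!s^2 d/2}, \]
and using the constraint $s^2 d \geq c_s$ to bound the second summand above by $1/(e c_s)$ turns this into $\exp\bigl(-\tfrac{1}{2} s^2 d \ln(c_s e/(c_s+1))\bigr)$. A union bound over the $d$ coordinates supplies the leading factor of $d$ and completes the argument.

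The main obstacle will be the bookkeeping around Haagerup's constant — both verifying that the Gamma-function estimate is tight enough to beat the simple MGF bound of Lemma~\ref{lm:inftyx}, and checking that the choice $p = s^2 d$ is admissible (that is, $p > 2$) precisely in the regime $s \geq \sqrt{c_s/d}$ with $c_s > 2$. Everything else is essentially a single application of Markov's inequality together with a union bound.
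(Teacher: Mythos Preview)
Your proposal is correct and follows essentially the same argument as the paper: Markov on the $p$-th moment, Khinchine with Haagerup's sharp constant, the Gamma bound $\Gamma(x)\le x^{x-1/2}/e^{x-1}$ to obtain $B_p\le((p+1)/e)^{p/2}$, the choice $p=s^2d$, and a union bound over coordinates. You are slightly more explicit than the paper in spelling out that $c_s>2$ is exactly what makes $p>2$ (and hence Haagerup's constant applicable) and in the final step converting $(1/e+1/(es^2d))^{s^2d/2}$ into the exponential form via $s^2d\ge c_s$.
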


\subsection{Fixing $s$ and $f$}

Now let $s=d^{-1/2}\sqrt{c\ln(n^2d)}$.
We want to set $c$ as small as possible, but such that $D$ is in an $s$-smooth setting with probability at least $\frac {19} {20}$.
Using lemma~\ref{lm:inftyx} as in \cite{Ailon09} we can get $c=8$, but using lemma~\ref{lm:kinchine} with $c_s=c\ln(n^2d)$ for $x\neq y$ we get:
\begin{align*}
  \Pr[\exists x,y \in S\cup \boldsymbol{0},\|\textbf{HD}(x-y)\|_\infty\geq s\|x-y\|_2]&\leq \frac{n^2d}{e^{c\ln(n^2d)\ln(\frac{c_se}{c_s+1})/2}}\\
                                                                   &=e^{-c\ln(\frac{c_se}{c_s+1})/2}\enspace.
\end{align*}

\begin{wrapfigure}{O}[2cm]{5cm}
  \centering
  \includegraphics[width=5cm]{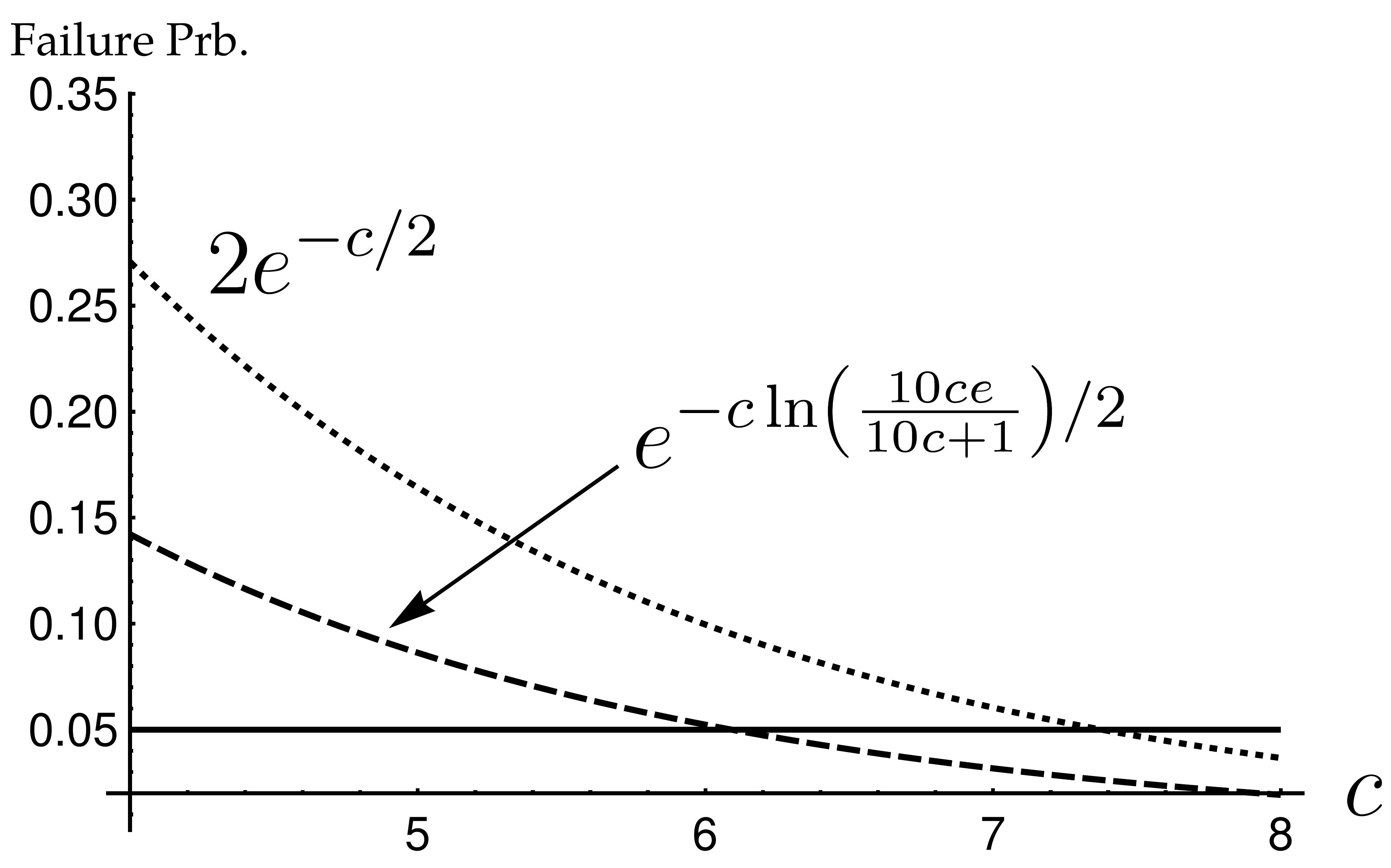}
  \caption{Bounds from lemma~\ref{lm:inftyx} and~\ref{lm:kinchine}}
  \label{fig:lemmacompare}
\end{wrapfigure}

Which evaluates to below $19/20$ for $c=7$, even if we only assume $\ln(n^2d)\geq1$.
Figure~\ref{fig:lemmacompare} shows a comparison between lemma \ref{lm:inftyx} and lemma~\ref{lm:kinchine} if we assume $\ln(n^2d)\geq10$.

In the following we will let $\Phi$ be a FJLT embedding constructed by setting $f=\min\left(c's^2,1\right)$ where $c'>0$ is some universal constant.
We will then show that if $\textbf{D}$ is $s$-smooth, this setting of $f$ makes $\Phi=\textbf{PDH} $ satisfy the distortion and shrinkage bounds.

\subsection{Distortion bound}
\label{sec:distortion-bound}

\begin{lemma}[Distortion bound] 
  For any $x,y\in S\cup \{0\}$ if $D$ is in an $s$-smooth setting, for $\epsilon>0$:
  \label{lm:distortion}
  \[\Pr\left[\|\Phi(x-y)\|_2\notin (1\pm\epsilon)\|x-y\|_2 \right]\leq e^{-\BOMx{k\epsilon ^2}}\]

  \begin{proof}
    The distortion bound is the main result in~\cite{Ailon09}.
    
\end{proof}
\end{lemma}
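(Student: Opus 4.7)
The plan is to reduce the distortion bound to the main concentration theorem of Ailon and Chazelle, which is invoked as a black box. Writing $z = x-y$ and exploiting that $\textbf{H}$ and $\textbf{D}$ are isometries, we have $\|\textbf{HD}z\|_2 = \|z\|_2$. The $s$-smoothness hypothesis then guarantees that the unit vector $u := \textbf{HD}z/\|z\|_2$ satisfies $\|u\|_\infty \leq s$. Dividing through by $\|z\|_2$, the statement of the lemma is equivalent to showing that, for every unit vector $u\in\mathbb{R}^d$ with $\|u\|_\infty\leq s$ and for $\textbf{P}$ the sparse Gaussian $k\times d$ matrix with sparsity parameter $f=\Theta(s^2)$,
\[\Pr\bigl[\,\|\textbf{P}u\|_2 \notin (1\pm\epsilon)\,\bigr]\leq e^{-\BOMx{k\epsilon^2}}.\]

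Next I would analyse the rows of $\textbf{P}$ independently. Row $i$ produces $(\textbf{P}u)_i = \sum_j b_{ij} g_{ij} u_j$, where the indicators $b_{ij}\sim \mathrm{Bernoulli}(f)$ and the Gaussians $g_{ij}\sim \mathcal{N}(0,f^{-1})$ are independent. Conditional on the sparsity pattern, $(\textbf{P}u)_i$ is Gaussian with variance $V_i = f^{-1}\sum_j b_{ij}u_j^2$, and $\E[V_i]=\|u\|_2^2=1$. The point of choosing $f=\Theta(s^2)$ is that $u_j^2/f \leq s^2/f = \BOx{1}$ for every coordinate $j$, so $V_i$ is a sum of bounded increments with mean $1$ and therefore concentrates sharply around $1$ by Bernstein's inequality. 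Combining this Bernstein control of the row-variances with standard $\chi^2$-concentration for the Gaussian part, and then multiplying MGFs across the $k$ independent rows, gives the desired $e^{-\BOMx{k\epsilon^2}}$ deviation bound for $\|\textbf{P}u\|_2^2$.

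The hard part, and the reason the proof is deferred to \cite{Ailon09}, is the joint bookkeeping of the Bernoulli supports and the Gaussian magnitudes inside a single row: the MGF of $V_i$ only admits a useful uniform bound once $f\geq c's^2$, which is tight because a single coordinate of size $s$ carried by a Bernoulli of smaller parameter would not be hit often enough to contribute its expected mass. Once that technical MGF estimate is in place—exactly \cite[Lemma 2.1]{Ailon09}—the reduction in the first paragraph immediately gives the claimed bound for arbitrary pairs $x,y\in S\cup\{0\}$, completing the proof.
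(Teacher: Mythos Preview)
Your proposal is correct and takes exactly the same approach as the paper: the paper's proof consists of a single sentence deferring entirely to \cite{Ailon09}, and your reduction to a smooth unit vector followed by the black-box invocation of Ailon--Chazelle's concentration lemma is precisely that citation unpacked. The additional row-by-row sketch you provide goes beyond what the paper writes but is an accurate summary of the argument in \cite{Ailon09}.
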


\subsection{Shrinkage bound}
\label{sec:shrinkage-bound}

The shrinkage bound is stronger than the distortion bound for large $\epsilon$.
We will need it later to confine the probability of any of an infinite series of events happening to a small constant.

\begin{lemma}[Shrinkage bound]
  For a fixed vector $x\in S$, if $D$ is in an $s$-smooth setting,
  for $\epsilon \in (0,1)$:
  \label{lm:fractionshrink}
  \[\Pr_{\textbf{P}}\left[\|\textbf{PHD} x\|_2\leq \epsilon \|x\|_2 \right]\leq \left({3}{\epsilon }\right)^k\]
\end{lemma}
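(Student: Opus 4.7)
The plan is to exploit the independence of the rows of $\textbf{P}$ and reduce the shrinkage bound to an anti-concentration statement for a single coordinate of $\Phi x$. Set $u = \textbf{HD}x$; since $\textbf{HD}$ is an isometry, $\|u\|_2 = \|x\|_2$, and by the $s$-smoothness hypothesis we have the crucial bound $\|u\|_\infty \leq s\|u\|_2$. Let $Y_i = (\textbf{P}u)_i$. The event $\|\textbf{P}u\|_2 \leq \epsilon\|u\|_2$ forces $|Y_i| \leq \epsilon \|u\|_2$ for \emph{every} row $i$, and because the rows of $\textbf{P}$ are mutually independent, so are the $Y_i$. Hence
\[
\Pr_\textbf{P}\!\left[\|\textbf{P}u\|_2 \leq \epsilon \|u\|_2\right] \;\leq\; \prod_{i=1}^{k} \Pr\!\left[|Y_i| \leq \epsilon \|u\|_2\right],
\]
and it suffices to prove the per-row estimate $\Pr[|Y_i| \leq \epsilon\|u\|_2] \leq 3\epsilon$.

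Next, decompose $p_{ij} = b_{ij} g_{ij}$ with $b_{ij} \sim \mathrm{Bern}(f)$ and $g_{ij} \sim \mathcal{N}(0, 1/f)$ independent. Conditional on the sparsity pattern $\mathbf{b}_i$, the variable $Y_i$ is centred Gaussian with variance $\sigma_i^2 = (1/f)\sum_j b_{ij}u_j^2$. Using the elementary density bound $\sup_y \tfrac{1}{\sqrt{2\pi\sigma^2}}e^{-y^2/2\sigma^2} = 1/\sqrt{2\pi \sigma^2}$,
\[
\Pr[|Y_i| \leq \epsilon\|u\|_2 \mid \mathbf{b}_i] \;\leq\; \frac{2\epsilon \|u\|_2}{\sqrt{2\pi}\,\sigma_i}.
\]
Taking expectations and writing $T_i = \sum_j b_{ij} u_j^2$ gives
\[
\Pr[|Y_i| \leq \epsilon\|u\|_2] \;\leq\; \frac{2\epsilon \|u\|_2 \sqrt{f}}{\sqrt{2\pi}}\;\E\!\left[\tfrac{1}{\sqrt{T_i}}\right].
\]

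The hard part is the moment bound $\E[1/\sqrt{T_i}] \leq C/(\|u\|_2\sqrt{f})$ for a constant $C$ small enough that $(2C/\sqrt{2\pi}) \leq 3$; once established, combining with the product bound above yields the claimed $(3\epsilon)^k$. To prove this, note $\E[T_i] = f\|u\|_2^2$ and, by smoothness, $\mathrm{Var}(T_i) \leq f\sum_j u_j^4 \leq f s^2 \|u\|_2^4 \leq (f^2/c')\|u\|_2^4$, using $f \geq c' s^2$. A Bernstein-type inequality therefore makes $T_i$ well-concentrated around $f\|u\|_2^2$: on the event $A = \{T_i \geq \tfrac{1}{2}f\|u\|_2^2\}$, the contribution to $\E[1/\sqrt{T_i}]$ is at most $\sqrt{2/(f\|u\|_2^2)}$. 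The delicate piece is controlling $\E[\mathbf{1}_{A^c}/\sqrt{T_i}]$, where $T_i$ can be tiny; here one splits the tail $\{T_i \leq t\}$ into layers and uses Bernstein (or an explicit Chernoff estimate on the moment generating function of $-T_i$, valid since each summand is bounded by $s^2\|u\|_2^2$) to show that $\Pr[T_i \leq t]$ decays fast enough to make $\int_0^{f\|u\|_2^2/2} \Pr[T_i \leq t] t^{-3/2}\,dt$ finite and of the right order. Choosing the universal constant $c'$ large enough forces the total to satisfy $C \leq 3\sqrt{2\pi}/2$, completing the proof.

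The main obstacle is thus step three: producing a quantitatively tight bound on $\E[1/\sqrt{T_i}]$, where smoothness enters essentially (it controls the per-coordinate contribution $u_j^2$) and the sparsity parameter $f = \Theta(s^2)$ is exactly what is needed so that the random variance $\sigma_i^2$ does not collapse. Everything else (independence across rows, the density bound for a conditional Gaussian, and the final product) is routine.
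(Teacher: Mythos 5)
The row-wise factorization is sound and is a genuinely different route from the paper's, which never decomposes the event coordinate by coordinate. However, your plan breaks down exactly at the step you flag as ``hard'': the moment $\E[1/\sqrt{T_i}]$ is not merely delicate to bound --- it is $+\infty$. Write $T_i = \sum_j b_{ij} u_j^2$ with $b_{ij} \sim \mathrm{Bern}(f)$. Then $\Pr[T_i = 0] = (1-f)^{|\mathrm{supp}(u)|} > 0$ whenever $f < 1$, so $\Pr[T_i \leq t] \geq \Pr[T_i = 0] > 0$ for every $t \geq 0$ and your integral $\int_0^{f\|u\|_2^2/2} \Pr[T_i \leq t]\, t^{-3/2}\, dt$ diverges at the lower endpoint; no Bernstein or MGF estimate can rescue it. The same obstruction shows up one level higher: conditioning on $T_i = 0$ forces $Y_i = 0$, so $\Pr[|Y_i| \leq \epsilon\|u\|_2] \geq \Pr[T_i = 0]$, a constant depending only on $f$ and $|\mathrm{supp}(u)|$. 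Hence the per-row estimate $\Pr[|Y_i| \leq \epsilon\|u\|_2] \leq 3\epsilon$ is false once $\epsilon$ is taken smaller than that constant, and the product form $(3\epsilon)^k$ cannot be obtained this way without an additive term.

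The paper sidesteps this by conditioning rather than averaging. Lemma~\ref{lm:non-to-std} is stated \emph{conditionally} on the event $\{\forall i: Z_i \geq f/2\}$: under that event each coordinate's conditional variance is bounded below, stochastic domination reduces $\|\textbf{PHD}x\|_2^2$ to a sum of i.i.d.\ standard Gaussians, and the final $(3\epsilon)^k$ comes from a Chernoff-style optimization of $\E[e^{-s\sum X_i^2}]$. The failure probability of the conditioning (at most $1/20$, by Lemma~3 of~\cite{Ailon09}) is then charged to the overall probability budget for the Indyk-Naor property, not to the bound in the lemma itself. If you want to salvage your row-wise strategy, you must make the same move: first condition on $T_i \geq f/2$ for all $i$ (or handle the small-$T_i$ event as an additive error outside the product), and only then apply the Gaussian density bound. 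Without that conditioning, the quantity you are trying to bound is infinite.
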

Following \cite{Ailon09} we rewrite $\|\textbf{PHD}\|_2$ as $\sum_{i=1}^ky_i$.
Here $y_i\sim\mathcal{N}(0,f^-1)b_iu_i$ where $b_i$ is $1$ w.p $f$ and $0$ otherwise and $u=HDx$.
Define a random variable $Z_i=b_iu_i^2$ and we see that $y_i=\mathcal{N}(0,Z_i/f)$.
By the regular scaling of Gaussian with their standard deviation (See Lemma~\ref{lm:generalizednonstd}), it is clear that for an upper bound on:
\[\Pr[\sum_i^ky_i^2\leq t]\]  
we only need to lower bound the $Z_i$.
I.e.
\begin{lemma}
  \label{lm:non-to-std}
If $\forall i\in[k]$ $Z_i\geq f/2$ and $G$ is a full Gaussian matrix (entries sampled from $\mathcal{N}(0,1)$), then $\forall t\geq 0$:
\[\Pr[\|\textbf{PHD}x\|_2^2\leq t]\leq\Pr[\|Gx\|^2\leq 2t]\]

\begin{proof}
For $i\in\{1,\cdots,k\}$ assume $Z_i\geq \frac f2$ and let $X_i\sim\mathcal{N}(0,1)$, then:
\begin{align*}
  \Pr\left[\|\textbf{PHD}x\|_2^2\leq t\right]&=\Pr\left[\sum_{i=1}^ky_i^2\leq t\right]\\
              &\leq \Pr\left[\sum\left(\sqrt{\dfrac 1 2}X_i\right)^2\leq t\right]=\Pr\left[\sum X_i^2\leq 2t\right].
\end{align*}
Where the first equality follows the rewriting above and the inequality from the bound on the $Z_i$.
\end{proof}
\end{lemma}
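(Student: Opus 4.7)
The plan is a coupling argument that conditions on the sparsity pattern of $\mathbf{P}$. I would split each entry as $p_{ij}=b_{ij}g_{ij}$ where $b_{ij}\in\{0,1\}$ is Bernoulli with mean $f$ and $g_{ij}\sim\mathcal{N}(0,1/f)$ is an independent magnitude. With $u:=\mathbf{HD}x$ (which is fixed, since we have already conditioned on $\mathbf{D}$), the $i$-th coordinate of $\mathbf{PHD}x$ is $y_i=\sum_j b_{ij}g_{ij}u_j$, which, conditional on the pattern $\{b_{ij}\}_j$, is a centered Gaussian with variance $Z_i/f$ for $Z_i=\sum_j b_{ij}u_j^2$. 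Hence on the same probability space I can write $y_i=\sqrt{Z_i/f}\,X_i$ with $X_1,\dots,X_k$ i.i.d.\ standard normals that are independent of the sparsity pattern.

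Next, under the hypothesis that $Z_i\ge f/2$ for every $i$, the scale factor satisfies $\sqrt{Z_i/f}\ge 1/\sqrt{2}$ pointwise, so $y_i^2\ge X_i^2/2$ almost surely and summing over $i$ gives $\|\mathbf{PHD}x\|_2^2\ge \tfrac12\sum_{i=1}^k X_i^2$. This event inclusion yields, by monotonicity of probability,
\[
\Pr\bigl[\|\mathbf{PHD}x\|_2^2\le t\bigr]\;\le\;\Pr\Bigl[\tsum_{i=1}^k X_i^2\le 2t\Bigr].
\]
To identify the right-hand side with $\Pr[\|Gx\|_2^2\le 2t]$, I invoke $2$-stability: each row of $G$ is independent and gives $(Gx)_i=\sum_j G_{ij}x_j\sim\mathcal{N}(0,\|x\|_2^2)$, so in the unit-norm regime in which the shrinkage bound is applied (the only regime in which the dimensionless hypothesis $Z_i\ge f/2$ is meaningful), the coordinates of $Gx$ are i.i.d.\ $\mathcal{N}(0,1)$ and therefore $\|Gx\|_2^2$ has the same $\chi_k^2$ distribution as $\sum_i X_i^2$. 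Chaining the equality with the previous inequality gives the claimed bound.

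The only delicate step is the conditioning: I must ensure that the $X_i$ used to represent $y_i$ on the left are genuinely independent of the $Z_i$, which is exactly what the two-stage decomposition $p_{ij}=b_{ij}g_{ij}$ delivers, since the magnitudes $g_{ij}$ remain fresh independent Gaussians after the pattern is revealed. Everything else is a pointwise inequality followed by taking expectations, so there is no real analytic obstacle beyond this bookkeeping.
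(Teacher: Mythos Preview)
Your proof is correct and follows essentially the same approach as the paper: both arguments reduce to the observation that since $y_i$ is (conditionally on the sparsity pattern) a centered Gaussian of variance $Z_i/f\ge 1/2$, the sum $\sum y_i^2$ stochastically dominates $\tfrac12\sum X_i^2$. The paper invokes an appendix lemma (Lemma~\ref{lm:generalizednonstd}) on monotonicity of $\Pr[\sum \mathcal{N}(0,\sigma_i^2)^2\le t]$ in the variances, whereas you make the coupling $y_i=\sqrt{Z_i/f}\,X_i$ explicit and get the pointwise inequality $y_i^2\ge X_i^2/2$ directly; your remark about the unit-norm convention for identifying $\sum X_i^2$ with $\|Gx\|_2^2$ is also exactly the implicit assumption the paper makes.
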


In the $s$ smooth setting the most extreme concentration permitted still implies that $Z_i\sim \mathcal{B}(s^{-2},f)s^2$ (See \cite{Ailon09}). 
So $\Pr\left[ \forall i \in [k], Z_i\geq f/2 \right] \geq \frac{19}{20}$(Lemma 3 of~\cite{Ailon09}).
If we combine this bound with Lemma~\ref{lm:non-to-std} we are ready to prove Lemma~\ref{lm:fractionshrink}.

\begin{proof}
Let $z\in S$ and let $x=z\|z\|_2^{-1}$.
    \begin{align*}
      &\Pr\left[\|\textbf{PHD}z\|_2\leq \epsilon \|z\|_2\right]=\Pr\left[\|\textbf{PHD}x\|_2\leq {\epsilon }\right] =\\
      &\Pr\left[\|\textbf{PHD}x\|_2^2\leq {\epsilon ^2}\right] \leq \Pr\left[\sum_i^kX_i^2\leq 2 {\epsilon ^2}\right] \tag{by
        Lemma~\ref{lm:non-to-std} }
    \end{align*}
    Where $X_i\sim\mathcal{N}(0,1)$. In general for $s,t>0$ we know:

    \begin{align*}
      \Pr\left[\sum_i^kX_i^2\leq t\right]&=\Pr\left[e^{-s\sum X_i^2} \geq e^{-st}\right]\leq \frac {\E\left[e^{-s\sum X_i^2}\right]}{e^{-st}}\\
                                         &= e^{st}\prod_{i=1}^k\E\left[e^{-sX_i^2}\right]=e^{st}(1+2s)^{-k/2}
    \end{align*}

    Where the last step uses that $\E\left[e^{-sX_i^2}\right]=\frac 1 {\sqrt{1+2s}}$ for $-1/2 \leq s \leq \infty$.(See ~\cite{Dasgupta2003})

    Now to minimize we differentiate w.r.t s:

    \begin{align*}
      &t e^{st}(1+2s)^{-\frac{k}{2}}+2(-\frac k2)e^{st}(1+2s)^{-\frac k2 -1}=0 \Leftrightarrow t=k(1+2s)^{-1}\\
      &\Rightarrow s=(k/t-1)/2
    \end{align*}

    So $e^{st}(1+2s)^{-k/2} =   e^{(k-t)/2}(k/t)^{-k/2}= e^{-t/2}(\frac k {et})^{-k/2} \leq(et)^{k/2}$.
    Now plug in $t=2\epsilon^2$ and we have
   \[\Pr\left[\sum_i^kX_i^2\leq 2\epsilon^2\right]\leq (2e\epsilon^2)^{k/2}\leq(3\epsilon )^k\]
    
\end{proof}

\subsection{Embedding properties}

We have now seen how the Distortion and Shrinkage bounds follow from two events:

First $D$ must be in an $s$-smooth setting. Secondly all $Z_i$ must be within a constant factor of $f$.
By Lemma~\ref{lm:kinchine} the first event happens with probability at least $19/20$ when setting $s=\sqrt{7\frac{\lg(n^2d)}{d}}$, assuming $n^2d\ge 3.7$.
By choosing $f$ corresponding to $s$ as in\cite{Ailon09}, the second event occurs with probability at least $19/20$(See Lemma 3 of~\cite{Ailon09}).
For the chosen parameters $\Phi=\textbf{PHD}$ satisfies the Indyk-Naor properties with probability $\left(\frac{19}{20}\right)^2>9/10$.

We can then move on to prove Theorem~\ref{thm:fast-near-neighbor} by showing:

\begin{theorem}[Fast Nearest Neighbor Preserving Embeddings]
  \label{thm:fast-near-neighbor-by-FJLT}
  For any $S\subseteq \mathbb{R}^d, \epsilon,\delta \in (0,1)$ and some 
  \[k=\BO{\frac
      {\log{(2/\epsilon)}}{\epsilon^2}\log{(1/\delta)}\log{\lambda_S}}\enspace.\]
  Let $\Phi=\textbf{PHD}$ be a FJLT matrix with expected \[\BO{d\log(d)+\epsilon^{-2}\log^3(n)\log(2/\epsilon)}\] embedding time.
  For every $x\in S $ let $x'$ denotes the point closest  to $x$ in $S\setminus\{x\}$ under $\ell_2$.
  With probability at least $\delta$
    \begin{enumerate}
  \item $\min\limits_{z\in S\setminus\{x\}}\|\Phi x-\Phi z\|_2\leq(1+\epsilon)\|x-x'\|_2$, and 
  \item if $\|x-y\|_2> (1+2\epsilon)\|x-x'\|_2$ for some $y \in S$ then $\|\Phi x-\Phi y\|>(1+\epsilon)\|x-x'\|_2$.
  \end{enumerate}

  \begin{figure}
    \centering
 \includegraphics[width=0.3\paperwidth]{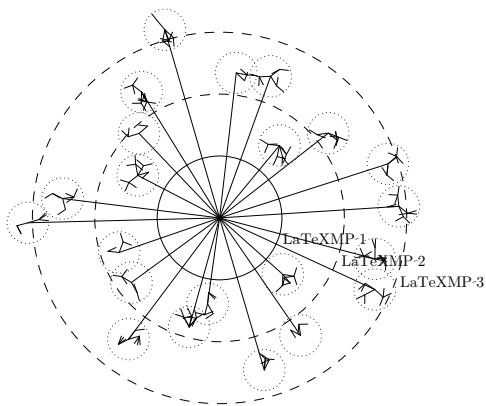}
  \caption{An illustration of the spanning tree construction used in the proof of Theorem~\ref{thm:fast-near-neighbor-by-FJLT}.}
    \label{fig:spantree}
\end{figure}

\begin{proof}
  Let $\Phi=\textbf{PHD}$ follow Definition~\ref{def:FJLT} with $f=\min\{\BOx{d^{-1}\ln(n^2d)},1\}$ so $\Phi$ satisfies the Indyk-Naor properties as pr. Definiton~\ref{def:IN-prop} with probability at least $9/10$.
  The proof then follows from~\cite[Theorem 4.1]{IN07}.
  For completeness we include an extended version of the proof here.
  For familiar readers, the only difference in this version is in making the spanning tree construction explicit.

  Without loss of generality let $x=0$ and $\|x'\|_2=1$.
  To show the first property let $y\in S$ satisfy $\|y\|_2=1$, then by the distortion bound, $\Pr[\|\Phi y\|\geq(1+\epsilon)]\leq e^{-\BOMx{k\epsilon^2}}$.
  So for some universal constant $C>0$, setting $k\geq C\ln(1/\delta)/\epsilon ^2$ we get:
  \[\Pr[\min\limits_{z\in S\setminus\{x\}}\|\Phi x-\Phi z\|_2>(1+\epsilon)\|x-x'\|_2]<\delta/2\]

  To show the second property we construct a spanning tree of $\left(S\setminus \B{x}{1+2\epsilon}\right)\cup \{0\}$ with $0$ at the root.
  Let $r_i=1+(i+2)\epsilon$. Consider the annuli:
  \[ A_i=S\cap\B{0}{r_{i+1}}\setminus\B{0}{r_{i}},\text{ for } i\geq1 \]
  By the definition of $\lambda_S$, for any $i$ we can construct a minimal set $S_i\subseteq S$ such that $A_i\subseteq \cup_{t\in S_i}\B{t}{\epsilon/4}$ and $|S_i|\leq \log_2(\frac {4r_i} \epsilon)$.
  The first level of the tree consists of an edge between $x$ and each $t\in S_i$ for all $i\geq0$.
  From each $t$ a spanning tree is build on the points in $\B{t}{\epsilon/4}$ with $t$ at the root, as described in lemma~\ref{lem:embedding-properties}.
  Figure~\ref{fig:spantree} illustrates the construction.
  Some ordering is imposed on the $t$ points so points in overlapping balls are only spanned once.

  We can then restate the second property as $\exists i\geq0,\exists x\in A_i . \|\Phi x\|\leq 1+\epsilon $, at least one of two events took place:
  \begin{enumerate}
  \item $\exists i\geq0,\exists t\in S_i. \|\Phi t\|_2\leq 1+\epsilon+\frac \epsilon 4 (1+\sqrt{i})$
  \item $\exists i\geq0,\exists t\in S_i, \exists x \in \B{t}{\frac \epsilon 4}\cap S. \Phi x \notin \B{\Phi t,(1+\sqrt{i})\frac \epsilon 4}$
  \end{enumerate}
  Since $\|t\|_2\geq r_i - \frac \epsilon 4$ there is some constant $C$ such that:
  \[
    \frac {\|\Phi t\|}{\|t\|_2} = \frac {1 + (1+\sqrt{i})\epsilon/4  + \epsilon } { 1+ (2+i)\epsilon - \epsilon /4} \leq
    \begin{cases}
      1-\epsilon/8 \text{ for } i\leq 1/\epsilon^2\\
      C/\sqrt{i} \text{ for } i > 1/\epsilon^2
  \end{cases}
\]

Fix some $i$. Using the distortion and shrinkage bounds:
\begin{align*}
  &\Pr\left[\exists t\in S_i, \|\Phi t\|_2\leq1+ (1+\sqrt{i})\epsilon/4 + \epsilon\right]\\
           &\leq\begin{cases}
             \lambda_x^{\log_2(4r_i/\epsilon)}e^{-ck\epsilon^2}\text{ for } i\leq 1/\epsilon ^2\\
             \lambda_x^{\log_2(4r_i/\epsilon)}(3C/\sqrt{i})^k\text{ for } i> 1/\epsilon ^2
           \end{cases}\\
             &\leq\begin{cases}
             e^{-c'k\epsilon ^2}\text{ for } i\leq 1/\epsilon ^2\\
             i^{-c'k} \text{ for } i> 1/\epsilon ^2
           \end{cases}\\
\end{align*}

For $k\geq \frac {c''} {\epsilon^2}\log(2/\epsilon)\log(\lambda_S)$ where $c''$ is some universal constant.
For the second event lemma~\ref{lem:embedding-properties} gives:
\[\Pr[\exists t, \exists y \in \B{t}{\frac \epsilon 4}\cap S, \Phi y \notin \B{\Phi t}{(1+\sqrt{i})\frac \epsilon 4}]\leq\lambda_x^{\log_2(4r_i/4)}e^{-ck(1+i)}\leq e^{-c'k(1+i)}\]
So there is some $c'''$ where the first event is most likely. Hence:
\[
\Pr[\exists x\in A_i. \|\Phi x\|_2\leq 1+ \epsilon ]\leq  
\begin{cases}
  2e^{-c'''k\epsilon ^2}\text{ for } i\leq 1/\epsilon ^2\\
  2i^{-c'''k} \text{ for } i> 1/\epsilon ^2
\end{cases}\\
\]

Summing over all the $i$ we get:
\begin{align*}
  \Pr[\exists i&\geq 0, \exists x \in A_i . \|\Phi S\|_2 \leq 1+\epsilon ] = \sum_i^{\infty} \Pr[\exists x\in A_i. \|\Phi x\|_2\leq 1+ \epsilon ]\\
         &\leq \frac 2 {\epsilon^2} e^{-c'''k\epsilon ^2} + \sum_{i>1/{\epsilon ^2}}  2i^{-c'''k}\leq \delta/2
\end{align*}

for some $k\geq \log(1/\delta)\frac {\tilde{c}} {\epsilon^2}\log(2/\epsilon)\log(\lambda_S)$ where $\tilde{c}$ is some large enough constant.
The number of operations required for embedding $x$ is $\BOx{d}$ for the diagonal matrix $D$, $\BOx{d\log{d}}$ for $H$ using the Walsh-Hadamard transform~\cite{Fino:1976:UMT:1311952.1312575} and finally $\BO{|\textbf{P}|}$ where $|\textbf{P}|$ is the number of non-zero entries.
$|\textbf{P}|\sim\mathcal{B}(kd,f)$ so by our setting of $f$:
\begin{align*}
  \E\left[|\textbf{P}|\right]&=kdf\\
                    &=\BOx{\epsilon^{-2}\log(\lambda_S)\log(2/\epsilon)\log^2(n)}\\
                    &=\BOx{\epsilon^{-2}\log^3(n)\log(2/\epsilon)}
\end{align*}
\end{proof}
\end{theorem}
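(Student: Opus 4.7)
My plan is to take $\Phi = \mathbf{PHD}$ to be a Fast Johnson--Lindenstrauss Transform (Definition~\ref{def:FJLT}) with $k$ rows and sparsity parameter $f = \Theta(\log^2(n)/d)$. Since applying $\mathbf{D}$ costs $\BOx{d}$ and $\mathbf{H}$ costs $\BOx{d \log d}$ via the Walsh--Hadamard transform, the bottleneck is the multiplication by $\mathbf{P}$, whose number of non-zero entries is binomially distributed with expectation $kdf = \BOx{\epsilon^{-2} \log^3(n) \log(2/\epsilon)}$, matching the stated time bound. The task is then to show nearest-neighbor preservation with the stated $k$.

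The strategy is to reduce to two probabilistic guarantees for every difference vector $v = x - y$ with $x, y \in S \cup \{0\}$: a distortion bound $\Pr[\| \Phi v \|_2 \notin (1 \pm \epsilon) \|v\|_2] \leq e^{-\BOMx{k \epsilon^2}}$ and a stronger shrinkage bound $\Pr[\| \Phi v \|_2 \leq \epsilon \|v\|_2] \leq (3\epsilon)^k$. Both properties will be proved conditionally on the event that $\mathbf{HD} v$ is $s$-smooth, i.e. $\|\mathbf{HD} v\|_\infty \leq s \|v\|_2$ for $s = \Theta(\sqrt{\log(n^2 d)/d})$. Since the entries of $\mathbf{D}$ are random signs, each coordinate of $\mathbf{HD} v$ is a scaled Rademacher sum to which a Khintchine-type tail bound applies; a union bound over $d$ coordinates and $\BOx{n^2}$ difference vectors shows that smoothness holds simultaneously for all pairs with probability bounded below by a constant. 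Given smoothness, the coordinates of $\mathbf{P}(\mathbf{HD} v)$ are Gaussians whose random variances $Z_i/f$ concentrate around $1$ by a Chernoff bound, so $\mathbf{P}$ acts on smoothed inputs much like a full Gaussian matrix, and both the distortion and shrinkage bounds follow from standard concentration of chi-squared random variables.

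The nearest-neighbor step then follows the Indyk--Naor framework: fix $x \in S$ with nearest neighbor $x'$ and normalize so $\|x - x'\|_2 = 1$. The first condition of Definition~\ref{def:NNPE} is immediate from the distortion bound applied to the pair $(x, x')$. For the second condition, I partition $S \setminus B(x, 1 + 2\epsilon)$ into annuli $A_i = S \cap B(x, r_{i+1}) \setminus B(x, r_i)$ where $r_i = 1 + (i+2)\epsilon$, and for each $i$ use the doubling property of $S$ to cover $A_i$ with a net $S_i \subseteq S$ of size at most $\lambda_S^{\log_2(4 r_i/\epsilon)}$, where each point of $A_i$ lies within $\epsilon/4$ of some $t \in S_i$. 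A point $y \in A_i$ can shrink to $\|\Phi(x-y)\|_2 \leq 1 + \epsilon$ only if either its net point $t$ shrinks to roughly the same radius or the pair $(y, t)$ exhibits large distortion. The distortion bound is used for the pair $(y, t)$, and for the net points the dimension-$k$ guarantees are combined as follows: for small $i \leq 1/\epsilon^2$ the distortion bound gives per-point failure $e^{-\BOMx{k\epsilon^2}}$, while for large $i > 1/\epsilon^2$ the required ratio $\|\Phi t\|_2 / \|t\|_2$ shrinks like $\Theta(1/\sqrt{i})$, at which point the shrinkage bound gives per-point failure $\BOx{i^{-\BOMx{k}}}$. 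Summing over all $i$ and over the $\lambda_S^{\BOx{\log(r_i/\epsilon)}}$ net points, the series converges geometrically, and choosing $k = \BOx{\epsilon^{-2} \log(2/\epsilon) \log(1/\delta) \log \lambda_S}$ drives the total failure below $\delta$.

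The main obstacle I expect is proving the shrinkage bound (P2) for the sparse $\mathbf{P}$, because for a generic input the sparsity can conspire to make $\mathbf{P}v$ collapse with probability larger than $(3\epsilon)^k$ (e.g.\ when $v$ is concentrated on a single coordinate, which with the wrong sparsity gets zeroed out). The $\mathbf{HD}$ preconditioning is precisely what overcomes this: smoothness forces the mass of $\mathbf{HD}v$ to spread over $\BOMx{d/\log n}$ coordinates, so each of the $k$ output coordinates receives, with high probability, enough contributions from the non-zero entries of $\mathbf{P}$ to behave like a Gaussian of variance $\Theta(\|v\|_2^2 / k)$. Once this is in hand, the shrinkage bound reduces to the elementary chi-squared tail estimate $\Pr[\sum_{i=1}^k X_i^2 \leq t] \leq (et/k)^{k/2}$ with $X_i \sim \mathcal{N}(0,1)$ and $t = 2\epsilon^2$, yielding the required $(3\epsilon)^k$.
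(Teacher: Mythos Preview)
Your proposal follows the paper's approach closely and is correct in its main structure: FJLT with smoothness conditioning on $\mathbf{D}$ to obtain the distortion and shrinkage bounds, then the Indyk--Naor annulus decomposition with doubling-constant nets and the small/large $i$ regime split. One step, however, is underspecified in a way that would not go through as written. For the second failure event you say ``the distortion bound is used for the pair $(y,t)$'' and then sum only over the $\lambda_S^{O(\log(r_i/\epsilon))}$ net points; but each ball $B(t,\epsilon/4)$ may contain up to $n$ points $y$, and for small $i$ the required stretch factor $1+\sqrt{i}$ is too close to $1$ for a per-pair bound $e^{-\Omega(ki)}$ to absorb a union bound over them --- doing so would force $k=\Omega(\log n)$, not $\Omega(\log\lambda_S)$.

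The paper (following Indyk--Naor) resolves this with a \emph{second} recursive doubling cover inside each ball (Lemma~\ref{lem:embedding-properties}): one builds a spanning tree on $B(t,\epsilon/4)\cap S$ rooted at $t$, with at most $\lambda_S^j$ edges of length at most $2^{-j}\cdot(\epsilon/4)$ at level $j$, writes each $y-t$ as a telescoping sum along its root-to-$y$ path, and union-bounds over edges at each level rather than over points. This yields $\Pr[\exists y\in B(t,\epsilon/4)\cap S:\Phi y\notin B(\Phi t,(1+\sqrt{i})\epsilon/4)]\le e^{-ck(1+i)}$ once $k\ge C\log\lambda_S$, and it is precisely this inner spanning-tree step where the $\log\lambda_S$ (rather than $\log n$) dependence of $k$ is actually won. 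Your outline should make this recursion explicit rather than collapse it to a single distortion-bound application.
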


\begin{lemma}
  \label{lem:embedding-properties}
  Let $S$ be a subset of the unit ball in $\mathbb{R}^d$, including $0$.
  Then there exists universal constants $c,C>0$ such that for $\epsilon>0$ and $k\geq C\log{\lambda_S}$:
  \[
    \Pr[\exists x\in S, \|\textbf{PHD}x\|_2\geq(1+\epsilon)]\leq e^{-ck(1+\epsilon)^2}.
  \]
  \begin{proof}
    The proof is given in ~\cite[Lemma 4.2]{IN07}.
    We include a spanning tree version here for completeness.
    We build a spanning tree $T$ on $S$ with root $0$ in the following way:
    Define sets for each possible level of the tree, $L_0,L_1,\ldots\subseteq S$.
    Let $L_0={0}$.
    To build $L_{j+1}$, for every point $t\in L_j$ let $S_{t}$ be the minimal size set such that $\cup_{s\in S_{t}}\B{t}{2^{-j-1}}\cap S$ covers all of $\B{t}{2^{-j}}\cap S$.
    By the definition of doubling constant we know that $|S_{t}|\leq \lambda_S$.
    Connect $t$ to every point in $S_t$, if some $S_t$ sets overlap only a single connection is made to avoid cycles.
    Let $L_{j+1} = \cup_{t\in L_j}S_t$. We observe that $0<|L_j|\leq \lambda_S^j$.

    Now let $E(T)$ denote the edges in $T$.
    Let $E_j$ be the subset of $E(T)$ with one node in $L_{j}$ and the other in $L_{j+1}$, by the construction of the tree $\forall e\in E_j$ we have $\|e\|_2\leq 2^{-j+1}$.
    For every $x\in S$ denote the unique path from $0$ to $x$ in $T$ by $p(x)\subseteq E(T)$.
    For $0\leq j\leq |p(x)|$ let $p_j(x)\in L_j$ be the vertex on the path at level $j$, for $j>|p(x)|$ let $p_j(x)=x$.
    We can then compose $x$ as $\sum_{j=0}^{\infty}\left(p_{j+1}(x)-p_{j}(x)\right)$, the first $|p(x)|$ steps corresponding to edges in $E(T)$, and the remaining steps having $0$ contribution.
    The argument then follows~\cite{IN07}:
    \begin{align*}
      \Pr[\exists x\in& S, \|\textbf{PHD}x\|_2\geq(1+\epsilon)]\\
              &\leq\Pr\left[\exists x \in S, \exists j\geq 0, \|\textbf{PHD}(p_{j+1}(x)-p_{j}(x))\|_2\geq\frac {(1+\epsilon)} 3 \left(\frac 3 2 \right)^{-j}\right] \\
              &=\sum_{j=0}^\infty \Pr\left[\exists e\in E_j, \|\textbf{PHD}e\|_2\geq \frac {1+\epsilon} 3 \left(\frac 3 2 \right)^{-j}\right]\\
              &\leq\sum_{j=0}^\infty \Pr\left[\exists e\in E_j, \|\textbf{PHD}e\|_2\geq \frac {1+\epsilon}6 \left( \frac 43 \right)^j\|e\|_2\right]\\
              &\leq\sum_{j=0}^\infty \lambda_S^{2j}\Pr\left[\|\textbf{PHD}x\|_2\geq1+\frac {1+\epsilon}6 \left( \frac 43 \right)^j-1\right]\text{,for any unit vector $x$}\\
              &\leq\sum_{j=0}^\infty \lambda_S^{2j}e^{-ck(1+\epsilon)^2(4/3)^{2j}/100}\leq e^{-ck(1+\epsilon)^2}
    \end{align*}
    For $k\geq C\log\lambda_S+1$.
    Crucially the second last step uses that$|E(T)|=|S|-1$.
    We can then use Lemma~\ref{lm:kinchine} to see that $D$ is in a smooth setting with constant probability, for our setting of $s$ at least $\frac {19}{20}$.
    The last step then follows from Lemma~\ref{lm:distortion}.
  \end{proof}
  
\end{lemma}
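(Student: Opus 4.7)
The plan is to prove the lemma by a hierarchical chaining argument that exploits the doubling constant $\lambda_S$. First I would construct a spanning tree $T$ of $S$ rooted at $0$ in levels $L_0,L_1,L_2,\ldots$: set $L_0=\{0\}$, and given $L_j$, for each $t\in L_j$ pick a minimal $(2^{-j-1})$-net of $\B{t}{2^{-j}}\cap S$ inside $S$; take $L_{j+1}$ to be the union of these nets (breaking overlaps to avoid cycles), and make each new point a child of its enclosing $t$. By the definition of $\lambda_S$ each expansion multiplies the node count by at most $\lambda_S$, so $|L_j|\le \lambda_S^{j}$, and every tree edge between $L_j$ and $L_{j+1}$ has length at most $2^{-j+1}$.

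Next I would use this tree to telescope any $x\in S$. Following the unique root-to-$x$ path $0=p_0(x),p_1(x),\ldots$, write $x=\sum_{j\ge 0}(p_{j+1}(x)-p_j(x))$, and apply $\textbf{PHD}$ term by term. Since $\sum_{j\ge 0}\tfrac{1}{3}(2/3)^j=1$, the event $\|\textbf{PHD}x\|_2\ge 1+\epsilon$ forces, for some level $j$, the inequality
\[
\|\textbf{PHD}(p_{j+1}(x)-p_j(x))\|_2 \;\ge\; \tfrac{1+\epsilon}{3}\bigl(\tfrac{2}{3}\bigr)^{j}.
\]
After normalizing the edge $e=p_{j+1}(x)-p_j(x)$ by $\|e\|_2\le 2^{-j+1}$, this becomes the statement that the unit vector $e/\|e\|_2$ is stretched by at least $\tfrac{1+\epsilon}{6}(4/3)^{j}$.

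The bulk of the argument is then a union bound. At level $j$ there are at most $\lambda_S^{2j+1}$ candidate edges, and Lemma~\ref{lm:distortion} (conditioned on $\textbf{D}$ being in the $s$-smooth setting from Section~\ref{sec:smooth-setting}, which holds with probability $\ge 19/20$ for all the finitely many difference vectors involved) gives a tail bound of the form $\exp(-c'k\tau^2)$ for stretch $1+\tau$. Plugging in $\tau=\tfrac{1+\epsilon}{6}(4/3)^{j}-1$ and summing yields
\[
\Pr[\exists x\in S,\|\textbf{PHD}x\|_2\ge 1+\epsilon]
\;\le\; \sum_{j\ge 0}\lambda_S^{2j+1}\exp\!\Bigl(-c'k(1+\epsilon)^2(4/3)^{2j}/C\Bigr).
\]

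The main obstacle is reconciling the two competing exponential factors: $\lambda_S^{2j}$ grows with $j$ while $(4/3)^{2j}$ grows in the exponent. I would handle this by choosing $C$ in the hypothesis $k\ge C\log\lambda_S$ large enough that the $j=0$ term already dominates, since then for every $j\ge 1$ the exponent $c'k(1+\epsilon)^2(4/3)^{2j}/C-2j\log\lambda_S$ is at least $c''k(1+\epsilon)^2(4/3)^{2j}$, and the series collapses geometrically to $\BO{e^{-ck(1+\epsilon)^2}}$ as required. The final bookkeeping issue is that one has to invoke the smooth-setting guarantee for $\textbf{D}$ once up front covering all edges of $T$ (a set of size $|S|-1$), which is absorbed into the constants and the $\log n$ factor already present in the smoothness parameter $s$.
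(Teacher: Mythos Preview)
Your proposal is correct and follows essentially the same approach as the paper: the same level-wise spanning-tree construction via $(2^{-j-1})$-nets, the same telescoping of $x$ along the root path with weights $\tfrac{1}{3}(2/3)^j$, the same normalization of edge lengths yielding the $(4/3)^j$ stretch factor, and the same union bound over $\lambda_S^{O(j)}$ edges combined with Lemma~\ref{lm:distortion}. The only cosmetic differences are that you write $\lambda_S^{2j+1}$ where the paper writes $\lambda_S^{2j}$, and you spell out explicitly how the hypothesis $k\ge C\log\lambda_S$ makes the sum collapse, which the paper leaves to the reader.
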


\section{Conclusion}

In this chapter we present embeddings that combine the low-dimensional embedding space achieved by Nearest Neighbor Preserving Embeddings~\cite{IN07} with a speedup of the embedding runtime achieved by a Fast-JL construction~\cite{Ailon09}.
This results in embeddings that are faster than fully Gaussian Nearest Neighbor Preserving Embeddings and use fewer dimensions than any Johnson-Lindenstrauss type embedding.

The benefit of Nearest Neighbor Preserving Embeddings generally depends on the difference between $n=|S|$ and $\lambda_S$.
While $\lambda_S$ is always upper bounded by $n$ it can often be much smaller, this helps to explain why some datasets can be successfully embedded into much fewer dimensions, and much faster, than theoretical results looking only on $|S|$ can explain.
For datasets with low doubling dimension we can expect to find fast embeddings into a low number of dimensions, even if the dataset is very large.

While the number of rows in the embedding matrix is independent of $n$, the sparsity of the matrix is not.
This happens because we must ensure that all $\BOx{n^2}$ possible edges in the constructed spanning trees used in lemma~\ref{lem:embedding-properties} are smooth.
Future work could focus on alternative constructions to increase the sparsity.


\chapter{Set Similarity Join}
\label{cha:set-similarity}
Set similarity join is a fundamental and well-studied database operator.
It is usually studied in the \emph{exact} setting where the goal is to compute all pairs of sets that exceed a given level of similarity (measured e.g.~as Jaccard similarity).
But set similarity join is often used in settings where 100\% recall may not be important --- indeed, where the exact set similarity join is itself only an approximation of the desired result set.

We present a new randomized algorithm for set similarity join that can achieve any desired recall up to 100\%, and show theoretically and empirically that it significantly outperforms state-of-the-art implementations of exact methods, and improves on existing approximate methods.
Our experiments on benchmark data sets show the method is several times faster than comparable approximate methods, at 90\% recall the algorithm is often more than $2$ orders of magnitude faster than exact methods.
Our algorithm makes use of recent theoretical advances in high-dimensional sketching and indexing that we believe to be of wider relevance to the database community.

\section{Introduction}

It is increasingly important for data processing and analysis systems to be able to work with data that is imprecise, incomplete, or noisy.
\emph{Similarity join} has emerged as a fundamental primitive in data cleaning and entity resolution over the last decade~\cite{augsten2013similarity,Chaudhuri_ICDE06, sarawagi2004efficient}.
In this chapter we focus on \emph{set similarity join}:
Given collections $R$ and $S$ of sets the task is to compute 
$$ R \simjoin S = \{(x,y)\in R\times S \; | \; \simil(x,y)\geq \lambda\}$$
where $\simil(\cdot, \cdot)$ is a similarity measure and $\lambda$ is a threshold parameter.
We deal with sets $x,y \subseteq \{1,\dots,d\}$, where the number $d$ of distinct tokens can be naturally thought of as the dimensionality of the data.

Many measures of set similarity exist~\cite{choi2010survey}, but perhaps the most well-known such measure is the \emph{Jaccard similarity},
$$J(x,y) = |x\cap y|/|x\cup y| \enspace .$$
For example, the sets $x = \{$\texttt{IT, University, Copenhagen}$\}$ and $y = \{$\texttt{University, Copenhagen, Denmark}$\}$  have Jaccard similarity $J(x,y) = 1/2$ which could suggest that they both correspond to the same entity.
In the context of entity resolution we want to find a set~$T$ that contains $(x,y)\in R\times S$ if and only if $x$ and $y$ correspond to the same entity.
The quality of the result can be measured in terms of \emph{precision} $|(R \simjoin S) \cap T|/|T|$ and \emph{recall} $|(R \simjoin S) \cap T|/|R \simjoin S|$ (both of which should be as high as possible).
We will be interested in methods that achieve 100\% precision, but that might not have 100\% recall.
We sometimes referring to methods with 100\% recall as exact, and others as approximate.
Note that this is in view of the output size, not the similarity as in our other approximate similarity problems.
Considering similarity join methods that are not exact allow for new randomized algorithmic techniques. 
It has been known from a theoretical point of view that this can lead to algorithms that are more scalable and robust (against hard inputs), compared to exact set similarity join methods for high-dimensional data.
However, these methods have not seen widespread use in practical join algorithms, arguably because they have not been sufficiently mature, e.g.~having large overheads that make asymptotic gains disappear and being unable to take advantage of features of real-life data sets that make similarity join computation easier.

\medskip

\textbf{Our contributions.}
We present the Chosen Path Set Similarity Join (\textsc{CPSJoin}) algorithm, its theoretical underpinnings, 
and show experimentally that it achieves substantial speedup in practice compared to state-of-the-art exact techniques by allowing less than 100\% recall.
The two key ideas behind \cpsj are:
\begin{itemize}
\item A new recursive filtering technique inspired by the recently proposed {\textsc ChosenPath} index for set similarity search~\cite{christiani2017set}, adding new ideas to make the method parameter-free, near-linear space, and adaptive to a given data set.
\item Apply efficient sketches for estimating set similarity~\cite{li2011theory} that take advantage of modern hardware. 
\end{itemize}

We compare \textsc{CPSJoin} to the exact set similarity join algorithms in the comprehensive empirical evaluation of Mann et al.~\cite{Mann2016}, using the same data sets, and to other approximate set similarity join methods suggested in the literature.
The probabilistic approach scales much better on input instances where prefix filtering does not cut down the search space significantly.
We see speedups of more than 1 order of magnitude at 90\% recall, especially for set similarity joins where the sets are relatively large (100 tokens or more) and the similarity threshold is low (e.g.~Jaccard similarity~0.5).


\subsection{Related work}\label{sec:related}

\textbf{Exact similarity join.}
For space reasons we present just a sample of the most related previous work, and refer to the book of Augsten and B{\"o}hlen~\cite{augsten2013similarity} for a survey of algorithms for exact similarity join in relational databases, covering set similarity joins as well as joins based on string similarity.

Early work on similarity join focused on the important special case of detecting \emph{near-duplicates} with similarity close to~1, see e.g.~\cite{broder2000identifying,sarawagi2004efficient}.
A sequence of results starting with the seminal paper of Bayardo et al.~\cite{Bayardo_WWW07} studied the range of thresholds that could be handled.
Recently, Mann et al.~\cite{Mann2016} conducted a comprehensive study of 7 state-of-the-art algorithms for exact set similarity join for Jaccard similarity threshold $\lambda \in \{0.5,0.6,0.7,0.8,0.9\}$.
These algorithms all use the idea of \emph{prefix filtering}~\cite{Bayardo_WWW07}, which generates a sequence of candidate pairs of sets that includes all pairs of similarity above the threshold.
The methods differ in how much additional filtering is carried out.
For example,~\cite{xiao2011efficient} applies additional \emph{length} and \emph{suffix} filters to prune the candidate pairs.
The main finding by Mann et al.~is that while advanced filtering techniques do yield speedups on some data sets, an optimized version of the basic prefix filtering method (referred to as ``ALL'') is always competitive, and often the fastest of the algorithms.
For this reason we will be comparing our results against ALL.


\medskip

\textbf{Locality-sensitive hashing.}
Locality-sensitive hashing (LSH) is a theoretically well-founded randomized method for creating candidate pairs~\cite{Gionis99}.
Though some LSH methods guaranteeing 100\% recall exist~\cite{arasu2006efficient,Pagh2016}, LSH is usually associated with having less than 100\% recall probability for each output pair.
We know only of a few papers using LSH techniques to solve similarity join.
Cohen et al.~\cite{cohen2001finding} used LSH techniques for set similarity join in a knowledge discovery context before the advent of prefix filtering.
They sketch a way of choosing parameters suitable for a given data set, but we are not aware of existing implementations of this approach.
Chakrabarti et al.~\cite{chakrabarti2015bayesian} improved plain LSH with an adaptive similarity estimation technique, \emph{BayesLSH}, that reduces the cost of checking candidate pairs and typically improves upon an implementation of the basic prefix filtering method by $2$--$20\times$.
Our experiments include comparison to both methods~\cite{chakrabarti2015bayesian,cohen2001finding}.


We refer to the recent survey paper~\cite{pagh2015large} for an overview of theoretical developments, but point out that these developments have not matured sufficiently to yield practical improvements to similarity join methods.

\medskip

\textbf{Locality-sensitive mappings.}
Several recent theoretical advances in high-dimensional indexing~\cite{andoni2017optimal,christiani2017framework,christiani2017set} have used an approach that can be seen as a generalization of LSH.
We refer to this approach as locality-sensitive \emph{mappings} (also known as locality-sensitive \emph{filters} in certain settings).
The idea is to construct a function $F$, mapping a set $x$ into a set of machine words, such that:
\begin{itemize}
\item If $\simil(x,y)\geq \lambda$ then $F(x)\cap F(y)$ is nonempty with some fixed probability~$\varphi > 0$.
\item If $\simil(x,y) < \lambda$, then the expected intersection size $\E[|F(x)\cap F(y)|]$ is ``small''.
\end{itemize}
Here the exact meaning of ``small'' depends on the difference~\mbox{$\lambda - \simil(x,y)$}, 
but in a nutshell, if it is the case that almost all pairs have similarity significantly below $\lambda$ then we can expect \mbox{$|F(x)\cap F(y)| = 0$} for almost all pairs.
Performing the similarity join amounts to identifying all candidate pairs $x,y$ for which $F(x)\cap F(y) \ne \varnothing$ (for example by creating an inverted index), 
and computing the similarity of each candidate pair.
To our knowledge these indexing methods have not been tried out in practice, probably because they are rather complicated.
An exception is the recent paper~\cite{christiani2017set}, which is relatively simple, and indeed our join algorithm is inspired by the index described in that paper.

\medskip

\textbf{Distance estimation.}
Similar to BayesLSH~\cite{chakrabarti2015bayesian} we make use of algorithms for similarity \emph{estimation}, but in contrast to BayesLSH we use algorithms that make use of bit-level parallelism.
This approach works when there exists a way of picking a random hash function $h$ such that
\begin{equation}\label{eq:lsh-able}
\Pr[h(x)=h(y)] = \simil(x,y)
\end{equation}
for every choice of sets $x$ and $y$.
Broder et al.~\cite{Broder_NETWORK97} presented such a hash function for Jaccard similarity, now known as \mh or ``minwise hashing'', as discussed in Section~\ref{sec:hashing}.
In the context of distance estimation, 1-bit minwise hashing of Li and K{\"o}nig~\cite{li2011theory} maps $t$ \mh values to a compact sketch, using just $t$ bits.
Still, this is sufficient information to be able to estimate the Jaccard similarity of two sets $x$ and $y$ just based on the Hamming distance of their sketches. 
(In fact, the approach of~\cite{li2011theory} is known to be close to optimal~\cite{pagh2014min}.)
Like in~\cite{chakrabarti2015bayesian} we will use distance estimation to perform an additional filtering of the set of candidate pairs, avoiding expensive exact similarity computations for candidate pairs of low similarity.

\section{Preliminaries}

The \textsc{CPSJoin} algorithm solves the set similarity join problem with a probabilistic guarantee on recall, formalized in Definition~\ref{def:simjoin}.
It returns a set $L\subseteq S \bowtie_{\lambda} R$ in a way that for every $(x, y) \in S \bowtie_{\lambda} R$ we are guaranteed $\Pr[(x, y) \in L] \geq \varphi$. 
It is important to note that the probability is over the random choices made by the algorithm, and \emph{not} over a random choice of $(x,y)$.
This means that the probability $(x, y) \in S \bowtie_{\lambda} R$ is \emph{not} reported in $i$ independent repetitions of the algorithm is bounded by $(1-\varphi)^i$.
A recall probability of, $\varphi = 0.9$ can be boosted to recall probability close to~1, e.g.~$99.9\%$ using $t=3$ repetitions.
Finally, note that recall probability $\varphi$ implies that we expect recall \emph{at least} $\varphi$, but the actual recall may be higher.

\subsection{Similarity measures}\label{sec:reduction}
Our algorithm can be used with a broad range of similarity measures through randomized \emph{embeddings}.
This allows our algorithms to be used with, for example, Jaccard and cosine similarity thresholds.

Embeddings map data from one space to another while approximately preserving distance information, with accuracy that can be tuned.
In our case we are interested in embeddings that map data to sets of tokens.
We can transform any so-called \emph{LSHable} similarity measure $\simil$, where we can choose $h$ to make (\ref{eq:lsh-able}) hold, into a set similarity measure by the following randomized embedding:
For a parameter $t$ pick hash functions $h_1,\dots,h_t$ independently from a family satisfying~(\ref{eq:lsh-able}).
The embedding of $x$ is the following set of size~$t$:
$$ f(x) = \{ (i,h_i(x)) \; | \; i=1,\dots,t \} \enspace .$$
It follows from~(\ref{eq:lsh-able}) that the expected size of the intersection $f(x)\cap f(y)$ is $t \cdot \simil(x,y)$.
We can use a Chernoff bound to bound the number of functions necessary.
\[\Pr\left[\left|\frac{|f(x)\cap f(y)|}t-\simil(x,y)\right|\geq \sqrt{\frac{6\ln{t}}{t}}\simil(x,y)\right]\leq2t^{-\simil(x,y)}\]
(See e.g. Equation~\ref{eq:minest}).
For our experiments with Jaccard similarity thresholds $\geq0.5$,  we found that $t=64$ gave sufficient precision for $>90\%$ recall.

In summary we can perform the similarity join $R\simjoin S$ for any LSHable similarity measure by creating two corresponding relations $R'=\{f(x) \; | \; x\in R\}$ and $S'=\{f(y) \; | \; y\in S\}$, 
and computing $R'\simjoin S'$ with respect to the similarity measure
\begin{equation}\label{eq:bb}
		BB(f(x),f(y)) = |f(x)\cap f(y)|/t \enspace .
\end{equation}
This measure is the special case of \emph{Braun-Blanquet} similarity where the sets are known to have size~$t$.
Our implementation will take advantage of the set size $t$ being fixed, though it is easy to extend to general Braun-Blanquet similarity.

The class of LSHable similarity measures is large, as discussed in~\cite{chierichetti2015lsh}.
It includes the Jaccard similarity, cosine similarity and other commonly used similarity measures.
If approximation errors are tolerable, even \emph{edit distance} can be embedded into Hamming space and handled by our algorithm~\cite{chakraborty2016streaming,zhang2017embedjoin}.

\subsection{Notation}

We are interested in sets $S$ where an element, $x\in S$ is a set with elements from some universe $[d]=\{1,2,3,\cdots,d\}$.
To avoid confusion we sometimes use ``record'' for $x\in S$ and ``token'' for the elements of $x$.
Throughout this chapter we will think of a record $x$ both as a set of tokens from $[d]$, as well as a vector from $\{0,1\}^d$, where:
\[
  x_i=
  \begin{cases}
    1\text { if } i\in x\\
    0\text { if } i\notin x
  \end{cases}
\]
It is clear that 
these representations are equivalent.
The set $\{1,4,5\}$ is equivalent to $(1,0,0,1,1,0,\cdots,0)$, $\{1,d\}$ is equivalent to $(1,0,\cdots,0,1)$, etc.

\section{Overview of approach}\label{sec:overview}
Our high-level approach is recursive 
and works as follows.
To compute $R\simjoin S$ we consider each $x\in R$ and either:
\begin{enumerate}
\item Compare $x$ to each record in $S$ (referred to as ``brute forcing'' $x$), or
\item create several subproblems $S_i \simjoin R_i$ with $x\in R_i \subseteq R$, $S_i \subseteq S$, and solve them recursively.
\end{enumerate}
The approach of~\cite{christiani2017set} corresponds to choosing option 2 until reaching a certain level $k$ of the recursion, where we finish the recursion by choosing option~1.
This makes sense for certain worst-case data sets, but we propose an improved parameter-free method that is better at adapting to the given data distribution.
In our method the decision on which option to choose depends on the size of $S$ and the average similarity of $x$ to the records of $S$.
We choose option~1 if $S$ has size below some (constant) threshold, or if the average Braun-Blanquet similarity of $x$ and $S$, $\tfrac{1}{|S|}\sum_{y\in S} BB(x,y)$, is close to the threshold~$\lambda$.
In the former case it is cheap to finish the recursion.
In the latter case many records $y\in S$ will have $BB(x,y)$ larger than or close to $\lambda$, so we do not expect to be able to produce output pairs with $x$ in less than linear time in $|S|$.

If none of the pruning conditions apply we choose option~2 and include $x$ in recursive sub problems as described below.
But first we note that the decision of which option to use can be made efficiently for each $x$, since the average Braun-Blanquet similarity of pairs from $R\times S$ can be computed from token frequencies in time $\BOx{|R|+|S|}$.

\begin{itemize}
\item \textbf{Comparing $x$ to each record of $S$}. 
We speed up the computation by using distance estimation (in our case using 1-bit minwise hashing) to efficiently avoid exact computation of similarities $BB(x,y)$ for $y\in S$ where $B(x,y)$ is significantly below $\lambda$.
\item \textbf{Recursion}. 
We would like to ensure that for each pair $(x,y)\in R\simjoin S$ the pair is computed in one of the recursive subproblems, i.e., that $(x,y)\in R_i\simjoin S_i$ for some $i$.
In particular, we want the expected number of subproblems containing $(x,y)$ to be at least~1, i.e.,
\begin{equation}\label{eq:expect}
\E[|\{i \;|\; (x,y)\in R_i\simjoin S_i\}|] \geq 1.
\end{equation}
Let $R'$ and $S'$ be the subsets of $R$ and $S$ that do not satisfy any of the pruning conditions.
To achieve (\ref{eq:expect}) for each pair $(x,y)\in R\simjoin S$ we recurse with probability $1/(\lambda t)$, where $t$ is the size of records in $R$ and $S$, on the subproblem $R_i\simjoin S_i$ with sets
\begin{align*}
R_i &= \{ x\in R' \; | \; i\in x \}\\
S_i &= \{ y\in S' \; | \; i\in y \}
\end{align*}
for each $i\in \{1,\dots,d\}$. 
It is not hard to check that (\ref{eq:expect}) is satisfied for every pair $(x,y)$ with $BB(x,y)\geq\lambda$.
Of course, expecting one subproblem to contain $(x,y)$ does not \emph{directly} imply a good probability that $(x,y)$ is contained in at least one subproblem.
But it turns out that we can use results from the theory of branching processes to show such a bound;
details are provided in section~\ref{sec:algorithm}.
\end{itemize}
%

\section{Chosen Path Set Similarity Join} \label{sec:algorithm}
The \textsc{CPSJoin} algorithm solves the $(\lambda,\varphi)$-set similarity join problem~(Definition~\ref{def:simjoin}).
To simplify the exposition we focus on a self-join version where given $S$ we wish to report $L \subseteq S \simjoin S$.
Handling a general join $S \simjoin R$ follows the overview in section~\ref{sec:overview} and requires no new ideas: Essentially consider a self-join on $S\cup R$ but make sure to consider only pairs in $S\times R$ for output.
We also make the simplifying assumption that all sets in $S$ have a fixed size $t$ --- as argued in section~\ref{sec:reduction} the general case can be reduced to this one by embedding.

The \textsc{CPSJoin} algorithm solves the $(\lambda,\varphi)$-set similarity join for every choice of $\lambda \in (0,1)$ 
and with a guarantee on $\varphi$ that we will lower bound in the analysis. 
We provide theoretical guarantees on the expected running time of \textsc{CPSJoin} as well as experimental results showing large speedups compared to existing state-of-the-art exact and approximate similarity join techniques. 
In the experiments a single run of our algorithm typically only reports around one third of the similar points compared to the exact algorithms, 
but through independent repetitions we are able to obtain speedups in the range of $2-50\times$ for many real data sets and parameter settings while keeping the recall above $90\%$.
\subsection{Description}
The \text{CPSJoin} algorithm (see Algorithm \ref{alg:cpsjoin} for pseudocode) works by recursively splitting the data set on elements of~$[d]$ that are selected according to a random process, 
forming a recursion tree with $S$ at the root and subsets of $S$ that are non-increasing in size as we get further down the tree.
The randomized splitting has the property that the probability of a pair of points $(x,y)$ being in a given node is increasing as a function of $|x\cap y|$.

Before each splitting step we run the recursive \textsc{BruteForce} subprocedure (see Algorithm \ref{alg:bruteforce} for pseudocode) that identifies subproblems that are best solved by brute force.
It has two parts:

1. If $S$ is below some constant size, controlled by the parameter \texttt{limit}, we report $S\bowtie_{\lambda}S$ exactly using a simple loop with $O(|S|^2)$ distance computations (\textsc{BruteForcePairs}) and exit the recursion.
In our experiments we have set \texttt{limit} to $250$, with the precise choice seemingly not having a large effect as shown experimentally in Section~\ref{sec:parameters}. 

2. If $S$ is larger than \texttt{limit} the second part activates:
for every $x \in S$ we check whether the expected number of comparisons that $x$ is a part of is going to decrease after performing the splitting.
If this is not the case, we immediately compare $x$ against every point in $S$ (\textsc{BruteForcePoint}), reporting close pairs, and proceed by removing $x$ from $S$.
The \textsc{BruteForce} procedure is then run again on the reduced set.
The recursion exits if every point $x\in S$ has a decreasing number of expected comparisons. 

This recursive procedure where we choose to handle some points by brute force crucially separates our algorithm from many other approximate similarity join methods in the literature that typically are LSH-based~\cite{PaghSIMJOIN2015, cohen2001finding}.
By efficiently being able to remove points at the ``right'' time, before they generate too many expensive comparisons further down the tree,
we are able to beat the performance of other approximate similarity join techniques in both theory and practice.
Another benefit of this rule is that it reduces the number of parameters compared to the usual LSH setting where the depth of the tree has to be selected by the user.

\begin{algorithm}
\DontPrintSemicolon
\emph{For $j \in [d]$ initialize $S_j \leftarrow \varnothing$.}\;
$S \leftarrow \textsc{BruteForce}(S, \lambda)$\;
$r \leftarrow \textsc{SeedHashFunction}()$\; 
\For{$x \in S$} {
	\For{$j \in x$} {

		\lIf{$r(j) < \frac{1}{\lambda |x|}$}{$S_{j} \leftarrow S_{j} \cup \{ x \} $}
	}
}
\lFor{$S_{j} \neq \varnothing$} {$\textsc{CPSJoin}(S_{j}, \lambda)$}
\caption{\textsc{CPSJoin}$(S, \lambda)$} \label{alg:cpsjoin}
\end{algorithm}

\begin{algorithm}
\SetKwData{Limit}{limit}
\SetKwArray{Count}{count}
\SetKwInOut{Global}{Global parameters}
\DontPrintSemicolon
\Global{\Limit $\geq 1$, $\varepsilon \geq 0$.}
\emph{Initialize empty map \Count{\,} with default value $0$.}\;
\If{$|S| \leq$ \Limit} {
	$\textsc{BruteForcePairs}(S, \lambda)$\;
	\Return $\varnothing$\;
}
\For{$x \in S$} {
	\For{$j \in x$} {
		\Count{j} $\leftarrow$ \Count{j} + 1\;
	}
}
\For{$x \in S$} {
	\If{$\frac{1}{|S| - 1}\sum_{j \in x}($\Count{j}$ - 1)/t > (1-\varepsilon)\lambda$} {
		$\textsc{BruteForcePoint}(S, x, \lambda)$\;
		\Return $\textsc{BruteForce}(S \setminus \{ x \}, \lambda)$\;
	} 
}
\Return $S$\;
	
\caption{\textsc{BruteForce}$(S, \lambda)$} \label{alg:bruteforce}
\end{algorithm}
\subsection{Comparison to Chosen Path}
\label{sec:comp-chos-path}
The \textsc{CPSJoin} algorithm is inspired by the \textsc{Chosen Path} algorithm~\cite{christiani2017set} for the approximate near neighbor problem 
and uses the same underlying random splitting tree that we will refer to as the Chosen Path Tree.
In the approximate near neighbor problem, the task is to construct a data structure that takes a query point and correctly reports an approximate near neighbor, if such a point exists in the data set.
Using the \textsc{Chosen Path} data structure directly to solve the $(\lambda,\varphi)$-set similarity join problem has several drawbacks that we avoid in the \textsc{CPSJoin} algorithm.
First, the \textsc{Chosen Path} data structure is parameterized in a non-adaptive way to provide guarantees for worst-case data, 
vastly increasing the amount of work done compared to the optimal parameterization when data is not worst-case.
Our recursion rule avoids this and instead continuously adapts to the distribution of distances as we traverse down the tree. 
Second, the data structure uses space $\BOx{n^{1+\rho}}$ where $\rho > 0$, storing the Chosen Path Tree of size $\BOx{n^\rho}$ for every data point. 
The \textsc{CPSJoin} algorithm, instead of storing the whole tree, essentially performs a depth-first traversal, allowing us to bound the space usage by $O(n + m)$ where $m$ is the output size.
Finally, the \textsc{Chosen Path} data structure only has to report a single point that is approximately similar to a query point, and can report points with similarity $< \lambda$.
To solve the approximate similarity join problem the \textsc{CPSJoin} algorithm has to satisfy reporting guarantees for \emph{every} pair of points $(x, y)$ in the exact join.
\subsection{Analysis} \label{sec:analysis}
The Chosen Path Tree for a data point $x \subseteq [d]$ is defined by a random process: 
at each node, starting from the root, we sample a random hash function $r \colon [d] \to [0,1]$ and construct children for every element $j \in x$ such that $r(j) < \frac{1}{\lambda |x|}$.
Nodes at depth $k$ in the tree are identified by their path $p = (j_1, \dots, j_k)$. 
Formally, the set of nodes at depth $k > 0$ in the Chosen Path Tree for $x$ is given by
\begin{equation}
	F_{k}(x) = \left\{ p \circ j \mid p \in F_{k-1}(x) \land r_{p}(j) < \frac{x_j}{\lambda |x|} \right\}
\end{equation}
where $p \circ j$ denotes vector concatenation and $F_{0}(x) = \{()\}$ is the set containing only the empty vector.
The subset of the data set $S$ that survives to a node with path $p = (j_1, \dots, j_{k})$ is given by
\begin{equation}
	S_{p} = \{ x \in S \mid x_{j_1} = 1 \land \dots \land x_{j_{k}} = 1 \}.
\end{equation}
The random process underlying the Chosen Path Tree belongs to the well studied class of Galton-Watson branching processes.
Originally these where devised to answer questions about the growth and decline of family names in a model of population growth assuming i.i.d.\ offspring for every member of the population across generations~\cite{watson1875}.
In order to make statements about the properties of the \textsc{CPSJoin} algorithm we study in turn the branching processes 
of the Chosen Path Tree associated with a point $x$, a pair of points $(x, y)$, and a set of points $S$.
Note that we use the same random hash functions for different points in $S$.

\paragraph{Brute forcing.}
The \textsc{BruteForce} subprocedure described by Algorithm \ref{alg:bruteforce} takes two global parameters: $\mathtt{limit} \geq 1$ and $\varepsilon \geq 0$.
The parameter $\mathtt{limit}$ controls the minimum size of $S$ before we discard the \cpsj algorithm for a simple exact similarity join by brute force pairwise distance computations.
The second parameter, $\varepsilon > 0$, controls the sensitivity of the \textsc{BruteForce} step to the expected number of comparisons that a point $x \in S$ will generate if allowed to continue in the branching process.
The larger $\varepsilon$ the more aggressively we will resort to the brute force procedure.
In practice we typically think of $\varepsilon$ as a small constant, say $\varepsilon = 0.05$, 
but for some of our theoretical results we will need a sub-constant setting of $\varepsilon \approx 1/\log(n)$ to show certain running time guarantees. 
The \textsc{BruteForce} step removes a point $x$ from the Chosen Path branching process, 
instead opting to compare it against every other point $y \in S$, if it satisfies the condition
\begin{equation}
	\frac{1}{|S| - 1}\sum_{y \in S \setminus \{ x \}}|x \cap y|/t > (1-\varepsilon)\lambda. \label{eq:bruteforce}
\end{equation}
In the pseudocode of Algorithm \ref{alg:bruteforce} we let \texttt{count} denote a hash table that keeps track of the number of times each element $j \in [d]$ appears in $S$.
This allows us to evaluate the condition in equation \eqref{eq:bruteforce} for an element $x \in S$ in time $O(|x|)$ by rewriting it as
\begin{equation}
\frac{1}{|S|-1}\sum_{j \in x} (\mathtt{count}[j] - 1)/t > (1-\varepsilon)\lambda.
\end{equation}
We claim that this condition minimizes the expected number of comparisons performed by the algorithm:
Consider a node in the Chosen Path Tree associated with a set of points $S$ while running the \cpsj algorithm.
For a point ${x\in S}$, we can either remove it from $S$ immediately at a cost of $|S|-1$ comparisons, or we can choose to let continue in the branching process (possibly into several nodes) and remove it later.
The expected number of comparisons if we let it continue $k$ levels before removing it from every node that it is contained in, is given by
\begin{equation}
	\sum_{y \in S \setminus \{ x \}} \left(\frac{1}{\lambda}\frac{|x \cap y|}{t}\right)^{k}.
\end{equation}
This expression is convex and increasing in the similarity $|x \cap y|/t$ between $x$ and other points $y \in S$, allowing us to state the following observation: 
\begin{observation}[Recursion]\label{obs:bruteforce}
	Let $\varepsilon = 0$ and consider a set $S$ containing a point $x \in S$ such that $x$ satisfies the recursion condition in equation~\eqref{eq:bruteforce}.
	Then the expected number of comparisons involving $x$ if we continue branching exceeds $|S|-1$ at every depth $k \geq 1$.
        If $x$ does not satisfy the condition, the opposite is observed.
\end{observation}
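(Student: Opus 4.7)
The plan is to reduce the statement to a one-line convexity inequality. Fix $x\in S$ and, for each $y\in S\setminus\{x\}$, set $a_y := |x\cap y|/(\lambda t)$, so that the claimed formula for the expected number of comparisons involving $x$ at depth $k$ reads $\sum_{y\neq x} a_y^k$. I would first justify this formula by a standard Galton--Watson mean computation on the Chosen Path Tree: conditional on $x$ surviving to a given node, the set of children into which both $x$ and $y$ descend is indexed by those $j\in x\cap y$ with $r(j)<1/(\lambda t)$ (using that all sets have fixed size $t$), which occurs independently for each $j$ with probability $1/(\lambda t)$. Hence the expected number of children containing both is $|x\cap y|/(\lambda t)=a_y$, and iterating across $k$ independent levels gives $\E[\#\{\text{nodes at depth }k \text{ with }x,y\}]=a_y^k$. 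Summing over $y\neq x$ yields $\sum_{y\neq x}a_y^k$ comparisons in expectation. Brute-forcing $x$ against $S$ immediately costs $|S|-1$ comparisons, so the observation is equivalent to comparing $\sum_y a_y^k$ with $|S|-1$.

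With $\varepsilon=0$, the recursion condition~\eqref{eq:bruteforce} reads $\tfrac{1}{|S|-1}\sum_{y\neq x}a_y>1$. Since $\phi(a)=a^k$ is convex on $[0,\infty)$ for every integer $k\geq 1$, Jensen's inequality applied to the uniform distribution on $\{a_y\}_{y\in S\setminus\{x\}}$ gives
\[
\frac{1}{|S|-1}\sum_{y\neq x}a_y^k \;\geq\; \left(\frac{1}{|S|-1}\sum_{y\neq x}a_y\right)^{\!k} \;>\; 1,
\]
so $\sum_{y\neq x}a_y^k > |S|-1$ at every depth $k\geq 1$. This is the forward direction: once $x$'s average similarity to the rest of $S$ exceeds $\lambda$, continuing the branching is strictly more expensive in expectation than brute-forcing $x$ out right away, at \emph{every} future depth.

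For the converse, if the condition fails then $\sum_{y\neq x}a_y\leq |S|-1$, which settles $k=1$ directly. For $k>1$ the relevant regime is that of the algorithm, where $x$ is a point whose pairwise similarities to $S$ lie (on average and typically individually) below $\lambda$, i.e., $a_y\leq 1$; then $a_y^k\leq a_y$ and $\sum a_y^k\leq\sum a_y\leq |S|-1$, giving the opposite inequality. Note that convexity points the wrong way in this direction, so the converse is really a statement about the typical regime rather than a worst-case bound; this is consistent with the ``observation'' framing. The only potentially delicate step is the branching-process mean calculation, and even that is immediate once one notes that the hash value $r(j)$ is drawn independently at every node, so that the numbers of descendants of $x$ containing $y$ at successive depths form a Galton--Watson process with mean $a_y$.
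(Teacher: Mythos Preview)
Your argument is correct and aligns with the paper's approach: the paper simply remarks that the expression $\sum_{y\neq x}(|x\cap y|/(\lambda t))^k$ is ``convex and increasing in the similarity'' and leaves the observation unproved, whereas you make this rigorous via Jensen for the forward direction and correctly note that the converse only holds under the additional assumption $a_y\le 1$ (which the paper's phrasing ``the opposite is observed'' tacitly elides). Your branching-process justification of the formula $\sum a_y^k$ is also exactly the computation the paper performs in Section~\ref{sec:analysis}.
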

\paragraph{Tree depth.}
We proceed by bounding the maximal depth of the set of paths in the Chosen Path Tree that are explored by the \cpsj algorithm.
Having this information will allow us to bound the space usage of the algorithm and will also form part of the argument for the correctness guarantee.
Assume that the parameter \texttt{limit} in the \textsc{BruteForce} step is set to some constant value, say $\mathtt{limit} = 10$.
Consider a point $x \in S$ and let $S' = \{ y \in S \mid |x \cap y|/ t \leq (1-\varepsilon)\lambda \}$ be the subset of points in $S$ that are not too similar to $x$.
For every $y \in S'$ the expected number of vertices in the Chosen Path Tree at depth $k$ that contain both $x$ and $y$ is upper bounded by
\begin{equation}
	\E[|F_{k}(x \cap y)|] = \left(\frac{1}{\lambda}\frac{|x \cap y|}{t}\right)^{k} \leq (1-\varepsilon)^{k} \leq e^{-\varepsilon k}.
\end{equation}
Since $|S'| \leq n$ we use Markov's inequality to show the following bound: 
\begin{lemma}
	Let $x, y \in S$ satisfy that $|x \cap y|/ t \leq (1-\varepsilon)\lambda$ then the probability that there exists a vertex at depth $k$ 
	in the Chosen Path Tree that contains $x$ and $y$ is at most $e^{-\varepsilon k}$.
\end{lemma}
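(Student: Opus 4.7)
The plan is to prove the lemma by a direct first-moment (Markov's inequality) argument on the random variable $N_k$ counting the number of depth-$k$ vertices in the Chosen Path Tree that contain both $x$ and $y$. The paragraph preceding the lemma essentially already sketches this approach; my task is to fill in the calculation of $\E[N_k]$ and then invoke Markov.

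First I would show that $N_k$ is distributed as $|F_k(x \cap y)|$, i.e.\ the size of the depth-$k$ frontier of the Chosen Path process applied to the intersection set $x \cap y$. A vertex at depth $k$ along a path $p = (j_1,\dots,j_k)$ contains a point $z$ precisely when $j_1,\dots,j_k \in z$ and each hash value $r_{p_{i-1}}(j_i)$ falls below $1/(\lambda t)$; thus such a vertex contains both $x$ and $y$ if and only if all the chosen coordinates lie in $x \cap y$, which is exactly the frontier definition for the intersection.

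Next I would compute $\E[|F_k(x \cap y)|]$ by induction on $k$. At each step, conditional on a vertex being present at depth $k-1$, the expected number of children containing $x \cap y$ is $|x \cap y|\cdot \tfrac{1}{\lambda t}$, since for each $j \in x \cap y$ the fresh uniform $r(j)$ falls below the threshold $1/(\lambda t)$ independently with that probability (using that all sets have size $t$). Iterating and using linearity of expectation yields
\begin{equation*}
\E[N_k] \;=\; \E[|F_k(x \cap y)|] \;=\; \left(\frac{|x \cap y|}{\lambda t}\right)^{k}.
\end{equation*}
By hypothesis $|x \cap y|/t \leq (1-\varepsilon)\lambda$, so the bracketed quantity is at most $1-\varepsilon$, giving $\E[N_k] \leq (1-\varepsilon)^k \leq e^{-\varepsilon k}$.

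Finally, since $N_k$ is a non-negative integer-valued random variable, Markov's inequality (Theorem~\ref{thm:markov}) gives $\Pr[\exists\text{ such vertex}] = \Pr[N_k \geq 1] \leq \E[N_k] \leq e^{-\varepsilon k}$, which is the claimed bound. There is no serious obstacle here: the only subtlety worth being careful about is that the hash functions at different internal nodes are independent fresh draws (so the branching structure really is a Galton–Watson process with per-element survival probability $1/(\lambda t)$), and that the inductive step uses linearity of expectation rather than independence across siblings, so nothing about correlated paths causes trouble.
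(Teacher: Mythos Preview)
Your proposal is correct and follows exactly the approach the paper uses: compute $\E[|F_k(x\cap y)|] = \left(\tfrac{|x\cap y|}{\lambda t}\right)^k \le (1-\varepsilon)^k \le e^{-\varepsilon k}$ and then apply Markov's inequality to the nonnegative integer variable $|F_k(x\cap y)|$. The paper states this computation in the paragraph immediately preceding the lemma and then cites Markov, so your write-up is simply a more explicit version of the same argument.
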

If $x$ does not share any paths with points that have similarity that falls below the threshold for brute forcing, then the only points that remain are ones that will cause $x$ to be brute forced.
This observation leads to the following probabilistic bound on the tree depth:
\begin{lemma}\label{lem:depth}
	With high probability the maximal depth of paths explored by the \cpsj algorithm is $O(\log(n) / \varepsilon)$.
\end{lemma}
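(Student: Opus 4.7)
The plan is to show that the algorithm cannot recurse past depth $k = O(\log n / \varepsilon)$ because doing so would require a ``far'' pair (a pair $(x,y)$ with $|x \cap y|/t \leq (1-\varepsilon)\lambda$) to co-occur at such a deep node, an event that the preceding lemma and a union bound show to be very unlikely.

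First, I would establish a structural observation: at any node $v$ reached by the recursion, if the post-\textsc{BruteForce} surviving set $S'$ has $|S'| \geq 2$, then $S'$ must contain at least one far pair. Indeed, since \textsc{BruteForce} returned the non-empty set $S'$ rather than continuing its own recursion, every $x \in S'$ must have failed the removal condition, i.e.~$\frac{1}{|S'|-1}\sum_{y \in S' \setminus \{x\}} |x\cap y|/t \leq (1-\varepsilon)\lambda$. Since the minimum is bounded by the average, for each such $x$ there exists $y \in S'\setminus\{x\}$ with $|x\cap y|/t \leq (1-\varepsilon)\lambda$, which is precisely a far pair. Conversely, if $|S'| \leq 1$ (or $|S'| \leq \mathtt{limit}$), the recursion does not proceed past $v$.

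Next, I would apply the preceding lemma together with a union bound over pairs. For any fixed far pair $(x,y)$, the probability that $(x,y)$ co-occur at some vertex at depth $k$ in the Chosen Path Tree is at most $e^{-\varepsilon k}$. Since there are at most $\binom{n}{2} < n^{2}$ pairs in $S$, a union bound gives
\begin{equation*}
\Pr[\text{some far pair is co-located at some depth-}k\text{ vertex}] \;\leq\; n^{2} e^{-\varepsilon k}.
\end{equation*}
Choosing $k = c \log(n) / \varepsilon$ for a sufficiently large constant $c > 3$ bounds this failure probability by $n^{2-c}$, which is polynomially small in $n$. Conditioning on the complementary event, no far pair appears at any depth-$k$ node; by the structural contrapositive, no depth-$k$ node then has a surviving post-\textsc{BruteForce} set of size $\geq 2$, so the recursion cannot continue to depth $k+1$.

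The main subtlety I anticipate lies not in any heavy calculation but in cleanly disentangling \textsc{BruteForce}'s internal recursion (which strips off points one at a time from a fixed node) from \cpsj's outer splitting recursion (which determines the tree depth). Once we commit to ``depth'' meaning depth of splitting in the Chosen Path Tree, the argument localizes to a single node: either its residual set is too small to merit further splitting, or it contains a far pair whose survival to depth $k$ is controlled by the preceding lemma. Everything else is a union bound.
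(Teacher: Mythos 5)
Your proof is correct and is essentially the argument the paper gestures at but does not write out: you formalize the paper's remark that a node surviving \textsc{BruteForce} with at least two points must contain a pair $(x,y)$ with $|x\cap y|/t\leq(1-\varepsilon)\lambda$ (via the min-versus-average step, using that every returned point fails the removal condition), then combine the preceding lemma with a union bound over the at most $n^2$ pairs and a choice of $k=\Theta(\log n/\varepsilon)$. Since the paper leaves the union bound and the structural contrapositive implicit, your write-up is a faithful completion of the same approach rather than a different one.
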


\paragraph{Correctness.}
Let $x$ and $y$ be two sets of equal size $t$ such that $BB(x, y) = |x \cap y|/t \geq \lambda$. 
We are interested in lower bounding the probability that there exists a path of length~$k$ in the Chosen Path Tree that has been chosen by both $x$ and~$y$, 
i.e. $\Pr\left[F_{k}(x \cap y)\neq\varnothing\right]$.
Agresti~\cite{agresti1974} showed an upper bound on the probability that a branching process becomes extinct after at most $k$ steps. 
We use it to show the following lower bound on the probability of a close pair of points colliding at depth $k$ in the Chosen Path Tree.
\begin{lemma}[{Agresti~\cite{agresti1974}}]
	If $\simil(x, y) \geq \lambda$ then for every $k > 0$ we have that \mbox{$\Pr[F_{k}(x \cap y) \neq \varnothing] \geq \frac{1}{k+1}$}.
\end{lemma}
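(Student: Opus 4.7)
The plan is to identify $|F_k(x \cap y)|$ as the population at generation $k$ of a Galton-Watson branching process and to bound its extinction probability by induction. At each node a fresh hash $r$ produces a child for every $j \in x \cap y$ with probability $1/(\lambda t)$, and across nodes these decisions are independent, so the offspring distribution is $Z \sim \text{Binomial}(m, p)$ with $m = |x \cap y|$ and $p = 1/(\lambda t)$. The assumption $\simil(x,y) = m/t \geq \lambda$ is precisely the statement that $\mu := \E[Z] = mp \geq 1$, i.e., the process is at least critical.

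Let $p_k = \Pr[F_k(x \cap y) \neq \varnothing]$. The standard pgf recursion $q_k = \phi(q_{k-1})$, with $\phi(s) = ((1-p)+ps)^m$ and $q_k = 1 - p_k$, rearranges to
$$p_k \;=\; 1 - (1 - p \cdot p_{k-1})^m, \qquad p_0 = 1.$$
The goal is then to show, by induction on $k$, that $p_k \geq 1/(k+1)$.

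The only technical ingredient is the elementary inequality $(1-x)^m \leq 1/(1+mx)$ on $[0,1]$, which I would verify by noting that $f(x) = (1-x)^m(1+mx)$ satisfies $f(0) = 1$ and $f'(x) = -m(m+1)\, x (1-x)^{m-1} \leq 0$, hence $f$ is non-increasing. Substituting $x = p \cdot p_{k-1}$ yields
$$p_k \;\geq\; \frac{\mu\, p_{k-1}}{1 + \mu\, p_{k-1}} \;\geq\; \frac{p_{k-1}}{1 + p_{k-1}},$$
where the second step uses $\mu \geq 1$ together with the monotonicity of $r \mapsto r/(1+r)$. The inductive hypothesis $p_{k-1} \geq 1/k$ then gives $p_k \geq (1/k)/(1+1/k) = 1/(k+1)$, completing the step; the base case $k=0$ is $p_0 = 1$.

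I do not anticipate a genuine obstacle here: once one commits to the branching-process viewpoint, the argument reduces to the one-line calculus bound on $(1-x)^m$. The only mild subtlety is to apply that bound \emph{before} substituting $\mu \geq 1$, rather than Taylor-expanding $\phi$ around $s=1$, which would drag in $\phi''(1)$ and spoil the clean constant $1/(k+1)$.
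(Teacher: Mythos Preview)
Your argument is correct. The paper itself does not prove this lemma; it simply invokes Agresti's 1974 bound on the extinction-time distribution of a branching process as a black box. What you have written is in effect the specialisation of Agresti's comparison method to the binomial offspring law at hand: the inequality $(1-x)^m \leq 1/(1+mx)$ is exactly the statement that the binomial pgf $(1-p+ps)^m$ is dominated on $[0,1]$ by the pgf $1/(1+\mu(1-s))$ of a geometric offspring distribution with the same mean $\mu = mp$, and the recursion $p_{k} \geq \mu p_{k-1}/(1+\mu p_{k-1})$ you obtain is then the exact survival recursion for that linear-fractional process, which at $\mu = 1$ solves to $1/(k+1)$. So your proof is self-contained where the paper defers to a citation, but the underlying mechanism---stochastic comparison with a fractional-linear branching process---is the same one Agresti uses in the general case.
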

The bound on the depth of the Chosen Path Tree for $x$ explored by the \cpsj algorithm in Lemma~\ref{lem:depth} then implies a lower bound on $\varphi$.


\begin{lemma}\label{lem:correctness}
	Let $0 < \lambda < 1$ be constant. Then for every set $S$ of $|S| = n$ points the \cpsj algorithm solves the set similarity join problem with $\varphi = \Omega(\varepsilon / \log(n))$.  
\end{lemma}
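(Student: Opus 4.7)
The plan is to combine the high-probability depth bound of Lemma~\ref{lem:depth} with Agresti's survival estimate for a supercritical Galton-Watson branching process. Fix $x,y \in S$ with $BB(x,y) = |x\cap y|/t \geq \lambda$. Under the Chosen Path splitting rule a common child retains both $x$ and $y$ only when $r(j) < 1/(\lambda t)$ for some $j \in x \cap y$, so the joint branching process governing the pair has mean offspring $|x\cap y|/(\lambda t) \geq 1$ --- precisely the supercritical regime of Agresti's inequality $\Pr[F_k(x\cap y) \neq \varnothing] \geq 1/(k+1)$.

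Set $k^{\star} = C\log(n)/\varepsilon$ to be the depth bound from Lemma~\ref{lem:depth} (the $\lambda$-dependence being absorbed into the constant $C$), so that with probability $1 - n^{-\omega(1)}$ every path in the execution tree of \cpsj has length at most $k^{\star}$. Let $A$ be the event $F_{k^{\star}}(x\cap y) \neq \varnothing$; Agresti gives $\Pr[A] \geq 1/(k^{\star}+1) = \Omega(\varepsilon/\log n)$. The crux is to show that on $A$, intersected with the depth event, the pair is compared by the algorithm: combined with a union bound against the negligible failure probability of Lemma~\ref{lem:depth}, this implication yields $\varphi = \Omega(\varepsilon/\log n)$.

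To establish the implication, take any witness path $p$ of length $k^{\star}$ for $A$ and trace the recursive calls \textsc{CPSJoin}$(S_{p[0..i]})$ for $i = 0,\dots,k^{\star}$; by definition of $F_{k^{\star}}(x\cap y)$ both $x$ and $y$ lie in $S_{p[0..i]}$ at the start of each such call. Since the execution tree has depth at most $k^{\star}$, the recursion along $p$ must terminate by depth $k^{\star}$, and inspection of \textsc{BruteForce} shows that termination can happen in only two ways: either \textsc{BruteForcePairs} fires when $|S|$ drops to \texttt{limit}, or the splitting step produces no non-empty child (the latter is ruled out on $A$ before depth $k^{\star}$, since $p$ exhibits a common child at every level). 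In the first case, if both $x$ and $y$ are still present when \textsc{BruteForcePairs} fires, the pair is directly compared.

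The main obstacle is the possibility that the iterative loop inside \textsc{BruteForce} peels off intermediate points via \textsc{BruteForcePoint} in an order that excludes $x$ or $y$ from the set before \textsc{BruteForcePairs} fires. I would dispatch this by observing that each invocation \textsc{BruteForcePoint}$(S,z,\lambda)$ compares $z$ against \emph{every} remaining element of $S$, so if $z \in \{x,y\}$ the pair is reported on the spot; otherwise the other element of $\{x,y\}$ stays in $S$ and the peeling either eventually shrinks $|S|$ to \texttt{limit} with both $x,y$ still present (triggering \textsc{BruteForcePairs} on the pair) or exits \textsc{BruteForce} with both still in $S$, in which case the propagation along $p$ carries the pair one level deeper and the argument iterates. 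Once this case analysis is made airtight and the $n^{-\omega(1)}$ failure probability of Lemma~\ref{lem:depth} is absorbed, the claimed bound $\varphi = \Omega(\varepsilon/\log n)$ follows.
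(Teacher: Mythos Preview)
Your proposal is correct and follows essentially the same approach as the paper: combine the $O(\log(n)/\varepsilon)$ depth bound of Lemma~\ref{lem:depth} with Agresti's survival estimate $\Pr[F_k(x\cap y)\neq\varnothing]\geq 1/(k+1)$ to obtain $\varphi=\Omega(\varepsilon/\log n)$. The paper handles your ``main obstacle'' (early removal via \textsc{BruteForce}) with the one-line remark that such removal can only increase the collision probability, whereas you spell out the case analysis explicitly; this is just a matter of detail, not a different method.
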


\begin{remark}
  This analysis is very conservative: if either $x$ or $y$ is removed by the \textsc{BruteForce} step prior to reaching the maximum depth then it only increases the probability of collision.
  We note that similar guarantees can be obtained when using fast pseudorandom hash functions as shown in the paper introducing the \cp algorithm~\cite{christiani2017set}.
\end{remark}

\textbf{Space usage.}
We can obtain a trivial bound on the space usage of the \cpsj algorithm by combining Lemma \ref{lem:depth} with the observation that every call to $\cpsj$ on the stack uses additional space at most $O(n)$. 
The result is stated in terms of working space: the total space usage when not accounting for the space required to store the data set itself 
(our algorithms use references to data points and only reads the data when performing comparisons) as well as disregarding the space used to write down the list of results. 
\begin{lemma} \label{lem:space}
	With high probability the working space of the \cpsj algorithm is at most $O(n \log (n) /\varepsilon)$.
\end{lemma}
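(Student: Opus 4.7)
The plan is to bound the working space by viewing the recursion of \cpsj as a depth-first traversal of the Chosen Path Tree, and arguing that at any point in time only a single root-to-leaf branch of this traversal is simultaneously live on the call stack. Concretely, I would first observe that when \cpsj is invoked on some subset $S_p \subseteq S$ at a node $p$, the only auxiliary data it maintains beyond its input are: (i) the hash function $r$ (of size $O(d)$ or $O(1)$ with suitable representation), (ii) the children lists $S_{p \circ j}$ for $j \in [d]$, and (iii) the counter table used inside \textsc{BruteForce}. Each of (ii) and (iii) stores only references/counts indexed by tokens that appear in $S_p$, so the additional memory associated with a single activation on input $S_p$ is $O(|S_p| \cdot t) = O(|S_p|)$, hence $O(n)$ in the worst case.

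Next, I would exploit that the recursive calls at line ``\textbf{for} $S_j \neq \varnothing$'' are executed sequentially. Once a child call on $S_{p \circ j}$ returns, the memory it allocated can be released before the next child is visited. Thus, at any moment, the set of simultaneously active calls forms a single path from the root of the Chosen Path Tree down to the currently executing leaf. If $D$ denotes the depth of that path, the total working space used simultaneously is at most $\sum_{k=0}^{D} O(|S_{p_k}|) = O(nD)$, since $|S_{p_k}| \leq n$ for every ancestor $p_k$.

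Finally, I would invoke Lemma~\ref{lem:depth}, which gives $D = O(\log(n)/\varepsilon)$ with high probability for the maximum depth of any path explored by \cpsj. Substituting this bound yields working space $O(n \log(n)/\varepsilon)$ w.h.p., completing the proof. Note that we explicitly exclude the space for the input data (accessed by reference) and for the output list of reported pairs, in accordance with the ``working space'' convention stated in the lemma.

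The only mild subtlety — and the part that would need the most care in a fully rigorous write-up — is verifying item (i)/(ii): that the per-call overhead really is $O(|S_p|)$ rather than $O(d)$ or $O(n)$, so that the bound does not degrade at deep nodes where $|S_p|$ is small. Since the buckets $S_j$ are only instantiated for tokens $j$ actually appearing in some $x \in S_p$, and the count table is likewise keyed only on such tokens, a standard hash-table representation suffices; the bookkeeping is routine but should be stated to justify summing $|S_{p_k}|$ instead of $d$ or $n$ along the active path.
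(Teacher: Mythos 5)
Your proposal is correct and follows essentially the same route as the paper: both combine the observation that each active call on the recursion stack uses $O(n)$ additional space with Lemma~\ref{lem:depth}'s w.h.p.\ bound of $O(\log(n)/\varepsilon)$ on the depth of the explored Chosen Path Tree, and both exclude input storage and output from the working-space accounting. Your refinement that the per-call overhead is really $O(|S_p|)$ rather than $O(n)$ is a nice observation but is not needed here, since you ultimately bound $|S_{p_k}| \leq n$ along the active path, matching the paper's coarser but sufficient $O(n)$-per-frame argument.
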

\begin{remark}
We conjecture that the expected working space is $O(n)$ due to the size of $S$ being geometrically decreasing in expectation as we proceed down the Chosen Path Tree.
\end{remark}
\paragraph{Running time.}
We will bound the running time of a solution to the general set similarity self-join problem that uses several calls to the 
\cpsj algorithm in order to piece together a list of results $L \subseteq S \simjoin S$.
In most of the previous related work, inspired by Locality-Sensitive Hashing, the fine-grainedness of the randomized partition of space, 
here represented by the Chosen Path Tree in the \cpsj algorithm, has been controlled by a single global parameter~$k$~\cite{Gionis99, PaghSIMJOIN2015}.
In the Chosen Path setting this rule would imply that we run the splitting step without performing any brute force comparison until reaching depth $k$ 
where we proceed by comparing $x$ against every other point in nodes containing $x$, reporting close pairs. 
In recent work by Ahle et al.~\cite{ahle2017} it was shown how to obtain additional performance improvements by setting an individual depth $k_{x}$ for every $x \in S$.
We refer to these stopping strategies as global and individual, respectively.
Together with our recursion strategy, this gives rise to the following stopping criteria for when to compare a point $x$ against everything else contained in a node: 
\begin{itemize}
	\item Global: Fix a single depth $k$ for every $x \in S$.
	\item Individual: For every $x \in S$ fix a depth $k_{x}$. 
	\item Adaptive: Remove $x$ when the expected number of comparisons is non-decreasing in the tree-depth.
\end{itemize}
Let $T$ denote the running time of our similarity join algorithm. 
We aim to show the following relation between the running time between the different stopping criteria when applied to the Chosen Path Tree:
\begin{equation}
	\E[T_{\text{Adaptive}}] \leq \E[T_{\text{Individual}}] \leq \E[T_{\text{Global}}]. 
\end{equation}
First consider the global strategy. 
We set $k$ to balance the contribution to the running time from the expected number of vertices containing a point, given by $(1/\lambda)^{k}$,
and the expected number of comparisons between pairs of points at depth $k$, resulting in the following expected running time for the global strategy:
\begin{equation*}
	O\left(\min_{k} n (1/\lambda)^{k} +  \sum_{\substack{x,y \in S \\ x \neq y }}  (\simil(x, y) / \lambda)^{k}  \right).
\end{equation*}
The global strategy is a special case of the individual case, and it must therefore hold that $\E[T_{\text{Individual}}] \leq \E[T_{\text{Global}}]$.
The expected running time for the individual strategy is upper bounded by:
\begin{equation*}
	O\left(\sum_{x \in S}\min_{k_{x}} \left( (1/\lambda)^{k_{x}} + \sum_{y \in S \setminus \{ x \}} (\simil(x, y) / \lambda)^{k_{x}} \right) \right).
\end{equation*}
We will now argue that the expected running time of the \cpsj algorithm under the adaptive stopping criteria is no more than a constant factor greater than 
$\E[T_{\text{Individual}}]$ when we set the global parameters of the \textsc{BruteForce} subroutine as follows:
\begin{align*}
	\mathtt{limit} = \Theta(1), \\
	\varepsilon = \frac{\log(1/\lambda)}{\log n}.
\end{align*}
Let $x \in S$ and consider a path $p$ where $x$ is removed in from $S_{p}$ by the \textsc{BruteForce} step. 
Let $k_{x}'$ denote the depth of the node (length of $p$) at which $x$ is removed.
Compared to the individual strategy that removes $x$ at depth $k_{x}$ we are in one of three cases, also displayed in Figure \ref{fig:paths}.
\begin{enumerate}
	\item The point $x$ is removed from path $p$ at depth $k_{x}' = k_{x}$.
	\item The point $x$ is removed from path $p$ at depth $k_{x}' < k_{x}$.
	\item The point $x$ is removed from path $p$ at depth $k_{x}' > k_{x}$.
\end{enumerate}
\begin{figure}[htpb]
  \centering
  \includegraphics[width=0.45\textwidth]{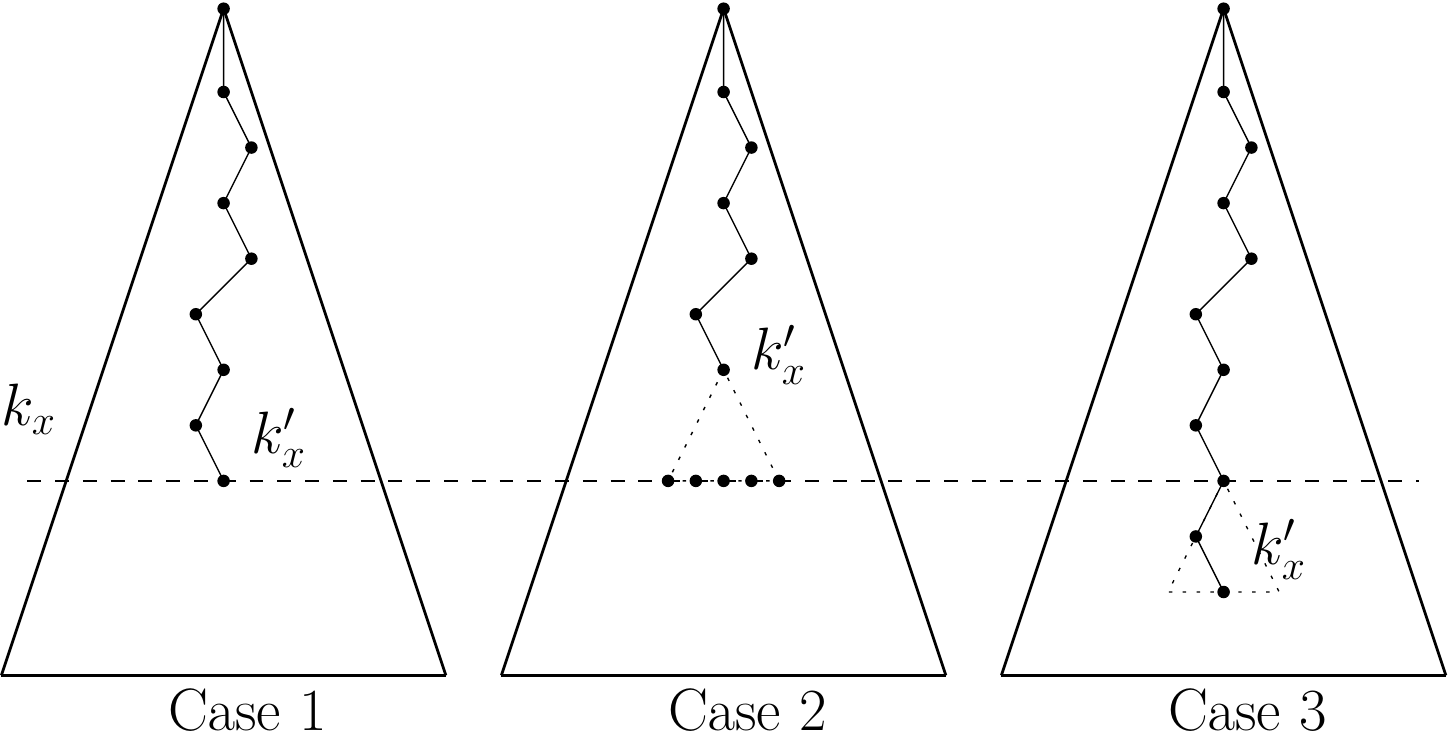}
  \caption{Path termination depth in the Chosen Path Tree}
    \label{fig:paths}
\end{figure}
The underlying random process behind the Chosen Path Tree is not affected by our choice of termination strategy.
In the first case we therefore have that the expected running time is upper bounded by the same (conservative) expression as the one used by the individual strategy.
In the second case we remove $x$ earlier than we would have under the individual strategy. 
For every $x \in S$ we have that $k_{x} \leq 1/\varepsilon$ since for larger values of $k_{x}$ the expected number of nodes containing $x$ exceeds $n$.
We therefore have that $k_{x} - k_{x}' \leq 1/\varepsilon$.
Let $S'$ denote the set of points in the node where $x$ was removed by the \textsc{BruteForce} subprocedure.
There are two rules that could have triggered the removal of $x$: Either $|S'| = O(1)$ or the condition in equation \eqref{eq:bruteforce} was satisfied.
In the first case, the expected cost of following the individual strategy would have been $\Omega(1)$ simply from the $1/\lambda$ children containing $x$ in the next step.
This is no more than a constant factor smaller than the adaptive strategy.
In the second case, when the condition in equation \eqref{eq:bruteforce} is activated we have that the expected number of comparisons 
involving $x$ resulting from $S'$ if we had continued under the individual strategy is at least
\begin{equation*}
	(1-\varepsilon)^{1/\varepsilon}|S'| = \Omega(|S'|)
\end{equation*}
which is no better than what we get with the adaptive strategy.
In the third case where we terminate at depth $k_{x}' > k_{x}$, if we retrace the path to depth $k_{x}$ we know that $x$ was not removed in this node, 
implying that the expected number of comparisons when continuing the branching process on~$x$ is decreasing compared to removing $x$ at depth $k_{x}$.
We have shown that the expected running time of the adaptive strategy is no greater than a constant times the expected running time of the individual strategy.

We are now ready to state our main theoretical contribution, stated below as Theorem \ref{thm:main}. 
The theorem combines the above argument that compares the adaptive strategy against the individual strategy together with Lemma \ref{lem:depth} and Lemma \ref{lem:correctness}, 
and uses $O(\log^{2} n)$ runs of the \cpsj algorithm to solve the set similarity join problem for every choice of constant parameters $\lambda,\varphi$.
\begin{theorem}\label{thm:main}
	For every LSHable similarity measure and every choice of constant threshold $\lambda \in (0,1)$ and probability of recall $\varphi \in (0,1)$ 
	we can solve the $(\lambda,\varphi)$-set similarity join problem on every set $S$ of $n$ points using working space $\tilde{O}(n)$ and with expected running time
	\begin{equation*}
		\tilde{O}\left(\sum_{x \in S}\min_{k_{x}} \left( \sum_{y \in S \setminus \{ x \}} (\simil(x, y) / \lambda)^{k_{x}} + (1/\lambda)^{k_{x}} \right) \right).
	\end{equation*}
\end{theorem}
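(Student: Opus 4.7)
The plan is to combine three pieces already developed in the preceding discussion: the adaptive-vs-individual running time comparison, the depth bound of Lemma~\ref{lem:depth}, and the per-run recall bound of Lemma~\ref{lem:correctness}. By the reduction in Section~\ref{sec:reduction}, it suffices to prove the theorem for a Braun-Blanquet self-join on sets of fixed size~$t$; the general LSHable case follows by embedding with $t=\Theta(\log n)$ minwise hashes, which preserves the statement up to the $\tilde{O}$. We set the global parameters of \textsc{BruteForce} as $\mathtt{limit}=\Theta(1)$ and $\varepsilon = \log(1/\lambda)/\log n$, so that $1/\varepsilon = \Theta(\log n)$ is polylogarithmic.

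First I would analyse a single invocation of \cpsj. Fix a point $x\in S$ and a root-to-leaf path $p$ along which $x$ travels. Let $k_x$ be the optimal individual termination depth and $k_x'$ the depth at which the adaptive rule actually removes $x$ on this path. The case split is exactly the one sketched above: if $k_x' = k_x$ the bound is immediate; if $k_x' < k_x$ then either $|S'|=O(1)$, in which case the individual strategy already pays $\Omega(1/\lambda)$ in the next layer, or the condition in~\eqref{eq:bruteforce} triggered, in which case continuing under the individual rule for the remaining $k_x - k_x' \le 1/\varepsilon$ levels would yield expected comparisons at least $(1-\varepsilon)^{1/\varepsilon}|S'|=\Omega(|S'|)$, matching what the adaptive rule pays; if $k_x' > k_x$, then at depth $k_x$ the adaptive rule found the expected comparison count non-increasing, so by Observation~\ref{obs:bruteforce} the continuation from depth $k_x$ costs no more than stopping there. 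Summing over $x$ and over the paths in the Chosen Path Tree gives
\[
\E[T_{\text{Adaptive}}] \;\le\; O(1)\cdot \E[T_{\text{Individual}}],
\]
which is exactly the bound claimed in the theorem (up to the polylog $\tilde{O}$ factor absorbed below).

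Next I would assemble the global guarantees. Lemma~\ref{lem:depth} bounds the maximum explored depth by $O(\log n/\varepsilon) = O(\log^2 n)$ with high probability, and Lemma~\ref{lem:space} then gives working space $\tilde{O}(n)$ per run. Lemma~\ref{lem:correctness} tells us that a single invocation reports each similar pair with probability $\varphi_0 = \Omega(\varepsilon/\log n) = \Omega(1/\log^2 n)$. Running the algorithm $r = \Theta(\log^2 n \cdot \log(1/(1-\varphi)))$ times independently and taking the union of outputs boosts the per-pair miss probability to at most $(1-\varphi_0)^r \le 1-\varphi$, which achieves the required recall. The total expected time is $r$ times the single-run bound, and because $r$ is polylogarithmic it is absorbed by $\tilde{O}$; the same holds for the space after observing that runs are executed sequentially and their working memory is freed between runs.

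The main obstacle I expect is the adaptive-vs-individual comparison, and specifically the third case ($k_x' > k_x$). The tricky point is that the adaptive rule makes its decision locally at a single node using the empirical average in~\eqref{eq:bruteforce}, while the individual optimum $k_x$ is a global quantity minimising an expectation over the whole subtree. One has to argue carefully that local non-increase of the expected comparison count along every surviving path implies that the \emph{total} expected cost of continuing past depth $k_x$ is no worse than stopping at $k_x$; Observation~\ref{obs:bruteforce} applied conditionally on the state at depth $k_x$ (using that the subtree below is again a Galton-Watson-like process with i.i.d.\ offspring given the surviving set $S'$) is the cleanest way I see to make this formal. Everything else reduces to packaging these ingredients together.
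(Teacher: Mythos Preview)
Your proposal is correct and follows essentially the same approach as the paper: the paper's proof is a one-sentence assembly of the adaptive-versus-individual running time comparison, Lemma~\ref{lem:depth}, and Lemma~\ref{lem:correctness}, together with $O(\log^{2} n)$ independent repetitions of \cpsj to boost the recall to any constant $\varphi$. Your write-up is in fact more explicit than the paper's own treatment (which leaves the three-case argument and the packaging as discussion preceding the theorem), and your observation that case~3 is where the argument requires the most care is apt, though the paper does not elaborate on it further.
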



\section{Implementation} \label{sec:implementation}
We implement an optimized version of the \cpsj algorithm for solving the Jaccard similarity self-join problem.
In our experiments (described in Section \ref{sec:experiments}) we compare the \cpsj algorithm against 
the approximate methods of MinHash LSH~\cite{Gionis99, Broder_NETWORK97} and BayesLSH~\cite{chakrabarti2015bayesian},
as well as the AllPairs~\cite{Bayardo_WWW07} exact similarity join algorithm.
The code for our experiments is written in C\texttt{++} and uses the benchmarking framework and data sets of the recent experimental survey on exact similarity join algorithms by Mann et al.~\cite{Mann2016}.  
For our implementation we assume that each set $x$ is represented as a list of 32-bit unsigned integers. 
We proceed by describing the details of each implementation in turn. 

\subsection{Chosen Path Similarity Join} \label{sec:implementation_cpsj}
The implementation of the \cpsj algorithm follows the structure of the pseudocode in Algorithm \ref{alg:cpsjoin} and Algorithm \ref{alg:bruteforce},
but makes use of a few heuristics, primarily sampling and sketching, in order to speed things up. 
The parameter setting is discussed and investigated experimentally in section \ref{sec:parameters}.

\medskip

\textbf{Preprocessing.}
Before running the algorithm we use the embedding described in section~\ref{sec:reduction}.
Specifically $t$ independent MinHash functions $h_1, \dots, h_t$ are used to map each set $x \in S$ to a list of $t$ hash values $(h_1(x), \dots, h_t(x))$.
The MinHash function is implemented using Zobrist hashing~\cite{zobrist1970new} from 32 bits to 64 bits with 8-bit characters.
We sample a MinHash function $h$ by sampling a random Zobrist hash function $g$ and let $h(x) = \arg\!\min_{j \in x} g(j)$. 
Zobrist hashing (also known as simple tabulation hashing) has been shown theoretically to have strong MinHash properties and is very fast in practice~\cite{Patrascu2012,Thorup2015}.  
We set $t = 128$ in our experiments, see discussion later. 

During preprocessing we also prepare sketches using the 1-bit minwise hashing scheme of Li and K{\"o}nig~\cite{li2011theory}. 
Let $\ell$ denote the length in 64-bit words of a sketch $\hat{x}$ of a set $x \in S$. 
We construct sketches for a data set $S$ by independently sampling $64 \times \ell$ MinHash functions $h_i$ and Zobrist hash functions $g_i$ that map from 32 bits to 1 bit. 
The $i$th bit of the sketch $\hat{x}$ is then given by $g_i(h_i(x))$.
In the experiments we set $\ell = 8$.

\medskip

\textbf{Similarity estimation using sketches.}
We use 1-bit minwise hashing sketches for fast similarity estimation in the \textsc{BruteForcePairs} and \textsc{BruteForcePoint} subroutines of the \textsc{BruteForce} step of the \cpsj algorithm.
Given two sketches, $\hat{x}$ and $\hat{y}$, we compute the number of bits in which they differ by going through the sketches word for word, computing the popcount of their XOR using the \texttt{gcc} builtin \texttt{\_mm\_popcnt\_u64} that translates into a single instruction on modern hardware. 
Let $\hat{J}(x, y)$ denote the estimated similarity of a pair of sets $(x, y)$. If $\hat{J}(x, y)$ is below a threshold $\hat{\lambda} \approx \lambda$, we exclude the pair from further consideration. If the estimated similarity is greater than $\hat{\lambda}$ we compute the exact similarity and report the pair if $J(x, y) \geq \lambda$. 

The speedup from using sketches comes at the cost of introducing false negatives:
A pair of sets $(x, y)$ with $J(x, y) \geq \lambda$ may have an estimated similarity less than $\hat{\lambda}$, causing us to miss it. 
We let $\delta$ denote a parameter for controlling the false negative probability of our sketches and set $\hat{\lambda}$ such that for sets $(x, y)$ with $J(x, y) \geq \lambda$ we have that $\Pr[\hat{J}(x, y) < \hat{\lambda}] < \delta$. 
In our experiments we set the sketch false negative probability to be $\delta = 0.05$.

\medskip

\textbf{Splitting step.}
The ``splitting step'' of the \cpsj algorithm as described in Algorithm \ref{alg:cpsjoin} where the set $S$ is split into buckets $S_j$ is implemented using the following heuristic: 
Instead of sampling a random hash function and evaluating it on each element $j \in x$, 
we sample an expected $1/\lambda$ elements from $[t]$ and split $S$ according to the corresponding minhash values from the preprocessing step.
This saves the linear overhead in the size of our sets $t$, reducing the time spent placing each set into buckets to $O(1)$.
Internally, a collection of sets $S$ is represented as a C\texttt{++} \texttt{std::vector<uint32\_t>} of set ids.
The collection of buckets $S_j$ is implemented using Google's \texttt{dense\_hash} hash map implementation from the \texttt{sparse\_hash} package~\cite{sparsehash}.

\medskip

\textbf{BruteForce step.}
Having reduced the overhead for each set $x \in S$ to $O(1)$ in the splitting step, we wish to do the same for the \textsc{BruteForce} step (described in Algorithm \ref{alg:bruteforce}), 
at least in the case where we do not call the \textsc{BruteForcePairs} or \textsc{BruteForcePoint} subroutines.
The main problem is that we spend time $O(t)$ for each set when constructing the \texttt{count} hash map and estimating the average similarity of $x$ to sets in $S \setminus \{x\}$.
To get around this we construct a 1-bit minwise hashing sketch $\hat{s}$ of length $64 \times \ell$ for the set $S$ using sampling and our precomputed 1-bit minwise hashing sketches.
The sketch $\hat{s}$ is constructed as follows: Randomly sample $64 \times \ell$ elements of $S$ and set the $i$th bit of $\hat{s}$ to be the $i$th bit of the $i$th sample from $S$.
This allows us to estimate the average similarity of a set $x$ to sets in $S$ in time $O(\ell)$ using word-level parallelism. 
A set $x$ is removed from $S$ if its estimated average similarity is greater than $(1 - \varepsilon)\lambda$. 
To further speed up the running time we only call the \textsc{BruteForce} subroutine once for each call to \cpsj, 
calling \textsc{BruteForcePoint} on all points that pass the check rather than recomputing $\hat{s}$ each time a point is removed.
Pairs of sets that pass the sketching check are verified using the same verification procedure as the \all implementation by Mann et al.~\cite{Mann2016}.
In our experiments we set the parameter $\varepsilon = 0.1$.
Duplicates are removed by sorting and performing a single linear scan.

\subsection{MinHash LSH}
We implement a locality-sensitive hashing similarity join using MinHash according to the pseudocode in Algorithm~\ref{alg:minhash}.
A single run of the \mh algorithm can be divided into two steps: 
First we split the sets into buckets according to the hash values of $k$ concatenated MinHash functions $h(x) = (h_1(x), \dots, h_k(x))$.
Next we iterate over all non-empty buckets and run \textsc{BruteForcePairs} to report all pairs of points with similarity above the threshold $\lambda$.
The \textsc{BruteForcePairs} subroutine is shared between the \mh and \cpsj implementation.
\mh therefore uses 1-bit minwise sketches for similarity estimation in the same way as in the implementation of the \cpsj algorithm described above. 

The parameter $k$ can be set for each dataset and similarity threshold $\lambda$ to minimize the combined cost of lookups and similarity estimations performed by algorithm.
This approach was mentioned by Cohen et al.~\cite{cohen2001finding} but we were unable to find an existing implementation.
In practice we set $k$ to the value that results in the minimum estimated running time when running the first part (splitting step) of the algorithm for values of $k$ in the range $\{2, 3, \dots, 10\}$ and estimating the running time by looking at the number of buckets and their sizes. 
Once $k$ is fixed we know that each repetition of the algorithm has probability at least $\lambda^k$ of reporting a pair $(x, y)$ with $J(x, y) \geq \lambda$. 
For a desired recall $\varphi$ we can therefore set $L = \lceil \ln(1/(1-\varphi)) / \lambda^k \rceil$.
In our experiments we report the actual number of repetitions required to obtain a desired recall rather than using the setting of $L$ required for worst-case guarantees.
\begin{algorithm}
\SetKwInOut{Params}{Parameters}
\SetKwArray{Buckets}{buckets}
\DontPrintSemicolon
\Params{$k \geq 1, L \geq 1$.}
\For{$i \leftarrow 1$ \KwTo $L$} {
\emph{Initialize hash map \Buckets{\,}.}\;
\emph{Sample $k$ MinHash fcts.} $h \leftarrow (h_1, \dots, h_k)$\;
\For{$x \in S$} {
	\Buckets{$h(x)$} $\leftarrow$ \Buckets{$h(x)$} $\cup$ $\{x\}$\; 
}
\For{$S' \in$ \Buckets} {
	$\textsc{BruteForcePairs}(S', \lambda)$\;
}
}
\caption{\textsc{MinHash}$(S, \lambda)$} \label{alg:minhash}
\end{algorithm}

\subsection{AllPairs}
To compare our approximate methods against a state-of-the-art exact similarity join we use Bayardo et al.'s \all algorithm~\cite{Bayardo_WWW07} as recently implemented in the set similarity join study by Mann et al.~\cite{Mann2016}. 
The study by Mann et al. compares implementations of several different exact similarity join methods and finds that the simple \all algorithm is most often the fastest choice. 
Furthermore, for Jaccard similarity, the \all algorithm was at most $2.16$ times slower than the best out of six different competing algorithm across all the data sets and similarity thresholds used, 
and for most runs \all is at most $11\%$ slower than the best exact algorithm (see Table 7 in Mann et al.~\cite{Mann2016}). 
Since our experiments run in the same framework and using the same datasets and with the same thresholds as Mann et al.'s study, 
we consider their \all implementation to be a good representative of exact similarity join methods for Jaccard similarity.  

\subsection{BayesLSH}
For a comparison against previous experimental work on approximate similarity joins we use an implementation of \blsh in C as provided by the \blsh authors~\cite{chakrabarti2015bayesian, bayeslsh}.
The BayesLSH package features a choice between \all and LSH as candidate generation method. 
For the verification step there is a choice between \blsh and \blsh-lite.
Both verification methods use sketching to estimate similarities between candidate pairs.
The difference between BayesLSH and BayesLSH-lite is that the former uses sketching to estimate the similarity of pairs that pass the sketching check, 
whereas the latter uses an exact similarity computation if a pair passes the sketching check.
Since the approximate methods in our \cpsj and \mh implementations correspond to the approach of BayesLSH-lite we restrict our experiments to this choice of verification algorithm.
In our experiments we will use \blsh to represent the fastest of the two candidate generation methods, combined with BayesLSH-lite for the verification step.


\section{Experiments} \label{sec:experiments}
We run experiments using the implementations of \cpsj, \mh, \blsh, and \all described in the previous section.
In the experiments we perform self-joins under Jaccard similarity for similarity thresholds $\lambda \in \{0.5, 0.6, 0.7, 0.8, 0.9 \}$.
We are primarily interested in measuring the join time of the algorithms, but we also look at the number of candidate pairs $(x,y)$ considered by the algorithms during the join as a measure of performance.  
Note that the preprocessing step of the approximate methods only has to be performed once for each set and similarity measure, 
and can be re-used for different similarity joins, we therefore do not count it towards our reported join times.
In practice the preprocessing time is at most a few minutes for the largest data sets.

\paragraph{Data sets.}
The performance is measured across $10$ real~world data sets along with $4$ synthetic data sets described in Table \ref{tab:datasets}. 
All datasets except for the TOKENS datasets were provided by the authors of~\cite{Mann2016} where descriptions and sources for each data set can also be found. 
Note that we have excluded a synthetic ZIPF dataset used in the study by Mann et al.\cite{Mann2016} due to it having no results for our similarity thresholds of interest.
Experiments are run on versions of the datasets where duplicate records are removed and any records containing only a single token are ignored.
\begin{table}
	\centering
	\scriptsize
	\begin{tabular}{lrrr} \toprule
		Dataset & \# sets / $10^6$ & avg. set size & sets / tokens \\\midrule
		AOL        & $7.35$ &   $3.8$ & $18.9$ \\
		BMS-POS    & $0.32$ &   $9.3$ & $1797.9$ \\
		DBLP       & $0.10$ &  $82.7$ & $1204.4$ \\
		ENRON      & $0.25$ & $135.3$ & $29.8$ \\
		FLICKR     & $1.14$ &  $10.8$ & $16.3$ \\
		LIVEJ      & $0.30$ &  $37.5$ & $15.0$ \\
		KOSARAK    & $0.59$ &  $12.2$ & $176.3$ \\
		NETFLIX    & $0.48$ & $209.8$ & $5654.4$ \\
		ORKUT      & $2.68$ & $122.2$ & $37.5$ \\
		SPOTIFY    & $0.36$ &  $15.3$ & $7.4$ \\
		UNIFORM    & $0.10$ &  $10.0$ & $4783.7$ \\
		TOKENS10K  & $0.03$ & $339.4$ & $10000.0$ \\
		TOKENS15K  & $0.04$ & $337.5$ & $15000.0$ \\
		TOKENS20K  & $0.06$ & $335.7$ & $20000.0$ \\ \bottomrule
	\end{tabular}
	\caption{Dataset size, average set size, and average number of sets that a token is contained in.}
	\label{tab:datasets}
\end{table}

In addition to the datasets from the study of Mann et al. we add three synthetic datasets TOKENS10K, TOKENS15K, and TOKENS20K, designed to showcase the robustness of the approximate methods.
These datasets have relatively few unique tokens, but each token appears in many sets. 
Each of the TOKENS datasets were generated from a universe of $1000$ tokens ($d = 1000$) and each token is contained in respectively, $10,000$, $15,000$, and $20,000$ different sets as denoted by the name.
The sets in the TOKENS datasets were generated by sampling a random subset of the set of possible tokens,
rejecting tokens that had already been used in more than the maximum number of sets ($10,000$ for TOKENS10K).
To sample sets with expected Jaccard similarity $\lambda'$ the size of our sampled sets should be set to $(2\lambda'/(1+\lambda'))d$. 
For $\lambda' \in \{0.95, 0.85, 0.75, 0.65, 0.55\}$ the TOKENS datasets each have $100$ random sets planted with expected Jaccard similarity $\lambda'$.
This ensures an increasing number of results for our experiments where we use thresholds $\lambda \in \{0.5, 0.6, 0.7, 0.8, 0.9 \}$. 
The remaining sets have expected Jaccard similarity $0.2$.
We believe that the TOKENS datasets give a good indication of the performance on real-world data that has the property that most tokens appear in a large number of sets.

\paragraph{Recall.}
In our experiments we aim for a recall of at least $90\%$ for the approximate methods. 
In order to achieve this for the \cpsj and \mh algorithms we perform a number of repetitions after the preprocessing step, stopping when the desired recall has been achieved.
This is done by measuring the recall against the recall of \all and stopping when reaching $90\%$.
In situations where the size of the true result set is not known it can be efficiently estimated using sampling if it is not too small. 
Alternatively, the algorithms can be stopped once the rate of new results drops below some threshold, indicating that most results have been found. 
For \blsh using LSH as the candidate generation method, the recall probability with the default parameter setting is $95\%$, although we experience a recall closer to $90\%$ in our experiments.

\paragraph{Hardware.}
All experiments were run on an Intel Xeon E5-2690v4 CPU at 2.60GHz with $35$MB L$3$,$256$kB L$2$ and $32$kB L$1$ cache and $512$GB of RAM.
Since a single experiment is always confined to a single CPU core we ran several experiments in parallel~\cite{Tange2011a} to better utilize our hardware.
\subsection{Results}
\paragraph{Join time.}
Table \ref{tab:jointimes} shows the average join time in seconds over five independent runs, when approximate methods are required to have at least $90\%$ recall.
We have omitted timings for \blsh since it was always slower than all other methods, and in most cases it timed out after 20 minutes when using LSH as candidate generation method.
The join time for \mh is always greater than the corresponding join time for \cpsj except in a single setting: the dataset KOSARAK with threshold $\lambda = 0.5$.
Since \cpsj is typically $2-4\times$ faster than \mh we can restrict our attention to comparing \all and \cpsj where the picture becomes more interesting.

\begin{table*}
  \begin{changemargin}{-3.5cm}{-3cm} 
	\scriptsize
\csvreader[tabular=l|rrr|rrr|rrr|rrr|rrr
, table head=\toprule&\multicolumn{3}{c}{Threshold $0.5$}&\multicolumn{3}{c}{Threshold $0.6$}&\multicolumn{3}{c}{Threshold $0.7$}&\multicolumn{3}{c}{Threshold $0.8$}&\multicolumn{3}{c}{Threshold $0.9$}\\
	\midrule Dataset & CP & MH & ALL & CP & MH & ALL & CP & MH & ALL & CP & MH & ALL & CP & MH & ALL \\ \midrule
, table foot=, head to column names
, late after line=\\
, late after last line =\\\bottomrule]{simfilter/results_table.csv}{}
{\File& \ACP & \AMH & \AA & \BCP & \BMH & \BA & \CCP & \CMH & \CA &\DCP & \DMH & \DA &\ECP & \EMH & \EA}
\caption{Join time in seconds for \cpsj (CP), \mh (MH) and \all (ALL) with at least $\ge90\%$ recall.}
  \label{tab:jointimes}
\end{changemargin}

\end{table*}

Figure \ref{fig:speed} shows the join time speedup that \cpsj achieves over \all. 
We achieve speedups of between $2-50\times$ for most of the datasets, with greater speedups at low similarity thresholds.
For a number of the datasets the \cpsj algorithm is slower than \all for the thresholds considered here.
Looking at Table \ref{tab:datasets} it seems that \cpsj generally performs well on most datasets where tokens are contained in a large number of sets on average (NETFLIX, UNIFORM, DBLP) and less well on datasets that have a lot of ``rare'' tokens (SPOTIFY, LIVEJOURNAL, AOL), although the picture is not completely consistent as shown by the poor performance of \cpsj on KOSARAK.
\begin{wrapfigure}{O}[1cm]{6cm}
  \includegraphics[width=0.7\textwidth]{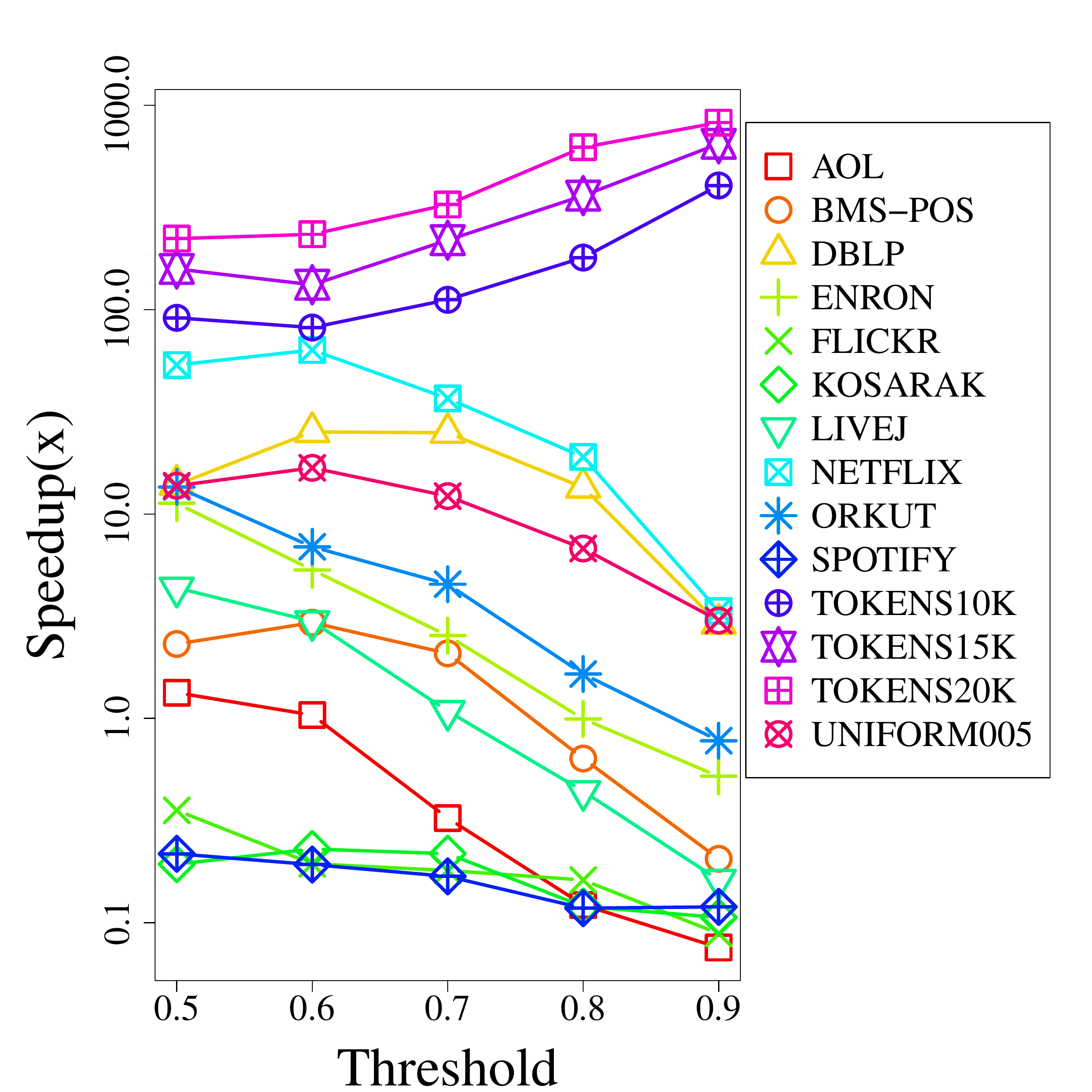}
  \caption{Join time of \cpsj with at least $90\%$ recall.}
  \label{fig:speed}
\end{wrapfigure}

\paragraph{BayesLSH.}
The poor performance of \blsh compared to the other algorithms (\blsh was always slower) can most likely be tracked down to differences in the implementation of the candidate generation methods of \blsh.
The \blsh implementation uses an older implementation of \all compared to the implementation by Mann et al.~\cite{Mann2016} which was shown to yield performance improvements by using a more efficient verification procedure. 
The LSH candidate generation method used by \blsh corresponds to the \mh splitting step, but with $k$ (the number of hash functions) fixed to one. 
Our technique for choosing $k$ in the \mh algorithm, aimed at minimizing the total join time, typically selects $k \in \{3,4,5,6\}$ in the experiments. 
It is therefore likely that \blsh can be competitive with the other techniques by combining it with other candidate generation procedures.
Further experiments to compare the performance of BayesLSH sketching to 1-bit minwise sketching for different parameter settings and similarity thresholds would also be instructive.

\paragraph{TOKEN datasets.}
The TOKENS datasets clearly favor the approximate join algorithms where \cpsj is two to three orders of magnitude faster than \all.
By increasing the number of times each token appears in a set we can make the speedup of \cpsj compared to \all arbitrarily large as shown by the progression from TOKENS10 to TOKENS20.
The \all algorithm generates candidates by searching through the lists of sets that contain a particular token, starting with rare tokens.
Since every token appears in a large number of sets every list will be long.

Interestingly, the speedup of \cpsj is even greater for higher similarity thresholds.
We believe that this is due to an increase in the gap between the similarity of sets to be reported and the remaining sets that have an average Jaccard similarity of $0.2$.
This is in line with our theoretical analysis of \cpsj and most theoretical work on approximate similarity search 
where the running time guarantees usually depend on the approximation factor.

\paragraph{Candidates and verification.}
Table \ref{tab:candidates} compares the number of pre-candidates, candidates, and results generated by the \all and \cpsj algorithms where the desired recall for \cpsj is set to be greater than $90\%$.
For \all the number of pre-candidates denotes all pairs $(x, y)$ investigated by the algorithm that pass checks on their size so that it is possible that $J(x, y) \geq \lambda$.
The number of candidates is simply the number of unique pre-candidates as duplicate pairs are removed explicitly by the \all algorithm.

For \cpsj we define the number of pre-candidates to be all pairs $(x, y)$ considered by the \textsc{BruteForcePairs} and \textsc{BruteForcePoint} subroutines of Algorithm \ref{alg:bruteforce}.
The number of candidates are pre-candidate pairs that pass size checks (similar to \all) and the 1-bit minwise sketching check as described in Section \ref{sec:implementation_cpsj}.
Note that for \cpsj the number of candidates may still contain duplicates as this is inherent to the approximate method for candidate generation. 
Removing duplicates though the use of a hash table would drastically increase the space usage of \cpsj.
For both \all and \cpsj the number of candidates denotes the number of points that are passed to the exact similarity verification step of the \all implementation of Mann et al.~\cite{Mann2016}.

Table \ref{tab:candidates} shows that for \all there is not a great difference between the number of pre-candidates and number of candidates, 
while for \cpsj the number of candidates is usually reduced by one or two orders of magnitude for datasets where \cpsj performs well.
For datasets where \cpsj performs poorly such as AOL, FLICKR, and KOSARAK there is less of a decrease when going from pre-candidates to candidates.
It would appear that this is due to many duplicate pairs from the candidate generation step and not a failure of the sketching technique.

\begin{figure*}
  \centering
  \begin{subfigure}{0.31\textwidth}
    \includegraphics[width =\textwidth]{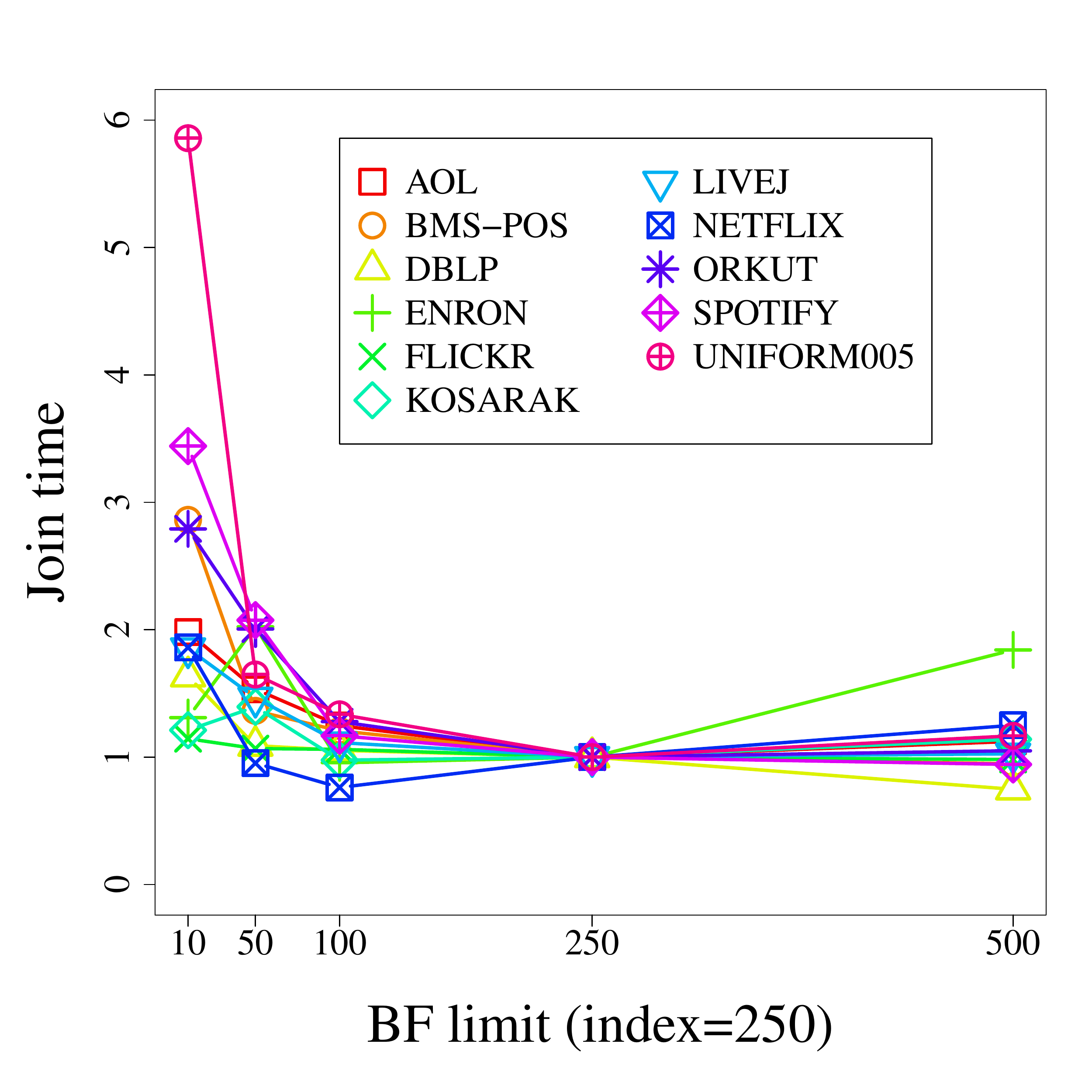}
    \caption{limit$\in\{10,\cdots,500\}$}
  \end{subfigure}
  \begin{subfigure}{0.31\textwidth}
    \includegraphics[width =\textwidth]{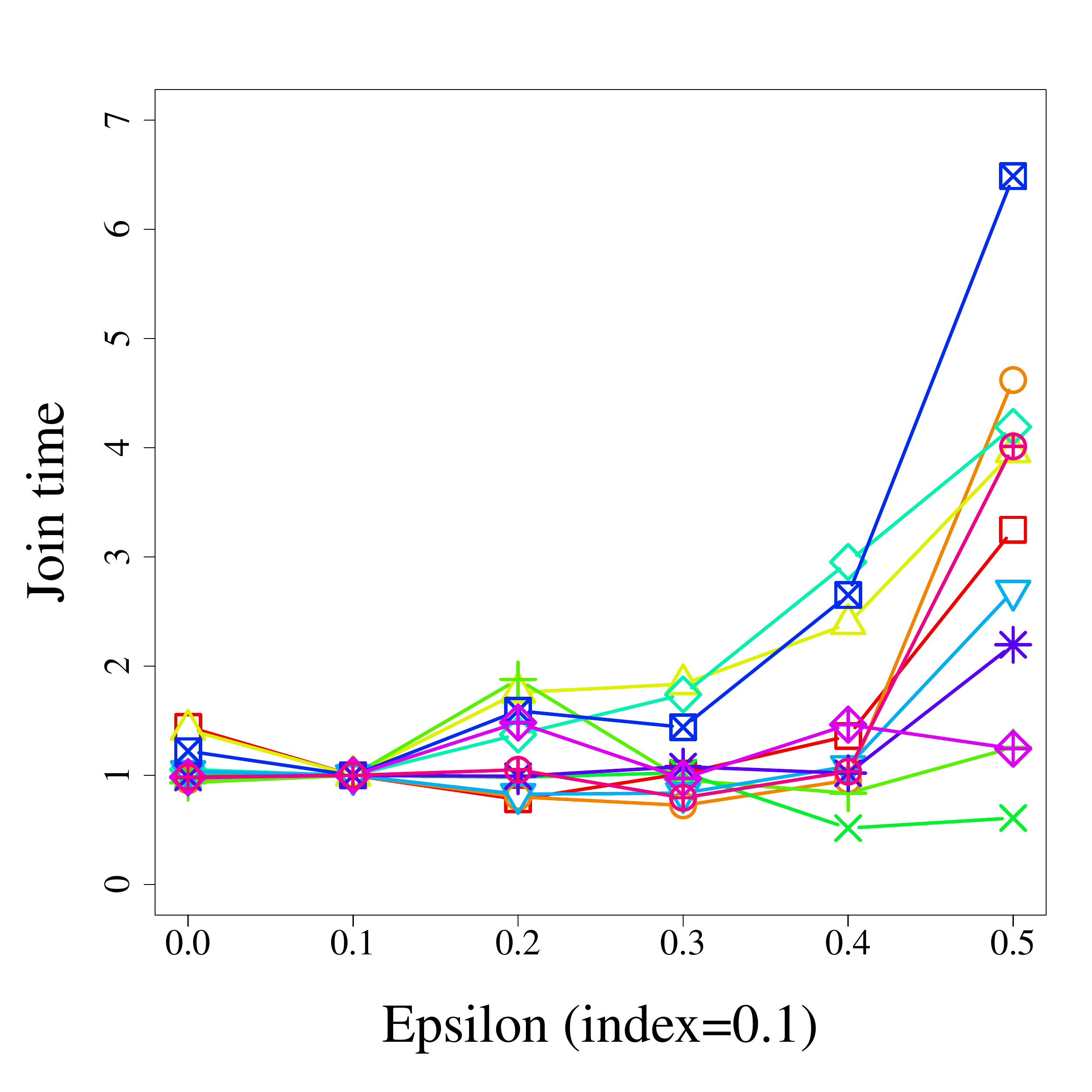}
    \caption{$\varepsilon \in \{.0, .1, .2, .3, .4, .5 \}$}
  \end{subfigure}
  \begin{subfigure}{0.31\textwidth}
    \includegraphics[width = \textwidth]{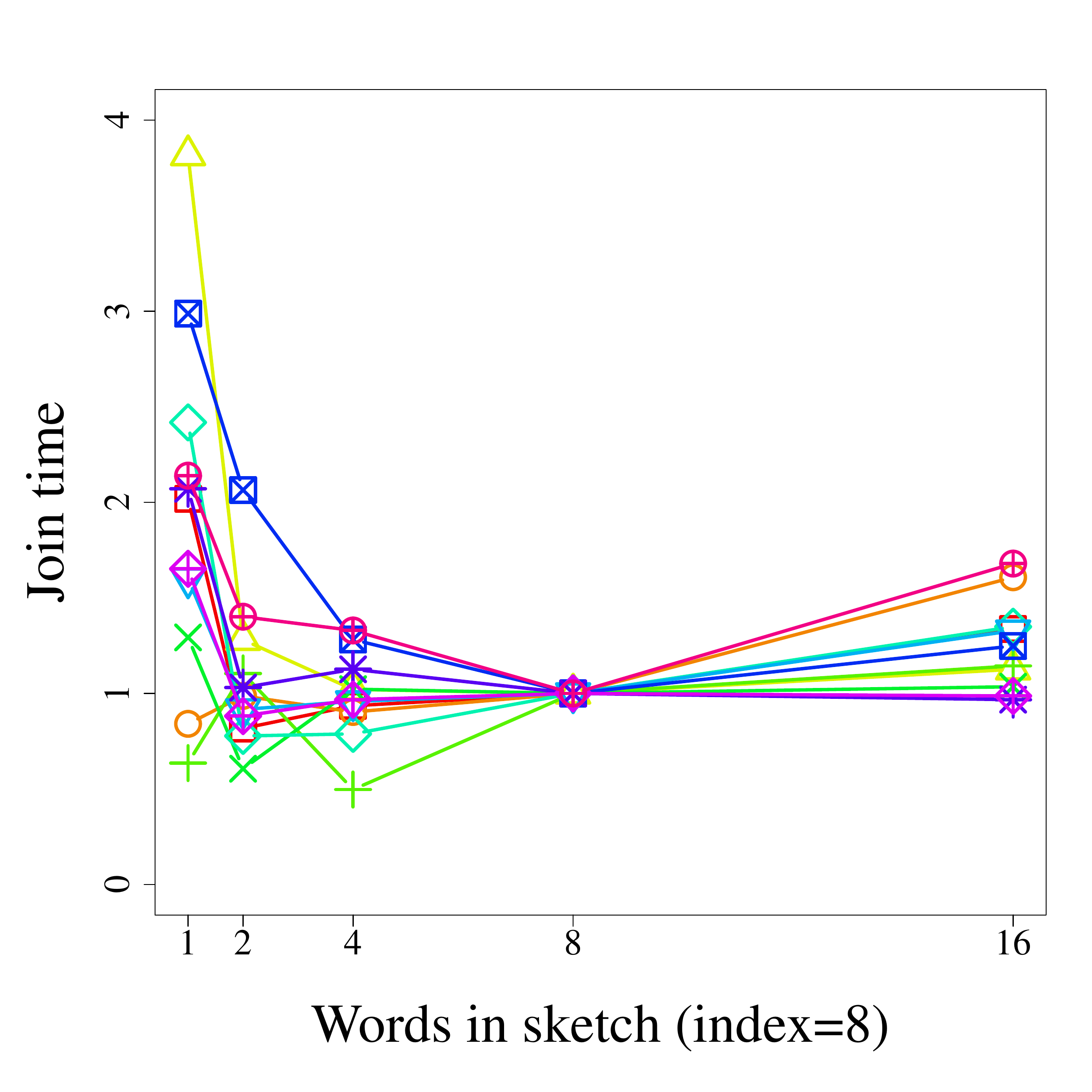}
    \caption{$w \in \{1, 2, 4, 8, 16 \}$}
  \end{subfigure}
  \caption{Relative join time for \cpsj with at least $80\%$ recall and similarity threshold $\lambda = 0.5$ for different parameter settings of \texttt{limit}, $\varepsilon$, and $w$.} 
\label{fig:parameters}
\end{figure*}

\subsection{Parameters} \label{sec:parameters}
In order to investigate how the parameter settings affect the performance of the \cpsj algorithm we run experiments where we vary the brute force parameter \texttt{limit},
the brute force aggressiveness parameter $\varepsilon$, and the sketch length in words $\ell$.
Table \ref{tab:parameters} gives an overview of the different parameters and shows how they were set during the parameter experiments and the final setting used for our join time experiments.
\begin{table}[b]
\small
\centering
	\begin{tabular}{llrr} \toprule
		Parameter & Description & Test & Final \\ \midrule
		\texttt{limit} & Brute force limit & $100$ & $250$ \\
		$\ell$ & Sketch word length & $4$ & $8$ \\
		$t$ & Size of MinHash set & $128$ & $128$ \\
		$\varepsilon$ & Brute force aggressiveness & $0.0$ & $0.1$ \\ 
		$\delta$ & Sketch false negative prob. & $0.1$ & $0.05$ \\ \bottomrule 
	\end{tabular}
	\caption{Parameters of the \cpsj algorithm, their setting during parameter experiments, and their setting for the final join time experiments}
	\label{tab:parameters}
\end{table}
Figure \ref{fig:parameters} shows the \cpsj join time for different settings of the parameters, relative to a certain parameter choice.
We argue that the join times are relatively stable around our setting of parameters,
leading us to believe that our technique of setting one parameter at a time is not too far away from the optimal setting, 
although changing one parameter probably changes the effect of other parameters to some extent.

Figure \ref{fig:parameters} (a) shows the effect of varying the brute force limit on the join time. 
Lowering \texttt{limit} to $10$ or $50$ causes the join time to increase due to a combination of spending more time splitting sets into buckets and the lower probability of recall that comes when randomly splitting the data further during candidate generation. The join time is relatively stable for \texttt{limit} $\in \{100, 250, 500\}$.  

Figure \ref{fig:parameters} (b) shows the effect of varying the brute force aggressiveness on the join time. 
As we increase $\varepsilon$, sets that are close to the other elements in their buckets are more likely to be removed by brute force comparing them to all other points.
The tradeoff here is between the loss of probability of recall by letting a point continue in the $\cp$ branching process versus the cost of brute forcing the point.
The join time is generally increasing as we increase $\varepsilon$ due to the cost of performing more brute force comparisons.
Nevertheless, it turns out that $\varepsilon = 0.1$ is a slightly better setting than $\varepsilon = 0.0$ for almost all data sets.

Figure \ref{fig:parameters} (c) shows the effect of varying the sketch length on the join time.
There is a tradeoff between the sketch similarity estimation time and the precision of the estimate, leading to fewer false positives.
For a similarity threshold of $\lambda = 0.5$  using only a single word negatively impacts the performance on most datasets compared to using two or more words. 
The cost of using longer sketches seems neglible as it is only a few extra instructions per similarity estimation so we opted to use $\ell = 8$ words in our sketches.
\begin{table}
\scriptsize
\centering
\renewcommand*{\arraystretch}{1.05}
\begin{tabular}{l|rr|rr}
\toprule
Dataset    &   \multicolumn{2}{c}{Threshold $0.5$}&\multicolumn{2}{c}{Threshold $0.7$} \\ 
		   &\multicolumn{1}{c}{ALL}&\multicolumn{1}{c}{CP} &	\multicolumn{1}{c}{ALL}&\multicolumn{1}{c}{CP} 	  \\ \midrule
           & 8.5E+09 & 7.4E+09 & 6.2E+08 & 2.9E+09 \\
AOL        & 8.5E+09 & 1.4E+09 & 6.2E+08 & 3.1E+07 \\
           & 1.3E+08 & 1.2E+08 & 1.6E+06 & 1.5E+06 \\ \hline
           & 2.0E+09 & 9.2E+08 & 2.7E+08 & 3.3E+08 \\
BMS-POS    & 1.8E+09 & 1.7E+08 & 2.6E+08 & 4.9E+06 \\
           & 1.1E+07 & 1.0E+07 & 2.0E+05 & 1.8E+05 \\ \hline
           & 6.6E+09 & 4.6E+08 & 1.2E+09 & 1.3E+08 \\
DBLP       & 1.9E+09 & 4.6E+07 & 7.2E+08 & 4.3E+05 \\
           & 1.7E+06 & 1.6E+06 & 9.1E+03 & 8.5E+03 \\ \hline
           & 2.8E+09 & 3.7E+08 & 2.0E+08 & 1.5E+08 \\
ENRON      & 1.8E+09 & 6.7E+07 & 1.3E+08 & 2.1E+07 \\
           & 3.1E+06 & 2.9E+06 & 1.2E+06 & 1.2E+06 \\ \hline
           & 5.7E+08 & 2.1E+09 & 9.3E+07 & 9.0E+08 \\
FLICKR     & 4.1E+08 & 1.1E+09 & 6.3E+07 & 3.8E+08 \\
           & 6.6E+07 & 6.1E+07 & 2.5E+07 & 2.3E+07 \\ \hline
           & 2.6E+09 & 4.7E+09 & 7.4E+07 & 4.2E+08 \\
KOSARAK    & 2.5E+09 & 2.1E+09 & 6.8E+07 & 2.1E+07 \\
           & 2.3E+08 & 2.1E+08 & 4.4E+05 & 4.1E+05 \\  \hline
           & 9.0E+09 & 2.8E+09 & 5.8E+08 & 1.2E+09 \\
LIVEJ      & 8.3E+09 & 3.6E+08 & 5.6E+08 & 1.8E+07 \\
           & 2.4E+07 & 2.2E+07 & 8.1E+05 & 7.6E+05 \\ \hline
           & 8.6E+10 & 1.3E+09 & 1.0E+10 & 4.3E+08 \\
NETFLIX    & 1.3E+10 & 3.1E+07 & 3.4E+09 & 6.4E+05 \\
           & 1.0E+06 & 9.5E+05 & 2.4E+04 & 2.2E+04 \\ \hline
           & 5.1E+09 & 1.1E+09 & 3.0E+08 & 7.2E+08 \\
ORKUT      & 3.9E+09 & 1.3E+06 & 2.6E+08 & 8.1E+04 \\
           & 9.0E+04 & 8.4E+04 & 5.6E+03 & 5.3E+03 \\ \hline
           & 5.0E+06 & 1.2E+08 & 4.7E+05 & 8.5E+07 \\
SPOTIFY    & 4.8E+06 & 3.1E+05 & 4.6E+05 & 2.7E+03 \\
           & 2.0E+04 & 1.8E+04 & 2.0E+02 & 1.9E+02 \\ \hline
           & 1.5E+10 & 1.7E+08 & 8.1E+09 & 4.9E+07 \\
TOKENS10K  & 4.1E+08 & 5.7E+06 & 4.1E+08 & 1.9E+06 \\
           & 1.3E+05 & 1.3E+05 & 7.4E+04 & 6.9E+04 \\ \hline
           & 3.6E+10 & 3.0E+08 & 1.9E+10 & 8.1E+07 \\
TOKENS15K  & 9.6E+08 & 7.2E+06 & 9.6E+08 & 1.9E+06 \\
           & 1.4E+05 & 1.3E+05 & 7.5E+04 & 6.9E+04 \\ \hline
           & 6.4E+10 & 4.4E+08 & 3.4E+10 & 1.0E+08 \\
TOKENS20K  & 1.7E+09 & 8.8E+06 & 1.7E+09 & 1.9E+06 \\
           & 1.4E+05 & 1.4E+05 & 7.9E+04 & 7.4E+04 \\ \hline
           & 2.5E+09 & 3.7E+08 & 6.5E+08 & 1.3E+08 \\
UNIFORM005 & 2.0E+09 & 9.5E+06 & 6.1E+08 & 3.9E+04 \\
           & 2.6E+05 & 2.4E+05 & 1.4E+03 & 1.3E+03 \\ \bottomrule 
\end{tabular}
\caption{Number of pre-candidates, candidates and results for ALL and CP with at least $90\%$ recall.}
\label{tab:candidates}
\end{table}


\section{Conclusion}
In this chapter we provide experimental and theoretical results on a new randomized set similarity join algorithm, \cpsj.
We compare \cpsj experimentally to state-of-the-art exact and approximate set similarity join algorithms.
\cpsj is typically $2-4$ times faster than previous approximate methods.
Compared to exact methods it obtains speedups of more than an order of magnitude on real-world datasets, while keeping the recall above $90\%$.

Among the datasets used in these experiments we note that NETFLIX and FLICKR represents two archetypes.
On average a token in the NETFLIX dataset appears in more than $5000$ sets while on average a token in the FLICKR dataset appears in less than $20$ sets. 
Our experiment indicate that \cpsj brings large speedups to the NETFLIX type datasets, while it is hard to improve upon the perfomance of \all on the FLICKR type.

A direction for future work could be to tighten and simplify the theoretical analysis to better explain the experimental results.
We conjecture that the running time of the algorithm can be bounded by a simpler function of the sum of similarities between pairs of points in $S$.

We note that recursive methods such as ours lend themselves well to parallel and distributed implementations since most of the computation happens in independent, recursive calls. Further investigating this is an interesting possibility.
\medskip

\textbf{Acknowledgement.} The authors would like to thank Willi Mann for making the source code and data sets of the study~\cite{Mann2016} available, and Aniket Chakrabarti for information about the implementation of BayesLSH.


\chapter{Summary and open problems}
\label{cha:direction}
In this chapter we revisit our results, but with a focus on future research directions and open problems.
We refer to Section~\ref{sec:problems} for a general overview of the results.

\medskip
\noindent
In Chapter~\ref{sec:furthest-neighbor} we presented a data structure for the approximate furthest neighbor problem~(Definition \ref{def:c-AFN}).
Our main contribution is the development of a new query procedure for the problem that eliminates the need for multiple $r$-far searches.
We showed that for iteration-based data structures is not possible to store less than $\min\{n, 2^{\BOMx{d}}\}-1$ points for $c$-AFN when $c<\sqrt{2}$.
However when $c=\sqrt{2}$ we need just $d+1$ points~\cite{Goel2001}~(See also Appendix~\ref{cha:2afn}).
It would be interesting to understand better why $\sqrt{2}$ is a special threshold, and to extend the lower bound beyond iteration-based data structures.
We show that the query-independent variation of our algorithm stores $\BOx{f(c)^d}$ points for some function $f$~(Section~\ref{sub:query-ind})
However our algorithm only works with high probability, and we do not have a closed form for $f$. 
An interesting open problem is to close this gap to the lower bound.

\begin{open problem}
Design a $\sqrt{2}(1-\epsilon )$-AFN data structure for $\epsilon \in(0,1)$ using space $\BOx{d\min\{n,2^{\BOx{d\epsilon^2}}\}}$ with query time $n^{1-\BOMx{1}}$.
\end{open problem}

\medskip
\noindent
In Chapter~\ref{sec:annulus-query} we used the $c$-AFN result in combination with LSH techniques to solve to approximate annulus query problem~(Definition~\ref{def:aaq}).
Our contribution here is the analysis of this combined data structure, achieving sub-linear query time.
An interesting direction of future research is in further combination of our data structure with LSH based data structures.
For example to improve the output sensitivity of near neighbor search based on LSH.
By replacing each hash bucket with an AFN data structure with suitable
approximation factors, it is possible to control the number of times each point in $S$ is reported.
Recent work on distance-sensitive hashing suggest a larger framework extending to ``anti-lsh'' functions~\cite{AumullerCP017}.
It would be interesting future work to place our results in that context.

\medskip
\noindent
The distance sensitive membership query investigated in Chapter~\ref{sec:dist-sens-appr} has not been the subject
of much prior research. In particular we have been unable to find any previous results without false negatives,
so there are many unanswered questions.
Our contributions are upper and lower bounds on the space usage for this problem in $(\{0,1\}^d,H)$.
Most pressing we do not show much in regards to query time.
Our method would use time $\BOx{n}$ to make a comparison to each of the stored signatures.
This could possibly be improved by using additional similarity search methods that avoid false negatives~(e.g.~\cite{Pagh2016}), but that would come with increased false positives.
In comparison a regular Bloom filter uses $\BOx{k}$ time independently of how many items are in the set.
However, a solution with constant time (or even polylog in $n$) could be used, say with $\varepsilon = 1/n$, to solve the $c$-approximate nearest neighbor problem.
The best currently known data structures for this problem use $n^{\Omega(1/c)}$ time~\cite{andoni2015optimal}.

\begin{open problem}
  Design a distance sensitive approximate membership filter for $(\{0,1\}^d,H)$ with query time $\BOx{n^{1/c}}$ and space $\BOx{n^{1+1/c}}$.
\end{open problem}

The signature vector method we introduced does not really extend well to other spaces.
This is another obvious area for future work.

\begin{open problem}
  Show non-trivial bounds for the $(r,c,w)$-DAMQ problem in $(\mathbb{R}^d,\ell_2)$.
\end{open problem}

Note that embeddings a la Johnson and Lindenstrauss can not be used here as they would introduce false negatives.

\medskip
\noindent
In Chapter \ref{cha:simil-pres-embedd} we look at nearest neighbor preserving embeddings.
The benefit of using this setting as opposed to normal distance preserving embeddings is that it is possible to embed into lower dimensional spaces.
Our contribution is showing that this benefit can be achieved while using sparse matrices and giving an analysis of the FJLT transform in this setting.
In the presented embedding, the embedding dimensionality is independent of $n$, but relies instead on $\lambda_X$.
Could the sparsity parameter $f$ be a similarly disconnected from the size of the embedded set?
In the spanning tree construction used in the proof of Theorem~\ref{thm:fast-near-neighbor-by-FJLT} this seems to be achievable if we can show results using only distances between the covering balls, and not the actual points inside them.
This would require a new way of bounding the probability that no point ``leaves'' a ball, independently of how many points are inside it. 
Currently we get $f=\BOx{d^{-1}\log^2{n}}$, but there are known $\ell_2$ embeddings with $\BOx{\epsilon}$ sparsity~\cite{DBLP:journals/jacm/KaneN14}.
Achieving similar results for nearest neighbor preserving embeddings would allow for much faster embeddings.

\begin{open problem}
  Construct a nearest neighbor preserving embedding with $k=\BOx{\epsilon ^{-2}\log{\lambda_s}\log{(2/\epsilon) }}$ and sparsity $\BOx{\epsilon }$.
\end{open problem}

\medskip
\noindent
Finally, in Chapter~\ref{cha:set-similarity} we looked at the set similarity join problem~(Definition~\ref{def:setsimjoin}).
We presented the \cpsj algorithm, based on the Chosen Path Tree.
Unlike previous LSH based methods we eliminate the setting of $k$ as a parameter by presenting an automatically adapting algorithm.
Our main theoretical contribution here is in analyzing the query time as well as giving probabilistic bounds for space and recall.
Empirically our methods are very fast on all data sets, but they can still be beaten by exact methods on data sets well suited for prefix filtering. 
It would be interesting future work to develop approximate set similarity methods that achieve high recall significantly faster than exact methods for all data sets.
Another direction would be to attempt to improve our recall guarantees, either through altering the algorithm or tightening the analysis.


\appendix
\chapter{Appendix}
\section{Properties of Gaussians}
\label{sec:append-i:pr-gauss}
\begin{lemma}
  \label{lm:nonstdgauss}
  Let $X\sim \mathcal{N}(0,x)$ and $Y\sim \mathcal{N}(0,y)$.
  Then $\forall t>0$:
  \begin{align}
    y\geq x &\Rightarrow \Pr[X^2\leq t]\ge \Pr[Y^2\leq t]\\
    y\leq x &\Rightarrow \Pr[X^2\leq t]\leq \Pr[y^2\leq t]
  \end{align}
With equality exactly when $x=y$.  

  \begin{proof}
    Let $y\geq x$:
    \begin{align}
      \Pr[X^2\leq t]=\Pr[X\leq\sqrt{t}]-\Pr[X\leq-\sqrt t]&\geq \\
      \Pr[Y\leq\sqrt t]-\Pr[X\leq-\sqrt t]&\geq \\
      \Pr[Y\leq\sqrt t]-\Pr[Y\leq-\sqrt t]&=\Pr[Y^2\leq t]
    \end{align}
    Similarly in the other direction when $y\leq x$.
    \end{proof}
\end{lemma}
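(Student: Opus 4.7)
\textbf{Proof plan for Lemma~\ref{lm:nonstdgauss}.} The cleanest route is a standardization argument that reduces both probabilities to evaluations of a single CDF. Write $X = \sqrt{x}\, Z_1$ and $Y = \sqrt{y}\, Z_2$ where $Z_1, Z_2 \sim \mathcal{N}(0,1)$; this is legitimate because multiplying a standard Gaussian by $\sqrt{x}$ yields a $\mathcal{N}(0,x)$ variable (recall that in the paper's convention $\mathcal{N}(0,x)$ denotes mean zero, variance $x$). Then $X^2 = x Z_1^2$ and $Y^2 = y Z_2^2$, and since $Z_1^2$ and $Z_2^2$ have the same distribution (both $\chi_1^2$), we get the identities $\Pr[X^2 \leq t] = \Pr[Z^2 \leq t/x]$ and $\Pr[Y^2 \leq t] = \Pr[Z^2 \leq t/y]$ for a single $Z \sim \mathcal{N}(0,1)$.

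Second, I would invoke monotonicity of the $\chi_1^2$ CDF. Since $t > 0$ and $0 < x \leq y$, we have $t/y \leq t/x$, so the monotone non-decreasing CDF of $Z^2$ gives
\[
\Pr[Z^2 \leq t/y] \;\leq\; \Pr[Z^2 \leq t/x],
\]
which after translating back is exactly $\Pr[Y^2 \leq t] \leq \Pr[X^2 \leq t]$, proving the first implication. The reverse implication follows by swapping the roles of $X$ and $Y$.

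Third, the equality claim uses strict monotonicity of the $\chi_1^2$ CDF on $(0,\infty)$ (its density is strictly positive there). Thus $\Pr[Z^2 \leq t/x] = \Pr[Z^2 \leq t/y]$ for some $t > 0$ forces $t/x = t/y$, hence $x = y$; the converse is trivial.

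There is no real obstacle here; the main thing to get right is the variance convention and to note implicitly that $x, y > 0$ so the reductions $t/x$ and $t/y$ are well defined. The author's original proof instead splits $\Pr[X^2 \le t]$ as $\Pr[X \le \sqrt t] - \Pr[X \le -\sqrt t]$ and uses stochastic dominance on each tail separately; the standardization proof above is shorter and makes the equality case immediate.
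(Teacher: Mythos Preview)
Your proof is correct and takes a genuinely different route from the paper. The paper decomposes $\Pr[X^2 \le t] = \Pr[X \le \sqrt{t}] - \Pr[X \le -\sqrt{t}]$ and compares each of the two terms separately, using that a mean-zero Gaussian with larger variance has smaller $\Pr[\cdot \le a]$ for $a>0$ and larger $\Pr[\cdot \le a]$ for $a<0$; chaining these two one-sided comparisons yields the result. You instead standardize to a single variable $Z^2$ and reduce the entire statement to monotonicity of the $\chi_1^2$ CDF in its argument. Both arguments are elementary; yours is slightly shorter and makes the equality claim immediate via strict positivity of the $\chi_1^2$ density on $(0,\infty)$, whereas the paper's version leaves that case implicit (one would have to observe that both tail inequalities are strict when $x \ne y$). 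The paper's approach has the minor virtue of staying purely at the level of Gaussian CDFs without naming the $\chi^2$ distribution.
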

    We can generalize to sums of such variables:
    
    \begin{lemma}
      \label{lm:generalizednonstd}
      For any integer $k\geq 1$. Let $X=\sum_{i=1}^k X_i^2$ where $X_i\sim\mathcal{N}(0,x_i)$ and $Y=\sum_{i=1}^kY_i^2$ where $Y_i\sim\mathcal{N}(0,y_i)$.
      Then if $y_i\geq x_i$ for all $i\in\{1,\cdots,k\}$ we have: 
      \[\Pr[Y\leq t]\leq\Pr[X\leq t].\]
      \begin{proof}
        We show a standard proof by induction.
        Define a new variable $S_l=\sum_{i=1}^{k-l}X_i^2+\sum_{j=k-l+1}^kY_{j}^2$.
        As a base case set $l=1$:
        \[\Pr[S_1\leq t]= \Pr[Y_k^2+\sum_{i=1}^{k-1}X_i^2\leq t]\leq\Pr[S_0\leq t].\]
        By fixing $X^2_i$ for $1\leq i\leq k-1$ and using lemma.~\ref{lm:nonstdgauss}.
        And generally for all integers $l>0$ up to $l=k$:

        \[\Pr[S_l\leq t]=\Pr[S_{l-1}\leq t ]\]
        By fixing everything but the $l$'th variable and using lemma.~\ref{lm:nonstdgauss}.
        We arrive at $\Pr[Y\leq t]=\Pr[S_k\leq t]\leq\Pr[S_0\leq t]=\Pr[X\leq t]$.        
      \end{proof}
      
    \end{lemma}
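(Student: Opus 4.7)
The plan is a standard hybrid/swapping argument that extends the $k=1$ base case (Lemma~\ref{lm:nonstdgauss}) one coordinate at a time. Concretely, I would define, for $0 \le l \le k$, the hybrid sum
\[
S_l = \sum_{i=1}^{k-l} X_i^2 + \sum_{j=k-l+1}^{k} Y_j^2,
\]
treating all $X_i$ and $Y_j$ as mutually independent. Then $S_0 = X$ and $S_k = Y$, and it suffices to show $\Pr[S_{l} \leq t] \leq \Pr[S_{l-1} \leq t]$ for every $1 \leq l \leq k$; chaining these $k$ inequalities yields the claim.

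For the single swap from $S_{l-1}$ to $S_l$, the only index that changes is $i = k-l+1$: in $S_{l-1}$ this coordinate contributes $X_{k-l+1}^2$ with variance $x_{k-l+1}$, while in $S_l$ it contributes $Y_{k-l+1}^2$ with variance $y_{k-l+1} \geq x_{k-l+1}$. Let $W$ denote the sum of the other $k-1$ terms (which are the same in both $S_{l-1}$ and $S_l$, and independent of the swapped coordinate). Conditioning on $W = s$ reduces the inequality to
\[
\Pr[Y_{k-l+1}^2 \leq t - s] \leq \Pr[X_{k-l+1}^2 \leq t - s],
\]
which is exactly Lemma~\ref{lm:nonstdgauss} applied with threshold $t-s$ (for $s > t$ both sides are $0$, which is still fine). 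Integrating against the common distribution of $W$ preserves the inequality, giving $\Pr[S_l \leq t] \leq \Pr[S_{l-1} \leq t]$.

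There is essentially no hard step here; the only thing to be careful about is that Lemma~\ref{lm:nonstdgauss} is only stated for $t > 0$, so I would handle the degenerate case $t - s \leq 0$ separately (both probabilities are $0$). The argument also needs the swapped coordinate to be independent of the other $k-1$ coordinates in both hybrids, which is why the hybrid is set up with all $X_i$ and $Y_j$ declared jointly independent — this is the natural coupling and does not affect either marginal distribution $X$ or $Y$.
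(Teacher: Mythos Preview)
Your proposal is correct and follows essentially the same hybrid-swapping argument as the paper: the paper also defines $S_l = \sum_{i=1}^{k-l} X_i^2 + \sum_{j=k-l+1}^{k} Y_j^2$ and swaps one coordinate at a time using Lemma~\ref{lm:nonstdgauss}. If anything, your write-up is more careful, since you make the independence/conditioning explicit and handle the $t-s\leq 0$ case, whereas the paper's version is terse (and even has a typo, writing ``$=$'' where ``$\leq$'' is meant).
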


\section{$\sqrt{2}$-AFN in $d+1$ points}

\label{cha:2afn}
\begin{figure}[h]
  \centering
  \includegraphics{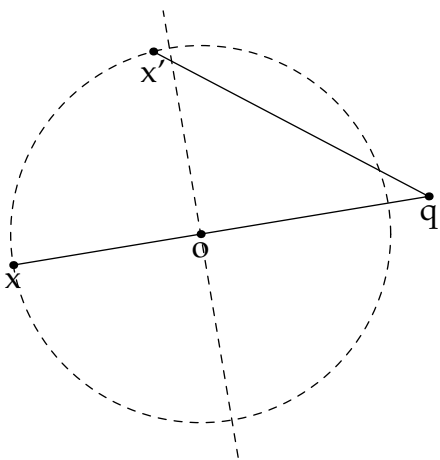}
  \caption{Illustration of the construction in the plane.}
  \label{fig:afn-sqrt2}
\end{figure}

\begin{theorem}
  For $c\geq \sqrt{2}$ there exists a data structure that computes the $c$-AFN of any set $S\subseteq\mathbb{R}^d$ by storing a size $d+1$ subset of $S$.

  \begin{proof}
    A proof outline is given in~\cite{Goel2001}, we fill in a few details.
    
    Given a set $S$ let $\B{o}{r}$ be the minimum enclosing ball of $S$.
    Assume without loss of generality that $r=1$.
    Let $P=\{x\in\B{o}{r}|\|o-x\|_2=r\}$.
    Pick a set $R$ of $d+1$ points from $P$ in a way that the convex hull of $R$ contains $o$.
    One (expensive) way of doing this is to iterate through the points in $P$ and remove all points that do not shrink the minimum enclosing ball of the remaining points on removal.
    The data structure stores $R$.

    Given any query point $q$, let $t=\|o-q\|_2$.
    Let $x\in S$ be the actual furthest neighbor.
    We see that $\|x-q\|_2\leq 1+t$.
    If $o=q$, any point in $R$ is an exact furthest neighbor.
    Otherwise, consider the hyperplane passing through $o$ and perpendicular to the line defined by $q$ and $o$.
    Since $o$ is inside the convex hull of $R$, $R$ must contain at least one point, $x'$, on the side of the hyperplane not containing $q$.
    Consider the triangle defined by $x',o$ and $q$.
    (See Figure~\ref{fig:afn-sqrt2}).
    It is clear that $\|x'-q\|_2\geq \sqrt{t^2+1}$.
    Hence $\frac{\|x'-q\|_2}{\|x-q\|_2}\geq\frac{\sqrt{t^2+1}}{1+t}$.
    This is minimized at $1/\sqrt{2}$ when $t=1$, so
    $\|x'-q\|_2\geq\frac{\|x-q\|_2}{\sqrt{2}}$.
  \end{proof}
\end{theorem}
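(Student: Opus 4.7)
The plan is to use the minimum enclosing ball (MEB) of $S$ to identify a small witness set. First I would compute the MEB of $S$, say with center $o$ and radius $r$; without loss of generality normalize $r = 1$. By a standard characterization of the MEB, its center lies in the convex hull of the set $P \subseteq S$ of points that achieve the maximum distance $r$ from $o$ (otherwise one could shrink the ball, contradicting optimality). Applying Carathéodory's theorem to $P \subseteq \mathbb{R}^d$, we can extract a subset $R \subseteq P$ of at most $d+1$ points whose convex hull still contains $o$. The data structure stores $R$; on a query $q$ it returns $\arg\max_{x \in R} \|x - q\|_2$.

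The correctness argument is a direct geometric calculation. Let $x^\ast$ denote a true furthest neighbor in $S$ and set $t = \|o - q\|_2$. By the triangle inequality and the MEB property, $\|x^\ast - q\|_2 \leq 1 + t$. For the lower bound, consider the hyperplane through $o$ orthogonal to $q - o$ (when $q \neq o$; the case $q = o$ is trivial since every point in $R$ lies at distance exactly $1$). Since $o$ is in the convex hull of $R$, at least one point $x' \in R$ must lie in the closed half-space not containing $q$, so the angle $\angle x' o q \geq \pi/2$. By the law of cosines (or Pythagoras applied to the obtuse triangle), $\|x' - q\|_2^2 \geq \|x' - o\|_2^2 + \|o - q\|_2^2 = 1 + t^2$.

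Combining the bounds, the approximation ratio returned by the data structure is at least
\begin{equation*}
\frac{\|x' - q\|_2}{\|x^\ast - q\|_2} \geq \frac{\sqrt{1 + t^2}}{1 + t}.
\end{equation*}
A one-variable minimization over $t \geq 0$ (take the derivative or square and differentiate) shows this is minimized at $t = 1$ with minimum value $1/\sqrt{2}$. Hence $\|x' - q\|_2 \geq \|x^\ast - q\|_2 / \sqrt{2}$, which is exactly the $\sqrt{2}$-AFN guarantee.

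The main obstacle I expect is the first step: justifying that the center of the MEB lies in the convex hull of at most $d+1$ boundary points of $S$. The convex hull containment of $o$ in $P$ follows from the optimality of the MEB (if $o$ were separated from $P$ by a hyperplane, one could translate $o$ slightly toward $P$ and then shrink the radius, contradicting minimality), and Carathéodory's theorem then reduces to $d+1$ points. The geometric argument with the perpendicular hyperplane is straightforward once $R$ is in hand. Computationally, finding the MEB and extracting $R$ may be expensive, but the theorem only asserts existence of the $(d+1)$-point data structure, not efficient preprocessing.
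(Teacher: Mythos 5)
Your proof takes essentially the same approach as the paper's: both normalize the minimum enclosing ball to radius one, select $d+1$ points of $S$ on its boundary whose convex hull contains the center, and then bound the approximation ratio by $\sqrt{1+t^2}/(1+t)$ via the perpendicular-hyperplane argument. You are if anything more explicit than the paper — invoking Carathéodory's theorem and a separation argument to justify the existence of the $(d+1)$-point set $R$, where the paper merely sketches a constructive selection — but the route and the calculation are identical.
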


\listoffigures
\listoftables


\bibliographystyle{abbrv}
\bibliography{biblio}

\end{document}